\documentclass[superscriptaddress,amsmath,amssymb,longbibliography,twocolumn]{revtex4-1}

\usepackage{mathrsfs}
\usepackage{MnSymbol}
\usepackage{complexity}
\usepackage{graphicx}
\usepackage{amsmath}
\usepackage{fullpage}
\usepackage{amsfonts}
\usepackage{amssymb}
\usepackage{dsfont}
\usepackage{amsthm}
\usepackage{tensor}
\usepackage{enumitem}
\usepackage{amsthm}
\usepackage{xcolor}
\usepackage{comment}
\usepackage{mathtools}
\usepackage{physics}
\usepackage{bbold}

\usepackage{appendix}
\usepackage[ruled,vlined]{algorithm2e}
\usepackage[colorlinks]{hyperref}
\hypersetup{
pdfstartview={FitH},
pdfnewwindow=true,
colorlinks=true,
linkcolor=blue,
citecolor=blue,
filecolor=blue,
urlcolor=blue}
\usepackage[capitalize]{cleveref}
\usepackage{enumitem}

\newtheorem{problem}{Problem}
\newtheorem*{problem*}{Problem}

\def\bH{{\bf H}}
\def\bA{{\bf A}}

\def\bD{{\bf D}}
\def\bX{{\bf X}}
\def\bY{{\bf Y}}
\def\bB{{\bf B}}

\def\Be{\textsc{Be}}

\def\bGamma{{\bf \Gamma}}

\newcommand{\wA}{\tilde{A}}
\newcommand{\wbA}{\tilde{\mathbf{A}}}
\newcommand{\mA}{\mathcal{A}}
\newcommand{\mX}{\mathcal{X}}
\newcommand{\mB}{\mathcal{B}}

\newcommand{\wphi}{\tilde{\phi}}
\newcommand{\wpsi}{\tilde{\psi}}
\newcommand{\nn}{\nonumber \\}

\def\ri{{\rm i}}

\def\rd{{\rm d}}

\def\cA{\mathcal{A}}
\def\cB{\mathcal{B}}

\def\cI{\mathcal{I}}

\def\cL{\mathcal{L}}

\def\cN{\mathcal{N}}
\def\cO{\mathcal{O}}
\def\cP{\mathcal{P}}

\def\cT{\mathcal{T}}

\def\cX{\mathcal{X}}
\def\cY{\mathcal{Y}}

\def\bC{\mathbf{C}}

\def\bL{{\bf L}}

\def\poly{\mathrm{poly}}

\def\one{{\mathchoice {\rm 1\mskip-4mu l} {\rm 1\mskip-4mu l} {\rm
1\mskip-4.5mu l} {\rm 1\mskip-5mu l}}}
\def\dqc1{\textsc{DQC1}}

\def\epshist{{\epsilon_{\rm hist}}}
\def\epstrunc{{\epsilon_{\rm trunc}}}
\def\epsbe{{\epsilon_{\rm be}}}

 \newcommand{\sket}[1]{| #1 \rangle}        
 \newcommand{\sbra}[1]{\langle #1|}  

\newcommand{\init}{\mathrm{in}}

\newcommand{\lb}{\left(}
\newcommand{\rb}{\right)}
\newcommand{\hist}{\mathrm{hist}}

\newcommand{\QLSA}{\mathrm{QLSA}}
\newcommand{\LCHS}{\mathrm{LCHS}}
\newcommand{\tcL}{\tilde{\cL}}

\makeatletter
\newtheorem*{rep@theorem}{\rep@title}
\newcommand{\newreptheorem}[2]{%
\newenvironment{rep#1}[1]{%
 \def\rep@title{#2 \ref{##1}}%
 \begin{rep@theorem}}%
 {\end{rep@theorem}}}
\makeatother

\theoremstyle{mytheorem}

\newtheorem{theorem}{Theorem}
\newreptheorem{theorem}{Theorem}
\newtheorem{lemma}[theorem]{Lemma}
\newreptheorem{lemma}{Lemma}
\newtheorem{claim}[theorem]{Claim}
\newreptheorem{claim}{Claim}

\newreptheorem{definition}{Definition}

\newreptheorem{remark}{Remark}
\newtheorem{corollary}[theorem]{Corollary}
\newreptheorem{corollary}{Corollary}

\newreptheorem{observation}{Observation}

\begin{document}

\title{Efficient quantum algorithm for linear matrix differential equations and applications to open quantum systems}

\author{Sophia Simon}
\email{sophia.simon@mail.utoronto.ca}
\affiliation{Google Quantum AI, Venice, CA 90291, United States}
\affiliation{Department of Physics, University of Toronto, Ontario M5S 1A7, Canada}

\author{Dominic W. Berry}
\affiliation{School of Mathematical and Physical Sciences, Macquarie University, Sydney, NSW 2109, Australia}

\author{Rolando D. Somma}
\affiliation{Google Quantum AI, Venice, CA 90291, United States}

\date{\today}

\begin{abstract}
We present an efficient, nearly optimal quantum algorithm for solving linear matrix differential equations, with applications to the simulation of  open quantum systems and beyond. For unitary or dissipative dynamics, the algorithm computes an entry of the solution matrix with query complexity $\widetilde{\mathcal{O}}(\nu  \mathcal{L} t/\epsilon)$, where the constant $\nu$ depends on the problem parameters, $\mathcal{L}$ involves a time integral of upper bounds on the norms of evolution operators, and $\epsilon$ is the error. In particular, $\nu \mathcal{L}$ is linear in $t$ for unitary dynamics and can be a constant for dissipative dynamics. Our result contrasts prior quantum approaches for differential equations that typically require exponential time for this problem due to the encoding in a quantum state, which can lead to exponentially small amplitudes. We demonstrate the utility of the algorithm through an end-to-end application, namely the simulation of dissipative dynamics for non-interacting fermions, which can be extended to other quantum and classical systems. We compare with classical algorithms and give evidence of polynomial quantum speedups for systems in a lattice, which become more pronounced for systems with long-range interactions and can be shown to be exponential in general. We also provide a lower bound of $\Omega(\nu \mathcal{L} t/\epsilon)$ for unitary or dissipative dynamics that proves our algorithm is optimal up to logarithmic factors.
\end{abstract}

\maketitle

\section{Introduction}
\label{sec:intro}

While quantum computers are anticipated to tackle problems currently intractable for classical computers -- most notably the simulation of quantum dynamics~\cite{berry2007efficient,berry2015truncated,low2017optimal} for applications in chemistry~\cite{wecker2014gate,babbush2015chemical,low2025fast} and materials science~\cite{Llo96,SOGKL02,bassman2020towards} -- understanding their utility to broader areas like linear algebra and general differential equations remains a significant challenge.
In particular, practical applications for known
quantum differential equation solvers~\cite{berry2014high,Berry2017lineardifferential,krovi2023improved,liu2021efficient,jin2022quantum,an2023lchs,low2025optimal} 
and related quantum algorithms for linear algebra~\cite{harrow2009quantum,childs2017quantum} have been difficult to establish due to the stringent constraints required for these algorithms
to be efficient. 
Among these, a key hurdle is due to the encoding,
where the solution to the computation
is encoded in a quantum state that has to be normalized, and
the quantum algorithms can only output restricted computations,  such as expectation values of normalized vectors. This makes specific tasks, like extracting individual entries of a solution, impractical because normalization factors often decrease exponentially with the number of qubits~\cite{aaronson2015read}.

Motivated by these practical limitations of known quantum methods,
 we present a quantum algorithm  for the matrix version of linear differential equations.
Such matrix equations have broad applications in control theory~\cite{nguyen2009solving}, machine learning~\cite{sarkka2019applied}, and physical simulations~\cite{purkayastha2022lyapunov}.
Rather than encoding the solution matrix $\bX(t)$
in a quantum state via vectorization,
we consider an approach
designed to output the entries of $\bX(t)$
directly, using an idea based on   ``history states''
to perform certain integrals. This
avoids the issue due to the normalization of states
and unlocks applications for quantum differential equation solvers, such as the simulation of  large open quantum systems~\cite{purkayastha2022lyapunov}, as we will explain.

Specifically, we start from an equation of the form $\frac{\rd}{\rd t}\bX(t)=\bA^\dagger \bX(t) + \bX(t) \bB + \bC$,
where $\bA$, $\bB$, $\bC$ and $\bX(t)$ are possibly time-dependent matrices of dimension $N = 2^n$. 
The problem is to compute an entry of $\bX(t)$ in some specified basis, like the computational basis. For quantum algorithms, a common assumption is that the matrices 
are accessed efficiently via so-called block-encodings. 

We present a quantum algorithm
that solves this problem efficiently and nearly optimally, in time polynomial in $n$ and $t$ 
for some relevant cases. When $\bA$ and $\bB$ are skewHermitian or negative semidefinite, 
the query complexity is $\widetilde \cO(\nu \cL t/\epsilon)$, where $\cL$ involves a time integral of 
upper bounds on the norms of evolution operators, $\nu$ is a constant, and $\epsilon$ is the error. These cases correspond to unitary or dissipative dynamics, for which the query complexities can be quadratic or even linear in $t$. The gate complexities contain additional factors that are $\poly(n)$. More generally, our algorithm is efficient if the norms of the evolution operators remain bounded.

For comparison, we note that matrix differential equations can be reduced to vector differential equations via vectorization, where the dimension is $N^2$.
This could be solved via 
Refs.~\cite{berry2014high,Berry2017lineardifferential,krovi2023improved,liu2021efficient,jin2022quantum,an2023lchs,low2025optimal}; however, doing so would again suffer from the state normalization issues and the complexity could be exponential.
Indeed,   Ref.~\cite{somma2025quantum} proves an exponential separation between the state- and matrix-access models.

One application of our algorithm is the simulation of dissipative dynamics in certain open quantum systems. We illustrate this through a free-fermion system interacting with a bath of free fermions. 
The covariance matrix satisfies a linear matrix differential equation of the form that our algorithm can solve.
For this example, the query complexity of our algorithm is $\widetilde \cO(\mu t_e)$
under realistic assumptions and constant error, where $t_e = \frac 1 \gamma + \beta$ is an effective time, $\mu$ a constant,
and $\beta \ge 0$
the inverse temperature.  Since the relevant matrices are now
associated with physical Hamiltonians, 
they might be constructed efficiently with gate cost $\poly(n)$ under broad assumptions. Similar results would apply to other systems, including non-interacting bosons and coupled classical oscillators, generalizing the problem in Ref.~\cite{babbush2023exponential} and allowing for dissipation.

Our result contrasts classical algorithms, 
whose complexity is polynomial in $N$ in the worst case; indeed, our algorithm can solve $\BQP$-complete problems efficiently due to Ref.~\cite{childs2009universal}. 
This gives the strongest evidence for general exponential quantum advantage. Nevertheless, we note that
classical algorithms can also leverage structures of the system in physics simulations, such as the locality of the interactions.
For dissipative dynamics in a lattice in $D \ge 1$ spatial dimensions, classical algorithms 
of complexity $\widetilde \cO((\mu t_e)^{1+D})$ might be possible.
This suggests a polynomial quantum speedup -- e.g., a quartic speedup for $D=3$ -- that becomes more pronounced   for long-range interactions.
The quantum memory is $\cO(\log(N))$ 
while efficient classical algorithms require exponentially larger memory $\Omega(N)$.

We conclude by showing that our algorithm is almost optimal in query complexity in the assumed access model. To this end, we consider a large class of instances with 
slight or no dissipation and prove a lower bound $\Omega(\nu \cL t/\epsilon)$ for the query complexity
that matches the upper bound  up to logarithmic factors. Further improvements
could only follow from using other structures or more powerful access models.

\section{Problem and main result}
\label{sec:LMDE}

We consider the matrix differential equation
\begin{align}
\label{eq:main}
    \frac{\rd}{\rd t} \bX(t) = \bA^\dagger \bX(t) + \bX(t) \bB + \bC \;,
\end{align}
subject to the initial condition $\bX(0)=\bD$.
Here,  $\bA$, $\bB$, $\bC$, $\bX(t)$
are complex matrices of dimension $N \times N$, and $t \ge 0$ is the time parameter. 
This is a type of Sylvester differential equation. When $\bA$ and $\bB$ are time independent, the unique solution is
\begin{align}
\label{eq:mainsolution}
    \bX(t) = \int_0^t \rd s \;e^{(t-s) \bA^\dagger} \bC e^{(t-s) \bB} +  e^{t \bA^\dagger} \bD e^{t\bB} \;.
\end{align}
For time-dependent matrices, the exponentials are replaced as
$e^{(t-s) \bA^\dagger} \mapsto \cT e^{\int_s^t \rd \tau \bA^\dagger(\tau)}$
and $e^{(t-s) \bB} \mapsto \tilde \cT e^{\int_s^t \rd \tau \bB(\tau)}$,
where $\cT$ 
is time ordering and $\tilde \cT$ is anti-time ordering.

We seek a quantum algorithm that computes a given entry of $\bX(t)$. This requires making assumptions
on how the matrices can be accessed. We  consider the
 block-encoding model~\cite{berry2015truncated,LC19,gilyen2019qsvt},
for which we assume access to unitary quantum circuits $U_\bA,U_\bB, U_\bC, U_\bD \in \mathbb C^{M\times M}$, $M \ge N$,
where the first blocks of these matrices are $\bA/a$, $\bB/b$, $\bC/c$, and $\bD/d$, respectively; that is $\lb \bra 0 \otimes \one \rb U_\bA \lb \ket 0 \otimes \one \rb =\bA/a$ for $a \geq \norm{\bA}$ and similarly for the others.
For time-dependent matrices, these unitaries additionally contain a finite qubit time register $\ket{t}$ such that $\lb \sbra{0,t} \otimes \one \rb U_\bA \lb \sket{0,t} \otimes \one \rb =\bA(t)/a$.

\begin{problem}
    Let $U_\bA$, $U_\bB$, $U_\bC$, $U_\bD$
    be block-encodings of $\bA/a$, $\bB/b$, $\bC/c$,
    $\bD/d$, respectively,
    for given $a,b,c,d>0$. Let $U_{\phi}$ and $U_{\psi}$ be state preparation unitaries such that $U_{\phi}\ket{0} = \ket{\phi} \in \mathbb C^N$, $U_{\psi}\ket{0} = \ket{\psi} \in \mathbb C^N$.
    Let $t \ge 0$
    be the evolution time and $\epsilon > 0$ the error tolerance. The goal is to output an entry $\bra \phi \bX(t) \ket \psi$
    with additive error at most~$\epsilon$.
\label{prob:matrix_entry}
\end{problem}

Our main result is an efficient quantum algorithm 
for this problem. For clarity, we first present a simplified statement valid only for the time-independent case, which is useful for instances
describing unitary or dissipative dynamics, and then comment on a more general result.  
Recall for a matrix $\bY$, the log-norm is the largest eigenvalue of $(\bY+\bY^\dagger)/2$.

\begin{theorem}
\label{thm:main}
    Let $\mu:=\max\{a,b\}$ and define 
     \begin{align}
         {\cL} := \max_{\bY \in \{\bA, \bB\}} \int_0^t \rd s\, e^{2s \xi_\bY} + \frac{d}{c} e^{2t \xi_\bY},
    \end{align}
    where $\xi_{\bY}$ is a known upper bound on the log-norm of $\bY$.
   Then, Problem~\ref{prob:matrix_entry} can be solved  w.h.p. using
    \begin{align}
    \label{eq:LCHScomplexity}
        \widetilde{\cO} \lb 
         \frac{c \cL}{\epsilon} \times t \mu \rb
    \end{align}
    queries to the available block-encodings and inverses, where the $\widetilde \cO$ notation hides subdominant logarithmic factors.
    The number of arbitrary gates is larger than the query complexity by polylogarithmic factors.
\end{theorem}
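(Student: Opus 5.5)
We will write the target entry, using the solution formula \eqref{eq:mainsolution}, as an inner product of two "history states":
\begin{align}
\bra\phi \bX(t)\ket\psi = \int_0^t \rd s\, \bra{\phi} e^{(t-s)\bA^\dagger}\bC e^{(t-s)\bB}\ket\psi + \bra\phi e^{t\bA^\dagger}\bD e^{t\bB}\ket\psi .
\end{align}
With the substitution $s\mapsto t-s$, the integrand becomes $\big(e^{s\bA}\ket\phi\big)^\dagger \bC \big(e^{s\bB}\ket\psi\big)$. We therefore define unnormalized history vectors
\begin{align}
\ket{h_\bA} = \int_0^t \rd s\, \ket{s}\otimes e^{s\bA}\ket\phi + \sqrt{\tfrac{d}{c}}\ket{\perp}\otimes e^{t\bA}\ket\phi ,
\end{align}
and $\ket{h_\bB}$ analogously with $\ket\psi$. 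In discretized form, the time register becomes a quadrature grid with weights. Then
\begin{align}
\bra\phi\bX(t)\ket\psi = c\,\bra{h_\bA}\big(\one_{\rm time}\otimes \bC/c \oplus \bD/d\big)\ket{h_\bB}.
\end{align}
The middle operator is block-encoded using one query each to $U_\bC$ and $U_\bD$, controlled on the $\perp$ flag. The norms satisfy $\|h_\bY\|^2 = \int_0^t \|e^{s\bY}\ket\cdot\|^2\rd s + \tfrac dc \|e^{t\bY}\ket\cdot\|^2 \le \cL$, using $\|e^{s\bY}\|\le e^{s\xi_\bY}$ from the log-norm bound (Grönwall). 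Hence $|\bra\phi \bX\ket\psi| \le c\cL$ after normalization, and the normalized states carry the factor $c\,\|h_\bA\|\|h_\bB\|\le c\cL$.

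Next, we prepare the normalized states $\ket{h_\bY}/\|h_\bY\|$. The plan is a time-controlled block-encoding of $e^{s\bY}$, implemented for all grid times $s\in[0,t]$ in superposition. For this we use a standard non-unitary evolution method, for example LCHS or a truncated Dyson/Taylor series, using $\widetilde\cO(t\mu)$ queries to $U_\bY$ with normalization $\sim e^{s\xi_\bY}$ per branch. We then apply it to $\sum_s \sqrt{w_s}\ket{s}\ket{0}$, preparing the time register with amplitudes proportional to the upper bounds $e^{s\xi_\bY}$ (and $\sqrt{d/c}\,e^{t\xi_\bY}$ for the flag). The flagged good subspace then has amplitude exactly $h_\bY/\sqrt{\cL_\bY}$, where $\cL_\bY$ is the corresponding integral of upper bounds. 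This avoids amplitude amplification: the success amplitudes enter only through the known normalization $\cL_\bY\le\cL$.

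Finally, we estimate $\mathrm{Re}$ and $\mathrm{Im}$ of $\bra{0}V_\bA^\dagger W V_\bB\ket{0}$ with a Hadamard test combined with amplitude estimation, where $V_\bY$ denotes the preparation unitaries. Multiplying by $c\sqrt{\cL_\bA\cL_\bB}\le c\cL$, we need additive accuracy $\epsilon/(c\cL)$. This costs $\cO(c\cL/\epsilon)$ repetitions of circuits each of cost $\widetilde\cO(t\mu)$, giving \eqref{eq:LCHScomplexity} with high probability via median amplification. The gate overhead is polylogarithmic, coming from the time register and the quadrature.

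The main obstacle is implementing the time-controlled $e^{s\bY}$ with the correct per-branch normalization $e^{s\xi_\bY}$, rather than a worst-case $e^{t\mu}$, at cost only $\widetilde\cO(t\mu)$ for every branch. We would first try shifting $\bY\mapsto\bY-\xi_\bY\one$, which is dissipative, so that $e^{s(\bY-\xi_\bY)}$ has norm at most $1$ and admits an LCHS block-encoding with $\cO(1)$ subnormalization. The factor $e^{s\xi_\bY}$ is then reabsorbed classically into the time-register amplitudes. A secondary point is controlling the discretization error of the $s$-integral, which needs bounds on derivatives of the integrand and contributes only logarithmic factors.
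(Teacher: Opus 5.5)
Your proposal is correct, and its core matches the paper's LCHS route for Theorem~\ref{thm:main}. The shared steps are: shift $\bY\mapsto\bY-\xi_\bY\one$ so that a time-multiplexed LCHS block-encoding of $e^{s(\bY-\xi_\bY\one)}$ costs $\cO(t\mu\log(1/\epsilon'))$ queries for all $s\le t$ at once; move the factors $e^{s\xi_\bY}$ into classically known clock amplitudes; and estimate an overlap against a block-encoded middle operator to precision $\epsilon/(c\cL)$.

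The genuine difference is how you discretize the $\bC$-integral. The paper keeps a coarse clock of $M=\lceil t\mu\rceil$ steps and uses the exact identity $\int_0^t \rd s\, e^{(t-s)\bA^\dagger}\bC e^{(t-s)\bB}=\sum_m e^{t_m\bA^\dagger} I_\bC\, e^{t_m\bB}$. It block-encodes the short-time integral $I_\bC$ by a double truncated Taylor series, with constant $\le che^2\sim c/\mu$ and $\cO(K)$ extra queries to $U_\bA^\dagger,U_\bB$. It handles $\bD$ with a runway of $R\sim\mu d/c$ clock states carrying $\bD/R$, so the scale $c\cL$ arises as $(c/\mu)\times\mu\cL$.

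You instead use a fine quadrature grid with weights $\sqrt{w_s}$ in the clock amplitudes, a single flagged branch with amplitude $\sqrt{d/c}\,e^{t\xi_\bY}$, and the trivial middle operator $\bC/c\oplus\bD/d$ (one query each). The scale $c\sqrt{\cL_\bA\cL_\bB}\le c\cL$ then appears directly. This works for two reasons. First, the LCHS query cost is independent of the clock resolution. Second, placing the weights in the amplitudes keeps the normalization near $\int_0^t e^{2s\xi_\bY}\rd s$ regardless of the number of grid points, which avoids the growth of $\cN_{\xi_\bY}$ that the paper cites as its reason against Riemann sums.

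The derivative bound you left implicit is easy to supply. For $f(s)=\bra\phi e^{s\bA^\dagger}\bC e^{s\bB}\ket\psi$ one has $|f'(s)|\le 2\mu c\,e^{s(\xi_\bA+\xi_\bB)}$. A left Riemann sum with step $\Delta$ therefore errs by at most $\Delta\int_0^t|f'|\le 2\mu c\Delta\cL$, so $\cO(t\mu c\cL/\epsilon)$ grid points suffice. This costs only logarithmically many qubits and gates; a power-of-two uniform grid even makes the clock state a product state. Taking $|\xi_\bY|\le\mu$ without loss of generality and $\Delta\le 1/\mu$ keeps the discrete normalization within $e^2$ of the integral, as in Lemma~\ref{lem:norm_history_state}.

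The trade-off is as follows. Your route avoids the Taylor-series block-encoding of $I_\bC$, but needs a quadrature-error analysis and an $\epsilon$-dependent clock. The paper's exact decomposition has no discretization error and keeps the clock at $\cO(t\mu+\mu d/c)$ points. That is what lets the same $\cI$ be reused in the linear-systems approach, where the number of clock steps enters the condition number.
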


For matrices with negative log-norm, 
  we have  $\cL \leq 1/(2|\xi_\bY|) + d/c$.
These instances occur when the dynamics is  dissipative. Also, for these cases we can always cutoff $t$ at some sufficiently large value that depends on $1/|\xi_\bY|$ introducing negligible errors.
For unitary dynamics where $\xi_\bY=0$, we have $\cL=t+d/c$, and the query complexity is quadratic in $t$. 
Our quantum algorithm is then efficient for these cases. 

In Appendix~\ref{app:QAlgorithm} we actually prove a more general result that shows our algorithm is efficient even for matrices with positive log-norm.
Explicitly, if $\tilde \cL$
is a known upper bound on
    \begin{align}
        \max_{\bY \in \{ \bA, \bB \}} \int_{0}^t \rd s \norm{e^{s \bY}} + \frac{d}{c} \max_{s \in [0,t]} \norm{e^{s \bY}}  \;,
    \end{align}
the query complexity is
\begin{align}
\label{eq:QLSAcomplexity}
        \widetilde{\cO} \lb 
         \frac{c}{\epsilon} \times  \mu \tilde \cL
          \times \tilde \cL\max_{\substack{\bY \in \{ \bA, \bB \} \\
        s \in [0,t]}} \norm{e^{s \bY}} \times \log \lb t \mu \rb   \rb \;.
    \end{align}
This result might improve upon Thm.~\ref{thm:main}, provided that $\|e^{s\bY}\|$ is bounded or grows at most polynomially with time (e.g., transient dynamics).
Our algorithm for this case is based on a different approach.
 
We refer to Thms.~\ref{thm:mainformal} and \ref{thm:be-approach_time-indep},
and Table~\ref{tab:compare} in Appendix~\ref{app:QAlgorithm} for precise statements, where we 
distinguish the number of queries to $U_{\phi}$ and $U_{\psi}$ from those to the available block-encodings. 
To achieve arbitrary confidence $p \le 1$, the overhead includes a factor $\log(1/(1-p))$~\cite{knill2007optimal}.
The result extends to time-dependent matrices, see Appendix~\ref{app:QAlgorithm_time-dep}. 
The query complexity is essentially the same, and the gate complexity has an additional logarithmic dependence on the norm of the first derivatives of the matrices.

\section{Quantum algorithm}

The key idea is to reduce $\bra \phi \bX(t) \ket \psi$ to an overlap estimation problem with history states that encode the solution to a differential equation at different time steps in superposition~\cite{berry2014high}.
In particular, for $\bA$ and $\bB$ time independent, we prepare (sub-) normalized versions of the following history states:

\begin{align}
\nonumber
    \ket{\phi_{\hist}} &:= \sum_{m=0}^{M-1} \ket{m}e^{t\bA \frac{m}{M}}\ket{\phi}  + \sum_{m=M}^{M + R-1} \ket{m} e^{t \bA} \ket{\phi}, \\
    \nonumber
    \ket{\psi_{\hist}} &:= \sum_{m=0}^{M-1} \ket{m}e^{t \bB \frac{m}{M}}\ket{\psi}  + \sum_{m=M}^{M + R-1} \ket{m} e^{t \bB} \ket{\psi}.
\end{align}
Here, $\ket{m}$ is a ``clock'' register keeping track of the time step, $M \sim t \mu$ is the number of time steps and $R \sim \mu d/c$, is the number of steps for which the solution at the final time $t$ is held constant. In principle, other choices for $R$ might be more optimal depending on the problem, but this choice suffices for us.

Furthermore, letting $ I_\bC:=\int_{0}^{\frac{t}{M}} \rd \tau \, e^{\tau \bA^\dagger} \bC e^{\tau \bB}$, we can
decompose the integral term of $\bX(t)$ as
\begin{align}
    \int_0^t \rd s  \; e^{(t-s) \bA^\dagger} \bC e^{(t-s) \bB} = \sum_{m=0}^{M-1} e^{t\bA^\dagger \frac{m}{M}} I_\bC e^{t\bB \frac{m}{M}}.
\end{align}
Defining now
\begin{align}
\label{eq:cI}
    \mathcal{I} := \sum_{m=0}^{M-1} \ketbra{m}{m} \otimes I_{\bC} + \sum_{m=M}^{M+R-1} \ketbra{m}{m} \otimes \frac{\bD}{R},
\end{align}
we then obtain the desired reduction:
\begin{align}
\label{eq:overlapreductionmain}
    \bra \phi \bX(t) \ket \psi = \bra{\phi_{\hist}} \mathcal{I} \ket{\psi_{\hist}}\;.
\end{align}
This expression is exact and
Problem~\ref{prob:matrix_entry} is equivalent to an overlap estimation problem with history states.

The history states   can be prepared through different methods and the best choice depends on the behavior of $\norm{e^{s \bA}}$ and $\norm{e^{s \bB}}$ as functions of $s \in [0,t]$.
We use an approach based on
applying controlled time evolution operators via the Linear Combination of Hamiltonian Simulation (LCHS) framework~\cite{an2023lchs, an2023betterlchs, low2025optimal} and another approach based on
quantum linear system solvers~\cite{berry2014high, Berry2017lineardifferential, krovi2023improved, Berry2024time-dep_history}.

In the LCHS approach, we prepare subnormalized versions of $\ket{\phi_{\hist}}$ and $\ket{\psi_{\hist}}$ as follows. We use LCHS to construct the controlled time evolution 
\begin{align}
   \sum_{m=0}^{M-1} \ketbra{m}{m} \otimes e^{\frac{tm}{M} \lb \bY - \xi_\bY \one \rb} + \! \! \! \sum_{m=M}^{M+R-1} \ketbra{m}{m} \otimes e^{t \lb \bY -\xi_\bY \one \rb}
\end{align}
with $\bY \in \{ \bA, \bB \}$. Then, we apply it to a system-clock state prepared in
\begin{align}
    \frac{1}{\sqrt{\cN_{\xi_{\bY}}}} \lb \sum_{m=0}^{M-1} e^{t \frac{m}{M} \xi_{\bY}} \ket{m}  + \sum_{m=M}^{M+R-1} e^{t \xi_{\bY}} \ket{m} \rb \otimes \ket {\cal Y},
\end{align}
with $\cY=\phi$ or $\cY=\psi$.  

In Appendix~\ref{app:QAlgorithm}, we show that $\cN_{\xi_{\bY}} \in  \cO \lb \mu \cL \rb$.
The query complexity for preparing subnormalized history states in this way is $\cO \lb t \mu \log (1/\epsilon) \rb$~\cite{low2025optimal}.  Additionally, since $\|I_\bC\| \le c/\mu$ and $\|\bD/R\|\le c/\mu$,
we can block-encode $\cI$ in Eq.~\eqref{eq:cI} with block-encoding constant  $\lambda_\cI \propto c/\mu$.
It follows that Eq.~\eqref{eq:overlapreductionmain}
is proportional to an overlap with subnormalized matrices and states, where the normalization constant is 
proportional to $(c/\mu) \times (\mu \cL)=c\cL$. To estimate $\bra \phi \bX(t) \ket \psi$ within additive error $\epsilon$, we then need
to set the precision $\epsilon'=\epsilon/(c\cL)$ in the overlap estimation, giving a factor $1/\epsilon'=c\cL/\epsilon$ to the complexity using Heisenberg-limited approaches~\cite{knill2007optimal}.
Adding the complexity of initial state preparation,
we obtain Eq.~\eqref{eq:LCHScomplexity} for the overall query complexity.

In the linear systems approach, the states $\ket{\phi_{\hist}}$ and $\ket{\psi_{\hist}}$ are  solutions to systems of linear equations
resulting from a discretization of the differential equations. Preparing normalized versions of these history states has query cost that is linear in the condition number of the linear systems~\cite{childs2017quantum,costa2024linearsystem}, which we prove to be $\cO(\mu \tilde \cL)$.
(This generalizes prior results to
the case of positive log-norms, as long as  $\|e^{s\bA}\|$ and $\|e^{s\bB}\|$ 
are bounded.)
We also show $\|\ket{\phi_{\hist}}\| \times\|\ket{\psi_{\hist}}\| \in \cO ( \mu \tilde \cL \max_{s,\bY } \norm{e^{s \bY}} ) $.
Like in LCHS, for error tolerance $\epsilon$ in the entry of $\bX(t)$, we need to estimate the overlap with the subnormalized history states and the block-encoding of $\cI$ within error $\epsilon' \in \cO \lb \epsilon/(\|\ket{\phi_{\hist}}\| \times\|\ket{\psi_{\hist}}\| \times \lambda_{\cI} )\rb$. 
The number of queries to the history states and the block-encoding of $\cI$ is then   $\cO \lb 1/\epsilon' \rb$~\cite{knill2007optimal}. 
The overall query complexity to the available block-encodings and state preparation unitaries is therefore   $\widetilde{\cO} \lb  \frac{c \mu}{\epsilon} \tilde \cL^2 \times \max_{s,\bY} \norm{e^{s \bY}} \times \log \lb t \mu \rb \rb$, where the additional factor $\log (t \mu)$ stems from the truncation order of a Taylor series approximation to the matrix exponentials in the linear systems. 
This is Eq.~\eqref{eq:QLSAcomplexity}.
Note that we also require accurate estimates of the norms of the history states. These can be obtained using a method in Ref.~\cite{dalzell2024shortcut} with similar  complexity.

In Appendix~\ref{app:preconditioning} we also show how to use preconditioning to improve the complexity of the linear systems approach in certain cases, which might be of independent interest.

\section{Dissipative dynamics}

We illustrate an application to dissipative dynamics
of large free-fermion systems~\cite{purkayastha2022lyapunov}. 
The results can be extended to other physical systems, such as classical systems of coupled harmonic oscillators or bosonic quadratic systems, with similar conclusions. 

Let the system-bath Hamiltonian be $H=H_S + H_B + H_{\rm int}$, where $H_S = \sum_{j,k=1}^{N_s} \alpha_{jk} c^\dagger_j c^{}_k$, $H_B  = \sum_{l=1}^{N_b} \epsilon_l b^\dagger_l b^{}_l$, and $H_{\rm int}=\sum_{j=1}^{N_s} \sum_{l=1}^{N_b} \nu_{jl} c^\dagger_j b^{}_l +H.c.$
are the Hamiltonians for the system, the bath, and the system-bath interactions, respectively.
Here, $c^\dagger_j$ and $b^\dagger_l$
are the fermionic creation operators for the system and bath,
which satisfy the anticommutation relations
$\{c^\dagger_j, c^{}_k\}=\delta_{j,k}$
and $\{b^\dagger_l, b^{}_m\}=\delta_{j,k}$. The coefficients $\alpha_{jk} \in \mathbb C$ define the Hermitian $N_s \times N_s$ matrix $\bA$, and we are interested in the case where $N_s$ is very large and the interactions are small.
In the Heisenberg picture, a time evolved operator $O(t)$
satisfies $\frac{d}{dt} O(t) = \ri [H,O(t)]$. If the initial state of the full system-bath is $\rho$, then $\frac{\rd}{\rd t}\langle O(t)\rangle=\ri \langle[H,O(t)]\rangle$, where $\langle .\rangle=\tr(\rho.)$ is the expectation.

Consider the covariance matrix $\bX(t)$ whose entries are  the two-point correlators $\langle c^\dagger_j c^{}_k(t)\rangle$ at time $t$. Under a standard Markovian approximation, the evolution of this matrix can be shown to satisfy
 \begin{align}
 \label{eq:QLEmain}
      \frac{\rd}{\rd t} \bX(t) = \bB^\dagger \bX(t)
      + \bX(t) \bB + \bC \;,
  \end{align}
  where $\bB:=-\ri \bA  - \bGamma$, and ${\bGamma}$ is a positive semidefinite matrix that results from $H_{\rm int}$; that is, ${\bGamma}$ models the dissipation of the system. The matrix $\bC$ models the noise and quantum fluctuations due to the bath. 
  We give the  derivation of Eq.~\eqref{eq:QLEmain} in Appendix~\ref{app:QLE}. 
If the system evolves towards its thermal equilibrium at inverse temperature $\beta \ge 0$,
the fixed point is $\bX_\beta=(\one + e^{\beta \bA})^{-1}$,
which is the Fermi-Dirac distribution  for the Gibbs state. To reach this fixed point via Eq.~\eqref{eq:QLEmain}, the matrices must satisfy $\bC = \bX_\beta \bGamma + \bGamma \bX_\beta$.

We can readily apply our algorithm to this example.
We assume $H_S$ can be accessed efficiently via a block-encoding $U_\bA$ of $\bA/a$   and also assume a block-encoding $U_\bGamma$ of $\bGamma/\gamma$. 
Note that $\|\bX_\beta\|\le 1$ and 
a block-encoding $U_{\bX_\beta}$ for $\bX_\beta$ within error $\epsilon'$ can be constructed
using $\cO(\beta a \log(\beta a/\epsilon'))$
queries to $U_\bA$ and its inverse~\cite{gilyen2019qsvt,chen2024quantum}. 
Together, they provide a block-encoding of $\bC/(2\gamma)$ within error $\varepsilon$ using standard techniques,
also with cost $\cO(\beta a \log(\beta a/(\varepsilon \gamma)))$.
For the initial condition
we consider an infinite-temperature initial state, so that $\bD=\frac 1 2 \one$.

The characteristic time to observe dissipation in this problem is $t \sim 1/\gamma$, and in the weak interaction limit, we assume $\gamma \ll a$.
Then, the parameters in Thm.~\ref{thm:main} are as follows: $\mu = a$, $c = 2 \gamma$, $d =1/2$. Also, $\cL=\cO(1/\gamma)$, and the  query complexity from Eq.~\eqref{eq:LCHScomplexity} is $\widetilde \cO(a/(\gamma \epsilon)\times \log(a/\gamma))$ calls to $U_\bA$ and inverses,
and the additional query cost
$\widetilde \cO((\beta a  /\epsilon) \times \log(\beta a))$ for the calls to $U_\bC$ and inverse, see Thm.~\ref{thm:mainformal}.

\section{Quantum speedups}

The query complexity of our algorithm in the previous example is only logarithmic in $N_s$ under reasonable assumptions, but it depends almost
linearly on some effective time $t_e \sim 1/\gamma + \beta$ if $\epsilon$ is constant. Hence, for quantum systems with local interactions in a lattice  in $D$ spatial dimensions, a lightcone argument gives a relevant 
system size $N_e \sim (a t_e)^{D}$ for local properties. Since now $t_e$ and $N_e$ are polynomially related, the quantum speedup for these instances can only be polynomial. This contrasts the general case of  Hamiltonians with non geometrically-local interactions, where the quantum speedup is exponential.

Consider now the problem of computing an entry of $\bX(t)$ in the standard basis and within constant, small error. 
A main classical algorithm for this problem is based on Krylov subspaces~\cite{behr2019solution} and the related Kernel Polynomial method~\cite{weisse2006kernel}.
These involve matrix-vector multiplications,
but since the matrices are sparse for these instances, the cost of each such multiplication is $\cO(N_e)$. In Appendix~\ref{app:classicalalgs} we argue that
the classical complexity   is at least linear in space-time or $\Omega(a t_e N_e)=\Omega((a t_e)^{1+D}))$. The argument is based on a  selection of Krylov subspaces generated by repeated action of $\bA$ and $\bB$ on some initial vectors. 
In the best case, the classical complexity contains
an additional factor $\sim(a t_e)^{D}$ over the quantum query complexity. This suggests a polynomial quantum speedup that is more pronounced as $D$ increases. For example, the speedup is quartic when $D=3$, giving evidence of the practicality of our algorithm~\cite{babbush2021focus}. See Ref.~\cite{simoncini2016computational}
for other classical algorithms,
often with a superlinear scaling.

These polynomial quantum speedups are based on a lightcone argument. The speedups could be more significant for systems with longer-range interactions (e.g., $\sim 1/r^3$) or when the matrices are time-dependent, since our algorithm has a mild overhead for these cases in contrast to classical algorithms. Moreover, the quantum memory is $\cO(\log N)$,
while these classical algorithms require storing $\Omega(N)$ bits.

We also note that, in some cases,
faster quantum algorithms for simulating dissipative dynamics might be possible using the idea of spectral  amplification~\cite{king2026quantum}. For these cases, 
the exponential operators admit a more efficient LCHS decomposition using the Hubbard-Stratonovich identity~\cite{chowdhury2016quantum}.

\vspace{-.1cm}

\section{Optimality}

The prior discussion raises the question of whether improved quantum algorithms are possible. The answer is negative: 
 we  prove a lower bound $\Omega(\cL t/\epsilon)$ on the query complexity applicable to certain instances, showing our  algorithm is almost optimal. 
This lower bound is mainly due to the integral term in Eq.~\eqref{eq:mainsolution}.
\begin{theorem}[Query lower bound]
\label{thm:lowerbound}
   Let $\bA \preceq 0$ be diagonal, $\bC=\one$, $\bD=\bB=\bf 0$, and assume block-encoding access to $\bA$ (with constant $a=1$). Let $t \ge 6$ and $\epsilon>0$ satisfy $\epsilon \le t/100$.
    Then, computing $\bra{0}\bX(t)\ket{0}$ within error $\epsilon$
    and success probability $p \in (3/4,1)$ requires $\Omega (\frac{\cL t}{\epsilon}|\log(1-p)|)$ uses of the block-encoding and its inverse, where $\cL:=\int_0^t \rd s \; e^{2 s\xi_\bA}$ and $\xi_\bA$ is the log-norm of $\bA$. 
\end{theorem}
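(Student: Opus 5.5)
With $\bB=\bD=\mathbf 0$ and $\bC=\one$, the solution is $\bX(t)=\int_0^t \rd s\, e^{s\bA^\dagger}$. Since $\bA$ is diagonal, taking $\bA=-\lambda\ket0\bra0\oplus(\cdots)$ gives
\[
\bra0\bX(t)\ket0=\int_0^t e^{-\lambda s}\,\rd s=\frac{1-e^{-\lambda t}}{\lambda}=:f(\lambda),\qquad \lambda\in[0,1].
\]
The log-norm is $\xi_\bA=-\lambda_{\min}$, where $\lambda_{\min}$ is the smallest diagonal magnitude. I will therefore build instances in which every diagonal entry equals the same $-\lambda$, so that $\xi_\bA=-\lambda$ and $\cL=(1-e^{-2\lambda t})/(2\lambda)$.

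The plan is a reduction from phase or amplitude estimation, whose Heisenberg-limited lower bound is known. Two values $\lambda_0,\lambda_1$ must be distinguished, and they are encoded in a block-encoding $U_\bA$ built from one query to an unknown single-qubit rotation $R(\theta)$. Concretely, take $U_\bA$ so that its top-left block is $\cos\theta$ times a controlled phase, giving $\bA/a=-\cos(\theta)\one$ (after a sign fix via a $-1$ phase). Then $\lambda=\cos\theta$, and each use of $U_\bA$ or its inverse costs one query to $R(\theta)$ or $R(\theta)^\dagger$.

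Known bounds for distinguishing $\theta_0$ from $\theta_1$ with success probability $p$ require $\Omega(|\log(1-p)|/|\theta_0-\theta_1|)$ queries. This follows from the standard hybrid/Bures-angle argument (e.g.\ the Nayak--Wu or Knill--Ortiz--Somma amplitude-estimation lower bound), with the $|\log(1-p)|$ factor coming from the exponential-concentration version.

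An $\epsilon$-accurate estimate of $f(\lambda)$ distinguishes the two instances whenever $|f(\lambda_0)-f(\lambda_1)|>2\epsilon$. The crux is to pick $\lambda$ so as to maximize $|\theta_0-\theta_1|^{-1}$ for a fixed separation $2\epsilon$, i.e., to exploit $|f'(\lambda)|\cdot|\rd\lambda/\rd\theta|$. Two regimes arise.

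\begin{itemize}[leftmargin=*]
\item \textbf{Dissipative regime.} When $\lambda t\gtrsim1$, we have $f\approx1/\lambda$ and $|f'|\approx1/\lambda^2\sim\cL^2$, while $|\rd\lambda/\rd\theta|=\sin\theta\approx1$ for $\lambda$ away from $1$. This would give only $\cL^2/\epsilon$, which is the wrong scaling since the target is $\cL t$. The operating point must instead be chosen so that the bound matches $\cL t$.
\item \textbf{Slightly dissipative regime.} When $\lambda\lesssim1/t$, we have $f\approx t-\lambda t^2/2$ and $f'\approx -t^2/2\sim\cL t$ because $\cL\approx t$. Near $\theta=\pi/2$ one has $\rd\lambda/\rd\theta\approx1$, so a change $\delta\theta$ shifts $f$ by about $t^2\delta\theta/2$. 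Distinguishing therefore needs $\delta\theta\sim\epsilon/t^2$, hence $\Omega(t^2/\epsilon)=\Omega(\cL t/\epsilon)$ queries.
\end{itemize}

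To cover general $\xi_\bA=-\lambda$, I would perturb around a base $\lambda$. The shift satisfies $f'(\lambda)=\frac{-1+(1+\lambda t)e^{-\lambda t}}{\lambda^2}$, and I would show $|f'(\lambda)|\gtrsim \cL t$ in the relevant range, or simply restrict the instance class to the one where the theorem's $\cL$ matches.

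The hypotheses $t\ge6$ and $\epsilon\le t/100$ ensure that the needed $\delta\theta$ is small enough for the linearization to hold and for $\lambda_0,\lambda_1\in[0,1]$. Both instances should have equal log-norm up to constants, so that $\cL$ is well defined for the class.

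The main obstacle is matching the precise form $\cL t$ for all $\xi_\bA$, not just $\xi_\bA\approx0$. That requires a careful bound of $|f'|$ against $\cL t$ and handling the constant factors induced by the nonlinear map $\theta\mapsto\lambda$. If a uniform match fails, I would state the bound for the instance family with $|\xi_\bA|\le 1/t$, where $\cL=\Theta(t)$.
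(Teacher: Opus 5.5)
Your constant-$p$ argument is sound and is essentially the paper's. You use instances with $\lambda t\lesssim 1$, where $\bra{0}\bX(t)\ket{0}\approx t-\lambda t^2/2$ and $\cL=\Theta(t)$. An angle-to-$\lambda$ encoding costs $\cO(1)$ oracle calls per use of the block-encoding, the resulting gap is of order $t^2\Delta\theta$, and this gives $\Omega(t^2/\epsilon)\supseteq\Omega(\cL t/\epsilon)$.

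\textbf{The gap is the factor $|\log(1-p)|$.} You reduce from distinguishing two \emph{fixed} angles $\theta_0,\theta_1$, and assert that a hybrid/Bures-angle argument gives $\Omega(|\log(1-p)|/\Delta)$ with $\Delta=|\theta_0-\theta_1|$. No two-point reduction can give this, for two reasons:
\begin{itemize}
\item The hybrid argument only shows that the two final states are $\cO(Q\Delta)$ apart. That yields $Q=\Omega(1/\Delta)$ for constant $p$, and nothing more as $p\to 1$.
\item Conversely, two distinct unitaries can be discriminated \emph{perfectly} with finitely many uses. For $R(\theta_0)$ versus $R(\theta_1)$, about $\pi/(2\Delta)$ parallel uses suffice. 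At that point the eigenvalues of $(R(\theta_0)^\dagger R(\theta_1))^{\otimes Q}$ are no longer contained in an open half-circle, so some input is mapped to orthogonal outputs.
\end{itemize}
A two-point problem is therefore solvable with $\cO(1/\Delta)$ queries for any $p$. Your proof establishes the theorem only without the $|\log(1-p)|$ factor.

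\textbf{The missing idea is a \emph{composite} hypothesis:} one angle versus a whole interval of angles. The acceptance probability is a trigonometric polynomial of degree $\cO(Q)$ in $\theta$. It must be at least $p$ at one point and at most $1-p$ on a set of measure close to $2\pi$. The polynomial-method bound of \cite{mande2023tight} then gives $Q=\Omega(|\log(1-p)|/\delta)$. This is the paper's Claim~\ref{claim:lowerbound}: $\theta=\delta$ versus $\theta\in(2\delta,\pi-2\delta]$, with phase oracle $U=e^{\ri\theta\ketbra{0}}$ and the block-encoding of $\bA$ using $U$ and $U^{-1}$ once.

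The fix costs you little. An entry-estimation algorithm must work on every instance, and $f(\lambda)=(1-e^{-\lambda t})/\lambda$ is monotone. So the gap at the endpoint $\lambda_1$ covers all $\lambda\in[\lambda_1,1]$ at once. The simplest route is to adopt the phase oracle with $\lambda=\sin\theta$.

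Finally, your worry about matching $\cL t$ for strongly dissipative $\xi_\bA$ is not an obstacle. The paper's hard instances also satisfy $t\sin\delta\le 1$, so its bound is really $\Omega(t^2|\log(1-p)|/\epsilon)$, converted to the stated form using $\cL\le t$ for $\bA\preceq 0$. Your fallback reading is the correct one. Instance-wise, for $\lambda t\gg 1$ the complexity is only $\widetilde{\cO}(\cL^2/\epsilon)$, because $t$ can be cut off at $\widetilde{\cO}(1/\lambda)$.
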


For these instances, $\|\bD\|=0$, $\|\bA\|\le 1$, $\|e^{t\bA}\| \le 1$, $\|e^{t\bB}\|=1$, and $\|\bC\|=c=1$. Then $\mu=1$, and $\cL \le t$, and the 
 bound from Thm.~\ref{thm:main} is $\widetilde{\cO}(\cL t/\epsilon)$. Since the
lower bound is tight up to logarithmic factors,
 our algorithm is optimal and no further general asymptotic improvements in query complexity are possible within the block-encoding access model.

The proof of Thm.~\ref{thm:lowerbound}
is in Appendix~\ref{app:lowerbound}, where we give a slightly stronger claim. It is based on a reduction from quantum phase estimation (QPE) and a decision version of that problem. We adapt the lower bound $\Omega(\frac 1 \delta |\log (1-p)|)$ for QPE within error $\delta$ and success probability $p$ in  Ref.~\cite{mande2023tight}. One can show
that computing $\bra{0}\bX(t)\ket{0}$ within error $\epsilon$ and success probability $p$ allows one to solve the decision problem. 
For these instances, we consider $\bA=-\sin\theta \ketbra 0-\sum_{n >0}\ketbra n$, where $ 0< \theta \le \pi/16$ and  $\bA \preceq 0$. The solution to the matrix differential equation satisfies $\bra{0}\bX(t)\ket{0}= \cL=(1-e^{-2t \sin \theta})/(2\sin \theta)$.
In particular, if the goal is to decide if $\theta=\delta>0$
or $\theta \ge 2\delta$, the cases can be distinguished if we set the precision to $\epsilon =\cO(\cL t \sin \delta)$. Then, $\cL t/\epsilon =\cO(1/\delta)$ and the QPE lower bound implies the lower bound $\Omega(\cL t |\log(1-p)|/\epsilon)$
for entry estimation.

\section{Conclusions and outlook}

Motivated by the limitations of existing quantum differential equation solvers,
we considered the problem of matrix differential equations and provided an optimal quantum algorithm. Our results extend the range of applications for quantum computers in quantum simulation and beyond; particularly, we avoid the readout issues of prior approaches (e.g., HHL) due to the normalization of quantum states. 
Moreover, for some relevant instances, we can 
access the initial conditions in a matrix $\bD$ efficiently,
while this could also be challenging for prior methods
that encode them in a quantum state (known as the data loading problem).

To highlight the advantages of our approach, we discussed an end-to-end application to dissipative quantum dynamics. 
This problem could also be tackled
via a reduction to differential equations in vectorized form,
but that approach would have been inefficient.
When analyzing this application, we provided evidence for polynomial quantum speedups over the best known classical algorithms.

\section{Acknowledgements}

We thank Ryan Babbush, Guang Hao Low, Nicholas Rubin, Malte Schade, and Nathan Wiebe for discussions.
DWB worked on this project under a sponsored research agreement with Google Quantum AI.
This project is supported by Australian Research Council Discovery Projects DP210101367, DP220101602, and DP260102543.

\bibliography{refs.bib}

\clearpage

 \onecolumngrid
 \appendix
 
{\center{\Large Supplementary Material}}

\section{Quantum algorithm for the time-independent case: Proof of Thm.~\ref{thm:main}}
\label{app:QAlgorithm}

Following Eq.~\eqref{eq:mainsolution}, our goal is to compute the entry $\bra \psi \bX(t) \ket \phi$ within error $\epsilon \ge 0$. To this end, we will prepare history states that, in particular, will allow us to compute both the integral part of the solution and the part that depends on the final time $t$. In this appendix, we consider time-independent matrices. The time-dependent setting is discussed in Appendix~\ref{app:QAlgorithm_time-dep}.
We present two methods for preparing history states, neither of which is strictly better than the other, and the best approach will depend on the instances.

First, we discuss preparing history states via an approach based on quantum linear system solvers~\cite{berry2014high, Berry2017lineardifferential, krovi2023improved,Berry2024time-dep_history}. 
Then, we present an method for preparing history states based on block-encoding controlled time-evolution operators via Linear Combination of Hamiltonian Simulation (LCHS)~\cite{an2023lchs, an2023betterlchs, low2025optimal}.
The LCHS approach has generally better complexity in terms of queries to the initial state preparation unitaries $U_{\phi}$ and $U_{\psi}$, and the overall query complexity can have slightly better scaling than the linear systems approach in the case of (nearly) unitary dynamics or exponential growth/decay as in dissipation. On the other hand, the linear systems approach can have exponentially better scaling than the LCHS approach in the case of transient dynamics where $\bA$ and $\bB$ have positive log-norm but the solution remains bounded or grows only polynomially as a function of time; see, e.g., Table~\ref{tab:compare} for a more detailed comparison.

The complexity of linear system based quantum differential equation solvers~\cite{berry2014high, Berry2017lineardifferential, krovi2023improved,Berry2024time-dep_history} depends on the condition number of the linear system encoding the discretized time dynamics. In comparison to prior work, we prove a tighter bound on the condition number in terms of a time integral of the spectral norm of the evolution operator, which might be of independent interest. Our improved bound on the condition number is applicable to arbitrary matrices, including time-dependent matrices. Recent work~\cite{An2026fastforwarding} also proved tighter bounds on the condition number of linear system based quantum differential equation solvers but their analysis is restricted to dissipative dynamics (non-positive log-norm) and not general matrices.

\subsection{History state preparation via quantum linear system solvers}

The main idea behind linear systems based quantum differential equation solvers~\cite{berry2014high, Berry2017lineardifferential, krovi2023improved,Berry2024time-dep_history} is to reduce a linear differential equation, such as $\frac{\rd}{\rd t} \vec x(t) = A  \vec x(t)$, to a large system of linear equations of the form $\cA \vec \cX = \vec \cB$, where $\cB$
contains the information of the initial condition $\vec x(0)$. The solution to this system is such that $\vec \cX$ is formed by many blocks, each corresponding to $\vec x(s)$ for many discrete values of $s$, and $0 \le s \le t$.
That is, $\vec \cX$ contains the information about the history of the solution, and corresponds to what we call the ``history state''.

To be more specific, let us consider the problem of producing the history state for the following linear vector differential equation:
\begin{align}
    \frac{\rd \vec x }{\rd t} = A \vec x(t), \quad \vec x(0) = \vec x_{\init},
\end{align}
with $A \in \mathbb{C}^{N \times N}$  assumed to be time-independent and $\vec x_{\init} \in \mathbb{C}^N$. The solution is
\begin{align}
    \vec x(t) = e^{tA} \vec x_{\init}.
\end{align}

Let $\ket{x_{\init}}:=\vec x_{\init} / \|\vec x_{\init}\|$ be a normalized quantum state that encodes the initial vector $\vec x_{\init}$ and let $U_x$ be a unitary that prepares it as $U_x \ket{0} =\ket{x_{\init}}$.
The linear systems approach allows us to implement a unitary quantum circuit $U_{\mathrm{hist}}^{(\QLSA)}$ such that
\begin{align}
\label{eq:historystateunitary}
  U_{\mathrm{hist}}^{(\QLSA)}\ket{0}_{\rm a}\ket{0}_{\rm clock}\ket{0}_{\hist}  \mapsto \sqrt{p_0} \ket 0_{\rm a}\frac{1}{\sqrt{\cN_A}}\left( \sum_{m=0}^{M-1} \ket{m}_{\rm clock}e^{\frac{t}M m A}\ket{x_{\init}}_{\hist} + \sum_{m=M}^{M+R-1} \ket{m}_{\rm clock}e^{t A}\ket{x_{\init}}_{\hist}\right) + \ket{0^\perp} \; ,
\end{align}
within error $\epshist \ge 0$ in Euclidean norm, where $p_0 >0$
is a constant, `a' is an ancilla qubit that flags success, `clock' is a register that encodes the time step using $\log_2 (M+R)$ qubits,  $\cN_A:=\sum_{m=0}^{M-1}\|e^{\frac{t}M m A}\ket{x_{\init}}\|^2+ R \|e^{tA}\ket{x_{\init}}\|^2$ is a normalization constant, and $\ket{0^\perp}$ is a (subnormalized) state orthogonal to $\ket 0_{\rm a}$. The dimensions $M\ge 1$ and $R\ge 1$ will be chosen according to the problem parameters including precision requirements. The last $R-1$ states are often referred to as the `runway' 
where the solution at the final time $t$ is held constant. This is often done
to boost the success probability of preparing a quantum state proportional to $\vec x(t)$, since for larger $R$, a measurement of the clock register has higher probability of returning $m \ge M$.

We would like first to find a simpler expression
for $\cN_A$ to state our complexities.

\begin{lemma}[Norm of the history state]
    Let $t \ge 0$, $a>0$ be such that $\|A\|\le a$,
    and set $M = \lceil ta\rceil$.
    Then, the normalization constant satisfies
    \begin{align}
    \label{eq:normalization_factor}
        \cN_A \le \frac{e^2 M} t \int_0^t \rd s \; \norm{e^{sA}\ket{x_{\init}}}^2 + R \|e^{tA}\ket{x_{\init}}\|^2 \;.
    \end{align}
\label{lem:norm_history_state}
\end{lemma}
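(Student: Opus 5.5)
The runway term $R\|e^{tA}\ket{x_{\init}}\|^2$ appears identically on both sides, so it suffices to bound the sum $\sum_{m=0}^{M-1}\|e^{\frac{t}{M}mA}\ket{x_{\init}}\|^2$ by $\frac{e^2 M}{t}\int_0^t \rd s\,\|e^{sA}\ket{x_{\init}}\|^2$. The plan is to compare each discrete sample with an average of the continuous function over an adjacent interval of length $h:=t/M$. Since $M=\lceil ta\rceil \ge ta$, we have $ha\le 1$, which is the only property of $M$ we will use. (If $t=0$ we can take $M=0$ or treat the statement as vacuous/trivial; we assume $t>0$.)

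The key step is a pointwise comparison. For $s\in[mh,(m+1)h]$ write $s=mh+\tau$ with $\tau\in[0,h]$. Then $e^{mhA}\ket{x_{\init}}=e^{-\tau A}e^{sA}\ket{x_{\init}}$, so
\begin{align}
\norm{e^{mhA}\ket{x_{\init}}}\le \norm{e^{-\tau A}}\,\norm{e^{sA}\ket{x_{\init}}}\le e^{\tau a}\norm{e^{sA}\ket{x_{\init}}}\le e\,\norm{e^{sA}\ket{x_{\init}}},
\end{align}
using $\|e^{-\tau A}\|\le e^{\tau\|A\|}\le e^{ha}\le e$. Squaring gives $\|e^{mhA}\ket{x_{\init}}\|^2\le e^2\|e^{sA}\ket{x_{\init}}\|^2$ for every $s$ in that interval. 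Note that this uses the backward propagator $e^{-\tau A}$, which is what allows the bound to hold for arbitrary (non-normal, possibly growing) $A$ with only $\|A\|\le a$.

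Next we average over the interval. Integrating the inequality over $s\in[mh,(m+1)h]$ and dividing by $h$ gives $\|e^{mhA}\ket{x_{\init}}\|^2\le \frac{e^2}{h}\int_{mh}^{(m+1)h}\rd s\,\|e^{sA}\ket{x_{\init}}\|^2$. Summing over $m=0,\dots,M-1$, the intervals tile $[0,t]$ exactly, so the sum is at most $\frac{e^2 M}{t}\int_0^t\rd s\,\|e^{sA}\ket{x_{\init}}\|^2$. Adding the runway term then yields Eq.~\eqref{eq:normalization_factor}.

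No step is a serious obstacle. The one subtlety is choosing the direction of comparison: we need the sample point to lie at the left endpoint and propagate backward, rather than bounding forward with $\|e^{\tau A}\|$ from the right. The forward direction would require a lower bound on $\|e^{sA}\ket{x_{\init}}\|$, which is not available; the backward direction only needs $\|e^{-\tau A}\|\le e^{\tau a}$.
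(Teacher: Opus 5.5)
Your proposal is correct and follows essentially the same argument as the paper. You write each sample as $e^{-\tau A}e^{sA}\ket{x_{\init}}$, bound $\|e^{-\tau A}\|^2\le e^2$ using $ha\le 1$, average over $[mh,(m+1)h]$, and sum so the intervals tile $[0,t]$. The paper does the same after rescaling to $\tilde A = At/M$ with unit-length intervals, which is only a cosmetic difference.
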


\begin{proof}
    Recall that $\cN_A=\sum_{m=0}^{M-1}\|e^{\frac{t}M m A}\ket{x_{\init}}\|^2+ R \|e^{tA}\ket{x_{\init}}\|^2$.
    Then, the second term in $\cN_A$ already appears in Eq.~\eqref{eq:normalization_factor}, and the goal is to bound the first term.
    To simplify the argument, let us define a rescaled matrix $\wA := At/M$ with $\norm{\wA} \le 1$. Then the first term is $\sum_{m=0}^{M-1} \norm{e^{m \wA }\ket{x_{\init}}}^2$ and we obtain
    \begin{align}
      \sum_{m=0}^{M-1}  \norm{e^{m \wA }\ket{x_{\init}}}^2 & =  \sum_{m=0}^{M-1} \int_0^1 \rd s'
         \norm{e^{m \wA }\ket{x_{\init}}}^2 \nn
         & =  \sum_{m=0}^{M-1} \int_0^1 \rd s'
         \norm{e^{-s'\wA} e^{(m+s') \wA }\ket{x_{\init}}}^2 \nn
         & \le   \sum_{m=0}^{M-1} \int_0^1 \rd s'
        \|e^{-s' \wA} \|^2 \|e^{(m+s')\wA}\ket{x_{\init}}\|^2  \nn
        & \le   \sum_{m=0}^{M-1} \int_0^1 \rd s'
        e^{2s' \|\wA\|} \|e^{(m+s')\wA}\ket{x_{\init}}\|^2 \nn
        & \le e^2   \sum_{m=0}^{M-1} \int_m^{m+1} \rd s'\|e^{s'\wA}\ket{x_{\init}}\|^2 \nn
        & = e^2 \int_0^{M} \rd s'\|e^{s'\wA}\ket{x_{\init}}\|^2 \nn
        &\le e^2 \frac{M}{t} \int_0^{t} \rd s\|e^{sA}\ket{x_{\init}}\|^2 .
    \end{align}
\end{proof}

Before providing upper bounds on the query complexity of implementing $U_{\mathrm{hist}}^{(\QLSA)}$, let us first present a useful lemma that shows that the spectral norm of a block matrix is upper bounded by the spectral norm of a compressed matrix whose entries are given by the spectral norms of the blocks of the original matrix.

\begin{lemma}[Spectral norm bound for block matrices]
    Let $\cB \in \mathbb{C}^{LN \times LN}$ be a block matrix with blocks $\cB_{mn} \in \mathbb{C}^{N \times N}$. Let $B \in \mathbb{C}^{L \times L}$ be a matrix with entries $B_{mn} = \norm{\cB_{mn}}$. Then we have the following spectral norm inequality:
    \begin{align}
        \norm{\cB} \leq \norm{B}.
    \end{align}
\label{lem:block_matrix_norm}
\end{lemma}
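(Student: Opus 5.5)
The plan is to use the variational characterization of the spectral norm, $\norm{\cB} = \sup_{\vec u,\vec v} |\langle \vec u, \cB \vec v\rangle|$ over unit vectors $\vec u,\vec v \in \mathbb{C}^{LN}$, and to reduce the problem to the matrix $B$ by passing to the vectors of block norms.

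Write $\vec u = (\vec u_1,\dots,\vec u_L)$ and $\vec v = (\vec v_1,\dots,\vec v_L)$ with blocks $\vec u_m, \vec v_n \in \mathbb{C}^N$. Then
\begin{align}
    |\langle \vec u, \cB \vec v\rangle| = \Big|\sum_{m,n=1}^{L} \langle \vec u_m, \cB_{mn} \vec v_n\rangle\Big| \le \sum_{m,n=1}^{L} \norm{\vec u_m}\, \norm{\cB_{mn}}\, \norm{\vec v_n},
\end{align}
which follows from the triangle inequality, Cauchy--Schwarz, and the definition of the operator norm of each block. Next define $\alpha,\beta \in \mathbb{R}_{\ge 0}^L$ by $\alpha_m := \norm{\vec u_m}$ and $\beta_n := \norm{\vec v_n}$. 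These satisfy $\norm{\alpha}^2 = \sum_m \norm{\vec u_m}^2 = \norm{\vec u}^2 = 1$, and likewise $\norm{\beta}=1$.

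The right-hand side above is exactly $\alpha^T B \beta = \langle \alpha, B\beta\rangle$. It is therefore at most $\norm{\alpha}\,\norm{B}\,\norm{\beta} = \norm{B}$. Taking the supremum over $\vec u,\vec v$ gives $\norm{\cB} \le \norm{B}$.

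There is no real obstacle here; the only points needing care are that the block-norm vectors $\alpha,\beta$ remain unit vectors and that the entries of $B$ are nonnegative. The latter means no phases are lost when the sum is recognized as a bilinear form in $B$.
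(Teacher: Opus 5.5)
Your proposal is correct and follows the paper's proof essentially step for step. Both use the variational characterization of the spectral norm and apply the triangle inequality and Cauchy--Schwarz blockwise. Both then recognize the resulting sum as the bilinear form of $B$ evaluated on the unit vectors of block norms.
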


\begin{proof}
    Recall the following definition of the spectral norm in terms of unit vectors $\vec u,\vec v \in \mathbb{C}^{LN}$:
    \begin{align}
        \norm{\cB} = \max_{\norm{\vec u} = \norm{\vec v} = 1} | \vec u^\dagger \cB \vec v |.
    \end{align}
    Using the triangle inequality and Cauchy-Schwarz, we have that
    \begin{align}
        | \vec u^\dagger \cB \vec v | &= \left|\sum_{m,n=1}^L \vec u_m^\dagger \cB_{mn} \vec v_n \right| \nn
        &\leq \sum_{m,n=1}^L \left| \vec u_m^\dagger \cB_{mn}  \vec v_n \right| \nn
        &\leq \sum_{m,n=1}^L \norm{\vec u_m} \norm{\cB_{mn}} \norm{\vec v_n} \nn
        &\leq \underline{u}^{\top} B \underline{v},
    \end{align}
    where $\vec u_m , \ \vec v_n \in \mathbb C^N$, $\underline{u} := \lb \norm{\vec u_1}, \norm{\vec u_2}, \dots, \norm{\vec u_L}  \rb^\top \in \mathbb{C}^L$ and similarly, $\underline{v} := \lb \norm{\vec v_1}, \norm{\vec v_2}, \dots, \norm{\vec v_L}  \rb^\top \in \mathbb{C}^L$. Note that $\norm{\underline{u}} = \norm{\underline{v}} = 1$ since $\norm{u} = \norm{v} = 1$. This means that 
    \begin{align}
        \underline{u}^{\top} B \underline{v} \leq \max_{\norm{x} = \norm{y} = 1} |x^\dagger B y| = \norm{B}
    \end{align}
    and therefore, $\norm{\cB} \leq \norm{B}$.
\end{proof}

The theorem below is adapted from the main theorem in Ref.~\cite{Berry2024time-dep_history} and provides an upper bound on the query complexity of implementing a unitary $U_{\mathrm{hist}}^{(\QLSA)}$ that outputs the history state with high probability and within error $\epshist$.

\begin{theorem}[History state preparation via the linear systems approach (time-independent)]
    Let $U_A$ be a block-encoding of $A/a$, for $a \ge \|A\|>0$ and set $M=\lceil ta\rceil > 1$, $R \ge 1$.
    Then the state preparation procedure in Eq.~\eqref{eq:historystateunitary} can be implemented within error $\epshist$ in Euclidean distance using
    \begin{align}
       \cO \lb \lb a  \int_0^t \rd s \norm{e^{sA}} + R \max_{s \in [0,t]} \norm{e^{sA}} \rb \times\log (1/\epshist) \rb
    \end{align}
    queries to the initial state preparation unitary $U_{x}$ and its inverse,
    \begin{align}
       \cO \lb \lb a  \int_0^t \rd s \norm{e^{sA}} + R \max_{s \in [0,t]} \norm{e^{sA}} \rb \times \log (1/\epshist) \times \log \lb \lb a t + R \rb \max_{s \in [0,t]} \norm{e^{As}}/\epshist \rb \rb
    \end{align}
    queries to $U_A$ and its inverse, and
    \begin{align}
       \cO \lb \lb a  \int_0^t \rd s \norm{e^{sA}} + R \max_{s \in [0,t]} \norm{e^{sA}} \rb \times \log (1/\epshist) \times \log^2 \lb \lb a t + R \rb \max_{s \in [0,t]} \norm{e^{sA}}/\epshist \rb \rb
    \end{align}
    additional arbitrary two-qubit gates.
\label{thm:history_state}
\end{theorem}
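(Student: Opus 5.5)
We follow the linear-systems construction of Ref.~\cite{Berry2024time-dep_history}. The history state is encoded as the solution of a block system $\cA\vec\cX=\vec\cB$. Concretely, with $\wA:=At/M$ (so $\|\wA\|\le 1$ since $M=\lceil ta\rceil$), we replace each step $e^{\wA}$ by a truncated Taylor series $T_K(\wA)=\sum_{k=0}^{K}\wA^k/k!$, encoded with $K$ auxiliary sub-steps per time step in the standard way. The block matrix is $\cA=\one-\cB_{\rm sh}$, where $\cB_{\rm sh}$ is block lower-bidiagonal with blocks implementing $\wA/k$ on the Taylor sub-steps, identity summation blocks, and identity blocks on the runway of length $R$. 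The right-hand side is $\vec\cB=\ket{0}\ket{x_{\init}}$. The solution restricted to the clock states is then $\sum_m\ket{m}T_K(\wA)^m\ket{x_{\init}}$ plus the runway copies. Choosing $K=\cO(\log((at+R)\max_s\|e^{sA}\|/\epshist))$ makes $\|T_K(\wA)^m-e^{m\wA}\|$ small uniformly in $m\le M$; this is where the logarithmic factor in the $U_A$ count comes from. The block-encoding of $\cA$ uses $\cO(1)$ queries to $U_A$ with normalization $\cO(1)$, and $\vec\cB$ uses one query to $U_x$. A discrete adiabatic/optimal QLSA~\cite{costa2024linearsystem} then prepares the normalized solution using $\cO(\kappa\log(1/\epshist))$ queries to the block-encoding and the preparation unitary. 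A final projection onto the sub-step index $k=0$ (or the equivalent flag) succeeds with constant probability $p_0$, since $\|\wA\|\le1$ makes the intermediate Taylor components comparable to the main ones. Each block-encoding call costs $\cO(K)$ queries to $U_A$ and $\cO(K\log)$ gates, which yields the three stated counts once we show $\kappa=\cO(a\int_0^t\|e^{sA}\|\rd s+R\max_s\|e^{sA}\|)$.

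The main obstacle is the condition number bound. We have $\|\cA\|=\cO(1)$ directly, so the work is in bounding $\|\cA^{-1}\|$. Since $\cA=\one-\cB_{\rm sh}$ with $\cB_{\rm sh}$ nilpotent, $\cA^{-1}=\sum_j\cB_{\rm sh}^j$ is block lower triangular. The block from step $m'$ to step $m\ge m'$ is, up to Taylor-substep prefactors bounded by $e$, a product $T_K(\wA)^{m-m'}\approx e^{(m-m')\wA}$, with norm at most $\cO(\|e^{(m-m')t A/M}\|)$. Runway blocks are bounded by $\max_s\|e^{sA}\|$. We then apply Lemma~\ref{lem:block_matrix_norm} to compress $\cA^{-1}$ into an $L\times L$ matrix $B$ whose entries are these norms. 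Because $B$ is (block) lower-triangular Toeplitz in the time index with entries $g(m-m')$, where $g(j)=\cO(\|e^{jtA/M}\|)$, Young's inequality gives $\|B\|\le\sum_j g(j)$ (row and column sums are both bounded by this).

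It remains to bound that sum. Exactly as in the proof of Lemma~\ref{lem:norm_history_state}, we write $\|e^{j\wA}\|\le e\int_j^{j+1}\|e^{s'\wA}\|\rd s'$, using $\|e^{-s'\wA}\|\le e$. Summing over $j<M$ gives $e\frac{M}{t}\int_0^t\|e^{sA}\|\rd s=\cO(a\int_0^t\|e^{sA}\|\rd s)$, using $M/t\le 2a$ when $M>1$. The runway contributes $R\max_s\|e^{sA}\|$. Care is needed to keep the truncation error of $T_K^m$ from spoiling this bound. To handle it, we choose $K$ so that $\|T_K(\wA)^j-e^{j\wA}\|\le\|e^{j\wA}\|$-scale error, or more simply so that it is at most $1/M$ times $\max_s\|e^{sA}\|$, which adds only $\cO(\max_s\|e^{sA}\|)$ to the condition number. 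Putting these together gives the claimed $\kappa$ and hence the theorem.
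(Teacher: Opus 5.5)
There is a genuine gap. Your condition-number analysis is the same as the paper's: compress $\cA^{-1}$ with Lemma~\ref{lem:block_matrix_norm}, bound row and column sums, and turn $\sum_j\|e^{j\wA}\|$ into $e\frac{M}{t}\int_0^t\|e^{sA}\|\,\rd s$. The problem is that you attach it to a linear system for which the claimed bounds are false.

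You encode the Taylor series with $K$ auxiliary sub-steps per time step and identity summation blocks. Then the row defining $x_{m+1,0}$ contains one $\one$ and $K+1$ blocks $-\one$. Hence $\|\cA\|\ge\sqrt{K+2}$, not $\cO(1)$, and such a matrix is not block bidiagonal. Likewise, the block of $\cA^{-1}$ that maps all sub-steps of step $m'$ into $x_{m,0}$ is a row of $K+1$ terms, each comparable in size to $\|T_K(\wA)^{m-m'-1}\|$. For $\wA\approx0$ it is a row of $K+1$ identities with norm $\sqrt{K+1}$. So your Toeplitz entries $g(m-m')$ are not $\cO(\|e^{(m-m')\wA}\|)$, and the sub-step prefactors do not stay bounded by $e$ once you sum over sub-steps.

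Concretely, for $\wA\approx0$, take every block of the right-hand side equal to a fixed vector $\vec v$. This gives $x_{m,0}\approx m(K+1)\vec v$, so $\|\cA^{-1}\|$ is of order $M\sqrt{K}$ and $\kappa$ is of order $MK$, rather than $\cO(M+R)$. The QLSA uses $U_x$ $\cO(\kappa\log(1/\epshist))$ times, so your construction puts an extra factor $K$ into the $U_x$ count, which the theorem does not allow. Your query accounting also contradicts itself: you first say the block-encoding of $\cA$ costs $\cO(1)$ queries to $U_A$, and later that each call costs $\cO(K)$.

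The missing idea is the paper's choice of system. Use a single subdiagonal block $-V$, with $V=\sum_{k=0}^K (hA)^k/k!$, for the first $M$ steps and $-\one$ on the runway, so there are only $L=M+R$ block rows. This gives you the following:
\begin{itemize}
\item Since $\|V\|\le e$, $\cA$ has an $\cO(1)$-normalized block-encoding costing $K$ queries to $U_A$ (an LCU of powers of $A/a$).
\item The blocks $(\cA^{-1})_{mn}$ are exactly $V^{m-n}$, $V^{M-n}$, or $\one$.
\item Your compression and row/column-sum argument then yields $\kappa=\cO(a\int_0^t\|e^{sA}\|\,\rd s+R\max_s\|e^{sA}\|)$ with no $K$.
\item This gives $\cO(\kappa\log(1/\epshist))$ uses of $U_x$ and $\cO(\kappa K\log(1/\epshist))$ uses of $U_A$, as stated, and no projection onto a sub-step index is needed.
\end{itemize}
For the truncation error, write $V=e^{hA}+\delta$ with $\delta$ commuting with $e^{hA}$. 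The binomial expansion then gives $\|V^m-e^{mhA}\|\le\max_s\|e^{sA}\|\,m\|\delta\|e^{m\|\delta\|}$, which is what justifies your choice of $K$.
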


\begin{proof}
    The proof follows from the proofs of Thms.\ 4.1 and 4.2 in Ref.~\cite{Berry2024time-dep_history} after some modifications. In particular, 
    starting from $\frac{\rd \vec x }{\rd t}=A\vec x(t)$, we consider a larger linear system of the form
    \begin{align}
        \mA \vec \mX =\vec \mB,
    \label{linear_system}
    \end{align}
    where $\mA \in \mathbb{C}^{LN \times L N}$ is a block matrix and $\vec \mX, \vec \mB \in \mathbb{C}^{LN}$ are block vectors. Here $L = M +R $ is the total number of time steps. The $L^2$ blocks of $\mA$ are 
    of dimension $N \times N$ and given by ($0 \le m,n \le L-1$)
    \begin{align}
    \label{eq:bigmatrixdef}
        \mA_{mn} = 
        \begin{cases}
            \one_{} & {\rm if} \ m = n \;, \\
            - V & {\rm if} \ m = n + 1 \ {\rm and} \ n \le M \;,\\
            - \one_{} & {\rm if} \ m = n + 1 \ {\rm and} \   n > M \;,\\
            {\bf 0} & \text{otherwise}\;,
        \end{cases}
    \end{align}
    where $\one_{}$ is the $N$-dimensional identity, ${\bf 0}$ is the $N$-dimensional all-zero matrix, 
    \begin{align}
        V := \sum_{k=0}^K \frac{\lb Ah \rb^k}{k!}
    \end{align}
    is a truncated Taylor series approximation to $e^{hA}$, and we set $h=t/M$. (In essence, $V$ implements a constant time evolution under $A$, allowing us to make the number of blocks $L$ as small as possible.)
    Hence, $\cA$
    is a matrix whose diagonal blocks are the identities, there are nonzero blocks below the main diagonal, and all other blocks are zero.
    Since we only consider homogeneous differential equations, the only nonzero block of $\vec \mB \in \mathbb C^{LN}$ is the first block: $\vec \mB_0 = \vec x_{\init}\in \mathbb C^{N}$. This means that we can ignore all terms in the complexity statement of Thm.\ 4.2 in Ref.~\cite{Berry2024time-dep_history} that involve the inhomogeneity (denoted $\mathbf{b}$ in that work).
    
    Then, the $L$ blocks of the solution $\vec \mX=\frac 1 {\cA}\vec \cB$ are given by
    ($0 \le m \le L-1$)
    \begin{align}
        \mX_{m} = 
        \begin{cases}
            \vec x_{\init} &{\rm if}\ m = 0 ,\\
            \tilde{\vec{x}}(mh) & {\rm if}\ 1 \le m \le M ,\\
            \tilde{\vec{x}}(t) & {\rm if}\ m > M ,\\
        \end{cases}
    \end{align}
    where $\tilde{\vec{x}}(mh) = V \tilde{\vec{x}}((m-1)h)$ is an approximation to $\vec x(mh) = e^{mhA} \vec x_{\init}$. Note that $t=Mh$.

    As an illustrative example, consider the case of three time steps and three padding steps, i.e.\ $M = R-1 = 3$. Then we have the following linear system:
    \begin{align}
    \begin{bmatrix}
         \one_{} & {\bf 0} & {\bf 0} & {\bf 0} & {\bf 0} & {\bf 0} & {\bf 0}  \\
     -V & \one_{}  & {\bf 0} & {\bf 0} & {\bf 0} & {\bf 0} & {\bf 0}  \\
    {\bf 0} & -V & \one_{}  & {\bf 0} & {\bf 0}& {\bf 0} & {\bf 0} \\
     {\bf 0} &{\bf 0} & -V & \one_{}  & {\bf 0} & {\bf 0} & {\bf 0}  \\
     {\bf 0} & {\bf 0} & {\bf 0} & -\one_{} & \one_{}  & {\bf 0} & {\bf 0}  \\
     {\bf 0} & {\bf 0} & {\bf 0} & {\bf 0} & -\one_{} & \one_{}  & {\bf 0} \\
     {\bf 0} & {\bf 0} & {\bf 0} & {\bf 0} & {\bf 0} & -\one_{} & \one_{}  \\
    \end{bmatrix}
    \begin{bmatrix}
        \tilde{\vec{x}}(0) \\ \tilde{\vec{x}}(h) \\ \tilde{\vec{x}}(2h) \\ \tilde{\vec{x}}(3h) \\ \tilde{\vec{x}}(3h) \\ \tilde{\vec{x}}(3h) \\ \tilde{\vec{x}}(3h) \\
    \end{bmatrix} =
    \begin{bmatrix}
        \vec x_{\init} \\ \vec{0} \\ \vec{0} \\ \vec{0} \\ \vec{0} \\ \vec{0} \\ \vec{0}
    \end{bmatrix}.
    \label{block_example}
    \end{align}

    Under the above assumptions, the quantum state proportional to $\frac 1 {\cA} \vec \cB$
    is then approximately the history state appearing 
    in Eq.~\eqref{eq:historystateunitary}, since
    \begin{align}
    \label{eq:approxhistorystate}
       \vec \cX \equiv  \sum_{m=0}^{M-1} \ket m_{\rm clock} \ket{ \tilde{\vec x}(mh)} + \sum_{m=M}^{M+R-1} \ket m_{\rm clock} \ket{ \tilde{\vec x}(t)} =
      \sum_{m=0}^{M-1} \ket m_{\rm clock} V^m \ket{\vec x_{\init}} + \sum_{m=M+1}^{M+R-1} \ket m_{\rm clock} V^M \ket{ \vec x_{\init}}\;,
    \end{align}
where $\ket{ \vec x_{\init}} \equiv  \vec x_{\init}$ is the initial vector
and $\ket{ \tilde{\vec x}(mh)} \equiv  \tilde{\vec x}(mh)$ represent the vectors 
that are not necessarily normalized.

In contrast with Ref.~\cite{Berry2024time-dep_history},
we note that we do not need to perform amplitude amplification to obtain the solution $\vec x(t)$ at the final time (i.e., to boost the probability that $m \ge M$) since we consider the problem of outputting the {\em entire} history state, as we need all values of time to compute the integrals appearing in the solution to the matrix equation. This means that the factor $\mathcal{R}$ in Thm.\ 4.2 of Ref.~\cite{Berry2024time-dep_history} is absent in our complexity statements.
    Also, we would like to relax the stability assumptions on the differential equation made in Ref.~\cite{Berry2024time-dep_history}. In that work, it is assumed that $A$ has non-positive log-norm, meaning that the eigenvalues of $A + A^\dagger$ are non-positive. However, it is already hinted there that this condition is not necessary for the algorithm to work. Here we fill in the details for the case where $A$ may not have non-positive log-norm, a result that is also of independent interest.

    There are basically two aspects in the proof of Thm.\ 4.2 in Ref.~\cite{Berry2024time-dep_history} that we need to modify. First, we provide a tighter bound on the condition number of $\mA$ that does not require $A$ to have  non-negative log-norm. We bound the condition number of $\mA$ by bounding $\norm{\mA}$ and $\norm{\mA^{-1}}$ separately.
    As long as the step size $h$ is chosen such that $a h \leq 1$, we have that $\norm{\mathcal{A}} \in \cO(1)$. This is true regardless of the log-norm of $A$. One way of obtaining this upper bound is via Lemma~\ref{lem:block_matrix_norm}, since all the blocks in Eq.~\eqref{eq:bigmatrixdef} have zero or bounded spectral norm.

    Next, we bound the norm of the inverse. 
    As shown in Ref.~\cite{Berry2024time-dep_history}, $\mathcal{A}^{-1}$ can be simply written as a block lower triangular matrix,
    with $N$-dimensional blocks given by ($0 \le m,n \le L-1$)
    \begin{align}
        \lb \mathcal{A}^{-1} \rb_{mn} = 
        \begin{cases}
            \one_{} & {\rm if}\ m=n ,\\
            \prod_{l=n}^{m-1} V_l & (m > n) \wedge (m \leq M + 1) \wedge (n \leq M), \\
            \prod_{l=n}^{M} V_l & (m > n) \wedge (m > M + 1) \wedge (n \leq M), \\
            \one_{} & {\rm if} \ m > n \ {\rm and} \ n > M , \\
            {\bf 0} & {\rm if} \ m < n,
        \end{cases}
    \label{inverse_matrix}
    \end{align}
    where $V_0 := \one_{}$ and $V_l = V$ for any $l \in \{1, 2, \dots, M\}$ in our case.

    Going back to our earlier example of three time steps plus three padding steps in Eq.~\eqref{block_example}, the inverse of $\cA$ is given by
    \begin{align}
        \mathcal{A}^{-1} = 
        \begin{bmatrix}
             \one & {\bf 0} & {\bf 0} & {\bf 0} & {\bf 0} & {\bf 0} & {\bf 0}  \\
         V & \one  & {\bf 0} & {\bf 0} & {\bf 0} & {\bf 0} & {\bf 0}  \\
         V^2 & V & \one  & {\bf 0} & {\bf 0} &{\bf 0} & {\bf 0}  \\
         V^3 & V^2 & V & \one & {\bf 0} & {\bf 0} & {\bf 0}  \\
         V^3 & V^2 & V & \one & \one & {\bf 0} & {\bf 0}  \\
         V^3 & V^2 & V & \one & \one & \one  & {\bf 0}  \\
         V^3 & V^2 & V & \one & \one & \one & \one 
        \end{bmatrix}.
    \end{align}

    To bound $\norm{\mathcal{A}^{-1}}$, we use Lemma~\ref{lem:block_matrix_norm} that shows that $\norm{\mathcal{A}^{-1}}$ is upper bounded by the spectral norm of the matrix $B$ with entries $B_{mn} = \norm{\lb \mA^{-1} \rb_{mn}}$. 
    To bound the spectral norm of $B$, we then use the fact that the spectral norm of a matrix is upper bounded by the square root of the product of the maximum row and the maximum column sum of that matrix. Thus,
    \begin{align}
        \norm{\mA^{-1}} \leq \|B\|\le \sqrt{\lb \max_m \sum_{n} \norm{(\mA^{-1})_{mn}} \rb \lb \max_n \sum_{m} \norm{(\mA^{-1})_{mn}} \rb}.
    \end{align}
    Per the expression of $\mA^{-1}$ given in Eq.~\eqref{inverse_matrix}, we have the following bound on the maximum row sum of $\mA^{-1}$:
    \begin{align}
        \max_m \sum_{n} \norm{\mA^{-1}_{mn}} \leq  \sum_{m=1}^M \norm{V^m} + R,
    \end{align}
    where $R\geq 1$ again denotes the number of time steps where the solution at time $t$ is held constant.
    We choose the truncation order $K$ for the Taylor series such that for any $m \in [M]$,
    \begin{align}
    \label{eq:Vexperror}
        \norm{V^m - e^{hm A}} \leq \epsilon ' \le \frac{\epshist}{M + R}.
    \end{align}
    Then, since $\epshist \le 1$,
    \begin{align}
      \max_m  \sum_{n} \norm{\mA^{-1}_{mn}} \leq 1 + \sum_{m=1}^M \norm{e^{hm A}} + R  .
    \end{align}
    By a similar argument, we obtain the following bound on the maximum column sum:
    \begin{align}
        \max_n \sum_{m} \norm{\mA^{-1}_{mn}} &\leq  \sum_{m=0}^M \norm{V^m} + (R-1) \max_{m \in \{0,1,\dots, M\}} \norm{V^m} \nn
        &\leq 1 + \sum_{m=0}^M \norm{e^{hm A}} + (R-1) \max_{s \in [0,t]} \norm{e^{sA}} \nn
        &\leq 1 + \sum_{m=1}^M \norm{e^{hm A}} + R \max_{s \in [0,t]} \norm{e^{sA}}\;.
    \end{align}
    This implies that
    \begin{align}
         \norm{\mA^{-1}} &\leq \sqrt{\lb 1 + \sum_{m=1}^M \norm{e^{hm A}} + R  \rb \lb 1 + \sum_{m=1}^M \norm{e^{hm A}} + R \max_{s \in [0,t]} \norm{e^{s A}}   \rb} \nn
         &\leq 1 + \sum_{m=1}^M \norm{e^{hm A}} + R \max_{s \in [0,t]} \norm{e^{s A}} ,
    \label{eq:inversemAbound}
    \end{align}
    where we used the fact that $\max_{s \in [0,t]} \norm{e^{s A}} \geq 1$.
    
    Let us now show how to bound the sum of matrix exponentials. The idea is exactly the same as in the proof of Lemma~\ref{lem:norm_history_state} but we still provide the full argument for completeness.
    To simplify notation, let us define a rescaled matrix $\wA := A h$ with $\norm{\wA} \le 1$. We are interested in bounding $\sum_{m=1}^M \norm{e^{m \wA }}$.
  We can essentially follow the same proof in 
  Lemma~\ref{lem:norm_history_state} to obtain
  \begin{align}
      \sum_{m=1}^M \norm{e^{ m \wA}} \le \frac eh \int_0^t \rd s \|e^{sA}\|  \;.
  \end{align}

    Putting it all together, we then obtain the following asymptotic bound on the condition number of $\mA$:
    \begin{align}
        \kappa_{\mA} \in \cO \lb a \int_0^t \rd s \|e^{sA}\| + R \max_{s \in [0,t]} \norm{e^{sA}} \rb.
    \end{align}
    In this expression we have used $M>1$ to replace $1/h$ with $\cO(a)$.
    
    The other aspect of the proof of Thm.\ 4.2 in Ref.~\cite{Berry2024time-dep_history} that needs to be modified is the bound on the truncation order of the Taylor series approximation.
    Let $\delta$ denote the error matrix consisting of the remainder of the truncated Taylor series such that 
    \begin{align}
        V = e^{hA} + \delta,
    \end{align}
    meaning that
    \begin{align}
        \delta = - \sum_{k=K+1}^{\infty} \frac{(hA)^k}{k!}.
    \end{align}
    Furthermore, assume 
     \begin{align}
     \label{eq:deltabound}
        \norm{\delta} \leq \frac{\epsilon'}{2 \max_{s \in [0,t]} \norm{e^{sA}} M},
    \end{align}
    for some $\epsilon ' \le 1$.
    Note that $\delta$ and $e^{hA}$ commute. Then the error after $m \le M$ steps is bounded by
    \begin{align}
        \norm{V^m - e^{hmA}} &= \norm{\lb e^{hA} + \delta \rb^m - e^{hmA}} \nn
        &= \norm{\sum_{k=0}^m \binom{m}{k} \delta^{m-k} e^{hkA} - e^{hmA}} \nn
        &= \norm{\sum_{k=0}^{m-1} \binom{m}{k} \delta^{m-k} e^{hkA}} \nn
        &\leq \max_{s \in [0,t]} \norm{e^{sA}} \sum_{k=0}^{m-1} \frac{m!}{k!(m-k)!} \norm{\delta}^{m-k} \nn
        &\leq \max_{s \in [0,t]} \norm{e^{sA}} \norm{\delta} m \sum_{k=0}^{m-1} \frac{(m-1)!}{k!(m - 1 - k)!} \norm{\delta}^{m - 1 - k}  \nn
        &\leq \max_{s \in [0,t]} \norm{e^{sA}} \norm{\delta} m \lb 1 + \norm{\delta} \rb^{m-1} \nn
        &\leq \max_{s \in [0,t]} \norm{e^{sA}} \norm{\delta} m e^{\norm{\delta} m} \nn
        &\le \epsilon'.
    \end{align}
   This follows from a Taylor series expansion of the Lambert-W function near $0$ under the assumption that
    \begin{align}
        \frac{\epsilon'}{\max_{s \in [0,t]} \norm{e^{sA}}} < \frac{1}{e}.
    \end{align}
    To provide a bound on $\epsilon'$ in terms of $\epshist$, we need to determine how the errors propagate to the overall history state. Note that the $\ell_2$-norm error in the unnormalized history state when using $V$ instead of $e^{Ah}$, i.e., the vector in Eq.~\eqref{eq:approxhistorystate}, is bounded by
    \begin{align}
         \sqrt{\sum_{m=0}^{M+R-1} \norm{\tilde{\vec x}(mh) - \vec x(mh)}^2} &= \sqrt{\sum_{m=0}^{M-1} \norm{V^m \vec x_{\init} - e^{hmA} \vec x_{\init}}^2 + \sum_{m=M}^{M+R-1} \norm{V^M \vec x_{\init} - e^{hMA} \vec x_{\init}}^2} \nn
         &\leq \sqrt{M+R} \norm{\vec x_{\init}} \epsilon'.
    \end{align}
    To ensure that the error between the {\em normalized} history states is at most $\epshist$, it then suffices to pick 
    \begin{align}
        \epsilon' \in \cO \lb \frac{\epshist}{\sqrt{M+R}} \rb.
    \end{align} 
    This choice will set an upper bound on $\|\delta\|$ in Eq.~\eqref{eq:deltabound} as $\|\delta\| \in \cO(\epshist/((M+R)^{3/2} \max_{s\in[0,t]}\|e^{sA}\|))$. Since $M \sim ta$,
it suffices to truncate the Taylor series for the matrix exponential at order
    \begin{align}
        K \in \cO \lb \frac{\log \lb \lb  ta + R \rb \max_{s \in [0,t]} \norm{e^{sA}}/\epshist \rb}{\log \log \lb \lb ta + R \rb \max_{s \in [0,t]} \norm{e^{sA}}/\epshist \rb} \rb,
    \end{align}
    to satisfy the error bound.

    Using the optimal linear systems solver~\cite{Costa2022optimal_linear}, we can produce a normalized quantum state that is $\cO(\epshist)$-close to the true solution  of the linear system in Eq.~\eqref{linear_system} in Euclidean distance using
    \begin{align}
        \cO \lb \kappa_{\mA} \log (1/\epshist) \rb \subseteq \cO \lb \lb a  \int_0^t \rd s \; \|e^{sA}\| + R \max_{s \in [0,t]} \norm{e^{sA}} \rb \times \log (1/\epshist) \rb
    \end{align}
    queries to the initial state preparation unitary $U_x$ and its inverse, and the block-encoding of $\mA$ and its inverse. 
    Note that Eq.~\eqref{eq:historystateunitary} is a slightly different form of the solution than given in Ref.~\cite{Costa2022optimal_linear}, which uses filtering and a nondeterministic repeat-until-success process to provide a high-accuracy form of the solution.
    In contrast, Eq.~\eqref{eq:historystateunitary} requires the history-state preparation to be fully unitary.

    The algorithm given in Ref.~\cite{Costa2022optimal_linear} may be slightly modified to give the solution as in Eq.~\eqref{eq:historystateunitary}, by performing the quantum walk and filtering just once, with the ${\rm a}$ qubit flagging success of the filtering.
    Reference \cite{Costa2022optimal_linear} describes the filtering in terms of a linear combination of unitaries (LCU) with intermediate measurements to flag failure early.
    However, it may trivially be modified to be fully coherent and give a single qubit flagging success, as required for Eq.~\eqref{eq:historystateunitary}.
    
    As discussed in Ref.~\cite{Berry2024time-dep_history}, the block-encoding cost of $\mA$ is dominated by the cost of implementing the truncated Taylor series using the block-encoding of $A$. In particular, $\mA$ can be block-encoded using $K$ queries to $U_A$, the block-encoding of $A/a$. This means the overall number of queries to $U_A$ scales like
    \begin{align}
        \cO \lb \kappa_A \log (1/\epshist) K \rb \subseteq \cO \lb \lb a \int_0^t \rd s \; \|e^{sA}\| + R \max_{s \in [0,t]} \norm{e^{sA}} \rb  \times \log (1/\epshist) \times \log \lb \lb ta + R \rb \max_{s \in [0,t]} \norm{e^{sA}}/\epshist \rb \rb.
    \end{align}

    Last, we discuss the number of additional gates required. Since we are fundamentally still using the same algorithm as in Ref.~\cite{Berry2024time-dep_history}, we can use the same argument as in that work to conclude that the overall number of additional two-qubit gates is bounded as
    \begin{align}
        \cO \lb \lb a \int_0^t \rd s \; \|e^{sA}\| + R \max_{s \in [0,t]} \norm{e^{sA}} \rb \times \log (1/\epshist) \times \log^2 \lb \lb ta + R \rb \max_{s \in [0,t]} \norm{e^{sA}}/\epshist \rb \rb.
    \end{align}
\end{proof}

\subsection{History state preparation via LCHS}

Here we analyze an alternative method for preparing history states which is not based on quantum linear system solvers but instead constructs block-encodings of controlled time evolution operators via LCHS~\cite{an2023lchs, an2023betterlchs, low2025optimal}, which are then applied to appropriately prepared initial history states.
This will give the simplified version of our result in Thm.~\ref{thm:main}.
Recent work~\cite{yang2025fastforwardingLCHS} discusses a similar method for history state preparation via LCHS, but their analysis is restricted to dissipative dynamics and their results are based on a suboptimal version of LCHS.
For simplicity, we again discuss the time-independent case first; the time-dependent case is analyzed in Appendix~\ref{app:QAlgorithm_time-dep}.

Now consider the following controlled time evolution operator:
\begin{align}
    V_{A} &:= \sum_{m=0}^{M-1} \ketbra{m}{m} \otimes e^{t \frac{m}{M} \lb A - \xi_A \one \rb} + \sum_{m=M}^{M+R-1} \ketbra{m}{m} \otimes e^{t \lb A - \xi_A \one \rb} ,
    \label{controlled_time-ev-op}
\end{align}
with $M = \lceil t \mu \rceil$ and $R \in \cO \lb \mu d/ c \rb$ as before. Additionally, $\xi_{A}$ is a known upper bound on the log-norm of $A \in \mathbb{C}^{N \times N}$, which is the largest eigenvalue of $\lb A + A^\dagger \rb/2$.

For given $m$, we can block-encode $e^{t \frac{m}{M} \lb A -\xi_{A} \one \rb}$ via LCHS~\cite{an2023lchs, an2023betterlchs, low2025optimal} since with the shift, $A -\xi_{A}$ has non-positive log norm.
For convenience, we restate the main theorem of the optimal LCHS approach below~\cite{low2025optimal}.
We will use the standard notation for block-encodings
where we say that a $(\lambda,{\rm a},\epsilon)$
block-encoding of a matrix $A$ of dimension $N=2^n$ (i.e., $n$ qubits) is a unitary $U_A$ of dimension $2^{n+{\rm a}}$ (i.e., using ${\rm a}\ge 0$ ancilla qubits) such that its first $N \times N$ block is the matrix $A/\lambda$
within additive error at most $\epsilon/\lambda$.

\begin{theorem}[Block-encoding of optimal LCHS (time-independent); reformulation of Theorem 4 in \cite{low2025optimal}]
    Let $A \in \mathbb{C}^{N \times N}$ have non-positive log-norm and assume we have access to an $(a, \rm a, 0)$-block-encoding $U_{A}$ of $A$.
    For any $t \geq 0$, $\epsilon \in (0, 4/5]$, we can construct a block-encoding $U_{\exp(tA)}$ such that
    \begin{align}
        \norm{\lb \bra{0} \otimes \one \rb  U_{\exp(tA)}  \lb \bra{0} \otimes \one \rb - e^{t A}} \leq \epsilon,
    \end{align}
    using
    \begin{align}
        \cO \lb ta \log (1/\epsilon) \rb
    \end{align}
    queries to controlled-$U_{A}$ and its inverse, plus an additional
    \begin{align}
        \cO \lb ta \log (1/\epsilon) \times \lb \mathrm{a} + \log \lb ta  + \log(1/\epsilon) \rb \rb + \log^{5/2} (1/\epsilon) \times \log \lb ta + \log (1/\epsilon) \rb \rb
    \end{align}
    primitive quantum gates.
\label{thm:LCHS}
\end{theorem}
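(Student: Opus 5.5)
This statement is a reformulation rather than a new result, so I would prove it by checking that the hypotheses and outputs match Theorem~4 of Ref.~\cite{low2025optimal}, and by sketching why that construction gives the stated counts. The first step is to fix the decomposition used by LCHS. Write $A=-L+\ri H$ with $L:=-(A+A^\dagger)/2$ and $H:=(A-A^\dagger)/(2\ri)$, both Hermitian. Non-positive log-norm of $A$ is exactly $L\succeq 0$, which is the sole structural assumption of the optimal LCHS theorem.

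Next I would block-encode $L$ and $H$. A single ancilla qubit taking the linear combination of controlled-$U_A$ and controlled-$U_A^\dagger$ gives $(a,{\rm a}+1,0)$ block-encodings of both. Hence any combination $H+kL$ can be block-encoded with normalization $\cO((1+|k|)a)$ using $\cO(1)$ queries to controlled-$U_A$ and its inverse. This is the reason the statement counts queries to the controlled block-encoding.

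I would then invoke the LCHS identity in the form used by Ref.~\cite{low2025optimal}:
\begin{align}
e^{tA}=\int_{\mathbb R}\rd k\,\frac{g(k)}{1-\ri k}\,e^{\ri t(H+kL)} ,
\end{align}
valid when $L\succeq 0$. Here $g$ is their near-optimal kernel, chosen so that $\int\rd k\,|g(k)|/|1-\ri k|=\cO(1)$ and the tails decay fast enough that truncation to $|k|\le K$ costs error $\epsilon/3$. The allowed cutoff $K$ is polylogarithmic in $1/\epsilon$ and is absorbed into the gate count; the condition $\epsilon\le 4/5$ is the regime in which their kernel bounds hold. I would discretize the truncated integral with Gaussian quadrature on subintervals, costing another $\epsilon/3$. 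This yields an LCU with $\cO(1)$ normalization. Its PREPARE routine loads the quadrature weights, and I expect this to account for the additive $\log^{5/2}(1/\epsilon)\log(ta+\log(1/\epsilon))$ gate term. Its SELECT routine is a $k$-controlled Hamiltonian simulation of $e^{\ri t(H+kL)}$ via QSP/QSVT, implemented to precision $\epsilon/3$.

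The main obstacle is the query count. A naive SELECT routine simulates each $H+kL$ for time $t$ at cost $\cO(t(1+K)a+\log(1/\epsilon))$. This would give $ta\,{\rm polylog}(1/\epsilon)$, not $\cO(ta\log(1/\epsilon))$. The optimal bound in Ref.~\cite{low2025optimal} avoids this by designing the kernel and simulation jointly. There the effective cost scales with $ta\log(1/\epsilon)$ rather than with $tKa$: the large-$|k|$ components are handled by a time-rescaled simulation whose amplitude weight is exponentially small. I would not reprove this. Instead I would quote their Theorem~4 and verify three translations. First, their normalization of $A$ by $a$ and accuracy $\epsilon$ match the $(a,{\rm a},0)$ input and the operator-norm error stated here. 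Second, the output is an $\cO(1)$-normalized block-encoding, which can be rescaled to normalization $1$ by absorbing the constant into the LCU weights, or by uniform amplification since $\|e^{tA}\|\le 1$. Third, the ancilla-dependent term ${\rm a}+\log(ta+\log(1/\epsilon))$ per query accounts for the controlled operations on the block-encoding and quadrature registers, giving the stated primitive-gate count.
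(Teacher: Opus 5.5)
At the top level your route is the paper's: the paper gives no argument for this statement beyond restating Theorem~4 of \cite{low2025optimal}. However, your sketch of the mechanism behind that theorem is wrong, and your gate-count ``translation'' depends on it. The construction the paper takes from \cite{low2025optimal} is spelled out in the proof of Lemma~\ref{lem:be_ctrl_ev}, and it \emph{is} the straightforward SELECT you dismiss. It is a single multiplexed block-encoding \textsc{Mul} of $(h'j\bL+\bH)/(R'\lambda_{\bL}+\lambda_{\bH})$ over the uniform grid $|h'j|\le R'$, simulated for time $t$ by qubitization with $\cO((R'\lambda_{\bL}+\lambda_{\bH})t+\log(1/\epsilon))$ queries. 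This is $\cO(ta\log(1/\epsilon))$ because the optimal kernel admits a cutoff $R'\in\cO(\log(1/\epsilon))$. The polylogarithmic cutoff you assume belongs to the earlier kernel \cite{an2023betterlchs}, which decays only like $e^{-c|k|^\beta}$ with $\beta<1$. So the cutoff is not ``absorbed into the gate count''; it is precisely the multiplicative $\log(1/\epsilon)$ in the query count. No time-rescaled treatment of large-$|k|$ components is involved, and neither is Gaussian quadrature. The grid is fine, $1/h'\in\cO(\norm{\bL}t+\log(1/\epsilon))$, but this enters only through the size of the $j$-register, which has $\log(R'/h')\in\cO(\log(ta+\log(1/\epsilon)))$ qubits. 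That is the source of the per-query term ${\rm a}+\log(ta+\log(1/\epsilon))$. It is also the source of the factor accompanying the $\log^{5/2}(1/\epsilon)$ QSVT-based preparation of the kernel-amplitude state.

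Your normalization step also fails as stated. An LCU block-encodes $\sum_j c_jU_j/\norm{c}_1$, and $\norm{c}_1\ge|\sum_j c_j|\approx 1$ (consider $A=0$). For a complex-valued kernel the inequality is generally strict, and no reweighting lowers it. Uniform singular-value amplification cannot reach normalization $1$ at polylogarithmic cost either. Since $\norm{e^{tA}}$ may equal $1$ while the other singular values are arbitrary, the amplification margin must be $\cO(\epsilon)$, which costs a factor polynomial in $1/\epsilon$ (e.g.\ $\cO(\epsilon^{-1}\log(1/\epsilon))$ with the standard construction). If the cited theorem delivers an $\cO(1)$ normalization, the correct translation simply carries that constant. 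This is harmless for every later use in the paper, where normalizations are tracked only up to constants. (Minor: with $A=-L+\ri H$ and the standard kernel $f(k)/(1-\ri k)$, the integrand is $e^{\ri t(H-kL)}$; writing $e^{\ri t(H+kL)}$ requires the reflected kernel $f(-k)/(1+\ri k)$.)
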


The factor of $\log^{5/2} (1/\epsilon)$ in the gate complexity of Theorem~\ref{thm:LCHS} stems from the use of QSVT~\cite{gilyen2019qsvt} for preparing a quantum state that encodes the values of the kernel function in its amplitudes. In principle, this factor could be improved to $\log (1/\epsilon) \times \mathrm{polylog}(\log(1/\epsilon))$ by using the nested-boxes approach~\cite{Berry2024pseudopotentials}. However, since we are mostly focused on query complexities in this work, we leave a detailed analysis of such improvements for future work.

The lemma below provides an upper bound on the complexity of block-encoding $V_{A}$.

\begin{lemma}[Block-encoding of controlled time evolution operator (time-independent)]
    Let $U_{A}$ be an $(a, \rm a, 0)$-block-encoding of $A$ and let $\epsilon > 0$ be an error tolerance.
    We can construct a block-encoding $U_{V_{A}}$ of $V_A$ as given in Eq.~\eqref{controlled_time-ev-op} such that
    \begin{align}
        \norm{\lb \bra{0} \otimes \one \rb U_{V_{A}}  \lb \bra{0} \otimes \one \rb - V_{A}} \leq \epsilon,
    \end{align}
    using
    \begin{align}
        \cO \lb ta \log (1/\epsilon) \rb
    \end{align}
    queries to controlled-$U_{A}$ and its inverse, plus an additional
    \begin{align}
        \cO \lb ta \log (1/\epsilon) \times \lb \mathrm{a} + \log \lb ta  + \log(1/\epsilon) \rb + \log \lb (ta + R)/\epsilon \rb \rb + \log^{5/2} (1/\epsilon) \times \log \lb ta + \log (1/\epsilon) \rb \rb
    \end{align}
    primitive quantum gates.
\label{lem:be_ctrl_ev}
\end{lemma}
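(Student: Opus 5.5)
The plan is to build $V_A$ from a single LCHS block-encoding of a rescaled exponential whose evolution time is set by the clock register. Set $A' := A - \xi_A \one$, which has non-positive log-norm. A block-encoding of $A'$ comes from an LCU of $U_A$ with the identity, with normalization $a + |\xi_A| \le 2a$ since $|\xi_A| \le \|A\| \le a$ can be assumed without loss. For each clock value $m$ define $\tau_m := tm/M$ for $m<M$ and $\tau_m := t$ for $m \ge M$. We need $\sum_m \ketbra{m}{m}\otimes e^{\tau_m A'}$.

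To avoid paying $\sum_m \tau_m a$ queries, I will not run $M+R$ separate LCHS circuits. I will use that the optimal LCHS of Theorem~\ref{thm:LCHS} writes $e^{\tau A'}$ as an integral over a kernel of $k$ of Hamiltonian evolutions $e^{\ri \tau (k L + H)}$, where $L = (A'+A'^\dagger)/2$ and $H = (A'-A'^\dagger)/(2\ri)$. Controlling on the clock, the only $m$-dependence sits in the evolution time $\tau_m \le t$ and, possibly, in the kernel state. There are two ways to proceed.

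\emph{First option: clock-controlled Hamiltonian.} Replace the generator by the block-diagonal $\sum_m \ketbra{m}{m} \otimes (\tau_m/t) A'$. Its block-encoding costs one query to $U_{A'}$, plus an arithmetic circuit that computes $\tau_m/t$ into a register and applies a controlled rotation to scale the block. This block-diagonal matrix has non-positive log-norm and norm at most $2a$. Applying Theorem~\ref{thm:LCHS} once with time $t$ then yields exactly $V_A$ with $\cO(ta\log(1/\epsilon))$ queries.

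The rotation scaling $\tau_m/t$ changes the subnormalization by a controlled amount. It must be implemented to precision $\epsilon/(ta)$ so that its errors, accumulated over the $\cO(ta\log(1/\epsilon))$ uses, stay below $\epsilon$. This accounts for the extra $\cO(ta\log(1/\epsilon))\times \log((ta+R)/\epsilon)$ gate term: a comparator against $M$, the arithmetic on $\log(M+R)$-bit registers, and the rotation synthesis.

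The remaining gate terms are inherited verbatim from Theorem~\ref{thm:LCHS}, with ancilla count $\mathrm a + \cO(1)$. The error is at most $\epsilon$ in operator norm, and since the block-diagonal structure is exact, no union bound over $m$ is needed.

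\emph{Main obstacle.} The delicate point is checking that the LCHS construction of Ref.~\cite{low2025optimal} applies verbatim to the block-diagonal generator. The log-norm must remain non-positive blockwise, which holds since $\tau_m/t \ge 0$ and $\lambda_{\max}$ of a block-diagonal matrix is the max over blocks. The norm bound also carries over. If the subnormalization from the rescaling breaks optimal LCHS's assumption of an exact $(a,\mathrm a,0)$ encoding, I will use the second option instead.

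\emph{Second option (fallback): controlled time.} Implement clock-controlled Hamiltonian simulation $e^{\ri\tau_m(kL+H)}$ through QSP with a time-controlled phase sequence. This has the same query cost, since its complexity depends only on $\max_m \tau_m = t$, and the kernel state preparation is independent of $m$.
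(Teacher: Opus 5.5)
Your first option is correct and is essentially the paper's proof. Applying Theorem~\ref{thm:LCHS} to the clock-block-diagonal generator $D\otimes(A-\xi_A\one)$, with $D=\sum_{m<M}\frac{m}{M}\ketbra{m}{m}+\sum_{m\ge M}\ketbra{m}{m}$, unrolls to exactly the paper's $\textsc{MulT}=U_{\mathrm{frac}}\otimes\textsc{Mul}$ (because $h'j\tilde{\bL}+\tilde{\bH}=D\otimes(h'j\bL+\bH)$), followed by a single qubitization run for time $t$.

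The one fix is in the gate count. To get $\cO(\log((ta+R)/\epsilon))$ gates per use, block-encode $D$ by an inequality test against a uniform superposition (the alternating-sign trick the paper uses). Computing $m/M$ and an $\arccos$ rotation angle arithmetically costs polylogarithmically more. You can also drop the QSP fallback: it is unnecessary, and because its phases depend on $\tau_m$ it would not obviously meet the gate bound.
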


\begin{proof}
    As shown in Ref.~\cite{low2025optimal}, we can block-encode $e^{t \lb A - \xi_{A} \one \rb}$ via the following multiplexed block-encoding:
    \begin{align}
       \textsc{Mul} := \sum_{j=-R'/h'}^{R'/h'} \ketbra{j}{j} \otimes \Be \left[ \frac{h'j\bL + \bH}{R' \lambda_{\bL} + \lambda_\bH} \right],
    \end{align}
    where $\bL := \lb A + A^\dagger \rb/2 - \xi_{A} \one$, $\bH := i \lb A^\dagger - A \rb/2$, $R' \in \cO \lb \log(1/\epsilon) \rb$ and $1/h' \in \cO \lb \norm{\bL}t + \log ( 1/\epsilon) \rb$ with $R'/h'$ being an integer. 
    Here, $\Be[\cdot]$ refers to a block-encoding of the expression inside the brackets with the block-encoding constant being in the denominator. In particular, $\lambda_{\bL}, \lambda_{\bH} \in \cO \lb a \rb$ are the block-encoding constants of $\bL$ and $\bH$, respectively.
    $\textsc{Mul}$ can be implemented using $\cO(1)$ queries to controlled-$U_{A}$ and its inverse and an additional $\cO \lb \log (R'/h')\rb$ primitive gates.
    
    We can block-encode $V_{A}$ by replacing \textsc{Mul} with the following multiplexed block-encoding,
    \begin{align}
        \textsc{MulT} := U_{\mathrm{frac}} \otimes \textsc{Mul},
    \end{align}
    where $U_{\mathrm{frac}}$ is a block-encoding of
    \begin{align}
        \sum_{m=0}^{M-1} \frac{m}{M}\ketbra{m}{m} + \sum_{m=M}^{M+R-1} \ketbra{m}{m},
    \end{align}
    and otherwise following the same optimal LCHS procedure as in Ref.~\cite{low2025optimal}.
    More specifically, by querying $\textsc{MulT}$ a number of times scaling like $\cO \lb \lb R'\lambda_\bL + \lambda_\bH \rb t + \log(1/\epsilon) \rb \subseteq \cO \lb ta \log(1/\epsilon)\rb$, we can use qubitization to implement
    \begin{align}
        \textsc{SelT}' := \sum_{m=0}^{M-1} \ketbra{m}{m} \otimes \sum_{j=-R'/h'}^{R'/h'} \ketbra{j}{j} \otimes \Be \left[ U_{j,m} \right] + \sum_{m=M}^{M+R-1} \ketbra{m}{m} \otimes \sum_{j=-R'/h'}^{R'/h'} \ketbra{j}{j} \otimes \Be \left[ U_{j,M} \right] ,
    \end{align}
    where $\Be \left[ U_{j,m} \right]$ is a block-encoding of an operator $U_{j,m}$ such that $\norm{U_{j,m} - e^{-it \frac{m}{M} (h'j \bL + \bH)}} \le \epsilon$.
    $\textsc{MulT}$ can be implemented with the same cost as $\textsc{Mul}$ plus an additional $\cO \lb \log((M+R)/\epsilon) \rb \subseteq \cO \lb \log ((t a + R) /\epsilon) \rb$ primitive gates for implementing $U_{\mathrm{frac}}$ via the alternating sign trick~\cite{Berry2014sparse, Simon2024improved}. 
    Putting everything together, we therefore find that the number of queries to controlled-$U_{A}$ and its inverse scales like
    \begin{align}
        \cO \lb ta \log (1/\epsilon) \rb,
    \end{align}
    and the overall number of additional primitive gates scales like
   \begin{align}
        \cO \lb ta \log (1/\epsilon) \times \lb \mathrm{a} + \log \lb ta  + \log(1/\epsilon) \rb + \log \lb (ta + R)/\epsilon \rb \rb + \log^{5/2} (1/\epsilon) \times \log \lb ta + \log (1/\epsilon) \rb \rb.
    \end{align}
\end{proof}

Now we are ready to present the main result of this subsection which provides a bound on the complexity of preparing (sub-)normalized history states via LCHS. For simplicity, going forward, we will ignore the dependence on `$\rm a$' (the number of block-encoding ancilla qubits for $A$) in the bound on the number of additional primitive gates since that contribution is expected to be subdominant.

\begin{theorem}[History state preparation via LCHS (time-independent)]
    Let $U_A$ be a block-encoding of $A/a$, for $a \ge \|A\|>0$ and set $M=\lceil ta\rceil > 1$, $R \ge 1$. Further, let $\xi_A$ be a known upper bound on the log-norm of $A$ and let $U_{\hist}^{(\LCHS)}$ be a unitary such that
    \begin{align}
       U_{\hist}^{(\LCHS)} \ket{0}_{\mathrm{f}}\ket{0}_{\mathrm{clock}}\ket{0}_{\hist} = \ket 0_{\rm f}\frac{1}{\sqrt{\cN_{\xi_A}}}\left( \sum_{m=0}^{M-1} \ket{m}_{\rm clock}e^{\frac{t}M m A}\ket{x_\init}_{\hist} + \sum_{m=M}^{M+R-1} \ket{m}_{\rm clock}e^{t A}\ket{x_\init}_{\hist}\right) + \ket{0^\perp} \; ,
    \end{align}
    where the $\rm f$ register flags success,
    \begin{align}
        \cN_{\xi_A} = \lb \sum_{m=0}^{M-1} e^{2t \frac{m}{M} \xi_{A}} + \sum_{m=M}^{M+R-1} e^{2t \xi_{A}} \rb \in \cO \lb a \int_0^t \rd s \, e^{2 s\xi_A} + R e^{2t \xi_A} \rb.
    \end{align}
    Let $U_x$ be a state preparation unitary such that $U_x \ket{0} =\ket{x_{\init}}$.
    Then we can implement $U_{\hist}^{(\LCHS)} \ket{0}_{\mathrm{f}}\ket{0}_{\mathrm{clock}}\ket{0}_{\hist}$ within error $\epshist$ in Euclidean distance using 1 query to $U_{x}$,
    \begin{align}
        \cO \lb ta \log (1/\epshist) \rb
    \end{align}
    queries to controlled-$U_{A}$ and its inverse, plus an additional
    \begin{align}
        \widetilde{\cO} \lb (ta + R) \log^{5/2}(1/\epshist) \rb
    \end{align}
    primitive quantum gates.
\label{thm:history_state_LCHS}
\end{theorem}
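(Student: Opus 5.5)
The construction has three stages: prepare a weighted clock state, load the initial vector, and apply the block-encoded controlled evolution $V_A$ from Lemma~\ref{lem:be_ctrl_ev}.

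\textbf{Clock state.} First prepare the normalized clock state
\begin{align}
\ket{\chi}:=\frac{1}{\sqrt{\cN_{\xi_A}}}\lb \sum_{m=0}^{M-1} e^{t\frac{m}{M}\xi_A}\ket{m} + \sum_{m=M}^{M+R-1} e^{t\xi_A}\ket{m}\rb .
\end{align}
This is a state on $\log_2(M+R)$ qubits whose amplitudes are an explicitly computable function of $m$. It can be prepared without any queries to $U_A$. One option is to split off the runway part with a single rotation and then prepare the geometric part $\propto\sum_{m<M} r^m\ket m$, with $r=e^{t\xi_A/M}$, by the standard bit-by-bit product construction. A generic amplitude-oracle state preparation would also work, at cost $\widetilde{\cO}(M+R)$ gates at worst, which fits the stated gate budget.

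\textbf{Applying the evolution.} Next apply $U_x$ once to the history register, giving $\ket{\chi}\ket{x_\init}$. Then apply the block-encoding $U_{V_A}$ of Lemma~\ref{lem:be_ctrl_ev}, with its ancillas forming the flag register f, and accuracy $\epshist$. Projecting onto $\ket 0_{\rm f}$ gives
\begin{align}
V_A\ket\chi\ket{x_\init}=\frac{1}{\sqrt{\cN_{\xi_A}}}\lb\sum_{m<M}\ket m e^{t\frac mM\xi_A}e^{t\frac mM(A-\xi_A)}\ket{x_\init}+\sum_{m\ge M}\ket m e^{t\xi_A}e^{t(A-\xi_A)}\ket{x_\init}\rb .
\end{align}
The scalar factors cancel because $\xi_A\one$ commutes with $A$. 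What remains is exactly the claimed state. The Euclidean error is at most $\|\bra0 U_{V_A}\ket0-V_A\|\le\epshist$, since $\ket\chi\ket{x_\init}$ is normalized. The query and gate counts are then those of Lemma~\ref{lem:be_ctrl_ev} with $\epsilon=\epshist$, plus one query to $U_x$ and the clock-preparation gates.

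\textbf{Bounding $\cN_{\xi_A}$.} The runway contributes exactly $Re^{2t\xi_A}$. For the remaining sum $\sum_{m=0}^{M-1}e^{2mh\xi_A}$ with $h=t/M\le 1/a$, compare each term with an integral over an adjacent unit-$h$ interval, as in Lemma~\ref{lem:norm_history_state}. We have $|\xi_A|\le\|A\|\le a$, which I assume holds for the chosen bound $\xi_A$ (otherwise replace it by $\min(\xi_A,a)$). Then $e^{2mh\xi_A}\le e^{2}\,e^{2s\xi_A}$ for all $s\in[mh,(m+1)h]$. Hence
\begin{align}
\sum_{m=0}^{M-1}e^{2mh\xi_A}\le \frac{e^2}{h}\int_0^t e^{2s\xi_A}\,\rd s\in\cO\lb a\int_0^t e^{2s\xi_A}\,\rd s\rb ,
\end{align}
using $1/h=M/t\le 2a$ because $M>1$.

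\textbf{Main obstacle.} The step I expect to need the most care is making the clock-state preparation and its cost explicit and unitary, so that the success flag f coincides with the LCHS ancilla flag. The flagged amplitude must be exactly $\sqrt{\cN_{\xi_A}}^{-1}$ times the history vector, with no additional subnormalization. This should hold because the clock state is normalized before $V_A$ is applied, and LCHS block-encodes $e^{s(A-\xi_A)}$ with constant 1 up to error.
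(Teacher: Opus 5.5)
Your proposal is correct and follows the paper's proof essentially step for step: prepare the weighted clock state tensored with $\ket{x_\init}$ using one query to $U_x$, apply the LCHS block-encoding $U_{V_A}$ of Lemma~\ref{lem:be_ctrl_ev}, whose ancilla register doubles as the success flag, and bound $\cN_{\xi_A}$ by the same per-interval integral comparison as in Lemma~\ref{lem:norm_history_state}. The only bookkeeping difference is that the paper prepares the clock state approximately, with $\cO((M+R)\log(1/\epshist))$ gates, and splits the error as $\epshist/2$ for this state preparation and $\epshist/2$ for $U_{V_A}$; you would need the same split unless $\ket{\chi}$ is prepared exactly.
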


\begin{proof}
    We use Lemma~\ref{lem:be_ctrl_ev} to construct a block-encoding $U_{V_{A}}$ of 
    \begin{align}
        V_{A} = \sum_{m=0}^{M-1} \ketbra{m}{m} \otimes e^{t \frac{m}{M} \lb A - \xi_A \one \rb} + \sum_{m=M}^{M+R-1} \ketbra{m}{m} \otimes e^{t \lb A -\xi_A \one \rb}.
    \end{align}
    Then we apply $U_{V_{A}}$ to the following normalized quantum state:
    \begin{align}
        \ket{\mathcal{P}_{A, x}} := \ket{0}_{V} \otimes \frac{1}{\sqrt{\cN_{\xi_A}}} \lb \sum_{m=0}^{M-1} e^{t \frac{m}{M} \xi_{A}} \ket{m} + \sum_{m=M}^{M+R-1} e^{t \xi_{A}} \ket{m} \rb \otimes \ket{x_{\init}},
    \end{align}
    where the $V$ register is the block-encoding register of $U_{V_A}$ (which can also be interpreted as the register flagging success). Using the same integral upper bound as in Lemma~\ref{lem:norm_history_state}, we have that
    \begin{align}
        \cN_{\xi_A} \leq \lb \frac{e^2 M}{t} \int_0^t \rd s \, e^{2s \xi_{A}} + R e^{t \xi_{A}} \rb \in \cO \lb a \int_0^t \rd s \, e^{2 s\xi_A} + R e^{2t \xi_A} \rb.
    \end{align}
    The state $\ket{\mathcal{P}_{A, x}}$ can be implemented within error $\epsilon_{\cP_A}$ in Euclidean distance using $1$ query to $U_{x}$ and $\cO \lb (M+R) \log \lb 1/\epsilon_{\cP_A}\rb \rb$ primitive gates~\cite{NC2010, gosset2025stateprep}.
    To ensure that the overall error is at most $\epshist$, it suffices to block-encode $U_{V_{A}}$ within error $\epshist/2$ and also prepare $\ket{\mathcal{P}_{A, x}}$ within error $\epshist/2$. By Lemma~\ref{lem:be_ctrl_ev}, we therefore require
    \begin{align}
        \cO \lb ta \log (1/\epshist) \rb
    \end{align}
    queries to controlled-$U_{A}$ and its inverse. The overall number of additional primitive gates scales like 
    \begin{align}
        &\cO \Big( ta \log (1/\epshist) \times \lb \mathrm{a} + \log \lb ta  + \log(1/\epshist) \rb + \log \lb (ta + R)/\epshist \rb \rb + \log^{5/2} (1/\epshist) \times \log \lb ta + \log (1/\epshist) \rb \nn
        &\qquad + (ta + R) \log (1/\epshist) \Big) \nn
        &\subseteq \widetilde{\cO} \lb (ta + R) \log^{5/2}(1/\epshist) \rb.
    \end{align}
\end{proof}

\subsection{Quantum complexity of estimating a matrix element in the time-independent case}

In order to estimate $\bra{\phi} \bX(t) \ket{\psi}$ in Eq.~\eqref{eq:mainsolution} within error $\epsilon$, we need to be able to estimate
\begin{align}
    \int_0^t \rd s \bra{\phi} e^{(t-s) \bA^\dagger } \bC e^{(t-s) \bB  } \ket{\psi} 
\end{align}
within error, say, $\epsilon/2$. Simply approximating the integral by a Riemann sum with the grid points determined by the time steps of the history state would not be efficient. For example, when $\bA$ and $\bB$ are antiHermitian, we would need a  number of grid points in the Riemann sum that scales with $\sim t^2/\epsilon$ to ensure that the integral discretization error is also $\cO \lb \epsilon \rb$. The complexity of implementing $U_{\mathrm{hist}}^{(\QLSA)}$ scales at least linearly with the number of time steps $M$, due to
the condition number of $\mA$ and Eq.~\eqref{eq:inversemAbound}. 
(In this case, $M \gg ta$.)
The query complexity would then have a factor $\Omega \lb t^2/\epsilon \rb$ for this Riemann sum discretization,
which is too large. Even for the LCHS approach, a simple Riemann sum discretization would lead to an undesirably
large complexity due to the normalization factors of the history states, $\cN_{\xi_\bA}$ and $\cN_{\xi_\bB}$, scaling linearly with the number of discretization points.

It is possible to provide better performance with higher-order integral approximations.
For example, Clenshaw-Curtis quadrature would yield a $\log(t/\epsilon)$ factor in the complexity rather than $t/\epsilon$.
However, we can entirely eliminate that factor in the complexity in the following way.
Consider the following decomposition of the integral:
\begin{align}
\begin{split}
    \int_0^t \rd s \;e^{(t-s)\bA^\dagger } \bC e^{(t-s)\bB  }  &= \sum_{m=0}^{M-1} \int_{t_m}^{t_{m+1}} \rd s \; e^{ (t-s)\bA^\dagger} \bC e^{(t-s)\bB }  \\
    &= \sum_{m=0}^{M-1} \int_{0}^{h} \rd\tau \; e^{ (t - t_m - \tau)\bA^\dagger} \bC e^{ (t - t_m - \tau)\bB }  \\
    &= \sum_{m=0}^{M-1}  e^{(t-t_{m+1}) \bA^\dagger } \underbrace{\lb \int_{0}^{h} \rd\tau \; e^{(h - \tau)\bA^\dagger} \bC e^{(h - \tau)\bB }  \rb}_{=: I_{\bC}} e^{(t-t_{m+1})\bB },
\end{split}
\end{align}
where $h = t/M$ is again the size of the time step, $t_m = t \frac{m}{M} = mh$, and we performed a change of variables, $s = t_m + \tau$. 
If we were able to implement $I_{\bC}$ exactly, then there would be no integral discretization error. In practice, we can approximate $I_{\bC}$ via a truncated Taylor series at low cost.
The following lemma provides an error bound on a truncated Taylor series approximation to $I_\bC$.

\begin{lemma}[Truncated Taylor series approximation to $I_{\bC}$]
    There exists   $K \in  \cO \lb \frac{\log \lb \norm{\bC}h/\epstrunc \rb}{\log \log \lb \norm{\bC}h/\epstrunc \rb} \rb$ such that
    \begin{align}
    \label{eq:widetildeIC}
        \tilde{I}_{\bC} := \sum_{p,q = 0}^K \frac{(\bA^\dagger)^p \bC \bB^q}{p! q!} \frac{h^{p+q+1}}{p+q+1},
    \end{align}
    where $h \leq \min \{\frac{1}{\norm{\bA}}, \frac{1}{\norm{\bB}} \}$, satisfies
    \begin{align}
        \norm{I_{\bC} - \tilde{I}_{\bC}} \leq \epstrunc.
    \end{align}
    \label{lem:Iccutoff}
\end{lemma}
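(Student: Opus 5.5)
The plan is to expand both exponentials inside $I_\bC$ as power series and integrate term by term. With the change of variables $u = h-\tau$, we have $I_\bC = \int_0^h \rd u\, e^{u\bA^\dagger}\bC e^{u\bB}$. Both series converge absolutely in operator norm, so we may interchange sum and integral:
\begin{align}
I_\bC = \sum_{p,q=0}^{\infty} \frac{(\bA^\dagger)^p \bC \bB^q}{p!\,q!} \int_0^h \rd u\, u^{p+q} = \sum_{p,q=0}^{\infty} \frac{(\bA^\dagger)^p \bC \bB^q}{p!\,q!}\frac{h^{p+q+1}}{p+q+1}.
\end{align}
Thus $\tilde I_\bC$ in Eq.~\eqref{eq:widetildeIC} is exactly the truncation of this double series to $p,q\le K$.

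The error $I_\bC-\tilde I_\bC$ is the sum over pairs with $p>K$ or $q>K$. Using $\|\bA\|h\le 1$ and $\|\bB\|h\le 1$, each term has norm at most $\|\bC\| h/(p!\,q!\,(p+q+1)) \le \|\bC\| h/(p!\,q!)$. By a union bound over the two cases,
\begin{align}
\norm{I_\bC-\tilde I_\bC} \le \|\bC\| h \sum_{p>K \text{ or } q>K}\frac{1}{p!\,q!} \le 2\|\bC\| h\, e \sum_{p=K+1}^\infty \frac{1}{p!} \le \frac{4e\,\|\bC\| h}{(K+1)!}.
\end{align}
The last step uses the standard factorial tail bound $\sum_{p>K}1/p!\le 2/(K+1)!$.

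It then suffices to pick $K$ with $(K+1)!\ge 4e\|\bC\|h/\epstrunc$. Since $\log (K+1)! = \Theta(K\log K)$, inverting $K\log K \gtrsim \log(\|\bC\|h/\epstrunc)$ gives $K\in\cO\big(\log(\|\bC\|h/\epstrunc)/\log\log(\|\bC\|h/\epstrunc)\big)$, the same inversion already used for the truncation order in Thm.~\ref{thm:history_state}. If $\|\bC\|h/\epstrunc$ is at most a constant, the bound holds with $K=\cO(1)$, so we interpret the expression as at least a constant. The only mild care needed is this inversion step and this small-argument edge case; the rest is routine.
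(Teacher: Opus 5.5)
Your proof is correct and is essentially the paper's argument. Both use $\|\bA\|h,\|\bB\|h\le 1$ to bound the Taylor tails of the exponentials by $\cO(1/(K+1)!)$, which gives an error $\cO(\|\bC\|h/(K+1)!)$, and then invert the factorial. The only difference is bookkeeping: you integrate the double series term by term and sum the tail over pairs with $p>K$ or $q>K$ (constant $4e$), whereas the paper keeps the $\tau$-integral and uses the triangle inequality to split the difference of the two products into two terms, each controlled by the Taylor remainder bound $e/(K+1)!$ (constant $2e^2$).
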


\begin{proof}
    First, note that
    \begin{align}
         \tilde{I}_{\bC} = \int_{0}^{h} \rd\tau \; \sum_{p=0}^K \frac{\lb (h-\tau) \bA^\dagger  \rb^p}{p!} \bC \sum_{q=0}^K \frac{\lb (h-\tau)\bB  \rb^q}{q!} .
    \end{align}
    Then
    \begin{align}
    \begin{split}
        \norm{I_{\bC} - \tilde{I}_{\bC}} &= \norm{\int_{0}^{h}  \rd\tau\;e^{(h - \tau) \bA^\dagger} \bC e^{(h - \tau)\bB }    - \int_{0}^{h}  \rd\tau\;\sum_{p=0}^K \frac{\lb (h - \tau)\bA^\dagger   \rb^p}{p!} \bC \sum_{q=0}^K \frac{\lb (h - \tau)\bB   \rb^q}{q!}  } \\
        &\leq  \norm{\int_{0}^{h}  \rd\tau\; e^{(h - \tau) \bA^\dagger} \bC e^{(h - \tau)\bB }    - \int_0^h  \rd\tau\; e^{(h - \tau)\bA^\dagger } \bC \sum_{q=0}^K \frac{\lb (h - \tau)\bB  \rb^q}{q!}} \\
        &\qquad + \norm{\int_0^h  \rd\tau\; e^{(h-\tau)\bA^\dagger } \bC \sum_{q=0}^K \frac{\lb (h-\tau) \bB   \rb^q}{q!}  - \int_{0}^{h}  \rd\tau\;\sum_{p=0}^K \frac{\lb(h-\tau) \bA^\dagger  \rb^p}{p!} \bC \sum_{q=0}^K \frac{\lb (h-\tau)\bB   \rb^q}{q!}  } \\
        &\leq \int_0^h  \rd\tau\;\norm{e^{(h-\tau)\bA^\dagger }} \norm{\bC} \norm{e^{(h-\tau)\bB} - \sum_{q=0}^K \frac{\lb (h-\tau) \bB   \rb^q}{q!}}  \\
        &\qquad + \int_0^h  \rd\tau\; \norm{e^{(h-\tau)\bA^\dagger} - \sum_{p=0}^K \frac{\lb (h-\tau)\bA^\dagger   \rb^p}{p!}} \norm{\bC} \norm{\sum_{q=0}^K \frac{\lb (h-\tau)\bB   \rb^q}{q!}}.
    \end{split}
    \end{align}
    By Taylor's remainder theorem, we have that
    \begin{align}
        \norm{e^{(h-\tau)\bA^\dagger} - \sum_{p=0}^K \frac{\lb (h-\tau)\bA^\dagger   \rb^p}{p!}} \leq e^{\norm{\bA^\dagger}h }\frac{\norm{\bA^\dagger h}^{K+1}}{(K+1)!}  \leq \frac{e}{(K+1)!}.
    \end{align}
    Thus,
    \begin{align}
         \norm{I_{\bC} - \tilde{I}_{\bC}} \leq \frac{2e^2 \norm{\bC} h}{(K+1)!}.
    \end{align}
    This implies that it suffices to pick $K \in  \cO \lb \frac{\log \lb \norm{\bC}h/\epstrunc \rb}{\log \log \lb \norm{\bC}h/\epstrunc \rb} \rb$ to ensure that $\norm{I_{\bC} - \tilde{I}_{\bC}} \leq \epstrunc$.
\end{proof}

Next, we show how to block-encode $\tilde{I}_{\bC}$.
We will use the standard notation for block-encodings
where we say that an $(\lambda,{\rm a},\epsbe)$
block-encoding of a matrix $A$ of dimension $N=2^n$ (i.e., $n$ qubits) is a unitary $U_A$ of dimension $2^{n+{\rm a}}$ (i.e., using ${\rm a}\ge 0$ ancilla qubits) such that its first $N \times N$ block is the matrix $A/\lambda$
within additive error at most $\epsbe/\lambda$.

\begin{lemma}[Block-encoding of $\tilde{I}_{\bC}$]
Let $a \ge \|\bA\|$, $b \ge \|\bB\|$, and $c \ge \|\bC\|$, and set $h \le \min \{1/a,1/b\}$.
Let $U_\bA$, $U_\bB$, and $U_\bC$ be block-encodings of $\bA/a$, $\bB/b$, and $\bC/c$, respectively, which use at most ${\rm a}\ge 0$ ancilla qubits. Consider Eq.~\eqref{eq:widetildeIC}.
    We can construct a $(\lambda_{\tilde{I}_{\bC}}, {\rm a}_{\tilde{I}_{\bC}}, \epsbe)$-block-encoding of $\tilde{I}_{\bC}$, where
    \begin{align}
        \lambda_{\tilde{I}_{\bC}} &\leq c h e^2, \\
        {\rm a}_{\tilde{I}_{\bC}} &= (2K+1){\rm a} + 2 \lceil \log_2 K \rceil,
    \end{align}
  using
    \begin{align}
        2 K   \subseteq \cO \lb  \frac{\log \lb \norm{\bC}h/\epsbe \rb}{\log \log \lb \norm{\bC}h/\epsbe \rb} \rb
    \end{align}
    queries to $U_{\bA}^\dagger$ and $U_{\bB}$, one query to $U_{\bC}$, and an additional number of arbitrary primitive gates that scales as
    \begin{align}
        \cO \lb K^2 \log (\lambda_{\tilde{I}_{\bC}} /\epsbe) \rb \subseteq \cO \lb \log^3 (\lambda_{\tilde{I}_{\bC}} /\epsbe) \rb.
    \end{align}
    Here, $K$ is the cutoff in the Taylor series (Lemma~\ref{lem:Iccutoff}).
\label{lem:block-encode_IC}
\end{lemma}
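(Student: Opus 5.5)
The construction uses linear combination of unitaries (LCU) applied to a product of three block-encodings. Write
\begin{align}
\tilde I_\bC = \sum_{p,q=0}^K \frac{h^{p+q+1}a^p b^q c}{p!\,q!\,(p+q+1)} \left(\frac{\bA^\dagger}{a}\right)^p \frac{\bC}{c}\left(\frac{\bB}{b}\right)^q .
\end{align}
The coefficients $\alpha_{pq}:=h^{p+q+1}a^pb^qc/(p!q!(p+q+1))$ are nonnegative. Since $ha\le 1$ and $hb\le1$, their sum is at most
\begin{align}
ch\sum_{p,q}\frac{1}{p!q!} \le ch e^2 .
\end{align}
This sum will serve as $\lambda_{\tilde I_\bC}$.

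The first step is to block-encode the powers. For each $p\le K$, the product of $p$ copies of $U_\bA^\dagger$, acting on separate ancilla registers, block-encodes $(\bA^\dagger/a)^p$. To make this controlled on the index $p$, I would use $K$ copies of $U_\bA^\dagger$. Copy $j$ is applied when $j\le p$; otherwise the identity is applied on its ancillas. The comparison $j\le p$ is computed into a flag qubit from a $\lceil\log_2 K\rceil$-qubit index register. The same is done for $\bB/b$ with index $q$, and one copy of $U_\bC$ is placed in the middle. This uses $K$ queries to $U_\bA^\dagger$, $K$ to $U_\bB$, one to $U_\bC$, and $(2K+1){\rm a}$ ancillas plus the two index registers, which account for the $2\lceil\log_2K\rceil$ term. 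Each query is used once controlled rather than multiplexed, so the count is $2K$ rather than $K^2$.

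The next step is PREPARE/UNPREPARE. I prepare $\sum_{p,q}\sqrt{\alpha_{pq}/\lambda}\ket{p}\ket{q}$ on the two index registers, apply the SELECT from the previous step, and unprepare. The top-left block is then $\tilde I_\bC/\lambda$. Preparing an arbitrary state on $(K+1)^2$ amplitudes to precision $\epsbe/\lambda$ takes $\cO(K^2\log(\lambda/\epsbe))$ gates, and this dominates the extra gates. The comparators cost only $\cO(K\log K)$.

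The final step is accounting. Amplitude errors of size $\epsbe/\lambda$ in the prepared state produce block-encoding error at most $\epsbe$. Choosing $K$ as in Lemma~\ref{lem:Iccutoff} with $\epsbe$ in place of $\epstrunc$ gives the stated query bound. Because $K=\cO(\log(\cdot))$, the gate bound becomes $\cO(\log^3)$.

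The main obstacle I expect is this error bookkeeping, together with checking that the controlled, identity-padded powers act correctly. Specifically, a copy that is not applied must leave its ancilla in $\ket 0$, so that projecting all ancillas onto $\ket{0}$ selects exactly $(\bA^\dagger/a)^p\,\bC/c\,(\bB/b)^q$ with no cross terms.
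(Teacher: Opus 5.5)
Your proposal is correct and matches the paper's construction: prepare the $(p,q)$ index state with amplitudes $\sqrt{\alpha_{pq}/\lambda}$, apply controlled powers of $U_\bA^\dagger$ and $U_\bB$ on separate ancilla registers around a single $U_\bC$, then unprepare, giving the same normalization $\lambda\le che^2$ and the same ancilla, query and gate counts. The paper only asserts that each select costs $K$ queries; your comparator-controlled, identity-padded copies make this explicit, and your state-preparation precision of $\epsbe/\lambda$ should be $\epsbe/(2\lambda)$, an immaterial constant factor.
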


\begin{proof}
    We utilize Lemma 53 in Ref.~\cite{gilyen2019qsvt} which provides complexity bounds for products of $p \ge 0$ block-encodings. In particular, we can implement a $(a^p, p{\rm a}, 0)$-block-encoding $U_{\bA^p}$ of $(\bA)^p$ simply by applying the block-encoding $U_{\bA}$ $p$ times while keeping a separate ancilla register for each query and similarly for powers of $\bB$.
    Now let us define the following unitary operation:
    \begin{align}
        \mathtt{PREP}_{\tilde{I}_{\bC}}\ket{0}\ket{0} = \frac{1}{\sqrt{\lambda_{\tilde{I}_{\bC}}}}\sum_{p,q = 0}^K \sqrt{\frac{a^p c b^q}{p! q!} \frac{h^{p+q+1}}{p+q+1}} \ket{p}\ket{q},
    \end{align}
    with
    \begin{align}
        \lambda_{\tilde{I}_{\bC}} = \sum_{p,q = 0}^K \frac{a^p c b^q}{p! q!} \frac{h^{p+q+1}}{p+q+1} \leq c h \lb \sum_{p=0}^K \frac{(a h)^p}{p!} \rb \lb \sum_{q=0}^K \frac{(b h)^q}{q!} \rb \leq c h e^2.
    \end{align}
    The state $\mathtt{PREP}_{\tilde{I}_{\bC}}\ket{0}\ket{0}$ can be prepared within error $\epsilon'$ using $\cO ( K^2 \log(1/\epsilon'))$ primitive gates~\cite{NC2010, gosset2025stateprep}. To ensure that the overall block-encoding error is at most $\epsbe$ it suffices to have $\epsilon' \leq \epsbe/(2  \lambda_{\tilde{I}_{\bC}})$.
    Additionally, define
    \begin{align}
        \mathtt{SEL}_\bA &:= \sum_{p'=0}^{K} \ketbra{p'}{p'} \otimes \one \otimes U_{\bA^{p'}} \otimes \one_{\bB} \otimes \one_{\bC} \\
        \mathtt{SEL}_\bB &:= \one \otimes \sum_{q'=0}^{K} \ketbra{q'}{q'} \otimes \one_{\bA} \otimes U_{\bB^{q'}} \otimes \one_{\bC},
    \end{align}
    which are defined on ${\rm a}_{\tilde{I}_{\bC}} + \rm{a} = (2K+2) \rm{a} + 2 \lceil \log_2 K \rceil$ qubits. In the above  expressions, $U_{\bA^{p'}}$ and $U_{\bB^{q'}}$ act on different ancilla registers, indicated by subscripts $\bA$, $\bB$ and $\bC$ on the identity matrices, but they both act on the same system register (where the powers of $\bA$ and $\bB$ are applied to) which is difficult to write down properly in tensor product notation.
    Both $\mathtt{SEL}_\bA$ and $\mathtt{SEL}_\bB$ can be implemented with $K$ queries to controlled-$U_{\bA}$ and controlled-$U_{\bB}$, respectively.
    Then
    \begin{align}
        \lb \mathtt{PREP}_{\tilde{I}_{\bC}}^\dagger \otimes \one_{\bA, \bB, \bC} \rb \cdot \mathtt{SEL}^\dagger_\bA \cdot \lb \one \otimes \one \otimes \one_{\bA} \otimes \one_{\bB} \otimes U_{\bC} \rb \cdot \mathtt{SEL}_\bB \cdot \lb \mathtt{PREP}_{\tilde{I}_{\bC}} \otimes  \one_{\bA, \bB, \bC} \rb
    \end{align}
    is a $(\lambda_{\tilde{I}_{\bC}}, {\rm a}_{\tilde{I}_{\bC}}, \epsbe)$-block-encoding of $\tilde{I}_{\bC}$. 
    Here we are again using somewhat unconventional tensor product notation to indicate that $U_\bC$ is acting on an ancilla register that is separate from the ancilla registers of $\mathtt{SEL}_\bA$ and $\mathtt{SEL}_\bB$ but it is acting on the same system register as $U_{\bA^{p'}}$ and $U_{\bB^{q'}}$.
\end{proof}

Recall that $\bra{\phi} \bX(t) \ket{\psi}$ is a sum of two terms,
\begin{align}
    \bra{\phi} \bX(t) \ket{\psi} = \bra{\phi}e^{t \bA^\dagger} \bD e^{t \bB}\ket{\psi} + \int_0^t \rd s \; \bra{\phi} e^{(t-s)\bA^\dagger} \bC e^{(t-s)\bB} \ket{\psi}.
\end{align}
We can obtain this entry by invoking the history states 
resulting from the matrices $\bA$ and $\bB$, respectively,
and computing the corresponding entry of the following matrix:
\begin{align}
    \mathcal{I}_{\bC, \bD} := \sum_{m=0}^{M} \ketbra{m}{m}_{\rm clock} \otimes I_{\bC} + \sum_{m=M+1}^{M+R} \ketbra{m}{m}_{\rm clock} \otimes \frac{\bD}{R}.
\end{align}
Such history states are proportional to 
$\sum_{m=0}^M \ket{m}_{\rm clock}e^{\frac t M m \bA}\ket{\phi}+ \sum_{m=M+1}^{M+R}\ket{m}_{\rm clock} e^{t\bA}\ket {\psi}$ and 
$\sum_{m=0}^M \ket{m}_{\rm clock}e^{\frac t M m \bB}\ket{\phi}+ \sum_{m=M+1}^{M+R}\ket{m}_{\rm clock} e^{t\bB}\ket {\psi}$, respectively.

The lemma below provides a bound on the complexity of block-encoding $\mathcal{I}_{\bC, \bD}$.

\begin{lemma}[Block-encoding $\mathcal{I}_{\bC, \bD}$]
Let $a \ge \|\bA\|$, $b \ge \|\bB\|$, $c \ge \|\bC\|$, and $d \ge \|\bD\|$, and set $h \le \min \{1/a,1/b\}$.
Let $U_\bA$, $U_\bB$, $U_\bC$, and $U_\bD$ be block-encodings of $\bA/a$, $\bB/b$, $\bC/c$, and $\bD/d$, respectively, which use at most ${\rm a} \ge 0$ ancilla qubits.
    We can implement a $(\lambda_{\mathcal{I}_{\bC, \bD}}, {\rm a}_{\mathcal{I}_{\bC, \bD}}, \epsbe')$-block-encoding of $\mathcal{I}_{\bC, \bD}$, where
     \begin{align}
        \lambda_{\mathcal{I}_{\bC, \bD}} &\le \max \left\{ c h e^2, \frac{d}{R} \right\} ,\\
        {\rm a}_{\mathcal{I}_{\bC, \bD}} & = (2K+1) {\rm a} + 2 \lceil \log_2 K \rceil +1 \;,
    \end{align}
     using 
    \begin{align}
       2K \in \cO \lb \frac{\log \lb \lambda_{\mathcal{I}_{\bC, \bD}}/\epsbe' \rb}{\log \log \lb \lambda_{\mathcal{I}_{\bC, \bD}}/\epsbe' \rb} \rb
    \end{align}
    queries to $U_{\bA}^\dagger$ and $U_{\bB}$, one query to $U_{\bC}$ and one query to $U_{\bD}$, and a number of additional primitive gates that scales as
    \begin{align}
     \cO \lb K^2  \log \lb \lambda_{\mathcal{I}_{\bC, \bD}}/\epsbe'\rb\rb   \in {\cO} \lb \log^3 \lb \lambda_{\mathcal{I}_{\bC, \bD}}/\epsbe' \rb\rb.
    \end{align}
\label{lem:block-encode_ICD}
\end{lemma}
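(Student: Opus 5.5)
The plan is to build the block-encoding of $\mathcal{I}_{\bC,\bD}$ as a clock-controlled selection between two block-encodings: one of $\tilde I_\bC$ from Lemma~\ref{lem:block-encode_IC}, and one of $\bD/R$ obtained directly from $U_\bD$. The first step is to put both branches on a common normalization $\lambda:=\max\{ch e^2, d/R\}$. Lemma~\ref{lem:block-encode_IC} gives a $(\lambda_{\tilde I_\bC},{\rm a}_{\tilde I_\bC},\epsbe)$-block-encoding with $\lambda_{\tilde I_\bC}\le che^2$, and $U_\bD$ is a $(d,{\rm a},0)$-block-encoding, i.e.\ a $(d/R,{\rm a},0)$-block-encoding of $\bD/R$. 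Whichever branch has the smaller constant is downscaled to $\lambda$. This uses one extra ancilla qubit rotated by $R_y$ so that amplitude $\lambda_{\rm branch}/\lambda$ lands on $\ket 0$, and that qubit accounts for the ``$+1$'' in the ancilla count. The ancilla registers of the $\bD$ branch can be taken as a subset of those of the $\tilde I_\bC$ branch, so the total stays at ${\rm a}_{\tilde I_\bC}+1$.

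The second step is to compute the clock predicate $[m\le M]$ and apply the two branches controlled on it. Concretely, $\mathtt{SEL}=\sum_{m\le M}\ketbra{m}{m}\otimes U_{\tilde I_\bC}+\sum_{m>M}\ketbra{m}{m}\otimes U_{\bD/R}$. The comparator costs $\cO(\log(M+R))$ gates, which I will absorb or note as subdominant. Because the blocks act diagonally in the clock register, projecting the ancillas onto $\ket0$ gives exactly $\sum_m\ketbra mm\otimes(\text{block}_m)/\lambda$. The spectral-norm error is then the maximum over the diagonal blocks, not a sum, so the per-branch error $\epsbe$ carries over to $\epsbe'$ with no growth in $M+R$. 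The queries are those of the $\tilde I_\bC$ branch, namely $2K$ to $U_\bA^\dagger,U_\bB$ and one to $U_\bC$, plus one controlled $U_\bD$. The gate count is $\cO(K^2\log(\lambda/\epsbe'))$ from the $\mathtt{PREP}$ state in Lemma~\ref{lem:block-encode_IC}.

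I expect the main bookkeeping obstacle to be rescaling the precisions. Lemma~\ref{lem:block-encode_IC} controls error relative to its own constant $\lambda_{\tilde I_\bC}$, and $K$ depends on $\norm{\bC}h/\epsbe$. After downscaling to $\lambda\ge\lambda_{\tilde I_\bC}$, the additive error on $\tilde I_\bC$ is unchanged while the target tolerance is $\epsbe'$. It therefore suffices to run Lemma~\ref{lem:block-encode_IC} with $\epsbe=\epsbe'/2$, and to set the truncation error of Lemma~\ref{lem:Iccutoff} to $\epstrunc=\epsbe'/2$ if the statement is read as encoding $I_\bC$ rather than $\tilde I_\bC$. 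Using $\norm{\bC}h\le ch\le\lambda$ gives $K\in\cO(\log(\lambda/\epsbe')/\log\log(\lambda/\epsbe'))$. I will check that this monotonic replacement is valid in the regime $\lambda/\epsbe'$ large, and handle small $\lambda/\epsbe'$ trivially with a constant $K$.
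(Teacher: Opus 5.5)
Your proposal is correct and follows essentially the same route as the paper. Both use a clock-controlled selection between the block-encoding of $\tilde I_\bC$ from Lemma~\ref{lem:block-encode_IC} and $U_\bD$, brought to a common normalization with one extra rotated ancilla, and both choose the truncation and block-encoding errors so that $K\in\cO(\log(\lambda/\epsbe')/\log\log(\lambda/\epsbe'))$.

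The differences are cosmetic. The paper assumes without loss of generality that $d/R\ge che^2$, so only the $\tilde I_\bC$ branch is downscaled. It also shrinks both tolerances by a factor of order $\lambda_{\tilde I_\bC}/\lambda_{\cI_{\bC,\bD}}$, whereas your observation that downscaling leaves the additive error unchanged gives slightly tighter bookkeeping. Finally, the $\cO(\log(M+R))$ comparator cost you flag is not counted in the paper's statement.
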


\begin{proof}
The diagonal blocks of the matrix $\cI_{\bC,\bD}$ are $M$ repetitions of $I_\bC$ and $R$ repetitions of $\bD/R$. In our approximation, we are going to replace
$I_\bC \mapsto \tilde I_\bC$ via  Lemma~\ref{lem:block-encode_IC}, thereby defining the approximation $\tilde \cI_{\bC,\bD}$.
For a block-encoding of $\tilde \cI_{\bC,\bD}$, we are going to rescale this matrix by a multiplicative factor that is the inverse of $\lambda_{\cI_{\bC,\bD}}=\max \{\tilde \lambda_{I_\bC}, d/R\}$. Hence, $\lambda_{\cI_{\bC,\bD}} \le \max \{che^2, d/R\}$ from Lemma~\ref{lem:block-encode_IC} and we can write
\begin{align}
\tilde \cI_{\bC,\bD} =  \lambda_{\cI_{\bC,\bD}} \lb  \sum_{m=0}^{M-1} \ketbra{m}{m}_{\rm clock} \otimes \frac{\tilde I_{\bC}}{\lambda_{\tilde I_\bC}} \times \frac{\lambda_{\tilde I_\bC}}{ \lambda_{\cI_{\bC,\bD}}}+ \sum_{m=M}^{M+R-1} \ketbra{m}{m}_{\rm clock} \otimes \frac{\bD}{d} \times \frac{d}{R \lambda_{\cI_{\bC,\bD}}}  \rb\;.
\end{align}
Note that $\lambda_{\tilde I_\bC}/ \lambda_{\cI_{\bC,\bD}} \le 1$ and also $d/(R \lambda_{\cI_{\bC,\bD}})\le 1$, so the block-encoding results from a block-encoding of the term inside the brackets. That is, we replace $\tilde I_{\bC}/\lambda_{\tilde I_\bC}$ by the unitary block-encoding in Lemma~\ref{lem:block-encode_IC} and
$\bD/d$ by the unitary $U_\bD$. To obtain the correct weights in this block-encoding, we can use an additional ancilla. Without loss of generality, we can assume $d/R \ge che^2$, since we are free to increase the value of $d$
without incurring in any additional cost; that is, if $U_\bD$ encodes $\bD/d$, it also encodes $\bD/d'$ for any $d' \ge d$. This will imply $\lambda_{\cI_{\bC,\bD}}=d/R$.

Hence,
we have readily obtained a
$(\lambda_{\mathcal{I}_{\bC, \bD}}, {\rm a}_{\mathcal{I}_{\bC, \bD}}, \epsbe')$-block-encoding of $\mathcal{I}_{\bC, \bD}$, where 
${\rm a}_{\mathcal{I}_{\bC, \bD}}={\rm a}_{\tilde I_\bC}+1$ due to  the additional ancilla for the weights, and the error is
$\epsbe' \le \epsbe \lambda_{\mathcal{I}_{\bC, \bD}}/
\lambda_{\tilde I_\bC} \le \epsbe d/(R c h e^2)$, where $\epsbe$ is the error in the block-encoding of $\tilde I_\bC$ also in Lemma~\ref{lem:block-encode_IC}.

To obtain the final result, we then need to set a correct value for $K$ in the Taylor series cutoff
from Lemma~\ref{lem:Iccutoff}. For the desired error, it suffices to set, say,
\begin{align}
    \epsbe \le \frac 1 2 \epsbe' Rche^2/d\;.
\end{align}
 Then, it follows that $\epsbe \le \epsbe' Rch/d$ suffices. We also need to set $\epstrunc$
in Lemma~\ref{lem:Iccutoff} so that the additional error in the block-encoding is also bounded by, say, $\epsbe'/2$. This can be accomplished if we set
\begin{align}
    \epstrunc \le \frac 1 2 \epsbe' \frac{\lambda_{\tilde I_\bC}}{\lambda_{\cI_{\bC,\bD}}} \le \frac 1 2 \epsbe' \frac{c h e^2}{(d/R)}
\end{align}
and it suffices if $\epstrunc \le \epsbe' R c h /d$ as well. 
Since $ c \ge \|\bC\|$, the resulting $K$ from Lemma~\ref{lem:Iccutoff} is then
\begin{align}
  K \in  \cO \lb \frac{\log \lb c h/\epstrunc \rb}{\log \log \lb c h/\epstrunc \rb} \rb \in 
   \cO \lb \frac{\log \lb d/(R\epsbe') \rb}{\log \log \lb d/(R\epsbe') \rb} \rb \in 
    \cO \lb \frac{\log \lb \lambda_{\cI_{\bC,\bD}}/\epsbe' \rb}{\log \log \lb \lambda_{\cI_{\bC,\bD}}/\epsbe'\rb} \rb\;,
\end{align}
from which all results follow.
In particular, the number of primitive gates follows from Lemma~\ref{lem:block-encode_IC},
where we have $\log(\lambda_{\tilde I_\bC}/\epsbe) \in \cO(\log(\lambda_{\cI_{\bC,\bD}}/\epsbe'))$
under the assumptions.
 
\end{proof}

Note that we have assumed $d/R \ge che^2$ but since we presented the results using asymptotic scalings,
they remain valid as long as $d/R$ is lower bounded by some positive constant times $ch$.

Now we are ready to present the main theorems of this section which provides a bound on the complexity of estimating $\bra{\phi} \bX(t) \ket{\psi}$ within additive error $\epsilon$.
Our main Thm.~\ref{thm:main} will be a corollary of the following in a more succinct form 
by using simple bounds.

\begin{theorem}[Cost of estimating a matrix entry in the time-independent setting (linear systems approach)]
\label{thm:mainformal}
Let $a \ge \|\bA\|$, $b \ge \|\bB\|$, $c \ge \|\bC\|$, and $d \ge \|\bD\|$.
Let $U_\bA$, $U_\bB$, $U_\bC$, and $U_\bD$ be block-encodings of $\bA/a$, $\bB/b$, $\bC/c$, and $\bD/d$, respectively, which use at most ${\rm a} \ge 0$ ancilla qubits. Let $U_\phi$
and $U_\psi$ be the state preparation unitaries that map $\ket 0 \mapsto \ket \phi$ and $\ket 0 \mapsto \ket \psi$, respectively.
    We can estimate
    \begin{align}
        \bra{\phi} \bX(t) \ket{\psi} = \bra{\phi}e^{t \bA^\dagger} \bD e^{t \bB }\ket{\psi} + \int_0^t \rd s \;\bra{\phi} e^{(t-s)\bA^\dagger} \bC e^{(t-s)\bB } \ket{\psi} 
    \end{align}
    within additive error $\epsilon$ with probability of success at least $2/3$ using 
    \begin{align}
    \cO \lb \frac{ c \tcL_2}{\epsilon} \times \mu \tcL_1 \times \log (c \tcL_2/\epsilon) \rb
    \end{align}
    queries to $U_{\phi}$, $U_{\psi}$, and their inverses,
    \begin{align}
      \cO \lb \frac{ c \tcL_2}{\epsilon} \times \mu \tcL_1 \times \log (c \tcL_2/\epsilon) \times \log(\mu t  c^2 \tcL_2 \tcL_1/(d\epsilon))\rb
    \end{align}
    queries to $U_{\bA}$ and $U_{\bB}$, and their inverses,
    \begin{align}
        \cO \lb \frac{c \tcL_2}{\epsilon}\rb
    \end{align}
    queries to $U_{\bC}$ and $U_{\bD}$, and their inverses, and
    \begin{align}
      \widetilde{\cO} \lb \frac{c \tcL_2}{\epsilon} \times \mu \tcL_1 \times \log^2 \lb t \mu \rb \rb
    \end{align}
    additional gates, where
    \begin{align}
    \mu&:=\max\{a,b\},\\
        \tcL_1 &\geq \max \left\{ \int_0^t \rd s \; \norm{e^{s \bA }}   + \frac{d}{c} \max_{s \in [0,t]} \norm{ e^{s\bA }}, \int_0^t \rd s \; \norm{e^{\bB s}}   + \frac{d}{c} \max_{s \in [0,t]} \norm{e^{\bB s}} \right\} \; , \\
        \tcL_2 &\geq \sqrt{\lb \int_0^t \rd s \; \norm{e^{s\bA }\ket{\phi}}^2   + \frac{d}{c} \norm{e^{t \bA  }\ket{\phi}}^2 \rb \lb \int_0^t \rd s \; \norm{e^{s \bB }\ket{\psi}}^2   + \frac{d}{c} \norm{e^{t\bB }\ket{\psi}}^2 \rb}.
    \end{align}
    The $\widetilde \cO$ notation drops subdominant logarithmic factors such as $\log(c\tcL_2/\epsilon)$. Then, the overall number of queries to $U_{\phi}$, $U_{\psi}$, $U_{\bA}$, $U_{\bB}$, $U_{\bC}$, $U_{\bD}$ 
    and their inverses can be written as
    \begin{align}
        \widetilde{\cO} \lb \frac{c \tcL_2}{\epsilon} \times \mu \tcL_1 \times \log \lb t \mu \rb \rb \;.
    \end{align}
\label{thm:matrix_element}
\end{theorem}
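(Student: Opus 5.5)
The plan is to reduce the entry to an overlap of the form $\bra{\phi_{\hist}}\cI_{\bC,\bD}\ket{\psi_{\hist}}$, as in Eq.~\eqref{eq:overlapreductionmain}, and then estimate it by Heisenberg-limited amplitude estimation. Concretely, I will use Thm.~\ref{thm:history_state} twice: once with $A=\bA$, $\vec x_{\init}=\ket\phi$, and once with $A=\bB$, $\vec x_{\init}=\ket\psi$. In both cases I take $M=\lceil t\mu\rceil$ and $h=t/M\le\min\{1/a,1/b\}$, so both history states share one clock register. For the runway I choose $R=\Theta(\mu d/c)$, so that $d/R=\Theta(ch)$; this is what Lemma~\ref{lem:block-encode_ICD} requires, and it gives $\lambda_{\cI_{\bC,\bD}}=\Theta(c/\mu)$.

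With these choices, $U_{\hist}^{(\QLSA)}$ outputs, on the flag $\ket 0_{\rm a}$, the amplitude $\sqrt{p_0}\,\vec\cX/\sqrt{\cN}$, where $\vec\cX$ is the exact history vector. Sandwiching the block-encoding of $\cI_{\bC,\bD}$ between $U_{\hist,\bA}^{(\QLSA)\dagger}$ and $U_{\hist,\bB}^{(\QLSA)}$ (Hadamard test) therefore produces an amplitude
\begin{align}
\frac{p_0\,\bra\phi\bX(t)\ket\psi}{\lambda_{\cI_{\bC,\bD}}\sqrt{\cN_\bA\cN_\bB}}\,.
\end{align}
I am assuming that both $p_0$ values can be made equal constants, and that the flags are postselected coherently.

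Next I bound the normalizations. Lemma~\ref{lem:norm_history_state} gives $\cN_\bA\le \frac{e^2M}{t}\int_0^t\|e^{s\bA}\ket\phi\|^2\rd s+R\|e^{t\bA}\ket\phi\|^2$, and the analogous bound holds for $\cN_\bB$. Since $M/t=\cO(\mu)$ and $R=\cO(\mu d/c)$, this yields $\sqrt{\cN_\bA\cN_\bB}=\cO(\mu\tcL_2)$. Multiplying by $\lambda_{\cI_{\bC,\bD}}=\cO(c/\mu)$ gives a total rescaling of $\cO(c\tcL_2)$. It therefore suffices to estimate the amplitude to precision $\epsilon'=\Theta(\epsilon/(c\tcL_2))$. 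Using Heisenberg-limited estimation~\cite{knill2007optimal}, this takes $\cO(c\tcL_2/\epsilon)$ coherent uses of the overall circuit. That immediately gives the stated number of $U_\bC,U_\bD$ queries, since Lemma~\ref{lem:block-encode_ICD} uses one query to each.

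The remaining task is to allocate the errors. I require $\epshist$, $\epsbe'$, and the estimation error each to contribute $\cO(\epsilon')$ to the amplitude; for this, $\epshist=\Theta(\epsilon')$ and $\epsbe'=\Theta(\epsilon'\lambda_{\cI_{\bC,\bD}})$ suffice. A subtle point is that the amplitude estimation also needs the values of $\cN_\bA,\cN_\bB$ (or at least their product) to relative precision $\cO(\epsilon')$ in order to undo the normalization. I will estimate these via the success amplitudes, or via the method of Ref.~\cite{dalzell2024shortcut}, at comparable cost; this is the step I expect to be most delicate.

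Each use of the circuit costs one run of each history-state preparation. By Thm.~\ref{thm:history_state} with $R=\cO(\mu d/c)$, one run uses $\cO(\mu\tcL_1\log(1/\epshist))$ queries to $U_\phi,U_\psi$, because $a\int_0^t\|e^{s\bA}\|\rd s+R\max_s\|e^{s\bA}\|\le\mu\tcL_1$. It uses an additional factor $\log((ta+R)\max_s\|e^{s\bA}\|/\epshist)$ of queries to $U_\bA$ and $U_\bB$. It also uses the $\cO(\log)$ queries of Lemma~\ref{lem:block-encode_ICD}, which are subdominant.

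Multiplying the per-run cost by $\cO(c\tcL_2/\epsilon)$ gives the claimed bounds. For the logarithms I substitute $1/\epshist=\cO(c\tcL_2/\epsilon)$, and I bound the argument $(t\mu+\mu d/c)\max_s\|e^{s\bY}\|\,c\tcL_2/\epsilon$ by $\poly(\mu t c^2\tcL_1\tcL_2/(d\epsilon))$, using $\max_s\|e^{s\bY}\|\le c\tcL_1/d$. Gate counts follow in the same way, from the gate bounds of Thm.~\ref{thm:history_state} and Lemma~\ref{lem:block-encode_ICD}. Finally, a constant number of repetitions with the median trick gives success probability $2/3$.
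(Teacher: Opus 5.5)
Your proposal follows the paper's proof essentially step for step. It uses the same overlap reduction with a shared clock, $M=\lceil t\mu\rceil$, $R=\Theta(\mu d/c)$ and $\lambda_{\cI_{\bC,\bD}}=\Theta(c/\mu)$; the same $\cO(c\tcL_2)$ rescaling from Lemma~\ref{lem:norm_history_state}; the same error budget ($\epshist=\cO(\epsilon/(c\tcL_2))$, $\epsbe'=\cO(\epsilon/(\mu\tcL_2))$); and the same per-run costs from Thm.~\ref{thm:history_state} multiplied by $\cO(c\tcL_2/\epsilon)$.

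The only point to repair is your assumption that $p_0$ is a known common constant. Thm.~\ref{thm:history_state} provides only unknown constants $p_\bA,p_\bB$, which the paper estimates by amplitude estimation to additive error $\cO(\epsilon/(c\tcL_2))$ within the same budget. The norms $\cN_\bA,\cN_\bB$ should come from \cite{dalzell2024shortcut}, not from the filter's success amplitude, because that amplitude does not encode $\|\vec\cX\|$.
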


\begin{proof}
    To estimate $\bra{\phi} \bX(t) \ket{\psi}$, we perform overlap estimation with history states.
    Let $U_{\hist, \bA}$ be such that
    \begin{align}
   U_{\hist, \bA} \ket 0 \ket 0_{\rm clock}     \ket{0}_{\hist} = \sqrt{p_\bA} \ket 0 \frac{1}{\sqrt{\mathcal{N}_\bA}} \lb    \sum_{m=0}^{M-1} \ket{m}_{\rm clock} e^{t \bA  \frac{m}{M}}\ket{\phi}_{\hist}  + \sum_{m=M}^{M + R - 1} \ket{m}_{\rm clock} e^{t \bA} \ket{\phi}_{\hist} \rb + \ket {0_\bA^\perp}
    \end{align}
    and define $U_{\hist, \bB}$ similarly, using $\bB$
    and $\ket \psi$ instead.
    Accordingly, we define
    the following two normalized states:
    \begin{align}
        \ket{\phi'_{\hist}} &:= \ket{0}_{\mathcal{I}}   \otimes \lb U_{\hist, \bA}  \ket 0 \ket{0}_{\mathrm{clock}} \ket{0}_{\hist}\rb \\
        \ket{\mathcal{I} \psi'_{\hist}} &:=  {U}_{\mathcal{I}_{\bC,\bD}}
        \ket{0}_{\mathcal{I}}   \otimes \lb U_{\hist, \bB}  \ket 0 \ket{0}_{\mathrm{clock}} \ket{0}_{\hist}\rb \;,
    \end{align}
    where ${U}_{\mathcal{I}_{\bC,\bD}}$ is a unitary  block-encoding of
    \begin{align}
        \mathcal{I}_{\bC,\bD} \otimes \ketbra{0}{0} ,
    \end{align}
    that uses the $\cI$ register of ancillas and normalization constant $\lambda_{\cI_{\bC,\bD}}$.
    This block-encoding can be constructed from the block-encoding of $\mathcal{I}_{\bC,\bD}$ in Lemma~\ref{lem:block-encode_ICD}. Note that
    \begin{align}
    \bra{\phi'_{\hist}}\ket{\cI \psi'_\hist}& =\frac 1 {\lambda_{\cI_{\bC,\bD}}} \lb \sqrt{\frac {p_\bA p_\bB}{\cN_\bA \cN_\bB}}  \sum_{m=0}^{M-1} \bra \phi e^{t \frac m M \bA^\dagger} I_\bC e^{t \frac m M \bB}\ket \psi + 
    \sum_{m=M}^{M+R-1} 
    \bra \phi e^{t \bA^\dagger} \frac {\bD}R e^{t \bB}\ket \psi
    \rb \\
    \label{eq:overlapreduction}
    & =\frac 1 {\lambda_{\cI_{\bC,\bD}}}\sqrt{\frac {p_\bA p_\bB}{\cN_\bA \cN_\bB}}
    \bra{\phi} \bX(t) \ket{\psi} \;,
    \end{align}
    and recall that $p_\bA>0$ and $p_\bB>0$ are constants.

    Unless $\bra{\phi} \bX(t) \ket{\psi}$ is guaranteed to be real, we need to run two slightly different quantum circuits to extract the real and imaginary parts of $\bra{\phi} \bX(t) \ket{\psi}$. The circuits are the same apart from an additional $S$ gate on an ancilla qubit for the estimation of the imaginary part~\cite{knill2007optimal}.
    We can use Thm.~\ref{thm:history_state} to implement the state preparation unitaries $U_{\mathrm{hist}, \bA}$ and $U_{\mathrm{hist}, \bB}$ within respective errors bounded by some $\epshist$ to be determined shortly.
     The bounded probabilities $p_\bA$ and $p_\bB$
     are unknown a priori, but these can be estimated
     within some intermediate error tolerance $\epsilon_p$ via amplitude estimation using $\cO \lb 1/\epsilon_p \rb$ queries to $U_{\hist, \bA}$ and $U_{\hist, \bB}$, and inverses, respectively.
     We will need this information to obtain
     $ \bra{\phi} \bX(t) \ket{\psi}$ from $\bra{\phi'_{\hist}}\ket{\cI \psi'_\hist}$.
    Also, we can use Lemma~\ref{lem:block-encode_ICD} to construct the block-encoding $U_{\mathcal{I}_{\bC,\bD}}$ within some error $\epsbe'$ and with block-encoding constant $\lambda_{\mathcal{I}_{\bC,\bD}}$.

    It is also possible to perform the estimation of $\bra{\phi} \bX(t) \ket{\psi}$ without estimating $p_\bA$ and $p_\bB$.
    This is because the amplitude estimation primarily requires reflections about the history state, which can be achieved more efficiently than the state preparation, and without the need to consider success probabilities.
    In particular, the LCU filter in Ref.~\cite{Costa2022optimal_linear} can be used to apply a reflection about the solution without using the quantum walk.
    Alternatively, Ref.~\cite{dalzell2024shortcut} gives a method to reflect about the solution using a quantum singular value transformation.
    The phase estimation used to perform the amplitude estimation can be performed on the history state, but that history state need only be prepared once.
    Therefore it may be performed with the nondeterministic repeat-until-success approach of Ref.~\cite{Costa2022optimal_linear}, and the probability of success does not need to be estimated.

    In order to determine the real and imaginary components of $\bra{\phi} \bX(t) \ket{\psi}$, we may apply a Hadamard-test approach and perform a reflection about
    \begin{align}\label{eq:superhist}
        \frac 1{\sqrt 2} \left( \ket{0}\ket{\phi'_{\hist}} + \ket{1}\ket{\psi'_{\hist}} \right).
    \end{align}
    This state can be given as the solution of a linear equation with blocks corresponding to the linear equations that $\ket{\phi'_{\hist}}$ and $\ket{\psi'_{\hist}}$ need to satisfy individually.
    The weighting of the blocks needs to be adjusted according to the relative values of $\mathcal{N}_\bA$ and $\mathcal{N}_\bB$.
    Those correspond to the norm of the solution of linear equations, which may be estimated with complexity $\cO(\kappa\log(1/\epsilon)/\epsilon)$ \cite{dalzell2024shortcut}.
    Given that Eq.~\eqref{eq:superhist} is the solution of a system of linear equations, the reflection may be performed via an LCU filter \cite{Costa2022optimal_linear} or QSVT \cite{dalzell2024shortcut}.

    For our problem, we will 
    pick $R \sim   \frac{\mu d}c=
    \frac{ d}{hc}$. This would imply in Lemma~\ref{lem:block-encode_ICD} that $\lambda_{\cI_{\bC,\bD}} \sim hc \sim d/R \sim c/\mu$ \footnote{This is an arbitrary choice that suffices for our results, and we note that depending on the available block-encodings, other choices might be more optimal}.
    Then our estimate of $\bra{\phi} \bX(t) \ket{\psi}$ from the overlap estimation between $\ket{\phi'_{\hist}}$ and $\ket{\mathcal{I} \psi'_{\hist}}$ will be normalized by a factor inverse to $  \sqrt{\mathcal{N}_\bA \mathcal{N}_\bB} \lambda_{\mathcal{I}_{\bC,\bD}}$ as in Eq.~\eqref{eq:overlapreduction}, which is
    \begin{align}
    \begin{split}
         &  \cO \lb \sqrt{\lb \mu \int_0^t \rd s \norm{e^{s \bA}\ket{\phi}}^2 + R\norm{e^{t \bA}\ket{\phi}}^2 \rb \lb \mu \int_0^t \rd s \norm{e^{s \bB}\ket{\psi}}^2 + R\norm{e^{t \bB}\ket{\psi}}^2 \rb}  c h \} \rb \\
         &\subseteq \cO \lb  c \sqrt{\lb \int_0^t \rd s \norm{e^{s \bA }\ket{\phi}}^2 + \frac{d}{c} \norm{e^{t \bA }\ket{\phi}}^2 \rb \lb \int_0^t \rd s \norm{e^{s \bB}\ket{\psi}}^2 + \frac{d}{c} \norm{e^{t \bB}\ket{\psi}}^2 \rb} \rb \\
         &\subseteq \cO \lb c \tcL_2 \rb.
    \end{split}
    \end{align}
    This means that we need to perform the overlap estimation between $\ket{\phi'_{\hist}}$ and $\ket{\mathcal{I} \psi'_{\hist}}$ within error  that is $\cO \lb \epsilon/\lb c \tcL_2 \rb \rb$. Additionally, to ensure that the additive error for our estimate of $\bra{\phi} \bX(t) \ket{\psi}$ is at most $\epsilon$, it will suffice if we demand that all other errors are as follows:
    \begin{align}
        \epsilon_p \in \cO \lb \frac{\epsilon}{ c \tcL_2}\rb, \; \epsilon_{\hist} \in \cO \lb \frac{\epsilon}{ c \tcL_2} \rb ,
        \end{align}
        and
        \begin{align}
       \epsbe' \in \cO \lb \frac{\epsilon }{\sqrt{\cN_\bA \cN_\bB}}\rb \in \cO \lb \frac{\epsilon }{\mu{\tcL_2}}\rb \;. 
    \end{align}
    Here, $\epsbe'$ is the error in the block-encoding of ${\cI}_{\bC,\bD}$ obtained from Lemma~\ref{lem:block-encode_ICD}.

We are ready to bound the complexities.

\begin{itemize}
\item From Thm.~\ref{thm:history_state} we obtain the query and gate complexities of preparing $U_{\hist,\bA}$ and $U_{\hist,\bB}$ with precision
$\epshist$ as above. For amplitude estimation at the Heisenberg limit,
these unitaries have to be invoked $\cO(c\tcL_2/\epsilon)$ times. Since $R \sim \mu d/c$, this readily gives a number of uses of $U_\phi$ and inverse that is 
\begin{align}
    \cO \Bigg( \frac{c \tcL_2}{\epsilon} \times \lb \mu \int_0^t \rd s \norm{e^{s \bA}}  + \frac{\mu d}{c} \max_{s \in [0,t]} \norm{e^{s \bA}} \rb \times \log (c\tcL_2/\epsilon) \Bigg)
\end{align}
and, accordingly, a number of uses of $U_\psi$ and its inverse that is
\begin{align}
    \cO \Bigg( \frac{c \tcL_2}{\epsilon}\times \lb \mu \int_0^t \rd s \norm{e^{s \bB}}  + \frac{\mu d}{c} \max_{s \in [0,t]} \norm{e^{s \bB}} \rb \times \log (c\tcL_2/\epsilon) \Bigg)
\end{align}

\item Also from Thm.~\ref{thm:history_state},
the number of queries to $U_\bA$ and its inverse contains an extra logarithmic factor $\log((at+R)\max_s \|e^{s\bA}\|/\epshist)$, which with the current choices gives the query complexity
\begin{align}
    \cO \Bigg( \frac{c \tcL_2}{\epsilon} \times \lb \mu \int_0^t \rd s \norm{e^{s \bA}}  + \frac{\mu d}{c} \max_{s \in [0,t]} \norm{e^{s \bA}} \rb \times \log (c\tcL_2/\epsilon) \times \log(\mu t \max_{s\in[0,t]}\|e^{s\bA}\|c \tcL_2/\epsilon) \Bigg)\;.
\end{align}
The number of queries to $U_\bB$ and its inverse is then
\begin{align}
    \cO \Bigg( \frac{c \tcL_2}{\epsilon} \times \lb \mu \int_0^t \rd s \norm{e^{s \bB}}  + \frac{\mu d}{c} \max_{s \in [0,t]} \norm{e^{s \bB}} \rb \times \log (c\tcL_2/\epsilon) \times \log(\mu t \max_{s\in[0,t]}\|e^{s\bB}\|c \tcL_2/\epsilon) \Bigg)\;.
\end{align}

\item Also from Thm.~\ref{thm:history_state},
the number of arbitrary two qubit gates will contain another logarithmic factor on top of the query complexity of the form 
$\log(\mu t \max_{s\in[0,t]}\{\|e^{s\bA}\|,\|e^{s\bB}\|\}c \tcL_2/\epsilon)$.
   
  \item From overlap estimation at the Heisenberg limit, within error $\cO(\epsilon/(c\tcL_2))$,
   the  number of queries to $U_{\bC}$, $U_{\bD}$ and their inverses is
\begin{align}
    \cO \lb \frac{c \tcL_2}{\epsilon} \rb\;.
\end{align}

    \end{itemize}
   
    We now simplify the expressions
    by using the definition
    \begin{align}
       \tcL_1 &:= \max \left\{ \int_0^t \rd s \; \norm{e^{s \bA }}   + \frac{d}{c} \max_{s \in [0,t]} \norm{ e^{s\bA }}, \int_0^t \rd s \; \norm{e^{\bB s}}   + \frac{d}{c} \max_{s \in [0,t]} \norm{e^{\bB s}} \right\} .
    \end{align}
    Dropping certain subdominant logarithmic factors (e.g., $\log(c\tcL_2/\epsilon)$), the overall query complexity to the state preparation unitaries and all block-encodings can be simplified to the following query upper bound:
    \begin{align}
    \label{eq:simplequerycomplexityTI}
        \widetilde{\cO} \lb \frac{c \tcL_2}{\epsilon} \times \mu \tcL_1 \times \log \lb t \mu \rb \rb.
    \end{align}
    The overall gate complexity is
    \begin{align}
        \widetilde{\cO} \lb \frac{c \tcL_2}{\epsilon} \times \mu \tcL_1 \times \log^2 \lb t \mu \rb \rb.
    \end{align}

\end{proof}

The theorem below discusses the complexity of estimating an entry of the solution matrix when the history states are prepared via LCHS.

\begin{theorem}[Cost of estimating a matrix entry in the time-independent setting (LCHS approach)]
    Let $a \ge \|\bA\|$, $b \ge \|\bB\|$, $c \ge \|\bC\|$, and $d \ge \|\bD\|$.
    Let $U_\bA$, $U_\bB$, $U_\bC$, and $U_\bD$ be block-encodings of $\bA/a$, $\bB/b$, $\bC/c$, and $\bD/d$, respectively, which use at most ${\rm a} \ge 0$ ancilla qubits. Let $U_\phi$
    and $U_\psi$ be the state preparation unitaries that map $\ket 0 \mapsto \ket \phi$ and $\ket 0 \mapsto \ket \psi$, respectively.
    Define $\mu :=\max\{a,b\}$ and
    \begin{align}
        \cL_2 := \sqrt{\lb \int_0^t \rd s\, e^{2s \xi_\bA} + \frac{d}{c} e^{2t \xi_\bA} \rb \lb \int_0^t \rd s\, e^{2s \xi_\bB} + \frac{d}{c} e^{2t \xi_\bB} \rb},
    \end{align}
    with $\xi_{\bA}$ being a known upper bound on the log-norm of $\bA$ and $\xi_{\bB}$ being a known upper bound on the log-norm of $\bB$.
    Then we can estimate
    \begin{align}
        \bra{\phi} \bX(t) \ket{\psi} = \bra{\phi}e^{t \bA^\dagger} \bD e^{t \bB }\ket{\psi} + \int_0^t \rd s \;\bra{\phi} e^{(t-s)\bA^\dagger} \bC e^{(t-s)\bB } \ket{\psi} 
    \end{align}
    within additive error $\epsilon$ with probability of success at least $2/3$ using
    \begin{align}
         \cO \lb \frac{c \cL_2}{\epsilon} \rb
    \end{align}
    queries to controlled-$U_{\phi}$, controlled-$U_{\psi}$, controlled-$U_{\bC}$, controlled-$U_{\bD}$ and their inverses, 
    \begin{align}
         \cO \lb \frac{c \cL_2}{\epsilon} \times t \mu \times \log \lb \frac{c \cL_2}{\epsilon} \rb \rb
    \label{LCHS_hist_queries}
    \end{align}
    queries to controlled-$U_{\bA}$, controlled-$U_{\bB}$ and their inverses, and an additional
    \begin{align}
        \widetilde{\cO} \lb \frac{c \cL_2}{\epsilon} \times \lb t \mu + \frac{\mu d}{c} \rb \rb
    \end{align}
    primitive gates.
\label{thm:be-approach_time-indep}
\end{theorem}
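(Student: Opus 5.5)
The plan is to follow the proof of Thm.~\ref{thm:matrix_element}, replacing the linear-systems history states by the subnormalized history states of Thm.~\ref{thm:history_state_LCHS}. Apply Thm.~\ref{thm:history_state_LCHS} with $A=\bA$, $\ket{x_\init}=\ket\phi$ and with $A=\bB$, $\ket{x_\init}=\ket\psi$. In both cases take $M=\lceil t\mu\rceil$ and $R\sim \mu d/c$, so that $h=t/M\le\min\{1/a,1/b\}$. This produces unitaries $U^{(\LCHS)}_{\hist,\bA}$ and $U^{(\LCHS)}_{\hist,\bB}$ whose $\ket0_{\rm f}$ components are the history states divided by $\sqrt{\cN_{\xi_\bA}}$ and $\sqrt{\cN_{\xi_\bB}}$. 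Unlike the QLSA case, these normalizations are known in closed form, since they are geometric sums in $e^{2t m\xi/M}$. There are therefore no unknown success probabilities $p_\bA,p_\bB$, and no norm estimation is needed. Each such unitary uses one query to $U_\phi$ (resp.\ $U_\psi$) and $\cO(t\mu\log(1/\epshist))$ queries to $U_\bA$ (resp.\ $U_\bB$).

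Next, block-encode $\cI_{\bC,\bD}$ via Lemma~\ref{lem:block-encode_ICD}. With $R\sim\mu d/c$ this gives $\lambda_{\cI_{\bC,\bD}}\sim c/\mu$, uses one query each to $U_\bC$ and $U_\bD$, and uses $\cO(\log(\cdot)/\log\log(\cdot))$ queries to $U_\bA^\dagger$ and $U_\bB$. Form $\ket{\phi'_\hist}=\ket0_\cI U^{(\LCHS)}_{\hist,\bA}\ket0$ and $\ket{\cI\psi'_\hist}=U_{\cI_{\bC,\bD}}\ket0_\cI U^{(\LCHS)}_{\hist,\bB}\ket0$, with the $\cI$ block-encoding projected onto the flag $\ket0_{\rm f}$. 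Exactly as in Eq.~\eqref{eq:overlapreduction}, the decomposition of the integral into the $I_\bC$ pieces gives
\begin{align}
\bra{\phi'_\hist}\ket{\cI\psi'_\hist}=\frac{\bra\phi\bX(t)\ket\psi}{\lambda_{\cI_{\bC,\bD}}\sqrt{\cN_{\xi_\bA}\cN_{\xi_\bB}}} .
\end{align}
From Thm.~\ref{thm:history_state_LCHS}, $\cN_{\xi_\bY}\in\cO(\mu\int_0^t e^{2s\xi_\bY}\rd s+Re^{2t\xi_\bY})=\cO(\mu(\int_0^te^{2s\xi_\bY}\rd s+\tfrac dc e^{2t\xi_\bY}))$. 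The rescaling factor is therefore $\lambda_{\cI_{\bC,\bD}}\sqrt{\cN_{\xi_\bA}\cN_{\xi_\bB}}\in\cO((c/\mu)\cdot\mu\cL_2)=\cO(c\cL_2)$.

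The estimation step then runs Hadamard-test-based overlap (amplitude) estimation at the Heisenberg limit~\cite{knill2007optimal} to precision $\epsilon/(c\cL_2)$. Two such runs are needed, one with an extra $S$ gate, to obtain the real and imaginary parts. This takes $\cO(c\cL_2/\epsilon)$ uses of the controlled preparation circuits and the $\cI$ block-encoding. It immediately yields $\cO(c\cL_2/\epsilon)$ queries to controlled $U_\phi,U_\psi,U_\bC,U_\bD$. Multiplying by the $\cO(t\mu\log(1/\epshist))$ cost per history state gives Eq.~\eqref{LCHS_hist_queries}, once we choose $\epshist,\epsbe'\in\cO(\epsilon/(c\cL_2))$. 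The Taylor-series queries inside $\cI$ are only logarithmic and are subdominant. The gate count follows from Thm.~\ref{thm:history_state_LCHS}'s $\widetilde\cO((t\mu+R)\log^{5/2})$ with $R\sim\mu d/c$, multiplied by $c\cL_2/\epsilon$.

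The main obstacle is the error propagation. An error $\epshist$ in the normalized (flagged) states, or $\epsbe'/\lambda_{\cI_{\bC,\bD}}$ in the block-encoding, perturbs the overlap additively by $\cO(\epshist+\epsbe'/\lambda)$. After multiplication by the rescaling factor $c\cL_2$, this must stay below $\epsilon$, which forces the stated choices. For the block-encoding, use the rescaled error $\epsbe'\lesssim\epsilon/(\mu\cL_2)$, as in Thm.~\ref{thm:matrix_element}.

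A second check concerns Lemma~\ref{lem:block-encode_ICD}, which needs $d/R\gtrsim ch$. This holds by the choice $R\sim\mu d/c=d/(ch)$. If $R<1$, increase $d$, which is harmless. We also need the LCHS error tolerance to lie in $(0,4/5]$, which is trivially arranged. All logarithms combine into the single factor $\log(c\cL_2/\epsilon)$. Finally, boost the success probability to $2/3$ with a constant number of repetitions and a median.
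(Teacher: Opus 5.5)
Your proposal is correct and follows essentially the same route as the paper: LCHS history states with the explicit normalizations $\cN_{\xi_\bA},\cN_{\xi_\bB}$, the block-encoding of $\cI_{\bC,\bD}$ with $R\sim \mu d/c$ so that $\lambda_{\cI_{\bC,\bD}}\sqrt{\cN_{\xi_\bA}\cN_{\xi_\bB}}\in\cO(c\cL_2)$, and Heisenberg-limited overlap estimation with $\epshist\in\cO(\epsilon/(c\cL_2))$ and block-encoding error $\cO(\epsilon/(\mu\cL_2))$. The extra checks you include (real and imaginary parts, ensuring $R\ge 1$, and subdominance of the Taylor-series queries inside $\cI_{\bC,\bD}$) are details the paper leaves implicit.
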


\begin{proof}
    The proof is similar to the proof of Theorem~\ref{thm:mainformal}.
    In particular, to estimate $\bra{\phi} \bX(t) \ket{\psi}$, we also perform overlap estimation with history states. However, the history states have a different normalization constant compared to the history states in Theorem~\ref{thm:mainformal}. More specifically, let $U_{\hist, \bA}^{(\LCHS)}$ be a unitary such that
    \begin{align}
        U_{\hist, \bA}^{(\LCHS)} \ket{0}_{\mathrm{f}}\ket{0}_{\mathrm{clock}}\ket{0}_{\hist} = \ket 0_{\rm f}\frac{1}{\sqrt{\cN_{\xi_\bA}}}\left( \sum_{m=0}^{M-1} \ket{m}_{\rm clock}e^{\frac{t}M m \bA}\ket{\phi}_{\hist} + \sum_{m=M}^{M+R-1} \ket{m}_{\rm clock}e^{t \bA}\ket{\phi}_{\hist}\right) + \ket{0^\perp}
    \end{align}
    with $\cN_{\xi_\bA} = \lb \sum_{m=0}^{M-1} e^{2t \frac{m}{M} \xi_{\bA}} + \sum_{m=M}^{M+R-1} e^{2t \xi_{\bA}} \rb$, and define $U_{\hist, \bB}^{(\LCHS)}$ similarly, using $\bB$
    and $\ket \psi$ instead. These history state unitaries can be constructed using Theorem~\ref{thm:history_state_LCHS}.
    Accordingly, we define
    the following two normalized states:
    \begin{align}
        \ket{\phi''_{\hist}} &:= \ket{0}_{\mathcal{I}}  \otimes \lb U_{\hist, \bA}^{(\LCHS)}  \ket{0}_\mathrm{f} \ket{0}_{\mathrm{clock}} \ket{0}_{\hist}\rb \\
        \ket{\mathcal{I} \psi''_{\hist}} &:=  {U}_{\mathcal{I}_{\bC,\bD}}
        \ket{0}_{\mathcal{I}}   \otimes \lb U_{\hist, \bB}^{(\LCHS)}  \ket 0_\mathrm{f} \ket{0}_{\mathrm{clock}} \ket{0}_{\hist}\rb \;,
    \end{align}
    where ${U}_{\mathcal{I}_{\bC,\bD}}$ is a unitary  block-encoding of
    \begin{align}
        \mathcal{I}_{\bC,\bD} \otimes \ketbra{0}{0}_f ,
    \end{align}
    that uses the $\cI$ register of ancillas and normalization constant $\lambda_{\cI_{\bC,\bD}}$.
    This block-encoding can be constructed from the block-encoding of $\mathcal{I}_{\bC,\bD}$ in Lemma~\ref{lem:block-encode_ICD}. Note that
    \begin{align}
    \bra{\phi''_{\hist}}\ket{\cI \psi''_\hist}& =\frac 1 {\lambda_{\cI_{\bC,\bD}}} \lb \frac{1}{\sqrt{\cN_{\xi_{\bA}} \cN_{\xi_{\bB}}}}  \sum_{m=0}^{M-1} \bra \phi e^{t \frac m M \bA^\dagger} I_\bC e^{t \frac m M \bB}\ket \psi + 
    \sum_{m=M}^{M+R-1} 
    \bra \phi e^{t \bA^\dagger} \frac {\bD}R e^{t \bB}\ket \psi
    \rb \\
    \label{eq:overlapreduction_LCHS}
    & = \frac{1}{\lambda_{\cI_{\bC,\bD}} \sqrt{\cN_{\xi_{\bA}} \cN_{\xi_{\bB}}}} 
    \bra{\phi} \bX(t) \ket{\psi} \;.
    \end{align}
    Furthermore,
    \begin{align}
        \lambda_{\cI_{\bC,\bD}} \sqrt{\cN_{\xi_{\bA}} \cN_{\xi_{\bB}}} &\in \cO \lb \frac{c}{\mu} \times \sqrt{\lb \mu \int_0^t \rd s \, e^{2s \xi_\bA} + \frac{\mu d}{c} e^{2t \xi_\bA} \rb \lb \mu \int_0^t \rd s \, e^{2s \xi_\bB} + \frac{\mu d}{c} e^{2t \xi_\bB} \rb} \rb \nn
         &\subseteq \cO \lb c \, \sqrt{\lb \int_0^t \rd s \, e^{2s \xi_\bA} + \frac{d}{c} e^{2t \xi_\bA} \rb \lb \int_0^t \rd s \, e^{2s \xi_\bB} + \frac{d}{c} e^{2t \xi_\bB} \rb}  \rb \nn
         &\subseteq \cO \lb c \cL_2 \rb.
    \end{align}
    
    Thus, in order to estimate $\bra{\phi} \bX(t) \ket{\psi}$ within additive error $\epsilon$, we need to perform amplitude estimation within error $\cO \lb \epsilon/\lb  c \cL_2 \rb \rb$. Additionally, we require the errors of the history states and the block-encoding of $\mathcal{I}_{\bC,\bD}$ to satisfy
    \begin{align}
        \epsilon_{\hist} \in \cO \lb \frac{\epsilon}{c \cL_2} \rb, \quad
        \epsilon'_{\cI_{\bC, \bD}}  \in \cO \lb \frac{\epsilon}{\mu \cL_2} \rb.
    \end{align}
    Therefore, we require a total of
    \begin{align}
        \cO \lb \frac{c \cL_2}{\epsilon} \rb
    \end{align}
    queries to controlled versions of $U_{\phi}$, $U_{\psi}$, $U_{\bC}$, $U_{\bD}$ and their inverses,
    \begin{align}
        \cO \lb \frac{c \cL_2}{\epsilon} \times t \mu \times \log \lb \frac{c \cL_2}{\epsilon} \rb \rb
    \end{align}
    queries to controlled versions of $U_{\bA}$ and $U_{\bB}$ and their inverses, and an additional number of primitive gates scaling like
    \begin{align}
        \tilde{\cO} \lb \frac{c \cL_2}{\epsilon} \times \lb t \mu + \frac{\mu d}{c} \rb \rb,
    \end{align}
    where the $\widetilde{O}$ notation hides subdominant logarithmic factors.
\end{proof}

In Table~\ref{tab:compare} we summarize the cost of estimating $\bra{\phi} \bX(t) \ket{\psi}$ within additive error $\epsilon$ in the time-independent setting using either quantum linear system solvers or LCHS for the history state preparation. The number of queries in the table also include the inverses.
Last, we show how Thm.~\ref{thm:main} is a corollary of Thms.~\ref{thm:mainformal} and \ref{thm:be-approach_time-indep}.

\begin{table}[t]
    \renewcommand{\arraystretch}{1.7}
    \setlength{\tabcolsep}{8pt}
    \centering
    \begin{tabular}{|c|c|c|}
    \hline
         & Linear systems approach & LCHS approach \\
         \hline
         \# queries to $U_{\phi}$ and $U_{\psi}$ & $\cO \lb \frac{ c \tcL_2}{\epsilon} \times \mu \tcL_1 \times \log (c \tcL_2/\epsilon) \rb$ & $\cO \lb \frac{c \cL_2}{\epsilon} \rb$\\
         \# queries to $U_{\bA}$ and $U_{\bB}$ & $\cO \lb \frac{ c \tcL_2}{\epsilon} \times \mu \tcL_1 \times \log (c \tcL_2/\epsilon) \times \log(\mu t  c^2 \tcL_2 \tcL_1/(d\epsilon))\rb$ & $\cO \lb \frac{c \cL_2}{\epsilon} \times t \mu \times \log \lb \frac{c \cL_2}{\epsilon} \rb \rb$ \\
         \# queries to $U_{\bC}$ and $U_{\bD}$ & $\cO \lb \frac{c \tcL_2}{\epsilon}\rb$ & $\cO \lb \frac{c \cL_2}{\epsilon} \rb$ \\
         \# additional primitive gates & $\widetilde{\cO} \lb \frac{c \tcL_2}{\epsilon} \times \mu \tcL_1 \times \log^2 \lb t \mu \rb \rb$ & $\widetilde{\cO} \lb \frac{c \cL_2}{\epsilon} \times \lb t \mu + \frac{\mu d}{c} \rb \rb$ \\
         \hline
    \end{tabular}
    \caption{Comparison of the cost for estimating $\bra{\phi} \bX(t) \ket{\psi}$ within additive error $\epsilon$ in the time-independent setting using either quantum linear system solvers or LCHS for the history state preparation; see Theorems~\ref{thm:mainformal} and \ref{thm:be-approach_time-indep} for more details.}
\label{tab:compare}
\end{table}

\begin{corollary}[Simplified complexity bound in the time-independent setting; more precise form of Theorem~\ref{thm:main}]
    Let $a \ge \|\bA\|$, $b \ge \|\bB\|$, $c \ge \|\bC\|$, and $d \ge \|\bD\|$.
    Let $U_\bA$, $U_\bB$, $U_\bC$, and $U_\bD$ be block-encodings of $\bA/a$, $\bB/b$, $\bC/c$, and $\bD/d$, respectively, which use at most ${\rm a} \ge 0$ ancilla qubits. Let $U_\phi$
    and $U_\psi$ be the state preparation unitaries that map $\ket 0 \mapsto \ket \phi$ and $\ket 0 \mapsto \ket \psi$, respectively.
    Let $\mu =\max\{a,b\}$ and let $\tcL$ be a known upper bound on
    \begin{align}
        \max_{\bY \in \{\bA, \bB\}} \int_0^t \rd s \; \norm{e^{s \bY }} + \frac{d}{c} \max_{s \in [0,t]} \norm{ e^{s\bY }}.
    \end{align}
    Furthermore, define
    \begin{align}
        \cL := \max_{\bY \in \{\bA, \bB\}} \int_0^t \rd s\, e^{2s \xi_\bY} + \frac{d}{c} e^{2t \xi_\bY},
    \end{align}
    with $\xi_{\bA}$ being a known upper bound on the log-norm of $\bA$ and $\xi_{\bB}$ being a known upper bound on the log-norm of $\bB$.
    Then we can estimate
    \begin{align}
        \bra{\phi} \bX(t) \ket{\psi} = \bra{\phi}e^{t \bA^\dagger} \bD e^{t \bB }\ket{\psi} + \int_0^t \rd s \;\bra{\phi} e^{(t-s)\bA^\dagger} \bC e^{(t-s)\bB } \ket{\psi} 
    \end{align}
    within additive error $\epsilon$ using
    \begin{align}
        \widetilde{\cO} \lb \min \left\{ \frac{c \cL}{\epsilon} \times t\mu, \; \frac{c}{\epsilon} \times \mu \tcL \times \tcL \max_{\substack{\bY \in \{ \bA, \bB \} s \in [0,t]}} \norm{e^{s \bY}} \times \log(t \mu) \right\} \rb
    \label{simplified}
    \end{align}
    queries to $U_{\phi}$, $U_{\psi}$, $U_{\bA}$, $U_{\bB}$, $U_{\bC}$, $U_{\bD}$ 
    and their inverses.
    The number of additional primitive gates is larger than the above query complexity by at most polylogarithmic factors.
\end{corollary}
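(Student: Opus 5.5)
The plan is to run both algorithms, Thm.~\ref{thm:mainformal} (linear systems) and Thm.~\ref{thm:be-approach_time-indep} (LCHS), and take whichever has the smaller cost. So the minimum in Eq.~\eqref{simplified} follows once each branch is shown to be bounded by the corresponding expression. Throughout, the $\widetilde{\cO}$ notation absorbs logarithmic factors such as $\log(c\cL_2/\epsilon)$. Since $\log(c\cL/\epsilon)$ appears in both branches, the choice between them can be made a priori from the known quantities $\cL$, $\tcL$ and $\xi_\bY$. The success probability $2/3$ is boosted to w.h.p.\ by median amplification at $\cO(\log(1/(1-p)))$ overhead.

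\textbf{LCHS branch.} Define $f_\bY := \int_0^t \rd s\, e^{2s\xi_\bY} + \frac{d}{c} e^{2t\xi_\bY}$ for $\bY \in \{\bA,\bB\}$. Then $\cL_2 = \sqrt{f_\bA f_\bB} \le \max\{f_\bA, f_\bB\} = \cL$. Substituting this into Thm.~\ref{thm:be-approach_time-indep} bounds all query counts by $\cO(\frac{c\cL}{\epsilon}\, t\mu \log(c\cL/\epsilon))$, which is $\widetilde{\cO}(\frac{c\cL}{\epsilon} \times t\mu)$. The gate count there is $\widetilde{\cO}(\frac{c\cL}{\epsilon}(t\mu + \mu d/c))$. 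The extra term $\mu d/c$ must also be absorbed, so that gates exceed queries only by polylogarithmic factors. I would handle this either by noting that $d/c \le \cL$, or by observing that the runway length $R \sim \mu d/c$ contributes only state-preparation gates. This is the one step I would double-check; if needed I would state the gate bound as $\widetilde{\cO}(\frac{c\cL}{\epsilon}(t\mu + \mu d/c))$ and remark that it is dominated in the regimes of interest.

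\textbf{Linear systems branch.} Here I take $\tcL_1 = \tcL$. For $\tcL_2$, I use $\norm{e^{s\bA}\ket{\phi}}^2 \le \norm{e^{s\bA}} \max_{s'} \norm{e^{s'\bA}}$. Each factor under the square root in the definition of $\tcL_2$ is then at most $\max_{s}\norm{e^{s\bY}} \times \bigl(\int_0^t \norm{e^{s\bY}} \rd s + \frac{d}{c}\max_s\norm{e^{s\bY}}\bigr)$. Hence $\tcL_2 \le \tcL \max_{\bY,s}\norm{e^{s\bY}}$ is a valid choice. Substituting into Eq.~\eqref{eq:simplequerycomplexityTI} gives $\widetilde{\cO}\bigl(\frac{c}{\epsilon}\mu\tcL \cdot \tcL \max_{\bY,s}\norm{e^{s\bY}} \log(t\mu)\bigr)$. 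The gate count of this branch exceeds the query count by one extra $\log(t\mu)$ factor.

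The main obstacle is bookkeeping rather than any new idea. One must check that the hidden logarithms (for example $\log(\mu t c^2\tcL_2\tcL_1/(d\epsilon))$) are genuinely subdominant, which holds under the convention of dropping polylog factors in $c\tcL/\epsilon$. One must also verify the assumption $d/R \gtrsim ch$ used in Lemma~\ref{lem:block-encode_ICD}, which the choice $R \sim \mu d/c$ guarantees.
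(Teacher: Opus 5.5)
Your query-complexity argument is exactly the paper's proof. Bounding $\cL_2\le\cL$ in Thm.~\ref{thm:be-approach_time-indep} gives the first term. Taking $\tcL_1=\tcL$ and $\tcL_2\le\tcL\max_{\bY,s}\norm{e^{s\bY}}$ (from $\norm{e^{s\bA}\ket{\phi}}^2\le\norm{e^{s\bA}}\max_{s'}\norm{e^{s'\bA}}$) in Eq.~\eqref{eq:simplequerycomplexityTI} gives the second.

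The paper's proof stops there and never argues the last sentence about gates. That part of your proposal therefore has to stand on its own, and your first remedy for the $\mu d/c$ term in the LCHS gate count does not work. The inequality $d/c\le\cL$ is false for dissipative instances. With $\xi_\bA=\xi_\bB=\xi<0$, $t|\xi|\ge1$ and $d/c\ge1/|\xi|$, one gets $\cL\le\frac{1}{2|\xi|}+\frac{d}{c}e^{-2}<\frac{d}{c}$. Even where it holds (e.g.\ $\xi=0$, $\cL=t+d/c$), it does not help. The LCHS gate bound $\widetilde{\cO}(\frac{c\cL}{\epsilon}(t\mu+\mu d/c))$ exceeds the query bound $\widetilde{\cO}(\frac{c\cL}{\epsilon}t\mu)$ by a factor of order $1+d/(ct)$, which is not polylogarithmic when $d/c\gg t$. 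For the same reason, your fallback of simply stating the gate bound with the $\mu d/c$ term does not prove the corollary as written.

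Your second remark points to the right fix, but it is missing one observation: the linear-in-$R$ gate cost is an artifact. In Thm.~\ref{thm:history_state_LCHS}, the only gate cost linear in $R$ is the generic $\cO((M+R)\log(1/\epsilon_{\cP_A}))$ preparation of the clock part of $\ket{\cP_{A,x}}$. Its amplitudes are $e^{tm\xi_A/M}$ for $m<M$ and the constant $e^{t\xi_A}$ on the runway, so it needs only $\mathrm{polylog}(M+R)$ gates:
\begin{itemize}
\item round $M$ and $R$ up to powers of two, which changes $h$ and $d/R$ only by constant factors and keeps $d/R\gtrsim ch$;
\item rotate a flag qubit to split the weight between the two segments;
\item on the first branch, prepare the product state $\bigotimes_j(\ket{0}+e^{2^j t\xi_A/M}\ket{1})$, normalized;
\item on the second branch, apply Hadamards and an adder that shifts by $M$;
\item uncompute the flag with a comparator.
\end{itemize}
The LCHS gate count then becomes $\widetilde{\cO}(\frac{c\cL}{\epsilon}t\mu)$, larger than its query count only by polylogarithmic factors. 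The linear-systems branch needs no such fix, because its $R$-dependence already sits inside $\mu\tcL_1$.
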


\begin{proof}
The proof of the above corollary follows fairly straightforwardly from Theorems~\ref{thm:mainformal} and \ref{thm:be-approach_time-indep}.
The first term inside the $\min$ in Eq.~\eqref{simplified} comes from Eq.~\eqref{LCHS_hist_queries} after upper bounding $\cL_2$ by $\cL$ and dropping subdominant logarithmic factors.

For the second term in Eq.~\eqref{simplified}, consider for example the term
\begin{align}
    \int_0^t \rd s \; \norm{e^{s\bA }\ket{\phi}}^2   + \frac{d}{c} \norm{e^{t \bA  }\ket{\phi}}^2
\end{align}
in the definition of $\tcL_2$. This can be upper bounded as follows:
\begin{align}
    \int_0^t \rd s \; \norm{e^{s\bA }\ket{\phi}}^2   + \frac{d}{c} \norm{e^{t \bA  }\ket{\phi}}^2 &\le \lb \int_0^t \rd s \; \norm{e^{s\bA }\ket{\phi}}   + \frac{d}{c} \norm{e^{t \bA  }\ket{\phi}}\rb \max_{s\in[0,t]}\|e^{s\bA}\|\\
    & \le \tcL_1 \max_{s\in[0,t]}\|e^{s\bA}\| ,
\end{align}
and hence
\begin{align}
    \tcL_2 \le \tcL_1 \max_{s, s' \in [0,t] } \sqrt{\norm{e^{s\bA}} \norm{e^{s' \bB}}} \leq \tcL_1 \max_{\substack{\bY \in \{ \bA, \bB \} \\ s \in [0,t]}} \norm{e^{s \bY}} \;.
\end{align}
Using this bound in Eq.~\eqref{eq:simplequerycomplexityTI} and replacing $\tcL_1 \rightarrow \tcL$ gives the second term inside the $\min$ in Eq.~\eqref{simplified}. 

\end{proof}

\section{Quantum algorithm for time-dependent case}
\label{app:QAlgorithm_time-dep}

Consider the general case where all $N$-dimensional matrices $\bA$, $\bB$ and $\bC$ are time dependent, i.e.
\begin{align}
    \frac{\rd}{\rd t} \bX(t) = \bA^\dagger(t) \bX(t) + \bX(t) \bB(t) + \bC(t), \quad \bX(0) = \bD.
\label{time-dep_diffeq}
\end{align}
Assume that we have access to block-encodings $U_{\bA}$, $U_{\bB}$, $U_{\bC}$, $U_{\bD}$, and their inverses, such that
\begin{align}
    \lb \bra{0}  \otimes \one \rb  U_{\bA} \lb \ket{0}  \otimes \one \rb &= \sum_{j=0}^{J-1} \ketbra{j}{j} \otimes \frac{\bA(\tau_j)}{a} , \\
    \lb \bra{0}  \otimes \one \rb  U_{\bB} \lb \ket{0}  \otimes \one \rb &= \sum_{j=0}^{J-1} \ketbra{j}{j} \otimes \frac{\bB(\tau_j)}{b} , \\
    \lb \bra{0}  \otimes \one \rb  U_{\bC} \lb \ket{0}  \otimes \one \rb &= \sum_{j=0}^{J-1} \ketbra{j}{j} \otimes \frac{\bC(\tau_j)}{c} , \\
    \lb \bra{0}  \otimes \one \rb  U_{\bD} \lb \ket{0}  \otimes \one \rb &= \one  \otimes \frac{\bD}{d},
\end{align}
where $\tau_j$ is a discretization of time, and $J$ scales polynomially with $t$ and $\epsilon$ and also depends on the maximum derivative of $\bA(t)$, $\bB(t)$ and $\bC(t)$. 
Additionally, $\tau_0 = 0$ and $\tau_{J-1} = \frac{J-1}{J} t$.
The state $\ket 0$ is some zero state of an ancillary system for the block-encodings, using   ${\rm a}\ge 0$ ancilla qubits, and $\one$
is the identity whose dimension is clear from context.

The goal is to estimate $\bra{\phi} \bX(t) \ket{\psi}$ within additive error $\epsilon$. 
Let $W_{\bA}(t,s)$ and $W_{\bB}(t,s)$ be the transition matrices under $\bA$ and $\bB$, respectively, from time $s$ to time $t$, that satisfy
\begin{align}
\label{eq:WAmotion}
    \frac{\rd W_{\bA}(t,s)}{\rd t} &= W_{\bA}(t,s) \bA(t) , \quad W_{\bA}(s,s) = \one \quad \forall \, s \in [0,t], \\
    \label{eq:WBmotion}
     \frac{\rd W_{\bB}(t,s)}{\rd t} &= W_{\bB}(t,s) \bB(t) , \quad W_{\bA}(s,s) = \one \quad \forall \, s \in [0,t],,
\end{align}
and can be written as
\begin{align}
\label{eq:DysonWA}
W_{\bA}(t,s) & = \tilde \cT e^{\int_s^t \rd \tau \bA (\tau)}     = 
\sum_{k=0}^\infty \int_{s}^t \rd t_1 \int_{s}^{t_1} \rd t_2 \cdots \int_{s}^{t_{k-1}} \rd t_k \bA (t_k) \cdots \bA (t_2) \bA (t_1)   \; , \\
\label{eq:DysonWB}
W_{\bB}(t,s) & =\tilde \cT e^{\int_s^t \rd \tau \bB (\tau)}     = 
\sum_{k=0}^\infty \int_{s}^t \rd t_1 \int_{s}^{t_1} \rd t_2 \cdots \int_{s}^{t_{k-1}} \rd t_k \bB (t_k) \cdots \bB (t_2) \bB (t_1) \;.
\end{align}
($\tilde \cT$ is anti-time ordering.)
They also obey the following composition rules:
\begin{align}
\label{eq:WAcomposition}
   W_{\bA}(\tau, s) W_{\bA}(t,\tau)  &= W_{\bA}(t,s), \\
   \label{eq:WBcomposition}
    W_{\bB}(\tau, s) W_{\bB}(t,\tau)  &= W_{\bB}(t,s).
\end{align}
Using these matrices, 
the solution to Eq.~\eqref{time-dep_diffeq} is
\begin{align}
 \bX(t) = (W_{\bA}(t,0))^\dagger \bD W_{\bB} (t,0)  + \int_{0}^t \rd s \, (W_{\bA}(t,s))^\dagger \bC(s) W_{\bB} (t,s)  \;.
\end{align}
This solution follows from defining $\bY(t)$ such that $\bX(t)=(W_\bA(t,0))^\dagger \bY(t) W_\bB(t,0)$. Taking the derivative in both sides implies $(W_\bA(t,0))^\dagger \dot \bY(t) W_\bB(t,0)  = \bC(t)$ and hence
$\bY(t)=\bD + \int_0^t \rd s (W_\bA(0,s))^\dagger \bC(s) W_\bB(0,s)$, or $\bX(t) = (W_{\bA}(t,0))^\dagger \bD W_{\bB} (t,0)  + \int_{0}^t \rd s \, (W_{\bA}(t,s))^\dagger \bC(s) W_{\bB} (t,s)$.
Then the matrix element is
\begin{align}
    \bra{\phi} \bX(t) \ket{\psi} = \bra{\phi}(W_{\bA}(t,0))^\dagger \bD W_{\bB} (t,0) \ket{\psi} + \int_{0}^t \rd s \, \bra{\phi} (W_{\bA}(t,s))^\dagger \bC(s) W_{\bB} (t,s) \ket{\psi}.
\end{align}

Similar to the time-independent case, we split the integral into smaller pieces to make it more suitable for a time-dependent version of the linear systems approach. In particular,
\begin{align}
\begin{split}
    &\int_0^t \rd s \,  (W_{\bA}(t,s))^\dagger \bC(s) W_{\bB} (t,s)    =\\
    &= \sum_{m=0}^{M-1}  (W_{\bA}(t,t_{m+1}) )^\dagger\underbrace{\lb \int_0^h \rd\tau \, (W_{\bA}(t_{m+1},t_m + \tau))^\dagger \bC(t_m + \tau)  W_{\bB} (t_{m+1},t_m + \tau)  \rb}_{=: I_{\bC}(t_m)}  W_{\bB} (t,t_{m+1})   ,
\end{split}
\end{align}
where $M \sim \max \{ a, b\} t$ is again the number of time steps, $h = t/M$ is the step size and $t_m = t \frac{m}{M} = mh$.

Let us define the following two normalized, reverse-order history states:
\begin{align}
    \ket{\wphi_\hist} &:= \frac{1}{\sqrt{\mathcal{N}_{\bA}}} \lb \sum_{m=0}^{R-1} \ket{m}_{\rm clock} W_{\bA} (t,0) \ket{\phi} + \sum_{m=R}^{R+M-1} \ket{m}_{\rm clock} W_{\bA} (t,t_{m+1-R}) \ket{\phi} \rb \label{phi_hist}
    \\
    \ket{\wpsi_\hist} &:= \frac{1}{\sqrt{\mathcal{N}_{\bB}}} \lb \sum_{m=0}^{R-1} \ket{m}_{\rm clock} W_{\bB} (t,0) \ket{\psi} + \sum_{m=R}^{R+M-1} \ket{m}_{\rm clock} W_{\bB} (t,t_{m+1-R}) \ket{\psi} \rb.
\end{align}
Note that in contrast to the standard history state setting, here the first $R$ time steps encode the solution at the final time $t$ and then the remaining steps include the overall amount of time evolution that is being applied to $\ket{\phi}$ or $\ket{\psi}$ with each increasing step. Reverse-order history states can be easily obtained from standard-order history states by first producing a history state in the standard ordering and then swapping basis states, e.g., applying a single qubit Pauli $X$ gate to all qubits in the clock register $\ket{m}_{\rm clock}$.
We will be then interested in the unitaries that prepare the history states as
\begin{align}
\label{eq:historystateunitaryTD}
    U_{\hist,\bA}\ket 0_{\rm a}\ket{0}_{\rm clock}\ket 0 &\mapsto \sqrt{p_\bA} \ket0_{\rm a} \ket{\wphi_\hist} + \ket{0_\bA^\perp}, \\
     U_{\hist,\bB}\ket 0_{\rm a}\ket{0}_{\rm clock}\ket 0 &\mapsto \sqrt{p_\bB} \ket0_{\rm a} \ket{\wpsi_\hist} + \ket{0_\bB^\perp}
\end{align}
within error $\epshist \ge 0$ in Euclidean norm, where $p_\bA>0$ and $p_\bB$ are constants, `a' is an ancilla qubit that flags success, `clock' is the clock register.
The states $\ket{0_\bA^\perp}$ and $\ket{0_\bB^\perp}$
are orthogonal to $\ket 0_{\rm a}$.

The lemma below provides a bound on the norm of $\ket{\wphi_{\hist}}$. It is analogous to Lemma~\ref{lem:norm_history_state}.

\begin{lemma}[Norm of history states in the time-dependent setting]
    Let
    \begin{align}
        \mathcal{N}_\bA =  R \norm{W_{\bA} (t,0) \ket{\phi}}^2 + \sum_{m=R}^{R+M-1} \norm{W_{\bA} (t,t_{m+1-R}) \ket{\phi}}^2
    \label{normalization_factor_time-dep}
    \end{align}
    denote the squared normalization factor of the history state $\ket{\wphi_{\hist}}$ defined in Eq.~\eqref{phi_hist}. Then 
    \begin{align}
        \mathcal{N}_\bA \le R  \norm{W_{\bA} (t,0)\ket{\phi}}^2 + a \int_0^t \rd s \,\norm{W_{\bA} (t,s)\ket{\phi}}^2 .
    \end{align}
\label{lem:norm_history_state_time-dep}
\end{lemma}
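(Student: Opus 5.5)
The plan is to compare the discrete sum in $\mathcal{N}_\bA$ with the integral, one time step at a time, using the composition rule \eqref{eq:WAcomposition}, in the same way as Lemma~\ref{lem:norm_history_state}. First I reindex. With $k=m+1-R$, the second sum in \eqref{normalization_factor_time-dep} becomes $\sum_{k=1}^{M} f(t_k)$, where
\begin{align}
f(s):=\norm{W_\bA(t,s)\ket{\phi}}^2, \qquad s\in[0,t].
\end{align}
The runway term $R\norm{W_\bA(t,0)\ket\phi}^2$ already appears on the right-hand side unchanged. So the whole task is to bound $\sum_{k=1}^M f(t_k)$ by $a\int_0^t \rd s\, f(s)$.

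\textbf{Step 1 (pointwise comparison within a step).} Fix $k$ and take $s\in[t_{k-1},t_k]$. The composition rule gives $W_\bA(t,s)=W_\bA(t_k,s)\,W_\bA(t,t_k)$. The transition matrix $W_\bA(t_k,s)$ is invertible, with inverse equal to the transition matrix run over the interval $[s,t_k]$ in the opposite direction. From the Dyson series \eqref{eq:DysonWA} and $\|\bA(\tau)\|\le a$, both this matrix and its inverse have norm at most $e^{a(t_k-s)}\le e^{ah}$. Therefore
\begin{align}
f(t_k)\le \norm{W_\bA(t_k,s)^{-1}}^2 f(s)\le e^{2a(t_k-s)} f(s).
\end{align}

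\textbf{Step 2 (integrate and sum).} I average the inequality of Step 1 over $s\in[t_{k-1},t_k]$, which gives $h\,f(t_k)\le \int_{t_{k-1}}^{t_k}\rd s\, e^{2a(t_k-s)}f(s)$. Summing over $k=1,\dots,M$ then yields
\begin{align}
\sum_{k=1}^M f(t_k)\le \frac{1}{h}\int_0^t \rd s\, e^{2a(t_k(s)-s)}f(s),
\end{align}
where $t_k(s)$ denotes the right endpoint of the step containing $s$.

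\textbf{Step 3 (the constant, main obstacle).} Simply bounding the exponential weight by $e^{2ah}$ gives the prefactor $e^{2ah}/h = e^{2ah}M/t$. With $h\le 1/a$ this is $\mathcal{O}(a)$, but it is not literally $a$. This step is where I expect the difficulty.

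A sanity check shows the exact constant cannot come from the sum alone. Take $\bA\equiv 0$. Then $f\equiv\norm{\phi}^2$, the sum equals $M\norm{\phi}^2$, and $a\int_0^t\rd s\, f(s)=at\norm{\phi}^2$. So one needs at least $M\le at$. Consequently the stated inequality must be read with the normalization convention used throughout the paper: the constant in front of the integral is the step rate $M/t$ inflated by the bounded factor $e^{2ah}$, and the bound holds for that choice of constant.

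Concretely, my first attempt is to exploit the freedom to take $a$ as any upper bound on $\sup_\tau\|\bA(\tau)\|$. I would enlarge $a$ so that $a\ge e^{2}M/t$ while keeping $h\le 1/a$ up to the constant factor already absorbed in $M\sim\mu t$. This keeps Step 2's prefactor below $a$, and the integrand factor $e^{2a(t_k-s)}\le e^2$ is covered by the same slack.

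If that bookkeeping does not close exactly, I would state the lemma as $\mathcal{N}_\bA\le R\,f(0)+\mathcal{O}(a)\int_0^t\rd s\, f(s)$. This is all that is used downstream, in the analogue of Thm.~\ref{thm:mainformal}, where only $\sqrt{\mathcal N_\bA\mathcal N_\bB}$ up to constants enters.
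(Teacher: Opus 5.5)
Your Steps 1--2 are the paper's proof. The paper writes $W_{\wbA}(M,m)=W_{\wbA}(\tau_m,m)\,W_{\wbA}(M,\tau_m)$, bounds the backward factor by $e^{\norm{\wbA}}\le e$, averages over $\tau_m\in[m-1,m]$, and sums over $m$. It ends with $\frac{e^2M}{t}\int_0^t \rd s\,\norm{W_\bA(t,s)\ket{\phi}}^2$, which is the same constant $e^{2ah}M/t$ you obtain. Your concern in Step~3 is therefore well founded. The paper's own argument does not give the bare constant $a$ either, so the lemma has to be read with $\frac{e^2M}{t}$, which is $\cO(a)$. Only the $\cO(a)$ form is used downstream, so your fallback statement is the correct one.

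You should drop the ``enlarge $a$'' attempt. Since $h=t/M$, the condition $h\le 1/a$ is equivalent to $a\le M/t$, and this contradicts $a\ge e^2M/t$. If you instead decouple the two roles of $a$, you are only renaming the constant $e^2M/t$. No bookkeeping can recover the literal constant, because the literal inequality is false even when $M=at$ exactly. Take $\bA(\tau)\equiv -a\one$. The $k=M$ term alone is $f(t)=\norm{\ket{\phi}}^2$, since $W_\bA(t,t)=\one$. On the other side, $a\int_0^t \rd s\, e^{-2a(t-s)}\norm{\ket{\phi}}^2=\tfrac12\bigl(1-e^{-2at}\bigr)\norm{\ket{\phi}}^2<\tfrac12\norm{\ket{\phi}}^2$. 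This is the dissipative case the paper cares most about. Your $\bA\equiv 0$ check only rules out non-integer $ta$.
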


\begin{proof}
    The proof is similar to the proof of Lemma~\ref{lem:norm_history_state}.
    The first term in the bound on $\mathcal{N_{\bA}}$ can be read off directly from Eq.~\eqref{normalization_factor_time-dep}. Let us therefore focus on the second term.
    To simplify the argument, let us define a rescaled matrix $\wbA := \bA h = \bA/a$ with $\norm{\wbA} \leq 1$. Then we need to bound the following sum:
    \begin{align}
        \sum_{m=R}^{R+M-1} \norm{W_{\wbA}(M,m+1-R) \ket{\phi}}^2 = \sum_{m=1}^{M} \norm{W_{\wbA}(M,m) \ket{\phi}}^2.
    \end{align}
    Here, the transition matrix $W_{\wbA}$ is such that $W_{\wbA}(m',m)=W_\bA(t_{m'},t_m)$, and $t_m=m/a$.
    Let $\tau_m \in [m-1, m]$. From Eq.~\eqref{eq:WAcomposition} we have
    \begin{align}
    \begin{split}
       \norm{W_{\wbA} (M,m) \ket{\phi}} &= \norm{W_{\wbA} (\tau_m,m) W_{\wbA} (M,\tau_m) \ket{\phi}} \\
    \end{split}
    \end{align}
    and hence
    \begin{align}
         \norm{W_{\wbA} (M,m) \ket{\phi}}^2 \leq e^{2\norm{\wbA }} \norm{W_{\wbA} (M,\tau_m) \ket{\phi}}^2 \leq e^2 \norm{W_{\wbA} (M,\tau_m) \ket{\phi}}^2.
    \end{align}
    Integrating both sides from $m-1$ to $m$, we observe that
    \begin{align}
        \norm{W_{\wbA} (M,m) \ket{\phi}}^2 = \int_{m-1}^{m} \rd\tau_m \, \norm{W_{\wbA} (M,m) \ket{\phi}}^2 \leq e^2 \int_{m-1}^{m} \rd\tau_m \, \norm{W_{\wbA} (M,\tau_m) \ket{\phi}}^2.
    \end{align}
    Now we sum over $m$ on both sides:
    \begin{align}
    \begin{split}
         \sum_{m=1}^{M}\norm{W_{\wbA} (M,m) \ket{\phi}}^2 &\leq \sum_{m=1}^{M} e^2 \int_{m-1}^{m} \rd\tau_m \, \norm{W_{\wbA} (M,\tau_m) \ket{\phi}}^2 \\
         &\leq e^2 \int_{0}^{M} \rd\tau \, \norm{W_{\wbA} (M,\tau) \ket{\phi}}^2 \\
         &\leq  \frac{e^2 M}{t} \int_{0}^{t} \rd s \, \norm{W_{\bA}(t,s) \ket{\phi}}^2  .
    \end{split}
    \end{align}
\end{proof}
The same type of bound can be obtained for the norm of $\ket{\wpsi_{\hist}}$ by replacing $\bA$ with $\bB$ and $\ket{\phi}$ with $\ket{\psi}$ in Lemma~\ref{lem:norm_history_state_time-dep}.

Analogous to the time-independent case, let us also define
\begin{align}
    \mathcal{I}_{\bC, \bD} := \sum_{m=0}^{R-1} \ketbra{m}{m} \otimes \frac{\bD}{R} + \sum_{m=R}^{R+M-1} \ketbra{m}{m} \otimes I_{\bC}(t_{m-R}).
\end{align}
Then we can express the matrix element $\bra{\phi} \bX(t) \ket{\psi}$ as follows:
\begin{align}
    \bra{\phi} \bX(t) \ket{\psi} = \sqrt{\mathcal{N}_\bA \mathcal{N}_\bB} \; \bra{\wphi_{\hist}} \mathcal{I}_{\bC, \bD}\ket{\wpsi_\hist}.
\end{align}
We then reduced the problem to one of overlap estimation: we can obtain $\bra{\phi} \bX(t) \ket{\psi}$ by preparing the history states $\ket{\wphi_{\hist}}$ and $\ket{\wpsi_{\hist}}$, applying a block-encoding of $\mathcal{I}_{\bC, \bD}$ to, say, $\ket{\wpsi_{\hist}}$, and then estimating the inner product between the resulting two states via a Heisenberg limited approach~\cite{knill2007optimal}.

\subsection{History state preparation via quantum linear system solvers in the time-dependent case}

Let us now show how to produce the history state $\ket{\wphi_{\hist}}$ (the same argument applies also to $\ket{\wpsi_{\hist}}$). We first prepare a standard-order version of $\ket{\wphi_{\hist}}$ and then reverse the order of the clock states. The algorithm for producing the standard-order version of $\ket{\wphi_{\hist}}$ is basically the same as in Ref.~\cite{Berry2024time-dep_history} but with a slightly modified linear system. In our case, the final time $t$ of $W_{\bA}(t,s)$ (and $W_{\bB}(t,s)$) is fixed but the initial time $s$ decreases with increasing time steps, so the last time step still corresponds to the largest amount of time evolution ($s=0$). In the standard setting, however, the initial time of the evolution operator is fixed (usually set to $s=0$) and then the final time $t$ keeps getting extended throughout the time steps.

\begin{theorem}[Complexity of history state preparation via the linear systems approach (time-dependent)]
Let $U_\bA$ be a block-encoding of $\sum_{j=0}^{J-1}\ketbra j \otimes \frac{\bA(\tau_j)}a$ for $a \ge \|\bA(t)\|>0$ and set $M= \lceil ta \rceil$, $R\ge 1$.
    Then the state preparation procedure in Eq.~\eqref{eq:historystateunitaryTD}  can be implemented within error $\epshist$ in Euclidean distance using
     \begin{align}
       \cO \lb \lb a \norm{W_{\bA}}_{L^1} + R \max_{s \in [0,t]} \norm{W_{\bA}(s,0)} \rb \times \log (1/\epshist) \rb
    \end{align}
    queries to the initial state preparation unitary $U_{\phi}$ and its inverse,
    \begin{align}
       \cO \lb \lb a \norm{W_{\bA} }_{L^1} + R \max_{s \in [0,t]} \norm{W_{\bA} (s,0)} \rb \times \log (1/\epshist) \times \log \lb \lb a t + R \rb \max_{\substack{s' \geq s \\ s,s' \in [0,t]} } \norm{W_{\bA} (s',s)}/\epshist \rb \rb
    \end{align}
    queries to $U_{\bA}$ and its inverse, and
    \begin{align}
    \begin{split}
        &\cO \Bigg( \lb a \norm{W_{\bA} }_{L^1} + R \max_{s \in [0,t]} \norm{W_{\bA} (s,0)} \rb \times \log (1/\epshist) \times \log \lb \lb a t + R \rb \max_{\substack{s' \geq s \\ s,s' \in [0,t]} }\norm{W_{\bA} (s',s)}/\epshist \rb \\
        &\quad \times \lb \log \lb \frac{t \max_{s \in [0,t]} \norm{\bA'(s)}}{a \epshist} \rb + \log \lb \lb a t + R \rb \max_{\substack{s' \geq s \\ s,s' \in [0,t]} } \norm{W_{\bA}(s',s)}/\epshist \rb \rb \Bigg)
    \end{split}
    \end{align}
    additional gates, where $\bA'(s)=\frac{\rd}{\rd s}\bA(s)$ and
    \begin{align}
        \norm{W_{\bA}}_{L^1} := \max \left\{ \max_{t_1 \in [0,t]}\int_{0}^{t_1} \rd s \, \norm{W_{\bA} (t_1,s)}, \max_{t_0 \in [0,t]} \int_{t_0}^t \rd s \, \norm{W_{\bA} (s, t_0)} \right\}.
    \end{align}
\label{thm:history_state_time-dep}
\end{theorem}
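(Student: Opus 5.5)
We mirror the proof of Thm.~\ref{thm:history_state}, replacing the Taylor step $V$ by a truncated Dyson series. We first prepare the standard-order state and then reverse the clock register. Because the final time $t$ is fixed while the initial time decreases, we work with the adjoint-type evolution: the recursion $\vec y_{m+1} = W_\bA(t_{M-m}, t_{M-m-1})\,\vec y_m$ builds $W_\bA(t,t_{M-m-1})\ket\phi$ from $W_\bA(t,t_{M-m})\ket\phi$. This uses the composition rule Eq.~\eqref{eq:WAcomposition}, and the order of the steps is read off from the right-multiplication convention of Eq.~\eqref{eq:WAmotion}. The linear system then has the same bidiagonal block form as Eq.~\eqref{eq:bigmatrixdef}, with step-dependent blocks $-V_m$ followed by $R$ runway identities. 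Here $V_m$ is a $K$-th order truncated Dyson series (Eq.~\eqref{eq:DysonWA}) over one step of length $h=t/M\le 1/a$. As in Ref.~\cite{Berry2024time-dep_history}, it is block-encoded using $\cO(K)$ queries to $U_\bA$, with the time integrals discretized on the grid $\tau_j$. The clock-reversal is just $X$ gates on the clock register, so it costs nothing.

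\textbf{Condition number.} Since every block of $\mA$ has norm $\cO(1)$ when $ah\le1$, Lemma~\ref{lem:block_matrix_norm} gives $\norm{\mA}\in\cO(1)$. The inverse $\mA^{-1}$ is block lower triangular, and its blocks are ordered products $V_{n}\cdots V_{m-1}$, which approximate transition matrices $W_\bA(t_{m'},t_{n'})$. We bound $\norm{\mA^{-1}}$ by $\sqrt{(\text{max row sum})(\text{max column sum})}$ of the matrix of block norms, again via Lemma~\ref{lem:block_matrix_norm}. A row sum is $\sum_{n} \norm{W_\bA(t_{m},t_n)}$ for a fixed final time, plus $R$. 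A column sum is $\sum_m\norm{W_\bA(t_m,t_n)}$ for a fixed initial time, plus $R\max_s\norm{W_\bA(s,0)}$-type runway terms. Each discrete sum is converted to an integral exactly as in the proof of Lemma~\ref{lem:norm_history_state_time-dep}. For the fixed-final-time sums we insert $W_\bA(\tau,t_n)W_\bA(t_m,\tau)$ and use $\norm{W_\bA(\tau,t_n)}\le e^{a(\tau-t_n)}\le e$. For the fixed-initial-time sums we split the other factor. This gives $\sum\le (e/h)\times$ the corresponding integral. The two resulting integrals are precisely the two entries in the definition of $\norm{W_\bA}_{L^1}$. Hence $\kappa_\mA\in\cO(a\norm{W_\bA}_{L^1}+R\max_s\norm{W_\bA(s,0)})$, once we note that the approximation errors are smaller than $1$.

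\textbf{Truncation error.} This is the main obstacle. The commutation trick used for $V^m - e^{hmA}$ in Thm.~\ref{thm:history_state} is no longer available, since $\delta_m = V_m - W_\bA(t_{m+1},t_m)$ does not commute with the other factors. Instead we telescope:
\[
\prod V - \prod W=\sum_j (\prod_{>j} V)\,\delta_j\,(\prod_{<j} W).
\]
We bound the exact products by $\max_{s'\ge s}\norm{W_\bA(s',s)}$, and control the perturbed products inductively under the assumption $M\max\norm{\delta}\max\norm{W}\le 1/2$. This yields an error $\le 2M\norm\delta\max_{s'\ge s}\norm{W_\bA(s',s)}$. We choose $\norm\delta\lesssim\epshist/((M+R)^{3/2}\max\norm{W_\bA(s',s)})$ so that the history state is $\epshist$-close, as before. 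This gives $K\in\cO(\log((at+R)\max\norm{W_\bA(s',s)}/\epshist))$. The time-discretization error of the Dyson integrals introduces $\log(t\max\norm{\bA'}/(a\epshist))$ qubits and gates for the time register. Finally, invoking the optimal solver~\cite{Costa2022optimal_linear}, modified coherently as in Thm.~\ref{thm:history_state}, gives $\cO(\kappa_\mA\log(1/\epshist))$ uses of $U_\phi$. Multiplying by $K$ gives the $U_\bA$ count. Multiplying by the additional factor for Dyson-series arithmetic and time-register gates from Ref.~\cite{Berry2024time-dep_history} gives the gate count.
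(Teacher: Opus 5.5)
Your proposal is correct and follows the paper's proof essentially step by step. It uses reverse-order Dyson blocks $V_m$ in the same bidiagonal system, bounds $\kappa_\mA$ through Lemma~\ref{lem:block_matrix_norm} with row and column sums turned into the two integrals defining $\norm{W_\bA}_{L^1}$, and then takes $\cO(\kappa_\mA\log(1/\epshist))$ solver calls times $K$. Your telescoping argument is only a minor variant of the paper's binomial expansion over the positions of the $\delta_j$.

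There are two harmless slips. First, the telescoping bound is really $2M\norm{\delta}\max\norm{W_\bA(s',s)}^2$ (squared, as in the paper), but this changes $K$ only by a constant factor. Second, in the reverse-order system the rows of $\mA^{-1}$ have fixed initial time and the columns have fixed final time, the opposite of your labels. This is immaterial, since $\norm{W_\bA}_{L^1}$ maximizes over both integrals and the runway term $R\max_s\norm{W_\bA(s,0)}$ you attach to the columns is the correct one.
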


\begin{proof}
    The proof is similar to the proof of Thm.~\ref{thm:history_state}. 
    First, note that the standard-ordering form of $\ket{\wphi_\hist}$ is given by
    \begin{align}
        \frac{1}{\sqrt{\mathcal{N}_{\bA}}} \lb \sum_{m=0}^{M-1} \ket{m} W_{\bA} (t,t_{M-m}) \ket{\phi} + \sum_{m=M}^{M+R-1} \ket{m} W_{\bA} (t,0) \ket{\phi} \rb =:\ket{\wphi_{\hist, \mathrm{st}}}.
    \end{align}
    Let
    \begin{align}
        V_m := \sum_{k=0}^{K} \int_{t_{M-m}}^{t_{M-m+1}} \rd\tau_1 \int_{t_{M-m}}^{\tau_1} \rd\tau_2 \cdots \int_{t_{M-m}}^{\tau_{k-1}} \rd\tau_k \bA (\tau_k) \cdots \bA (\tau_2) \bA (\tau_1),
    \end{align}
    for $m \in \{1,2,\dots, M\}$, be the truncated Dyson series approximation to $W_{\bA}(t_{M-m+1},t_{M-m})$ in Eq.~\eqref{eq:DysonWA} of order $K<\infty$. 
    With the above definition of $V_m$, we have the same type of linear system as in Ref.~\cite{Berry2024time-dep_history}. In particular, we consider a block matrix $\mA \in \mathbb C^{LN \times LN}$, $L=M+R$, with the following $L^2$ blocks ($0 \le m,n \le L-1$):
    \begin{align}
        \mA_{mn} = 
        \begin{cases}
            \one_{} & {\rm if} \ m = n \;, \\
            - V_n & {\rm if} \ m = n + 1 \ {\rm and} \ n \le M \;,\\
            - \one_{} & {\rm if} \ m = n + 1 \ {\rm and} \ n > M \;,\\
            {\bf 0} & \text{otherwise}\;,
        \end{cases}
    \end{align}
    which is similar to the block matrix as in the time-independent case, except that the $V_n$'s are not all the same anymore. Here $\one$ is the $N$-dimensional identity and $\bf 0$ the $N-$dimensional all zero matrix. We will then consider the system of linear equations $\cA \vec \cX = \vec \cB$, where $\cB \in \mathbb C^{LN}$ contains $L$ blocks $\cB_m \in \mathbb C^N$, and the only nonzero block is the first one: $\vec \cB_0=\vec x_{\rm in}\equiv \ket \phi$ in this case. The solution vector $\vec \cX$ is also made of $L$ blocks $\vec \cX_m$ of dimension $N$, and each can be shown to be $\vec \cX_0 = \vec x_{\rm in}$, $\vec \cX_m = \tilde {\vec x}(mh)$, for $1 \le m \le M$, and $\vec \cX_m =  \tilde {\vec x}(Mh)$ for $m>M$, where $\tilde{\vec x}(mh)=V_m \tilde{\vec x}((m-1)h)$.

    As an illustrative example, consider the case of three time steps and three padding steps, i.e. $M = R = 3$. Then we have the following linear system:
    \begin{align}
    \begin{bmatrix}
         \one_{} & {\bf 0} & {\bf 0} & {\bf 0} & {\bf 0} & {\bf 0} & {\bf 0}  \\
     -V_1 & \one_{}  & {\bf 0} & {\bf 0} & {\bf 0} & {\bf 0} & {\bf 0}  \\
    {\bf 0} & -V_2 & \one_{}  & {\bf 0} & {\bf 0}& {\bf 0} & {\bf 0} \\
     {\bf 0} &{\bf 0} & -V_3 & \one_{}  & {\bf 0} & {\bf 0} & {\bf 0}  \\
     {\bf 0} & {\bf 0} & {\bf 0} & -\one_{} & \one_{}  & {\bf 0} & {\bf 0}  \\
     {\bf 0} & {\bf 0} & {\bf 0} & {\bf 0} & -\one_{} & \one_{}  & {\bf 0} \\
     {\bf 0} & {\bf 0} & {\bf 0} & {\bf 0} & {\bf 0} & -\one_{} & \one_{}  \\
    \end{bmatrix}
    \begin{bmatrix}
        \tilde{\vec{x}}(0) \\ \tilde{\vec{x}}(h) \\ \tilde{\vec{x}}(2h) \\ \tilde{\vec{x}}(3h) \\ \tilde{\vec{x}}(3h) \\ \tilde{\vec{x}}(3h) \\ \tilde{\vec{x}}(3h) \\
    \end{bmatrix} =
    \begin{bmatrix}
        \vec x_{\init} \\ \vec{0} \\ \vec{0} \\ \vec{0} \\ \vec{0} \\ \vec{0} \\ \vec{0}
    \end{bmatrix}.
    \label{block_example_time-dep}
    \end{align}

    Let us now bound the condition number of $\mA$.
    As long as we choose the step size $h \sim 1/a$, we have that $\norm{\mA} \in \cO(1)$. 
    The $L^2$ blocks  of the inverse of $\mA$ can be shown to be ($0 \le n,m \le L-1)$
    \begin{align}
        \lb \mathcal{A}^{-1} \rb_{mn} = 
        \begin{cases}
            \one, & m=n \\
            \prod_{l=n}^{m-1} V_l, & (m > n) \wedge (m \leq M + 1) \wedge (n \leq M) \\
            \prod_{l=n}^{M} V_l, & (m > n) \wedge (m > M + 1) \wedge (n \leq M) \\
            \one, & (m > n) \wedge (n > M), \\
            \bf 0, & m < n.
        \end{cases}
    \label{inverse_matrix_time-dep}
    \end{align}

    Going back to our earlier example of three time steps plus three padding steps in Eq.~\eqref{block_example_time-dep}, the inverse of $\cA$ is given by
    \begin{align}
        \mathcal{A}^{-1} = 
        \begin{bmatrix}
             \one & {\bf 0} & {\bf 0} & {\bf 0} & {\bf 0} & {\bf 0} & {\bf 0}  \\
         V_1 & \one  & {\bf 0} & {\bf 0} & {\bf 0} & {\bf 0} & {\bf 0}  \\
         V_2 V_1 & V_2 & \one  & {\bf 0} & {\bf 0} & {\bf 0} & {\bf 0}  \\
         V_3 V_2 V_1 & V_3 V_2 & V_3 & \one & {\bf 0} & {\bf 0} & {\bf 0}  \\
         V_3 V_2 V_1 & V_3 V_2 & V_3 & \one & \one & {\bf 0} & {\bf 0} \\
         V_3 V_2 V_1 & V_3 V_2 & V_3 & \one & \one & \one  & {\bf 0}  \\
         V_3 V_2 V_1 & V_3 V_2 & V_3 & \one & \one & \one & \one 
        \end{bmatrix}.
    \end{align}

    To bound $\norm{\mA^{-1}}$, we use Lemma~\ref{lem:block_matrix_norm} and the fact that the spectral norm of a matrix is upper bounded by the square root of the product of the maximum row and the maximum column sum of that matrix. Let $B$ be the matrix of entries $B_{mn}=\|(\cA^{-1})_{mn}\|$
    Thus,
    \begin{align}
        \norm{\mA^{-1}} \le \|B\| \leq \sqrt{\lb \max_m \sum_{n} \norm{(\mA^{-1})_{mn}} \rb \lb \max_n \sum_{m} \norm{(\mA^{-1})_{mn}} \rb}.
    \end{align}
    Per the expression of $\mA^{-1}$ given in Eq.~\eqref{inverse_matrix_time-dep}, we have the following bound on the maximum row sum of $\mA^{-1}$:
    \begin{align}
        \max_m \sum_{n} \norm{(\mA^{-1})_{mn}} \leq \max_{m \in \{1,2,\dots, M\}}  \lb \|V_m \ldots V_1 \|+ \|V_m \ldots V_2\| + \ldots + \|V_m\|\rb  + R ,
    \end{align}
    where $R$  denotes the number of additional time steps where the solution is held constant.
    We choose the truncation order $K$ for the Dyson series such that for any $m_2 \geq m_1 \in \{1,2, \dots, M\}$,
    \begin{align}
        \|  V_{m_2}\ldots V_{m_1} - W_{\bA}((M-m_1+1)h, (M-m_2)h) \| \leq \frac{\epsilon}{M + R}.
    \end{align}
    Then the maximum  row sum of $\mA^{-1}$ is bounded by
    \begin{align}
        \max_m \sum_{n} \norm{(\mA^{-1})_{mn}} \leq \max_{m \in \{1,2,\dots, M\}}\sum_{n=0}^{m-1} \norm{W_{\bA}((M-m+n+1)h, (M-m)h)} + R + \epsilon.
    \end{align}
    
    Let us now show how to bound the sum in terms of an integral. To simplify notation, let us define a rescaled matrix $\wbA:= \bA h = \bA/a$. Then we are interested in bounding the following expression:
    \begin{align}
        \sum_{n=0}^{m-1} \norm{W_{\wbA}(M-m+n+1, M-m)}.
    \end{align}
    Here, $W_{\wbA}(m,m')$ is $W_{\bA}(mh,m'h)$.
    Let $\tau_m \in [m, m+1]$ and define $T_n := M-n$.
    Using the composition rule for the Hermitian conjugate of a transition matrix, we then have
    \begin{align}
    \begin{split}
        \norm{W_{\wbA} (T_n+m+1, T_n)} &= \norm{W_{\wbA} (T_n + \tau_m, T_n) W_{\wbA} (T_n + m+1, T_n + \tau_m)} \\
        &\leq \norm{W_{\wbA} (T_n + m+1, T_n + \tau_m)} \norm{W_{\wbA} (T_n + \tau_m, T_n)} \\
        &\leq e^{\norm{\wbA }} \norm{W_{\wbA} (T_n + \tau_m, T_n)} \\
        &\leq e \norm{W_{\wbA} (T_n + \tau_m, T_n)}.
    \end{split}
    \end{align}
    Integrating both sides from $m$ to $m+1$, we see that
    \begin{align}
        \norm{W_{\wbA} (T_n+m+1, T_n)} = \int_m^{m+1} \rd\tau_m \, \norm{W_{\wbA} (T_n+m+1, T_n)}
        \leq e \int_m^{m+1} \rd\tau_m \, \norm{W_{\wbA} (T_n + \tau_m, T_n)}.
    \end{align}
    Now we  sum over $m$ on both sides:
    \begin{align}
    \begin{split}
        \sum_{m=0}^{n-1} \norm{W_{\wbA} (T_n+m+1, T_n)} &\leq \sum_{m=0}^{n-1} e \int_m^{m+1} \rd\tau_m \, \norm{W_{\wbA} (T_n + \tau_m, T_n)} \\
        &\leq e \int_{0}^n \rd\tau \norm{W_{\wbA} (T_n + \tau, T_n)} \\
        &\in \cO \lb a \int_0^{nh} \rd s \, \norm{W_{\bA} ((M-n)h + s, (M-n)h)} \rb.
    \end{split}
    \end{align}
    Thus, the maximum row sum can be upper bounded as follows:
    \begin{align}
         \max_m \sum_{n} \norm{(\mA^{-1})_{mn}} \le e a \max_{t_0 \in [0,t]}\int_{t_0}^{t} \rd s \, \norm{W_{\bA} (s, t_0)}  +R+1.
    \end{align}

    By a similar argument, we obtain the following bound on the maximum column sum:
    \begin{align}
    \begin{split}
        \max_n \sum_{m} \norm{(\mA^{-1})_{mn}} &\leq \max_{n \in \{1,2,\dots, M\}}\lb \|V_n\|+ \|V_{n+1}V_n\|+\ldots + \|V_M \ldots V_n\|\rb + R \max \left\{1, \max_{n \in \{1,\ldots, M\}} \norm{V_M \ldots V_n} \right\} \\
        &\leq \max_{n \in \{1,2,\dots, M\}} \sum_{m=n}^M \norm{W_{\bA}((M-n + 1)h, (M-m)h)} + R \max_{s \in [0,t]} \norm{W_{\bA}(s,0)} + \epsilon \\
        &\in \cO \lb a  \max_{t_1 \in [0,t]}\int_{0}^{t_1} \rd s \, \norm{W_{\bA} (t_1, s)}  + R \max_{s \in [0,t]} \norm{W_{\bA} (s,0)} \rb .
    \end{split}
    \end{align}

    We therefore have that
    \begin{align}
        \kappa_{\mA} \in \cO \lb a \max \left\{ \max_{t_1 \in [0,t]}\int_{0}^{t_1} \rd s \, \norm{W_{\bA}(t_1,s)} , \max_{t_0 \in [0,t]} \int_{t_0}^t \rd s \, \norm{W_{\bA}(s, t_0)}  \right\} + R \max_{s \in [0,t]} \norm{W_{\bA}(s,0)} \rb.
    \end{align}
    As shown in Ref.~\cite{Berry2024time-dep_history}, the number of queries to the initial state preparation unitary $U_{\phi}$ scales like $\cO \lb \kappa_{\mA} \log (1/\epshist) \rb$. Using the above bound on $\kappa_{\mA}$ then yields the claimed query complexity for $U_{\phi}$. 

    To determine the required number of queries to $U_{\bA}$ as well as the number of additional gates, we need to bound the truncation order of the Dyson series for approximating short time evolutions.
    Let $\delta_m$ denote the error matrix associated with $V_m$ consisting of the remainder of the truncated Dyson series such that 
    \begin{align}
        V_m =  W_{\bA}(t_{M-m+1},t_{M-m}) + \delta_m,
    \end{align}
    meaning that
    \begin{align}
        \delta_m = - \sum_{k=K+1}^{\infty} \int_{t_{M-m}}^{t_{M-m+1}} \rd\tau_1 \int_{t_{M-m}}^{\tau_1} \rd\tau_2 \cdots \int_{t_{M-m}}^{\tau_{k-1}} \rd\tau_k \;\bA (\tau_k) \cdots \bA (\tau_2) \bA (\tau_1).
    \end{align}
    Let $\Delta$ be an upper bound on the norm of any $\delta_m$ and let $W_{\max}$ be an upper bound on the norm of $W_{\bA}(\tau_1, \tau_0)$ for any $\tau_1 \geq \tau_0 \in [0,t]$. Note that $W_{\max} \geq 1$.
    In contrast to the time-independent case, $W_{\bA}(t_{M-m+1},t_{M-m})$ and $\delta_m$ will generally not commute. However, we can still bound the error using a similar union-bound like argument. In particular, the error for a sequence of $V_m'$s is bounded as ($t_m=mh$)
    \begin{align}
    \begin{split}
        \norm{\prod_{j=m}^n V_j - W_{\bA}(t_{M-j+1},t_{M-j})} &= \norm{\prod_{j=m}^n\lb W_{\bA} (t_{M-j+1},t_{M-j}) + \delta_j \rb - \prod_{j=m}^n W_{\bA} (t_{M-j+1},t_{M-j})} \\
        &\leq \binom{n+1-m}{1} W_{\max}^2 \Delta + \binom{n+1-m}{2} W_{\max}^3 \Delta^2 
        + \dots + \binom{n+1-m}{n+1-m} W_{\max}^{n+2-m} \Delta^{n+1-m} \\
        &= W_{\max}^2 \Delta \sum_{j=1}^{n+1-m} \binom{n+1-m}{j} \lb W_{\max}\Delta \rb^{j-1} \\
        &= W_{\max}^2 \Delta \sum_{j=0}^{n-m} \binom{n+1-m}{j+1} \lb W_{\max}\Delta \rb^{j} \\
        &= W_{\max}^2 \Delta (n+1-m) \sum_{j=0}^{n-m} \frac{(n-m)!}{(j+1)!(n-m-j)!}  \lb W_{\max}\Delta \rb^{j} \\
        &\leq W_{\max}^2 \Delta (n+1-m) \sum_{j=0}^{n-m} \binom{n-m}{j}  \lb W_{\max}\Delta \rb^{j}  \\
        &\leq W_{\max}^2 \Delta (n+1-m) e^{W_{\max} \Delta}.
    \end{split}
    \end{align}
    To ensure that for any $n\geq m \in \{1, 2, \dots, M\}$ this error is at most $\epsilon'$ for some intermediate error tolerance $\epsilon' < 1$, it suffices to ensure that
    \begin{align}
        \Delta \leq \frac{\epsilon'}{2 W_{\max}^2 M},
    \end{align}
    which follows from a Taylor series expansion of the Lambert-W function near $0$ under the assumption that
    \begin{align}
        \frac{\epsilon'}{ W_{\max}^2 M} < \frac{1}{e}.
    \end{align}
    To provide a bound on $\epsilon'$ in terms of $\epshist$, we need to determine how the errors propagate to the overall history state. Note that the $\ell_2$-norm error in the unnormalized history state when using a sequence of $V_m$'s instead of $W_{\bA} (t_{M-m+1},t_{M-m})$ is bounded by
    \begin{align}
    \begin{split}
        \sqrt{\sum_{m=1}^{M-1} \norm{ V_m\ldots V_1  \ket{\phi} - W_{\bA} (t,t_{M-m}))   \ket{\phi}}^2 + \sum_{m=M}^{M+R-1} \norm{V_M\ldots V_1 \ket{\phi} -  W_{\bA} (t,0) \ket{\phi}}^2} \leq \sqrt{M+R} \epsilon'.
    \end{split}
    \end{align}
    To ensure that the error between the normalized history states is at most $\epsilon$, it then suffices to pick 
    \begin{align}
        \epsilon' \in \cO \lb \frac{\epshist}{\sqrt{M+R}} \rb.
    \end{align}
    Thus, it suffices to truncate the Dyson series for the time-ordered matrix exponentials at order
    \begin{align}
        K \in \cO \lb \frac{\log \lb \lb a t + R \rb W_{\max}/\epshist \rb}{\log \log \lb \lb a t + R \rb W_{\max}/\epshist \rb} \rb.
    \end{align}
    As shown in Ref.~\cite{Berry2024time-dep_history}, the number of queries to $U_{\bA}$ scales like $\cO \lb \kappa_{\mA} \log (1/\epsilon) \times K \rb$. Using the above bounds on $\kappa_\mA$ and $K$ yields the claimed query complexity for $U_\bA$.
    
    The bound on the number of additional gates follows directly from the proof of Thm.\ 4.1 in Ref.~\cite{Berry2024time-dep_history} by ignoring all terms related to the inhomogeneous term. Note that the number of queries to $U_{\bA}$ does not depend on the norm of the derivative of $\bA(s)$. Only the number of additional gates has a logarithmic dependence on $\max_{s \in [0,t]} \norm{\bA'(s)}$ which stems from the discretization of the integrals in the Dyson series. Here $\bA'(s)=\frac{\rd}{\rd s}\bA(s)$.
\end{proof}

The same bounds can be obtained for preparing an approximation to $\ket{\wpsi_\hist}$ by replacing $\bA$ with $\bB$ in the above theorem.

\subsection{History state preparation via LCHS in the time-dependent case}

Similar to the time-independent case in Appendix~\ref{app:QAlgorithm}, we can also use LCHS~\cite{an2023lchs, an2023betterlchs, low2025optimal} to prepare history states in the case of time-dependent matrices.
Below, we give a reformulated version of the main theorem of Ref.~\cite{low2025optimal} for convenience.

\begin{theorem}[Block-encoding of LCHS (time-dependent); reformulation of Theorem 4 in \cite{low2025optimal}]
    Let $A(s): [0,t] \rightarrow \mathbb{C}^{N \times N}$ have non-positive log-norm for all $s \in [0,t]$ and assume we have access to a time-dependent block-encoding $U_{A}$ of $A(s)$ with block-encoding constant $a \geq \max_{s \in [0,t]} \norm{A(s)}$.
    For any $t \geq 0$, $\epsilon \in (0, 4/5]$, we can construct a block-encoding $U_{\cT \exp(tA)}$ such that
    \begin{align}
        \norm{\lb \bra{0} \otimes \one \rb  U_{\cT \exp(tA)}  \lb \bra{0} \otimes \one \rb - \cT e^{\int_0^t \rd s A(s)}} \leq \epsilon,
    \end{align}
    using
    \begin{align}
        Q \in \cO \lb ta \log(1/\epsilon) \times \log(ta \log(1/\epsilon)/\epsilon) \rb
    \end{align}
    queries to controlled-$U_{A}$ and its inverse, plus an additional
    \begin{align}
        \cO \lb Q \times \log \lb \frac{t \log(1/\epsilon)}{a \epsilon} \max_{s \in [0,t]} \norm{\frac{\rd A(s)}{\rd s}} + \frac{ta \log(1/\epsilon)}{\epsilon}  \rb + \log \lb at + \log(1/\epsilon) \rb \times \log^{5/2} \log(1/\epsilon) \rb 
    \end{align}
    primitive quantum gates.
\label{thm:LCHS_time-dep}
\end{theorem}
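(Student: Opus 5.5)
This statement is a reformulation of Theorem 4 in Ref.~\cite{low2025optimal}, specialized to the time-dependent case. My plan is to reproduce the LCHS construction there and track the costs in our notation. Write $A(s)=\bL(s)+ \ri\,\bH'(s)$ with $\bL(s):=(A(s)+A^\dagger(s))/2 \preceq 0$ (by non-positive log-norm) and $\bH'(s):=(A(s)-A^\dagger(s))/(2\ri)$ Hermitian.

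The LCHS identity then gives
\begin{align}
\cT e^{\int_0^t \rd s A(s)}=\int_{\mathbb R}\rd k\, f(k)\, \cT e^{-\ri\int_0^t \rd s\,(k\bL(s)+\bH(s))},
\end{align}
with $\bH=-\bH'$ and a kernel $f$ of the near-optimal decaying type from Ref.~\cite{low2025optimal}. The $\ell_1$ norm of $f$ is $\cO(1)$, and its tails decay fast enough that truncating to $|k|\le R'\in\cO(\log(1/\epsilon))$ costs error $\epsilon/3$. We then discretize $k$ on a grid of spacing $h'$ with $1/h'\in\cO(\max_s\norm{\bL(s)}t+\log(1/\epsilon))$, at error $\epsilon/3$.

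The construction has three steps:
\begin{enumerate}
\item Prepare the kernel-weight state $\sum_j \sqrt{|f(jh')|h'}\ket j$. This costs the $\log^{5/2}$-type gate overhead via QSVT, as in Thm.~\ref{thm:LCHS}.
\item Build the multiplexed time-dependent block-encoding $\sum_j\ketbra{j}{j}\otimes\Be[(h'j\bL(s)+\bH(s))/(R'\lambda_\bL+\lambda_\bH)]$. It uses $\cO(1)$ queries to controlled-$U_A$ and leaves the time register of $U_A$ intact.
\item Apply a time-dependent Hamiltonian simulation to this select operator, with error $\epsilon/(3\norm{f}_1)$. Unitarity lets the error be controlled uniformly in $j$.
\end{enumerate}
The standard LCU sandwich then yields the claimed block-encoding of $\cT e^{\int_0^t A}$ with constant $\cO(1)$.

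The query count comes from the simulation in the third step. I would use a truncated-Dyson-series simulation (e.g., Low--Wiebe or Kieferová--Berry--Babbush) with normalization $\Lambda=R'\lambda_\bL+\lambda_\bH\in\cO(a\log(1/\epsilon))$ over time $t$. That simulation uses $\cO(\Lambda t\cdot \log(\Lambda t/\epsilon)/\log\log(\Lambda t/\epsilon))$ queries, which is within $Q\in\cO(ta\log(1/\epsilon)\log(ta\log(1/\epsilon)/\epsilon))$. Its gate overhead per query is logarithmic in the number of time-discretization points. That number must scale like $t^2\max_s\norm{A'(s)}/(\Lambda\epsilon)$-ish to keep the Riemann error in the Dyson integrals below $\epsilon$, giving the $\log\big(\frac{t\log(1/\epsilon)}{a\epsilon}\max_s\norm{A'(s)}+\frac{ta\log(1/\epsilon)}{\epsilon}\big)$ factor.

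The main obstacle is that time-dependent simulation does not admit the qubitization trick of the time-independent case. So one can only claim the extra $\log(\cdot/\epsilon)$ factor in $Q$ rather than an additive $\log(1/\epsilon)$. One must also check that the derivative of $h'j\bL(s)+\bH(s)$ is bounded by $\cO(R'\max_s\norm{A'(s)})$, which follows from the triangle inequality. Summing the three error contributions and the gate costs of the multiplexing, the kernel state preparation (additive $\log(at+\log(1/\epsilon))\log^{5/2}\log(1/\epsilon)$ term), and the simulation completes the argument.
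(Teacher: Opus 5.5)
Your proposal is correct and follows the same route the paper relies on. The paper gives no standalone proof (it cites Theorem~4 of Ref.~\cite{low2025optimal}), but the proof of Lemma~\ref{lem:be_ctrl_ev_time-dep} sketches exactly your construction: the split into $\bL$ and $\bH$, truncation at $R'\in\cO(\log(1/\epsilon))$, grid spacing $1/h'\in\cO(\norm{\bL}_{L^1}+\log(1/\epsilon))$, the multiplexed $\textsc{Mul}$ costing $\cO(1)$ queries, and truncated-Dyson time-dependent simulation from Ref.~\cite{low2019interactionpicture}. Your LCU yields normalization $\norm{f}_1=\cO(1)$ rather than the unit normalization literally displayed in the statement, and that is the attainable form and all the paper needs downstream: already for the scalar instance $A=-ax$ with $x\in[0,1]$, a Bernstein-type inequality for the bounded trigonometric polynomial realized by any $n$-query circuit forces $n=\Omega(ta/\sqrt{\epsilon})$ for unit normalization.
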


Note that we assumed that $A$ has non-positive log-norm in the above theorem. In the following, we will no longer make this assumption explicitly, but instead shift $A$ by an appropriately scaled identity term such that the resulting matrix has non-positive log-norm.
The goal now is to implement a block-encoding of the following controlled time-evolution operator:
\begin{align}
    V_{A(s)} &:= \sum_{m=0}^{R-1} \ketbra{m}{m} \otimes W_{A_\xi}(t,0) + \sum_{m=R}^{M+R-1} \ketbra{m}{m} \otimes W_{A_\xi}(t,t_{m+1-R}) 
\label{controlled_time-ev-op_time-dep},
\end{align}
where $A_\xi(s) := A(s) - \xi_A(s)$ with $\xi_A(s)$ being a known function that upper bounds the log-norm of $A(s)$ for any given $s \in [0,t]$.
The lemma below provides an upper bound on the complexity of block-encoding $V_{A(s)}$.

\begin{lemma}[Block-encoding of controlled time evolution operator (time-dependent)]
    Let $U_{A}$ be an $(a, \rm a, 0)$-block-encoding of $A(s)$ such that
    \begin{align}
        \lb \bra{0}_{\rm a}  \otimes \one \rb  U_{A} \lb \ket{0}_{\rm a}  \otimes \one \rb &= \sum_{j=0}^{J-1} \ketbra{j}{j} \otimes \frac{A(\frac{j}{J}t)}{a},
    \end{align}
    with $J \in \cO \lb \frac{t^2}{\epsilon} \max_{s \in [0,t]} \norm{\frac{\rd A}{\rd s}} \rb$, and let $\epsilon > 0$ be an error tolerance.
    We can construct a block-encoding $U_{V_{A(s)}}$ of $V_{A(s)}$ as given in Eq.~\eqref{controlled_time-ev-op_time-dep} such that
    \begin{align}
        \norm{\lb \bra{0} \otimes \one \rb U_{V_{A(s)}}  \lb \ket{0} \otimes \one \rb - V_{A(s)}} \leq \epsilon,
    \end{align}
    using
    \begin{align}
        \cO \lb ta \log(1/\epsilon) \times \log(ta \log(1/\epsilon)/\epsilon) \rb
    \end{align}
    queries to controlled-$U_{A}$ and its inverse, plus an additional
    \begin{align}
        \cO \lb ta \times \mathrm{polylog} \lb ta, 1/\epshist, g_t \max_{s \in [0,t]}\norm{\frac{\rd A}{\rd s}} \rb \rb
    \end{align}
    primitive quantum gates, where $g_t := \max \{ t/a, t^2 \}$.
\label{lem:be_ctrl_ev_time-dep}
\end{lemma}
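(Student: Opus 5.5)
We mirror the proof of Lemma~\ref{lem:be_ctrl_ev}, replacing the time-independent optimal LCHS by its time-dependent version, Thm.~\ref{thm:LCHS_time-dep}. The new feature is that, in $V_{A(s)}$, the final time $t$ is fixed while the initial time $t_{m+1-R}$ is controlled by the clock. The plan is to reparametrize each branch onto a common time interval, so that a single call to the time-dependent LCHS procedure, with a clock-controlled time-dependent block-encoding, produces all branches at once.

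Step 1 (reparametrization). For clock value $m \ge R$, set $s_m := t_{m+1-R}$ and $\ell_m := t - s_m$. For $m<R$, set $s_m:=0$ and $\ell_m:=t$. Rescale time as $s = s_m + \ell_m \sigma$ with $\sigma\in[0,1]$. This turns $W_{A_\xi}(t,s_m)$ into the (anti-)time-ordered exponential over $\sigma\in[0,1]$ of the generator
\begin{align}
\tilde A_m(\sigma) := \ell_m\left(A(s_m+\ell_m\sigma)-\xi_A(s_m+\ell_m\sigma)\one\right).
\end{align}
Since $\ell_m\ge 0$, each $\tilde A_m(\sigma)$ has non-positive log-norm and norm at most $\ell_m(a+|\xi_A|)$, which is $\cO(ta)$. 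The controlled generator $\sum_m \ketbra{m}{m}\otimes \tilde A_m(\sigma)$ therefore satisfies the hypotheses of Thm.~\ref{thm:LCHS_time-dep} with unit evolution time and block-encoding constant $\cO(ta)$. The product of time and constant is still $\cO(ta)$, which gives the claimed query count $\cO(ta\log(1/\epsilon)\log(ta\log(1/\epsilon)/\epsilon))$.

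Step 2 (block-encoding the controlled generator). Write $\sum_m\ketbra{m}{m}\otimes\tilde A_m(\sigma_j)$ as an LCU of $U_A$, with its time register computed in place, and a clock-diagonal $\ell_m$ weight. Concretely, from the clock register $m$ and the discretized $\sigma_j$, coherently compute the index $j' \approx J(s_m+\ell_m\sigma_j)/t$ with reversible arithmetic. Then call $U_A$ once on that time register and uncompute. The factor $\ell_m/t\le 1$ is implemented by a weight block-encoding, as $U_{\mathrm{frac}}$ was in Lemma~\ref{lem:be_ctrl_ev} via the alternating sign trick. The shift $-\ell_m\xi_A(\cdot)\one$ is added by a further LCU step with a computed weight. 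This uses $\cO(1)$ queries and $\mathrm{polylog}$ gates per step, which accounts for the gate overhead in the statement.

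Step 3 (discretization error). This is the step I expect to be the main obstacle. The map $s\mapsto j'$ introduces rounding, and $\tilde A_m$ inherits its derivative bound from $\ell_m^2\max\|\rd A/\rd s\|\le t^2\max\|\rd A/\rd s\|$. The rounding error contributes at most $\cO(\ell_m\cdot (t/J)\max\|\rd A/\rd s\|)$ to each generator. Integrated over the evolution, and using non-positive log-norm so that errors do not amplify, this is $\cO(t^2\max\|\rd A/\rd s\|/J)$. The choice $J\in\cO(t^2\max\|\rd A/\rd s\|/\epsilon)$ makes this $\le\epsilon/2$. The derivative term in the gate count of Thm.~\ref{thm:LCHS_time-dep}, now with time $1$ and constant $\cO(ta)$, becomes a logarithm of $g_t\max\|\rd A/\rd s\|$ with $g_t=\max\{t/a,t^2\}$. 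Setting the LCHS error to $\epsilon/2$ and summing the two errors gives the $\epsilon$ guarantee. The gate complexity then follows by adding the reversible-arithmetic cost $\mathrm{polylog}(ta,1/\epsilon)$ per query to the count of Thm.~\ref{thm:LCHS_time-dep}.
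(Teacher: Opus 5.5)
Your proposal is correct, but it takes a different route from the paper. The paper does not reparametrize time. It keeps every clock branch on the common interval $[0,t]$ and multiplies the LCHS generator inside the multiplexed block-encoding by a clock-controlled diagonal indicator $\mathtt{IND}(m)$ on the time register of $U_A$. This indicator equals $1$ for $m<R$, and otherwise is the indicator of $\frac{j}{J}t\ge t_{m+1-R}$, so each branch evolves trivially until its starting time and then under $A_\xi$ up to $t$. That costs only an $\cO(\log J)$-gate inequality test per query. It also keeps block-encoding constant $\cO(a)$ at evolution time $t$, and it introduces no rounding, because the switch-on times can sit on the grid of $U_A$. The price is a generator that jumps at $t_{m+1-R}$. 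The derivative-based discretization bounds of time-dependent simulation only handle such jumps once they are aligned with grid points.

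Your rescaling of $[t_{m+1-R},t]$ onto $[0,1]$ is instead the more direct generalization of the $U_{\mathrm{frac}}$ trick in Lemma~\ref{lem:be_ctrl_ev}. It gives constant $\cO(ta)$ at unit time, which is the same product and hence the same query count. It keeps each branch's generator as smooth as $A$, so Thm.~\ref{thm:LCHS_time-dep} applies as a black box to a block-diagonal generator with non-positive log-norm. Your contractivity (Duhamel) estimate also explains where the stated scaling of $J$ comes from, which the paper leaves unexplained. The costs are reversible arithmetic for the index $j'$, a weight block-encoding of $\ell_m/t$, and a rounding error that you must control separately. In that estimate, evaluate $\xi_A$ at the same rounded time as $A$, so that every sampled generator keeps non-positive log-norm as LCHS requires. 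The comparison then also uses some regularity of $\xi_A$, an assumption the paper likewise leaves implicit.
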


\begin{proof}
    As shown in Ref.~\cite{low2025optimal}, we can block-encode the time-ordered matrix exponential $W_{A_\xi}(t,0)$ via the following multiplexed block-encoding:
    \begin{align}
       \textsc{Mul} := \sum_{j'=-R'/h'}^{R'/h'} \ketbra{j'}{j'} \otimes \Be \left[ \frac{h'j'\bL(s) + \bH(s)}{R' \lambda_{\bL} + \lambda_\bH} \right],
    \end{align}
    where $\bL(s) = \lb A(s) + A^\dagger(s) \rb/2 - \xi_{A}(s) \one$, $\bH(s) = i \lb A^\dagger(s) - A(s) \rb/2$, $R' \in \cO \lb \log(1/\epsilon) \rb$ and $1/h' \in \cO \lb \norm{\bL}_{L^1} + \log ( 1/\epsilon) \rb$ with $R'/h'$ being an integer. 
    Here, $\Be[\cdot]$ refers to a block-encoding of the expression inside the brackets with the block-encoding constant being in the denominator. In particular, $\lambda_{\bL}, \lambda_{\bH} \in \cO \lb a \rb$ are the block-encoding constants of $\bL$ and $\bH$, respectively.
    $\textsc{Mul}$ can be implemented using $\cO(1)$ queries to controlled-$U_{A}$ and its inverse and an additional $\cO \lb \log (R'/h')\rb$ primitive gates.
    
    We can block-encode $V_{A(s)}$ by replacing \textsc{Mul} with the following multiplexed block-encoding:
    \begin{align}
        \textsc{MulT} &:= \sum_{m=0}^{M+R-1} \ketbra{m}{m} \otimes \sum_{j'=-R'/h'}^{R'/h'} \ketbra{j'}{j'} \otimes \Be \left[ \mathtt{IND}(m) \otimes \one \right] \Be \left[\frac{h'j'\bL(s) + \bH(s)}{R' \lambda_{\bL} + \lambda_\bH} \right],
    \end{align}
    where $\mathtt{IND}(m)$ is the matrix version of a time-step-dependent indicator function. Specifically, $\mathtt{IND}(m)$ is a diagonal matrix of dimension $J$ (the size of the time register in the block-encoding of $A(s)$) with entries
    \begin{align}
        \mathtt{IND}(m)_{j,j} = 
        \begin{cases}
            1 , & 0 \leq m \leq R-1 \\
            1 , & (R \leq m \leq M+R-1) \wedge \lb \frac{j}{J}t \in [t_{m+1-R}, t] \rb \\
            0 , & \text{otherwise}.
        \end{cases}
    \end{align}
    This matrix can be block-encoded via inequality testing using $\cO \lb \log (J) \rb$ primitive quantum gates.
    With \textsc{MulT} as defined above we can then follow the same optimal LCHS procedure as in Ref.~\cite{low2025optimal}.
    
    More specifically, by querying $\textsc{MulT}$ a number of times scaling like $\cO \lb ta \log(1/\epsilon) \times \log(ta \log(1/\epsilon)/\epsilon) \rb$, we can use the results from Ref.~\cite{low2019interactionpicture} for time-dependent Hamiltonian simulation to implement
    \begin{align}
        \textsc{SelT}' := \sum_{m=0}^{R-1} \ketbra{m}{m} \otimes \sum_{j'=-R'/h'}^{R'/h'} \ketbra{j'}{j'} \otimes \Be \left[ U_{j',0} \right] + \sum_{m=R}^{M+R-1} \ketbra{m}{m} \otimes \sum_{j'=-R'/h'}^{R'/h'} \ketbra{j'}{j'} \otimes \Be \left[ U_{j',m+1-R} \right] ,
    \end{align}
    where $\Be \left[ U_{j',m} \right]$ is a block-encoding of an operator $U_{j',m}$ such that $\norm{U_{j',m} - \cT e^{-it \int_{t_m}^t \rd s \,(h'j' \bL(s) + \bH(s))}} \le \epsilon$.
    Putting everything together, we therefore find from Theorem~\ref{thm:LCHS_time-dep} that the number of queries to controlled-$U_{A}$ and its inverse scales like
    \begin{align}
        \cO \lb ta \log(1/\epsilon) \times \log(ta \log(1/\epsilon)/\epsilon) \rb,
    \end{align}
    and the overall number of additional primitive gates scales like
   \begin{align}
        \cO \lb ta \times \mathrm{polylog} \lb ta, 1/\epshist, g_t \max_{s \in [0,t]}\norm{\frac{\rd A}{\rd s}} \rb \rb.
    \end{align}
\end{proof}

Now we are ready to present the main result of this subsection which provides a bound on the complexity of preparing (sub-)normalized history states via LCHS in the time-dependent setting.

\begin{theorem}[History state preparation via LCHS (time-dependent)]
    Let $U_A$ be a block-encoding of $\sum_{j=0}^{J-1}\ketbra j \otimes \frac{A(\tau_j)}a$ for $a \ge \|A(t)\|>0$ and set $M= \lceil ta \rceil$, $R\ge 1$. Further, let $\xi_A(s)$ be a known function upper bounding the log-norm of $A(s)$ for any $s \in [0,t]$ and let $U_{\hist}^{(\LCHS)}$ be a unitary such that
    \begin{align}
       U_{\hist}^{(\LCHS)} \ket{0}_{\mathrm{f}}\ket{0}_{\mathrm{clock}}\ket{0}_{\hist} = \ket 0_{\rm f} \frac{1}{\sqrt{\cN_{\xi_A}}}\lb \sum_{m=0}^{R-1} \ket{m}_{\rm clock} W_{A} (t,0) \ket{x_\init} + \sum_{m=R}^{R+M-1} \ket{m}_{\rm clock} W_{A} (t,t_{m+1-R}) \ket{x_\init} \rb + \ket{0^\perp} \; ,
    \end{align}
    where the $\rm f$ register flags success, and
    \begin{align}
        \cN_{\xi_A} = \lb \sum_{m=0}^{R-1} e^{2 \int_0^t \rd \tau \, \xi_A(\tau)} + \sum_{m=R}^{M+R-1} e^{2 \int_{t_{m+1-R}}^t \rd \tau \, \xi_A(\tau)} \rb \in \cO \lb R \, e^{2 \int_{0}^t \rd \tau \, \xi_A(\tau)} + a \int_0^t \rd s \, e^{2 \int_{s}^t \rd \tau \, \xi_A(\tau)} \rb.
    \end{align}
    Let $U_x$ be a state preparation unitary such that $U_x \ket{0} =\ket{x_{\init}}$.
    Then we can implement $U_{\hist}^{(\LCHS)} \ket{0}_{\mathrm{f}}\ket{0}_{\mathrm{clock}}\ket{0}_{\hist}$ within error $\epshist$ in Euclidean distance using 1 query to $U_{x}$,
    \begin{align}
        \cO \lb ta \log(1/\epshist) \times \log(ta \log(1/\epshist)/\epshist) \rb,
    \end{align}
    queries to controlled-$U_{A}$ and its inverse, plus an additional
    \begin{align}
        \cO \lb (ta + R) \times \mathrm{polylog} \lb ta, 1/\epshist, g_t \max_{s \in [0,t]}\norm{\frac{\rd A}{\rd s}} \rb \rb
    \end{align}
    primitive quantum gates, where $g_t := \max \{ t/a, t^2 \}$.
\label{thm:history_state_LCHS_time-dep}
\end{theorem}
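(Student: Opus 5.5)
The proof mirrors that of Thm.~\ref{thm:history_state_LCHS}, with Lemma~\ref{lem:be_ctrl_ev_time-dep} in place of Lemma~\ref{lem:be_ctrl_ev}. First I use Lemma~\ref{lem:be_ctrl_ev_time-dep} to build a block-encoding $U_{V_{A(s)}}$ of the operator $V_{A(s)}$ in Eq.~\eqref{controlled_time-ev-op_time-dep}, within error $\epshist/2$. This step supplies the query count $\cO(ta\log(1/\epshist)\log(ta\log(1/\epshist)/\epshist))$ to controlled-$U_A$ and its inverse. It also supplies the $\cO(ta\,\mathrm{polylog}(\cdot))$ gate count.

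Next I check that the shift by the scalar function $\xi_A(s)$ factors out. Because $\xi_A(s)\one$ commutes with everything, $W_{A_\xi}(t,s)=e^{-\int_s^t \rd\tau\,\xi_A(\tau)}W_A(t,s)$, which follows from uniqueness of the solution to Eq.~\eqref{eq:WAmotion}. I then prepare the normalized initial state
\begin{align}
\ket{0}_V\otimes\frac{1}{\sqrt{\cN_{\xi_A}}}\lb\sum_{m=0}^{R-1}e^{\int_0^t\rd\tau\,\xi_A(\tau)}\ket m+\sum_{m=R}^{M+R-1}e^{\int_{t_{m+1-R}}^t\rd\tau\,\xi_A(\tau)}\ket m\rb\otimes\ket{x_\init}.
\end{align}
This uses one query to $U_x$ and $\cO((M+R)\log(1/\epshist))$ gates via generic state preparation on the clock register. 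The amplitudes are classically computable from the known function $\xi_A$; any error in computing them can be absorbed into the $\epshist/2$ budget. Applying $U_{V_{A(s)}}$ to this state cancels the exponential weights against the shift. The flagged branch is then exactly the claimed unnormalized history state divided by $\sqrt{\cN_{\xi_A}}$. The total error is at most $\epshist$ by the triangle inequality.

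It remains to bound $\cN_{\xi_A}$. The first $R$ terms give $R\,e^{2\int_0^t\xi_A}$ directly. For the sum over $m\ge R$, I reuse the Riemann-sum argument of Lemma~\ref{lem:norm_history_state_time-dep}. Let $g(s):=e^{2\int_s^t\xi_A}$. For $s\in[t_{k-1},t_k]$, I compare $g(t_k)$ with $g(s)$ through their ratio $e^{-2\int_{s}^{t_k}\xi_A}$. Here I use that $|\xi_A|$ may be taken at most $a$: the log-norm of $A(s)$ is at most $\|A(s)\|\le a$, and WLOG $\xi_A\ge -a$, since replacing $\xi_A$ by $\max\{\xi_A,-a\}$ is still a valid upper bound. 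The ratio is therefore at most $e^{2ah}=e^{\cO(1)}$ because $h=t/M\le 1/a$. Averaging over each interval gives $\sum_{k=1}^M g(t_k)\le e^2\,\frac{M}{t}\int_0^t g(s)\,\rd s\in\cO(a\int_0^t g)$, which yields the stated bound. The main subtlety is this interval comparison: it needs $|\xi_A|\lesssim a$. If $\xi_A$ could be far below $-a$, the bound must instead be stated with that clipping, so I would make the clipping explicit. The gate counts then add to $\cO((ta+R)\,\mathrm{polylog})$.
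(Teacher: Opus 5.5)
Your proposal is correct and follows the paper's proof essentially step for step: block-encode $V_{A(s)}$ via Lemma~\ref{lem:be_ctrl_ev_time-dep} to error $\epshist/2$, apply it to the exponentially weighted clock state prepared with one query to $U_x$, and bound $\cN_{\xi_A}$ by the Riemann-sum argument of Lemma~\ref{lem:norm_history_state_time-dep}. You also make explicit two points the paper leaves implicit, namely the identity $W_{A_\xi}(t,s)=e^{-\int_s^t\rd\tau\,\xi_A(\tau)}W_A(t,s)$ and the interval comparison; the clipping you worry about is unnecessary, since any valid upper bound on the log-norm automatically satisfies $\xi_A(s)\ge\lambda_{\max}\big((A(s)+A(s)^\dagger)/2\big)\ge-\|A(s)\|\ge -a$, which is exactly the lower bound your comparison needs (no upper bound on $\xi_A$ is required).
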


\begin{proof}
    We use Lemma~\ref{lem:be_ctrl_ev_time-dep} to construct a block-encoding $U_{V_{A(s)}}$ of 
    \begin{align}
        V_{A(s)} = \sum_{m=0}^{R-1} \ketbra{m}{m} \otimes W_{A_\xi}(t,0) + \sum_{m=R}^{M+R-1} \ketbra{m}{m} \otimes W_{A_\xi}(t,t_{m+1-R}),
    \end{align}
    where $A_\xi = A(s) - \xi_A(s)$ as before.
    Then we apply $U_{V_{A(s)}}$ to the following normalized quantum state:
    \begin{align}
        \ket{\mathcal{P}_{A, x}} := \ket{0}_{V} \otimes \frac{1}{\sqrt{\cN_{\xi_A}}} \lb \sum_{m=0}^{R-1} e^{ \int_0^t \rd \tau \, \xi_A(\tau)} \ket{m} + \sum_{m=R}^{M+R-1} e^{\int_{t_{m+1-R}}^t \rd \tau \, \xi_A(\tau)} \ket{m} \rb \otimes \ket{x_{\init}},
    \end{align}
    where the $V$ register is the block-encoding register of $U_{V_{A(s)}}$ (which can also be interpreted as the register flagging success). Using the same integral upper bound as in Lemma~\ref{lem:norm_history_state_time-dep}, we have that
    \begin{align}
        \cN_{\xi_A} \leq \lb R \, e^{2 \int_0^t \rd \tau \, \xi_A(\tau)} + \frac{e^2 M}{t} \int_0^t \rd s \, e^{2 \int_{t_{m+1-R}}^t \rd \tau \, \xi_A(\tau)}  \rb \in \cO \lb R \, e^{2 \int_{0}^t \rd \tau \, \xi_A(\tau)} + a \int_0^t \rd s \, e^{2 \int_{s}^t \rd \tau \, \xi_A(\tau)} \rb.
    \end{align}
    The state $\ket{\mathcal{P}_{A, x}}$ can be implemented within error $\epsilon_{\cP_A}$ in Euclidean distance using $1$ query to $U_{x}$ and $\cO \lb (M+R) \log \lb 1/\epsilon_{\cP_A}\rb \rb$ primitive gates~\cite{NC2010, gosset2025stateprep}.
    To ensure that the overall error is at most $\epshist$, it suffices to block-encode $U_{V_{A}}$ within error $\epshist/2$ and also prepare $\ket{\mathcal{P}_{A, x}}$ within error $\epshist/2$. By Lemma~\ref{lem:be_ctrl_ev_time-dep}, we therefore require
    \begin{align}
        \cO \lb ta \log(1/\epshist) \times \log(ta \log(1/\epshist)/\epshist) \rb
    \end{align}
    queries to controlled-$U_{A}$ and its inverse. The overall number of additional primitive gates scales at most like 
    \begin{align}
        \cO \lb (ta + R) \times \mathrm{polylog} \lb ta, 1/\epshist, g_t \max_{s \in [0,t]}\norm{\frac{\rd A}{\rd s}} \rb \rb.
    \end{align}
\end{proof}

\subsection{Quantum complexity of estimating a matrix element in the time-dependent case}

Let us first discuss how to block-encode $\mathcal{I}_{\bC, \bD}$. Recall that
\begin{align}
    \mathcal{I}_{\bC, \bD} = \sum_{m=0}^{R-1} \ketbra{m}{m} \otimes \frac{\bD}{R} + \sum_{m=R}^{R+M-1} \ketbra{m}{m} \otimes I_{\bC}(t_{m-R}),
\end{align}
with
\begin{align}
    I_{\bC}(t_{m}) = \int_0^h d\tau \,  W_{\bA}^\dagger(t_{m+1},t_m + \tau) \bC(t_m + \tau)  W_{\bB} (t_{m+1},t_m + \tau) 
\end{align}
and $h \in \cO \lb 1/\mu \rb$ where $\mu := \max \{ a, b\}$.

\begin{lemma}[Block-encoding $\mathcal{I}_{\bC, \bD}$ in the time-dependent setting]
Let $a \ge \|\bA\|$, $b \ge \|\bB\|$, $c \ge \|\bC\|$, and $d \ge \|\bD\|$, and set $\mu = \max \{a,b\}$, $h=t/\lceil \mu t \rceil$. Let $U_\bA$, $U_\bB$, $U_\bC$, and $U_\bD$ be block-encodings of $\bA(\tau)/a$, 
$\bB(\tau)/b$, $\bC(\tau)/c$, $\bD(\tau)/d$, respectively, which use at most ${\rm a}\ge 0$ ancilla qubits. 
    We can implement a $(\lambda_{\mathcal{I}_{\bC, \bD}}, \rm a_{\mathcal{I}_{\bC, \bD}}, \epsbe)$-block-encoding of $\mathcal{I}_{\bC, \bD}$, where
    \begin{align}
        \lambda_{\mathcal{I}_{\bC, \bD}} &\le \max \left\{ {che^2} , \frac{d}{R} \right\} ,\\
        \rm a_{\mathcal{I}_{C, D}} &= 2K \rm a+ \cO \lb   \log \lb \frac{\max_{s \in [0,t]} \norm{\bA'(s)}}{\mu^2 \, \epsbe}  \rb + \log \lb \frac{\max_{s \in [0,t]} \norm{\bB'(s)} }{\mu^2 \, \epsbe} \rb + \log \lb \frac{c}{\mu \epsbe} + \frac{\max_{s \in [0,t]} \norm{\bC'(s)}}{\mu^2 \epsbe} \rb \rb,
    \end{align}
    using 
    \begin{align}
      2K\in  \cO \lb \frac{\log \lb \lambda_{\mathcal{I}_{\bC, \bD}}/\epsbe \rb}{\log \log \lb \lambda_{\mathcal{I}_{\bC, \bD}}/\epsbe \rb} \rb
    \end{align}
    queries to $U_{\bA}^\dagger$ and $U_{\bB}$, one query to $U_{\bC}$ and one query to $U_{\bD}$ and additional
    \begin{align}
    \begin{split}
         &\cO \Bigg( \lb K\rm a + \log \lb \frac{\max_{s \in [0,t]} \norm{\bA'(s)}}{\mu^2 \, \epsbe}  \rb + \log \lb \frac{\max_{s \in [0,t]} \norm{\bB'(s)} }{\mu^2 \, \epsbe} \rb \rb \times \frac{\log \lb \lambda_{\mathcal{I}_{\bC, \bD}}/\epsbe \rb}{\log \log \lb \lambda_{\mathcal{I}_{\bC, \bD}}/\epsbe \rb} \\
         &\qquad + \log \lb \frac{c}{\mu \epsbe} + \frac{\max_{s \in [0,t]} \norm{\bC'(s)}}{\mu^2 \epsbe} \rb \Bigg),
    \end{split}
    \end{align}
    primitive gates.
\label{lem:block-encode_ICD_time-dep}
\end{lemma}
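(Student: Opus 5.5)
We will mirror the time-independent constructions of Lemmas~\ref{lem:Iccutoff}, \ref{lem:block-encode_IC} and \ref{lem:block-encode_ICD}, replacing the Taylor series with truncated Dyson series.

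\textbf{Step 1: approximant for $I_\bC(t_m)$.} Expand each transition matrix in its Dyson series, Eqs.~\eqref{eq:DysonWA} and \eqref{eq:DysonWB}, truncated at order $K$. The $\bA$-side term is $W_\bA^\dagger(t_{m+1},t_m+\tau)=\sum_p\int\cdots\int \bA^\dagger(t_1)\cdots\bA^\dagger(t_p)$ over an ordered simplex in $[t_m+\tau,t_{m+1}]$; the $\bB$-side term is analogous. The truncated approximant is
\begin{align}
\tilde I_\bC(t_m)=\int_0^h \rd\tau \sum_{p,q=0}^K (\text{order-}p\ \bA^\dagger\text{ term})\,\bC(t_m+\tau)\,(\text{order-}q\ \bB\text{ term}).
\end{align}
Since $h\le 1/\mu$, each truncated series has norm at most $e$. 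Each remainder is at most $e/(K+1)!$, by the standard Dyson bound (the simplex of order $k$ has volume $(h-\tau)^k/k!$ and $\|\bA\|\le a$). A triangle inequality split identical to the proof of Lemma~\ref{lem:Iccutoff} then gives $\|I_\bC-\tilde I_\bC\|\le 2e^2 ch/(K+1)!$. Hence $K\in\cO(\log(\lambda/\epsbe)/\log\log(\lambda/\epsbe))$ suffices once $\epstrunc$ is set relative to $\lambda_{\cI_{\bC,\bD}}$, exactly as in Lemma~\ref{lem:block-encode_ICD}.

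\textbf{Step 2: block-encoding.} Discretize every integral (over $\tau$ and over the simplex times) by Riemann sums on a grid matching the time register of $U_\bA,U_\bB,U_\bC$. Following the Dyson-series LCU of Ref.~\cite{low2019interactionpicture}/Ref.~\cite{Berry2024time-dep_history}, the simplex sampling is done by preparing sorted time registers in superposition. This is a PREP over $(p,q,\tau,t_1,\dots)$ with weights $a^pb^qc\,h^{p+q+1}/(p!q!(p+q+1))$, up to the discretization. The SEL step applies $p$ queries to $U_\bA^\dagger$ and $q$ queries to $U_\bB$, each controlled on its own time register, plus one $U_\bC$ query at time $t_m+\tau$. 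Arithmetic adds the offset $t_m$, read from the clock register $m-R$. The normalization is bounded by $che^2$ as in Lemma~\ref{lem:block-encode_IC}. The $\bD/R$ blocks are combined with one extra weighting ancilla as in Lemma~\ref{lem:block-encode_ICD}, choosing $d/R\gtrsim ch$, which gives $\lambda_{\cI_{\bC,\bD}}\le\max\{che^2,d/R\}$. The query counts are therefore $2K$ uses of $U_\bA^\dagger,U_\bB$ and one use each of $U_\bC,U_\bD$.

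\textbf{Step 3 (main obstacle): discretization error and register sizes.} The Riemann-sum error of an integral over a length-$h$ interval scales as $h^2\max\|\text{integrand}'\|$. For the $\bA,\bB$ factors this is $\cO(h^2\|\bA'\|)$ relative error per factor. To make it $\le\epsbe$ one needs $\sim\|\bA'\|/(\mu^2\epsbe)$ grid points per step, i.e.\ $\log(\max\|\bA'\|/(\mu^2\epsbe))$ qubits per time register. The same holds for $\bB$.

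For the $\bC$ factor, the integrand $W_\bA^\dagger\bC W_\bB$ has derivative bounded by $\cO(\mu c+\|\bC'\|)$. Its grid size is then $\sim c/(\mu\epsbe)+\|\bC'\|/(\mu^2\epsbe)$, which produces the last logarithm in both the ancilla count and the gate count.

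The care needed here is twofold. First, the discretization errors of the $K$ nested simplex integrals must be summed without losing more than constant factors; we will use the same union-bound telescoping as in the truncation-error proof of Thm.~\ref{thm:history_state_time-dep}. Second, the gate cost of preparing ordered time registers must be kept linear in $K$ times the register size, for example by sorting networks or sequential inequality tests. Summing these costs gives the stated gate bound, with the $K\,\mathrm{a}$ term coming from the separate ancillas of each block-encoding query.
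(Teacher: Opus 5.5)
Your proposal is correct in outline, but it organizes the construction differently from the paper.

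The paper's order of operations is as follows. It first replaces the $\tau$-integral in $I_\bC(t_m)$ by a Riemann sum with $G\in\cO(c/(\mu\epsbe)+\max_s\|\bC'(s)\|/(\mu^2\epsbe))$ points, keeping the exact transition matrices. It then block-encodes each factor $W_\bA^\dagger(t_{m+1},t_m+\tau_g)$ and $W_\bB(t_{m+1},t_m+\tau_g)$ with an off-the-shelf truncated-Dyson block-encoding of $\cO(1)$ normalization, justified by citation. Every term, and $U_\bD$, is padded to the common constant $\max\{che^2,d/R\}$ with a controlled $R_Y$ rotation. Finally, the terms are combined with a uniform $\mathtt{PREP}_G/\mathtt{SEL}_G$.

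You instead truncate first, which is a direct analogue of Lemma~\ref{lem:Iccutoff} and gives the explicit bound $2e^2ch/(K+1)!$. You then build a single flattened LCU over $(p,q,\tau,t_1,\dots,t_p,s_1,\dots,s_q)$, weighted by the joint region volumes; this is the time-dependent analogue of Lemma~\ref{lem:block-encode_IC}.

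What your route buys:
\begin{itemize}
\item a self-contained truncation analysis;
\item an exact normalization $\sum_{p,q}a^pb^qch^{p+q+1}/(p!q!(p+q+1))\le che^2$ for the $I_\bC$ blocks instead of a padded one, although this is immaterial asymptotically.
\end{itemize}

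What it costs:
\begin{itemize}
\item A bespoke state preparation over a region where $\tau$ must lie below every sampled time. For example: uniform registers on a common grid, move the minimum into the $\tau$ slot, then sort the two groups separately, accepting the usual collision error.
\item A discretization-error analysis that your Step~3 only sketches.
\end{itemize}

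No new idea is needed for that analysis. The error splits into the Riemann error in $\tau$, controlled by your derivative bound $\cO(\mu c+\|\bC'\|)$, plus $\cO(hc)$ times the standard per-factor Dyson discretization error $\cO(h^2\max_s\|\bA'(s)\|/G_A)$. Note that this is the usual Dyson-series discretization bound. It is not the telescoping argument of Thm.~\ref{thm:history_state_time-dep}, which concerns products of the $V_m$ and does not apply here. The paper's modular route inherits all of this by citation.

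One bookkeeping caveat concerns the ancilla count. As you describe it, you store $K$ time registers per side, each of $\log(\max_s\|\bA'(s)\|/(\mu^2\epsbe))$ qubits (and likewise for $\bB$). That costs $\cO(K\log(\cdot))$ ancillas, not the $\cO(\log(\cdot))$ in the statement. Matching the stated count would need a register-saving argument, which the paper likewise only asserts via its references.
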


\begin{proof}
    First, we discuss how to block-encode $I_{\bC}(t_{m-R})$, which is the short-time integral connecting time steps $m$ and $m+1$. For the remainder of this proof, we will refer to $I_{\bC}(t_{m-R})$ as $I_{\bC}(t_m)$ for simplicity since $R$ is just an offset arising from the reverse-order history states.
    We block-encode $I_{\bC}(t_m)$ by discretizing the integral via a Riemann sum. In principle, a better numerical integrator could be used but that would only affect a logarithmic factor in the number of additional gates so we choose the Riemann sum discretization for simplicity.
    
    We can ensure that the discretization error of $I_{\bC}(t_m)$ is at most $\epsbe/2$ for any $m \in \{0, 1, \dots, M-1\}$ by evaluating the integrand at
    \begin{align}
        G := \Bigg\lceil \frac{h^2}{\epsbe} \max_{m \in \{ 0, 1, \dots, M-1 \}}\max_{\tau \in [0,h]} \norm{\frac{\rd }{\rd \tau}{W_{\bA}^\dagger(t_{m+1},t_m + \tau) \bC(t_m + \tau)  W_{\bB}  (t_{m+1},t_m + \tau)} } \Bigg\rceil
    \end{align}
    equidistant grid points.
    Note that from Eq.~\eqref{eq:DysonWA}
    and~\eqref{eq:DysonWB},
    \begin{align}
    \nonumber
       & \frac{\rd}{\rd \tau} W_{\bA}(t_{m+1},t_m + \tau)  = - \bA(t_m + \tau) W_{\bA}(t_{m+1},t_m + \tau) \\
        \implies & \frac{\rd}{\rd \tau} W^\dagger_{\bA}(t_{m+1},t_m + \tau) = -  W^\dagger_{\bA}(t_{m+1},t_m + \tau)\bA^\dagger(t_m + \tau),\\
        & \frac{\rd}{\rd \tau} W_{\bB}(t_{m+1},t_m + \tau)  = -  \bB(t_m + \tau) W_{\bB}(t_{m+1},t_m + \tau).
    \end{align}
     Since $h = 1/\mu$ we have that $\norm{W^\dagger_{\bA}(t_{m+1},t_m + \tau_g)} \leq e$ and $\norm{W_{\bB} (t_{m+1},t_m + \tau_g)} \leq e$. Thus, for $\bC'(s)=\frac{\rd}{\rd s}\bC(s)$,
    \begin{align}
        G \in \cO \lb \frac{h^2}{\epsbe} \lb \mu c + \max_{s \in [0,t]} \norm{\bC'(s)} \rb \rb \subseteq \cO \lb \frac{c}{\mu \epsbe} + \frac{\max_{s \in [0,t]} \norm{\bC'(s)}}{\mu^2 \epsbe} \rb.
    \end{align}

    Let us denote the Riemann sum approximation of $I_{\bC}(t_m)$ by
    \begin{align}
       \tilde{I}_{\bC}(t_m) := \frac{h}{G}\sum_{g=0}^{G-1} W^\dagger_{\bA}(t_{m+1},t_m + \tau_g) \bC(t_m + \tau_g)  W_{\bB} (t_{m+1},t_m + \tau_g),
    \end{align}
    where $\tau_g := hg/G$. 
    
    We can block-encode $W^\dagger_{\bA}(t_{m+1},t_m + \tau_g)$ and $W_{\bB} (t_{m+1},t_m + \tau_g)$ approximately via truncated Dyson series. 
    Let $\tilde{W}_\bA^\dagger$ and $\tilde{W}_\bB$ denote the corresponding approximations, where we suppress the dependence on $t_m$ and $\tau_g$ for simplicity.
    It suffices to truncate the Dyson series at order
    \begin{align}
        K \in \cO \lb \frac{\log \lb 1/\epsilon_{\mathrm{Dyson}} \rb}{\log \log \lb 1/\epsilon_{\mathrm{Dyson}} \rb} \rb,
    \end{align}
    to ensure that the spectral norm error between the true transition matrices and the truncated Dyson series is at most $\epsilon_{\mathrm{Dyson}}$, i.e.
    \begin{align}
        \norm{\tilde{W}_\bA^\dagger - W_{\bA}^\dagger(t_{m+1},t_m + \tau_g)} &\leq \epsilon_{\mathrm{Dyson}}, \\
        \norm{\tilde{W}_\bB - W_\bB(t_{m+1},t_m + \tau_g)} &\leq \epsilon_{\mathrm{Dyson}}.
    \end{align}
  Since these correspond to short-time evolutions, both $\tilde{W}_\bA^\dagger$ and $\tilde{W}_\bB$ can be block-encoded with block-encoding constants scaling like $\cO(1)$ using $K$ queries to $U^\dagger_{\bA}$ and $U_{\bB}$, respectively~\cite{low2019interactionpicture, Berry2024time-dep_history}.
    
    By choosing $\epsilon_{\mathrm{Dyson}} \in \cO \lb \epsbe/(c h) \rb$, we can construct $\epsbe/2$-precise block-encodings $\{U_{\tilde{I}(m,g)}\}$ of the individual terms of the Riemann sum,
    \begin{align}
        W_{\bA}^\dagger(t_{m+1},t_m + \tau_g) \bC(t_m + \tau_g)  W_{\bB} (t_{m+1},t_m + \tau_g) \,h,
    \end{align}
    using $K \in \cO \lb \log \lb c h/\epsbe  \rb \rb$ queries to $U_{\bA}^\dagger$, $K$ queries to $U_{\bB}$, and one query to $U_{\bC}$. The block-encoding constant of any $U_{\tilde{I}(m,g)}$ is upper bounded by $ c h e^2=ce^2/\mu$.

    To block-encode $\mathcal{I}_{\bC, \bD}$, we need need to apply $U_{\tilde{I}(m,g)}$ and $U_{\bD}$ in a controlled manner. As in the time-independent case, we require the block-encoding constants of all $U_{\tilde{I}(m,g)}$ and $U_{\bD}$ to be the same to ensure the correct weighting of the terms in the approximation to the integral. This can be achieved by purposefully increasing the block-encoding constants of all terms to the maximum block-encoding constant of any of the original terms. We can increase a block-encoding constant with the help of a single ancilla qubit. In particular, we perform a controlled $R_Y$ rotation on that ancilla qubit. The rotation angles are determined by the ratios of the original block-encoding constants and the largest block-encoding constant. The block-encoding constant of $U_{\bD}$ is $d/R$. Since we assume that $U_{\bD}$ is an error-free block-encoding of $\bD/R$, we can change its block-encoding cost without affecting the query complexity in the case where $d/R \leq c h e^2$. However, if $d/R > c h e^2$, then we need to adjust the block-encoding constants of $\{U_{\tilde{I}(m,g)}\}$. 
    To cover both scenarios, it suffices to demand
    \begin{align}
        \epsilon_{\mathrm {Dyson}} \in \cO \lb \epsbe/\lambda_{\mathcal{I}_{\bC, \bD}} \rb,
    \end{align}
    where $\lambda_{\mathcal{I}_{\bC, \bD}}   \max \left\{ c he^2, d/R \right\}$.
    
    Letting $\{\tilde{U}_{\tilde{I}(m,g)}\}$ and $\tilde{U}_{\bD}$ denote the renormalized block-encodings, we can then block-encode $\mathcal{I}_{\bC, \bD}$ using the following unitaries:
    \begin{align}
        \mathtt{PREP}_G\ket{0}\ket{m} &:= 
        \begin{cases}
            \ket{0}\ket{m},  &0\leq m \leq R-1, \\
             \frac{1}{\sqrt{G}} \sum_{g=0}^{G-1} \ket{g} \ket{m}, & \text{otherwise}.
        \end{cases} \\
        \mathtt{SEL}_G &:= \one \otimes \sum_{m=0}^{R-1} \ketbra{m}{m} \otimes \tilde{U}_{\bD} + \sum_{g=0}^{G-1} \sum_{m=R}^{R+M-1} \ketbra{g}{g} \otimes \ketbra{m}{m} \otimes \tilde{U}_{\tilde{I}(m,g)}.
    \end{align}
    In particular,
    \begin{align}
        \lb \mathtt{PREP}_G^\dagger \otimes \one \rb \mathtt{SEL}_G \lb \mathtt{PREP}_G \otimes \one \rb
    \end{align}
    is a $(\lambda_{\mathcal{I}_{\bC, \bD}}, \rm a_{\mathcal{I}_{\bC, \bD}}, \epsbe)$-block-encoding of $\mathcal{I}_{\bC, \bD}$. The bound on the error of the block-encoding follows from the triangle inequality. Specifically, we chose the number of discretization points for the Riemann sum such that the discretization error is upper by $\epsbe/2$. Furthermore, we chose the truncation order of the Dyson series such that we can block-encode the Riemann sums within error $\epsbe/2$.
    
    The bound on the number of ancilla qubits,
    \begin{align}
       { \rm a}_{\mathcal{I}_{C, D}}= 2K \rm a+\cO \lb  \log \lb \frac{\max_{s \in [0,t]} \norm{\bA'(s)}}{\mu^2 \, \epsbe}  \rb + \log \lb \frac{\max_{s \in [0,t]} \norm{\bB'(s)} }{\mu^2 \, \epsbe} \rb + \log \lb \frac{c}{\mu \epsbe} + \frac{\max_{s \in [0,t]} \norm{\bC'(s)}}{\mu^2 \epsbe} \rb \rb,
    \end{align}
    follows from Thm.\ 3 in Ref.~\cite{low2019interactionpicture} and Thm.\ 4.1 in Ref.~\cite{Berry2024time-dep_history} plus an additional $\cO \lb \log (G) \rb$ ancilla qubits for implementing the Riemann sums.

    Overall, we therefore require
    \begin{align}
        K \in \cO \lb \frac{\log \lb \lambda_{\mathcal{I}_{\bC, \bD}}/\epsbe \rb}{\log \log \lb \lambda_{\mathcal{I}_{\bC, \bD}}/\epsbe \rb} \rb
    \end{align}
    queries to $U_{\bA}^\dagger$ and $U_{\bB}$, and one query to $U_{\bC}$ and one to $U_{\bD}$.

    The bound on the number of additional gates,
    \begin{align}
    \begin{split}
         &\cO \Bigg( \lb K \rm a + \log \lb \frac{\max_{s \in [0,t]} \norm{\bA'(s)}}{\mu^2 \, \epsbe}  \rb + \log \lb \frac{\max_{s \in [0,t]} \norm{\bB'(s)} }{\mu^2 \, \epsbe} \rb \rb \frac{\log \lb \lambda_{\mathcal{I}_{\bC, \bD}}/\epsbe \rb}{\log \log \lb \lambda_{\mathcal{I}_{\bC, \bD}}/\epsbe \rb} \\
         &\qquad + \log \lb \frac{c}{\mu \epsbe} + \frac{\max_{s \in [0,t]} \norm{\bC'(s)}}{\mu^2 \epsbe} \rb \Bigg),
    \end{split}
    \end{align}
    follows from Thm.\ 3 in Ref.~\cite{low2019interactionpicture} and Thm.\ 4.1 in Ref.~\cite{Berry2024time-dep_history}. There is an additive factor of $\log(G)$ in the gate complexity due to the overhead associated with implementing $\mathtt{PREP}_G$ and $\mathtt{SEL}_G$ for the Riemann sum.
\end{proof}

Now we are ready to state the main theorem of this section, which provides an upper bound on the query complexity of estimating a matrix element in the case of time-dependent matrices.

\begin{theorem}[Cost of estimating a matrix entry in the time-dependent setting (linear systems approach)]
Let $a \ge \|\bA\|$, $b \ge \|\bB\|$, $c \ge \|\bC\|$, and $d \ge \|\bD\|$. Let $U_\bA$, 
$U_\bB$, $U_\bC$, and $U_\bD$ be the block-encodings of $\bA(\tau)/a$, $\bB(\tau)/b$, $\bC(\tau)/c$, and $\bD(\tau)/d$, respectively, which use at most ${\rm a}\ge 0$ ancilla qubits. 
Let $U_\phi$ and $U_\psi$ be the state preparation unitaries that map $\ket 0 \mapsto \ket \phi$ and $\ket 0 \mapsto \psi$, respectively.
    We can estimate
    \begin{align}
        \bra{\phi} \bX(t) \ket{\psi} = \bra{\phi}W_{\bA}^\dagger(t,0) \bD W_{\bB }(t,0)\ket{\psi} + \int_0^t \rd s \, \bra{\phi} W^\dagger_{\bA}(t,s) \bC (s) W_{\bB}(t,s) \ket{\psi} 
    \end{align}
    within additive error $\epsilon$ with probability of success at least $2/3$ using
    \begin{align}
        \cO \lb \frac{ c \tcL_2}{\epsilon} \times \mu \tcL_1 \times \log (c \tcL_2/\epsilon) \rb
    \end{align}
    queries to $U_{\phi}$ and $U_{\psi}$, and their inverses, 
    \begin{align}
        \cO \lb \frac{ c \tcL_2}{\epsilon}\times \mu \tcL_1 \times \log (c \tcL_2/\epsilon) \times \log \lb t \mu c^2 \tcL_1 \tcL_2/(d\epsilon) \rb \rb
    \end{align}
    queries to $U_{\bA}$ and $U_{\bB}$,
    \begin{align}
        \cO \lb \frac{c \tcL_2}{\epsilon}\rb
    \end{align}
    queries to $U_{\bC}$ and $U_{\bD}$, and their inverses, and
    \begin{align}
        \widetilde  \cO \lb \frac{c \tcL_2}{\epsilon} \times \mu \tcL_1 \times \log(t \mu) \times \log \lb \frac 1 \mu \max_{s\in[0,t]} \{t \|\bA'(s)\|,t\|\bB'(s)\|,\|\bC'(s)\|/c\}\rb\rb 
    \end{align}
    additional gates, where
    \begin{align}
        \mu &:= \max \{ a, b \}\;, \\
        \tcL_1 &:= \max_{\bY \in \{ \bA, \bB \}} \lb \norm{W_{\bY}}_{L^1} + \frac{d}{c} \max_{\substack{s' \geq s \\ s,s' \in [0,t]} } \norm{W_{\bY}(s',s)} \rb \;, \\
        \tcL_2 &:= \sqrt{\lb \int_0^t \rd s \, \norm{W_{\bA}(t,s)\ket{\phi}}^2 + \frac{d}{c} \norm{W_{\bA}(t,0)\ket{\phi}}^2 \rb \lb \int_0^t \rd s \, \norm{W_{\bB}(t,s)\ket{\psi}}^2 + \frac{d}{c} \norm{W_{\bB}(t,0)\ket{\psi}}^2 \rb},
    \end{align}
    with
    \begin{align}
        \norm{W_{\bY}}_{L^1} := \max \left\{ \max_{t_1 \in [0,t]} \int_{0}^{t_1} ds \, \norm{W_{\bY}(t_1,s)},   \max_{t_0 \in [0,t]} \int_{t_0}^t ds \, \norm{W_{\bY} (s, t_0)} \right\}.
    \end{align}

    Dropping subdominant logarithmic factors, the overall query complexity to $U_{\phi}$, $U_{\psi}$, $U_{\bA}$, $U_{\bB}$, $U_{\bC}$ and $U_{\bD}$ and their inverses scales like
    \begin{align}
        \widetilde{\cO} \lb \frac{\mu  c}{\epsilon} \tcL_1 \tcL_2 \log (t \mu) \rb.
    \end{align}
\label{thm:matrix_element_time-dep}
\end{theorem}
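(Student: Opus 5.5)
The proof will mirror the proof of Thm.~\ref{thm:mainformal}, with each time-independent ingredient replaced by its time-dependent analogue. The starting point is the overlap reduction
\begin{align*}
\bra{\phi}\bX(t)\ket{\psi}=\sqrt{\cN_\bA\cN_\bB}\,\bra{\wphi_\hist}\cI_{\bC,\bD}\ket{\wpsi_\hist}.
\end{align*}
This identity is exact. It follows from the splitting of the integral into the short-time pieces $I_\bC(t_m)$ and from the composition rules \eqref{eq:WAcomposition}--\eqref{eq:WBcomposition}, which give $W_\bA(t,t_{m+1})$ and $W_\bB(t,t_{m+1})$ as the outer factors. The algorithm first prepares the reverse-order history states with $U_{\hist,\bA}$ and $U_{\hist,\bB}$ from Thm.~\ref{thm:history_state_time-dep}. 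It then applies the block-encoding $U_{\cI_{\bC,\bD}}$ of Lemma~\ref{lem:block-encode_ICD_time-dep} to the $\bB$ history state. Finally, it estimates the real and imaginary parts of the overlap with Heisenberg-limited amplitude estimation~\cite{knill2007optimal}. The unknown constants $p_\bA,p_\bB$ are handled either by estimating them or by reflecting about the history state directly, exactly as in the time-independent proof.

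Next come the parameter choices. Set $M=\lceil \mu t\rceil$, $h=t/M$, and $R\sim \mu d/c$, so that $\lambda_{\cI_{\bC,\bD}}\sim ch\sim d/R\sim c/\mu$. Lemma~\ref{lem:norm_history_state_time-dep} gives $\cN_\bA\lesssim \mu\int_0^t\|W_\bA(t,s)\ket\phi\|^2\rd s+R\|W_\bA(t,0)\ket\phi\|^2$, and likewise for $\cN_\bB$. The rescaling factor therefore satisfies $\lambda_{\cI_{\bC,\bD}}\sqrt{\cN_\bA\cN_\bB}\in\cO(c\tcL_2)$. This forces the following precisions:
\begin{itemize}
\item the overlap precision is $\cO(\epsilon/(c\tcL_2))$, which requires $\cO(c\tcL_2/\epsilon)$ uses of the history-state unitaries and of $U_{\cI_{\bC,\bD}}$;
\item $\epshist,\epsilon_p\in\cO(\epsilon/(c\tcL_2))$;
\item $\epsbe\in\cO(\epsilon/(\mu\tcL_2))$.
\end{itemize}

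The per-call costs then come from Thm.~\ref{thm:history_state_time-dep}, with $R\max_s\|W_\bA(s,0)\|\le (\mu d/c)\max_{s'\ge s}\|W_\bA(s',s)\|$, so the condition-number factor is $\cO(\mu\tcL_1)$. Multiplying gives the following bounds.
\begin{itemize}
\item The queries to $U_\phi,U_\psi$ are $\cO(\frac{c\tcL_2}{\epsilon}\mu\tcL_1\log(c\tcL_2/\epsilon))$.
\item The queries to $U_\bA,U_\bB$ carry the extra Dyson truncation factor $\log((\mu t+R)W_{\max}/\epshist)$. To rewrite it as $\log(t\mu c^2\tcL_1\tcL_2/(d\epsilon))$, bound $R\le \mu d/c$ and $W_{\max}\le c\tcL_1/d$ (since $\tcL_1\ge (d/c)W_{\max}$), and substitute $\epshist$.
\item The block-encoding of $\cI_{\bC,\bD}$ uses $\cO(\log)$ queries to $U_\bA^\dagger,U_\bB$ per call, which is subdominant. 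It uses a single query to each of $U_\bC,U_\bD$ per call, giving $\cO(c\tcL_2/\epsilon)$ queries to these.
\item The gate count picks up the derivative-dependent logarithms from Thm.~\ref{thm:history_state_time-dep} and Lemma~\ref{lem:block-encode_ICD_time-dep}. Combining $\log(t\|\bA'\|/(a\epshist))$ with the $\log(\|\bC'\|/(\mu^2\epsbe))$ terms and dropping subdominant logs yields the stated factor.
\end{itemize}

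The main obstacle is not the query counting but verifying that the reverse-order construction is consistent, so that the overlap identity holds exactly. Three things must line up. The clock index $m\ge R$ must pair $W_\bA(t,t_{m+1-R})$ with $I_\bC(t_{m-R})$. The index shift in $\cI_{\bC,\bD}$ must match the history-state definition. The standard-order linear system with $V_m\approx W_\bA(t_{M-m+1},t_{M-m})$ must produce $W_\bA(t,t_{M-m})\ket\phi$ after the clock reversal. I would check this first on the explicit $M=R=3$ example before doing the bookkeeping. A secondary subtlety is the Hadamard-test reflection about the superposition \eqref{eq:superhist}. Its block weights depend on $\cN_\bA/\cN_\bB$, which must be estimated via~\cite{dalzell2024shortcut} at cost $\cO(\kappa\log(1/\epsilon)/\epsilon)$. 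That cost fits within the same bound because $\kappa\in\cO(\mu\tcL_1)$. Success probability $2/3$ follows from a union bound over the constantly many estimation subroutines.
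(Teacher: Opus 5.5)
Your proposal is correct and follows essentially the same route as the paper's proof. The shared ingredients are the exact overlap reduction with the reverse-order history states of Thm.~\ref{thm:history_state_time-dep}, the block-encoding of $\cI_{\bC,\bD}$ from Lemma~\ref{lem:block-encode_ICD_time-dep}, the choice $R\sim\mu d/c$ giving $\lambda_{\cI_{\bC,\bD}}\sqrt{\cN_\bA\cN_\bB}\in\cO(c\tcL_2)$, the same error budget ($\epshist,\epsilon_p\in\cO(\epsilon/(c\tcL_2))$ and $\epsbe\in\cO(\epsilon/(\mu\tcL_2))$), and the bound $W_{\max}\le c\tcL_1/d$ used to simplify the Dyson-truncation logarithm. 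The paper only asserts the index bookkeeping you flag as the main obstacle, and your check goes through: with $m'=m-R$, the clock term $W_\bA(t,t_{m+1-R})$ pairs with $I_\bC(t_{m-R})$ exactly as in the splitting $\sum_{m'}W_\bA^\dagger(t,t_{m'+1})\,I_\bC(t_{m'})\,W_\bB(t,t_{m'+1})$.
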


\begin{proof}
    The proof is similar to the proof of Thm.~\ref{thm:matrix_element}.
    The key idea is to perform overlap estimation with the following two states:
    \begin{align}
         \ket{\phi'_{\hist}} &:= \ket{0}_{\mathcal{I}_{\bC, \bD}} \otimes U_{\hist, \bA}  \ket 0\ket{0}_{\mathrm{clock}}\ket{0}_{\hist} \;,\\
        \ket{\mathcal{I} \psi'_{\hist}} &:= \tilde{U}_{\mathcal{I}_{\bC,\bD}} \lb \ket{0}_{\mathcal{I}_{\bC, \bD}} \otimes U_{\hist, \bB}   \ket 0 \ket{0}_{\mathrm{clock}} \ket{0}_{\hist}\rb,
    \end{align}
    where $\tilde{U}_{\mathcal{I}_{\bC,\bD}}$ is a block-encoding of
    \begin{align}
        \mathcal{I}_{\bC,\bD} \otimes \ketbra{0}{0}_{\hist},
    \end{align}
    which can be easily constructed from a block-encoding of $\mathcal{I}_{\bC,\bD}$.
    Note that
    \begin{align}
        \bra{\phi'_{\rm hist}}  \ket{\mathcal{I} \psi'_{\hist}} = \frac 1 {\lambda_{\cI_{\bC,\bD}}} \sqrt{\frac{p_\bA p_\bB}{\cN_\bA \cN_\bB}}
        \bra \phi \bX(t) \ket \psi \;,
    \end{align}
    reducing the problem to overlap estimation.
    Here $p_\bA>0$ and $p_\bB>0$ are constants.

    We can use Thm.~\ref{thm:history_state_time-dep} to implement $U_{\hist, \bA}$ and $U_{\hist, \bB}$ within error $\epshist$
    that we bound below. 
    We can also estimate $p_\bA$ and $p_\bB$ within some intermediate error tolerance $\epsilon_p$ via amplitude estimation, which is needed to obtain $\bra \phi \bX(t) \ket \psi$ from the above expression, using $\cO \lb 1/\epsilon_p \rb$ queries to $U_{\hist, \bA}$, $U_{\hist, \bB}$, and inverses.
    Moreover, we can use Lemma~\ref{lem:block-encode_ICD_time-dep} to block-encode $\mathcal{I}_{\bC, \bD}$ within some error $\epsbe$ and with block-encoding constant $\lambda_{\mathcal{I}_{\bC,\bD}}$.

    We choose the number of additional time steps $R \in \Theta \lb \frac{\mu d}{c} \rb$, but note that other options might be more optimal depending on the problem. This choice suffices for us. Then our estimate of $\bra{\phi} \bX(t) \ket{\psi}$ from the overlap estimation of $\ket{\phi'_{\hist}}$ and $\ket{\mathcal{I} \psi'_{\hist}}$ will be normalized by a factor (inverse) of
    \begin{align}
    \begin{split}
         & \sqrt{\mathcal{N}_\bA \mathcal{N}_\bB} \lambda_{\mathcal{I}_{\bC, \bD}}   \\
         &\subseteq \cO \Bigg( \sqrt{\lb \mu \int_0^t \rd s \, \norm{W_{\bA} (t,s) \ket{\phi}}^2 + R  \norm{W_{\bA} (t,0)\ket{\phi}}^2 \rb \lb \mu \int_0^t \rd s \, \norm{W_{\bB } (t,s)\ket{\psi}}^2 + R  \norm{W_{\bB } (t,0)\ket{\psi}}^2 \rb} \\
         &\qquad \quad \times \max \left\{{c e^2h} , \frac{d}{R} \right\} \Bigg) \\
         &\subseteq \cO \lb c \tcL_2 \rb,
    \end{split}
    \end{align}
    where $h=1/\mu$.
    This means that we need to perform amplitude estimation within error $\cO \lb \epsilon/\lb c \tcL_2 \rb \rb$. 
    Additionally, to ensure that the additive error for our estimate of $\bra{\phi} \bX(t) \ket{\psi}$ is at most $\epsilon$, it suffices if
    \begin{align}
        \epsilon_p \in \cO \lb \frac{\epsilon}{ c \tcL_2}\rb, \ \epshist  \in \cO \lb \frac{\epsilon}{ c \tcL_2} \rb 
        \end{align}
        and
       \begin{align}
        \epsbe &\in \cO \lb \frac{\epsilon}{\mu \tcL_2} \rb.
    \end{align}
    This is similar to Thm.~\ref{thm:matrix_element}.

    We are now ready to bound all complexities.
    \begin{itemize}
        \item From Thm.~\ref{thm:history_state_time-dep} we have the complexities of $U_{\rm hist,\bA}$
        and $U_{\rm hist,\bB}$ with precision $\epshist$ as above. For amplitude estimation at the Heisenberg limit, these unitaries and their inverses are needed $\cO(c \tcL_2/\epsilon)$ times. Since $R \sim \mu d /c$, this readily gives the number of uses of $U_\phi$, $U_\psi$, and inverses that is
        \begin{align}
            \cO \lb \frac{c \tcL_2}{\epsilon} \times \lb \max_{\bY\in\{\bA,\bB\}}\mu \norm{W_{\bY}}_{L^1} + R \max_{s \in [0,t]} \norm{W_{\bY}(s,0)} \rb \times \log (c\tcL_2/\epsilon) \rb = \cO\lb \frac{c \tcL_2}{\epsilon} \times \mu \tcL_1 \times \log (c\tcL_2/\epsilon)\rb\;. 
        \end{align}

        \item Also from  Thm.~\ref{thm:history_state_time-dep}, the number of queries to $U_\bA$ and $U_\bB$
        contain an extra logarithmic factor like $\log((t \mu + R) \max \|W_\bY(s',s)\|/\epshist)$.
        Considering a safe bound for all the parameters in this factor, we obtain
        for the number of queries to $U_\bA$ and $U_\bB$, and inverses, 
          \begin{align}
         Q=  \cO\lb \frac{c \tcL_2}{\epsilon} \times \mu \tcL_1 \times \log (c\tcL_2/\epsilon) \times \log (t \mu c^2 \tcL_1 \tcL_2/(d \epsilon))\rb\;. 
        \end{align}
We used $\max\|W_\bY(s',s)\| \le c\tcL_1/d$.

\item From overlap estimation at the Heisenberg limit, within error $\cO(\epsilon/(c\tcL_2))$, the number of queries to $U_\bC$, $U_\bD$, and the inverses is
\begin{align}
 Q'  = \cO \lb \frac{c \tcL_2}{\epsilon}\rb.
\end{align}

\item The gate complexity also follows from 
Thm.~\ref{thm:history_state_time-dep} and the gate complexity of $U_{\cI_{\bC,\bD}}$ from Lemma~\ref{lem:block-encode_ICD_time-dep}. 
The first contribution brings another factor to the query complexity $Q$, which is $\log ((t/\mu \epshist)\max\{ \|\bA'(s)\|, \|\bB'(s)\| \})$.
The second contribution brings another factor to $Q'$ that, in particular, contains a factor of the form $\log(\max_s \|\bC'(s)\|/(\mu^2 \epsbe))$.
 While there are other logarithmic factors, we will drop them to simplify the notation and obtain
 \begin{align}
     Q \log ((tc \tcL_2/\mu \epsilon)\max\{ \|\bA'(s)\|, \|\bB'(s)\| \}) + Q' 
     \log(\tcL_2 \max_s \|\bC'(s)\|/(\mu \epsilon))\;,
 \end{align}
 or
 \begin{align}
     \widetilde \cO \lb \frac{c \tcL_2}{\epsilon} \times \mu \tcL_1 \times \log (c\tcL_2/\epsilon) \times \log (t \mu c^2 \tcL_1 \tcL_2/(d \epsilon)) \times \log \lb  (\tcL_2/(\mu \epsilon))\max\{tc \|\bA'(s)\|,tc\|\bB'(s)\|, \|\bC'(s)\|\}\rb\rb
 \end{align}
 for the gate complexity, after replacing for $\epshist$ and $\epsbe$.
    \end{itemize}

We can simplify the expressions by dropping subdominant logarithmic factors.
Using the definition of $\tcL_1$, the overall query complexity to the state preparation unitaries and all block-encodings is 
\begin{align}
  \widetilde  \cO \lb \frac{c \tcL_2}{\epsilon} \times \mu \tcL_1 \times \log(t \mu)\rb \;,
\end{align}
being dominated by $Q$. The overall gate complexity is simplified to
\begin{align}
  \widetilde  \cO \lb \frac{c \tcL_2}{\epsilon} \times \mu \tcL_1 \times \log(t \mu) \times \log \lb \frac 1 \mu \max_{s\in[0,t]} \{t \|\bA'(s)\|,t\|\bB'(s)\|,\|\bC'(s)\|/c\}\rb\rb \;.
\end{align}

\end{proof}

The theorem below discusses the complexity of estimating an entry of the solution matrix when the history states are prepared via LCHS.

\begin{theorem}[Cost of estimating a matrix entry in the time-dependent setting (LCHS approach)]
    Let $a \ge \|\bA\|$, $b \ge \|\bB\|$, $c \ge \|\bC\|$, and $d \ge \|\bD\|$. Let $U_\bA$, 
    $U_\bB$, $U_\bC$, and $U_\bD$ be the block-encodings of $\bA(\tau)/a$, $\bB(\tau)/b$, $\bC(\tau)/c$, and $\bD(\tau)/d$, respectively, which use at most ${\rm a}\ge 0$ ancilla qubits. 
    Let $U_\phi$ and $U_\psi$ be the state preparation unitaries that map $\ket 0 \mapsto \ket \phi$ and $\ket 0 \mapsto \psi$, respectively.
    Define $\mu :=\max\{a,b\}$ and
    \begin{align}
        \cL_2 := \sqrt{\lb \frac{d}{c} \, e^{2 \int_{0}^t \rd \tau \, \xi_\bA(\tau)} + \int_0^t \rd s \, e^{2 \int_{s}^t \rd \tau \, \xi_\bA(\tau)} \rb \lb \frac{d}{c} \, e^{2 \int_{0}^t \rd \tau \, \xi_\bB(\tau)} +  \int_0^t \rd s \, e^{2 \int_{s}^t \rd \tau \, \xi_\bB(\tau)} \rb},
    \end{align}
    with $\xi_{\bA}(\tau)$ $(\xi_{\bB}(\tau))$ being a known function upper bounding the log-norm of $\bA(\tau)$ $(\bB(\tau))$ for any $\tau \in [0,t]$.
    Then we can estimate
     \begin{align}
        \bra{\phi} \bX(t) \ket{\psi} = \bra{\phi}W_{\bA}^\dagger(t,0) \bD W_{\bB }(t,0)\ket{\psi} + \int_0^t \rd s \, \bra{\phi} W^\dagger_{\bA}(t,s) \bC (s) W_{\bB}(t,s) \ket{\psi} 
    \end{align}
    within additive error $\epsilon$ with probability of success at least $2/3$ using
    \begin{align}
         \cO \lb \frac{c \cL_2}{\epsilon} \rb
    \end{align}
    queries to controlled-$U_{\phi}$, controlled-$U_{\psi}$, controlled-$U_{\bC}$, controlled-$U_{\bD}$ and their inverses, 
    \begin{align}
         \cO \lb \frac{c \cL_2}{\epsilon} \times t \mu \times \log(c \cL_2/\epsilon) \times \log \lb t \mu c \cL_2 \log(c \cL_2/\epsilon)/\epsilon \rb \rb
    \end{align}
    queries to controlled-$U_{\bA}$, controlled-$U_{\bB}$ and their inverses, and an additional
    \begin{align}
        \cO \lb \frac{c \cL_2}{\epsilon} \times \lb t \mu + \frac{\mu d}{c} \rb \times \mathrm{polylog} \lb t\mu, c \cL_2/\epsilon, \max_{s \in [0,t]} \left\{ g_t \norm{\bA'(s)}, g_t \norm{\bB'(s)}, \norm{\bC'(s)}/(c \mu) \right\}  \rb \rb
    \end{align}
    primitive gates, where $g_t := \max \{ t/a, t/b, t^2 \}$.
\label{thm:matrix_element_LCHS_time-dep}
\end{theorem}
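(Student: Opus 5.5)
The proof will follow the template of Thm.~\ref{thm:be-approach_time-indep}, with the time-dependent ingredients swapped in: history states from Thm.~\ref{thm:history_state_LCHS_time-dep} and the block-encoding of $\cI_{\bC,\bD}$ from Lemma~\ref{lem:block-encode_ICD_time-dep}.

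First, I use Thm.~\ref{thm:history_state_LCHS_time-dep} to build $U^{(\LCHS)}_{\hist,\bA}$ and $U^{(\LCHS)}_{\hist,\bB}$. These prepare, with a success flag, the reverse-order history states $\ket{\wphi_\hist}$ and $\ket{\wpsi_\hist}$, normalized by $\sqrt{\cN_{\xi_\bA}}$ and $\sqrt{\cN_{\xi_\bB}}$. The clock amplitudes $e^{\int\xi}$ are known, so no success probabilities need to be estimated; this is the key difference from the linear-systems version. I set $R\in\Theta(\mu d/c)$ and $M=\lceil t\mu\rceil$. I then define $\ket{\phi''_\hist}$ and $\ket{\cI\psi''_\hist}$ exactly as in the time-independent proof. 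The reduction $\bra\phi\bX(t)\ket\psi=\sqrt{\cN_\bA\cN_\bB}\bra{\wphi_\hist}\cI_{\bC,\bD}\ket{\wpsi_\hist}$ then gives
\begin{align}
\bra{\phi''_\hist}\ket{\cI\psi''_\hist}=\frac{\bra\phi\bX(t)\ket\psi}{\lambda_{\cI_{\bC,\bD}}\sqrt{\cN_{\xi_\bA}\cN_{\xi_\bB}}}.
\end{align}

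Next I bound the normalization. Theorem~\ref{thm:history_state_LCHS_time-dep} gives $\cN_{\xi_\bY}\in\cO(\mu\int_0^t e^{2\int_s^t\xi_\bY}\,\rd s+R\,e^{2\int_0^t\xi_\bY})$. With $R\sim\mu d/c$ this is $\cO(\mu(\int_0^t e^{2\int_s^t\xi_\bY}\,\rd s+\tfrac dc e^{2\int_0^t\xi_\bY}))$. Lemma~\ref{lem:block-encode_ICD_time-dep} gives $\lambda_{\cI_{\bC,\bD}}\le\max\{che^2,d/R\}\in\cO(c/\mu)$. The product is therefore $\cO(c\cL_2)$. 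Hence Heisenberg-limited overlap estimation~\cite{knill2007optimal}, run once for the real part and once for the imaginary part via Hadamard tests, needs precision $\cO(\epsilon/(c\cL_2))$. That costs $\cO(c\cL_2/\epsilon)$ uses of the history-state unitaries and of $U_{\cI_{\bC,\bD}}$. This already gives the claimed count for $U_\phi$, $U_\psi$, $U_\bC$ and $U_\bD$, since each is used once per history state or per block-encoding.

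For the error budget I set $\epshist\in\cO(\epsilon/(c\cL_2))$ and $\epsbe\in\cO(\epsilon/(\mu\cL_2))$, and combine them by the triangle inequality as before. Plugging $\epshist$ into the query bound $\cO(t\mu\log(1/\epshist)\log(t\mu\log(1/\epshist)/\epshist))$ of Thm.~\ref{thm:history_state_LCHS_time-dep}, and multiplying by $c\cL_2/\epsilon$, gives the stated $U_\bA,U_\bB$ count. The $2K$ queries inside $U_{\cI_{\bC,\bD}}$ contribute only $\cO(\log(\lambda/\epsbe))$ per use and are subdominant. The gate count adds the $(t\mu+R)\,\mathrm{polylog}$ cost of Thm.~\ref{thm:history_state_LCHS_time-dep}, with $R=\mu d/c$, to the polylogarithmic gate cost of Lemma~\ref{lem:block-encode_ICD_time-dep}. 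The latter has its derivative terms $\|\bC'\|/(\mu^2\epsbe)$ and $\|\bA'\|/(\mu^2\epsbe)$; these become $\|\bC'\|/(c\mu)$ inside the polylog after substituting $\epsbe$ and absorbing the remaining factors $c\cL_2/\epsilon$ into the polylog arguments, and the $\|\bA'\|,\|\bB'\|$ terms are dominated by the $g_t\|\bA'\|,g_t\|\bB'\|$ arguments.

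The main obstacle I expect is consistency between the shifted-LCHS construction and the block-encoding of $\cI_{\bC,\bD}$. The history states must use the same clock ordering and offset, including the reverse order with the $R$ runway first, and the same $t_m$ grid with $h\le 1/\mu$. Otherwise the diagonal blocks $I_\bC(t_{m-R})$ will not line up with $W_\bA(t,t_{m+1-R})$ and $W_\bB(t,t_{m+1-R})$. I will check the index bookkeeping explicitly, using the composition rule Eq.~\eqref{eq:WAcomposition} and the decomposition of the integral into the $I_\bC(t_m)$. The remaining steps are parameter substitution.
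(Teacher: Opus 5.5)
Your proposal is correct and follows the paper's proof essentially step for step. Both use LCHS history states from Thm.~\ref{thm:history_state_LCHS_time-dep} with known normalizations $\cN_{\xi_\bA},\cN_{\xi_\bB}$ (so there are no success probabilities to estimate), the block-encoding of $\cI_{\bC,\bD}$ from Lemma~\ref{lem:block-encode_ICD_time-dep}, the bound $\lambda_{\cI_{\bC,\bD}}\sqrt{\cN_{\xi_\bA}\cN_{\xi_\bB}}\in\cO(c\cL_2)$ for $R\in\Theta(\mu d/c)$, and the same error budget $\epshist\in\cO(\epsilon/(c\cL_2))$, $\epsbe\in\cO(\epsilon/(\mu\cL_2))$ feeding Heisenberg-limited overlap estimation; the index alignment you flag as the main obstacle is taken for granted in the paper (via the decomposition of the integral into the $I_\bC(t_m)$ given before Lemma~\ref{lem:block-encode_ICD_time-dep}), and it does go through using the composition rule Eq.~\eqref{eq:WAcomposition}.
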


\begin{proof}
    The proof is similar to the proof of Theorem~\ref{thm:matrix_element_time-dep}.
    In particular, to estimate $\bra{\phi} \bX(t) \ket{\psi}$, we also perform overlap estimation with history states. However, the history states have a different normalization constant compared to the history states in Theorem~\ref{thm:matrix_element_time-dep}. More specifically, let $U_{\hist, \bA}^{(\LCHS)}$ be a unitary such that
    \begin{align}
       U_{\hist, \bA}^{(\LCHS)} \ket{0}_{\mathrm{f}}\ket{0}_{\mathrm{clock}}\ket{0}_{\hist} = \ket 0_{\rm f} \frac{1}{\sqrt{\cN_{\xi_\bA}}}\lb \sum_{m=0}^{R-1} \ket{m}_{\rm clock} W_{\bA} (t,0) \ket{\phi} + \sum_{m=R}^{R+M-1} \ket{m}_{\rm clock} W_{\bA} (t,t_{m+1-R}) \ket{\phi} \rb + \ket{0^\perp} \; ,
    \end{align}
    where the $\rm f$ register flags success, and
    \begin{align}
        \cN_{\xi_A} = \lb \sum_{m=0}^{R-1} e^{2 \int_0^t \rd \tau \, \xi_A(\tau)} + \sum_{m=R}^{M+R-1} e^{2 \int_{t_{m+1-R}}^t \rd \tau \, \xi_A(\tau)} \rb \in \cO \lb R \, e^{2 \int_{0}^t \rd \tau \, \xi_A(\tau)} + a \int_0^t \rd s \, e^{2 \int_{s}^t \rd \tau \, \xi_A(\tau)} \rb,
    \end{align}  
    with $\cN_{\xi_\bA} = \lb \sum_{m=0}^{R-1} e^{2 \int_0^t \rd \tau \, \xi_\bA(\tau)} + \sum_{m=R}^{M+R-1} e^{2 \int_{t_{m+1-R}}^t \rd \tau \, \xi_\bA(\tau)} \rb$, and define $U_{\hist, \bB}^{(\LCHS)}$ similarly, using $\bB$ and $\ket \psi$ instead. These history state unitaries can be constructed using Theorem~\ref{thm:history_state_LCHS_time-dep}.
    Accordingly, we define
    the following two normalized states:
    \begin{align}
        \ket{\phi''_{\hist}} &:= \ket{0}_{\mathcal{I}}  \otimes \lb U_{\hist, \bA}^{(\LCHS)}  \ket{0}_\mathrm{f} \ket{0}_{\mathrm{clock}} \ket{0}_{\hist}\rb \\
        \ket{\mathcal{I} \psi''_{\hist}} &:=  {U}_{\mathcal{I}_{\bC,\bD}}
        \ket{0}_{\mathcal{I}}   \otimes \lb U_{\hist, \bB}^{(\LCHS)}  \ket 0_\mathrm{f} \ket{0}_{\mathrm{clock}} \ket{0}_{\hist}\rb \;,
    \end{align}
    where ${U}_{\mathcal{I}_{\bC,\bD}}$ is a unitary  block-encoding of
    \begin{align}
        \mathcal{I}_{\bC,\bD} \otimes \ketbra{0}{0}_f ,
    \end{align}
    that uses the $\cI$ register of ancillas and normalization constant $\lambda_{\cI_{\bC,\bD}}$.
    This block-encoding can be constructed from the block-encoding of $\mathcal{I}_{\bC,\bD}$ in Lemma~\ref{lem:block-encode_ICD_time-dep}. Note that
    \begin{align}
    \bra{\phi''_{\hist}}\ket{\cI \psi''_\hist} = \frac{1}{\lambda_{\cI_{\bC,\bD}} \sqrt{\cN_{\xi_{\bA}} \cN_{\xi_{\bB}}}} 
    \bra{\phi} \bX(t) \ket{\psi} \;,
    \end{align}
    with
    \begin{align}
        \lambda_{\cI_{\bC,\bD}} \sqrt{\cN_{\xi_{\bA}} \cN_{\xi_{\bB}}} \in \cO \lb c \cL_2 \rb.
    \end{align}
    
    Thus, in order to estimate $\bra{\phi} \bX(t) \ket{\psi}$ within additive error $\epsilon$, we need to perform amplitude estimation within error $\cO \lb \epsilon/\lb  c \cL_2 \rb \rb$. Additionally, we require the errors of the history states and the block-encoding of $\mathcal{I}_{\bC,\bD}$ to satisfy
    \begin{align}
        \epsilon_{\hist} \in \cO \lb \frac{\epsilon}{c \cL_2} \rb, \quad
        \epsilon'_{\cI_{\bC, \bD}}  \in \cO \lb \frac{\epsilon}{\mu \cL_2} \rb.
    \end{align}
    Therefore, we require a total of
    \begin{align}
        \cO \lb \frac{c \cL_2}{\epsilon} \rb
    \end{align}
    queries to controlled versions of $U_{\phi}$, $U_{\psi}$, $U_{\bC}$, $U_{\bD}$ and their inverses,
    \begin{align}
         \cO \lb \frac{c \cL_2}{\epsilon} \times t \mu \times \log(c \cL_2/\epsilon) \times \log \lb t \mu c \cL_2 \log(c \cL_2/\epsilon)/\epsilon \rb \rb
    \end{align}
    queries to controlled versions of $U_{\bA}$ and $U_{\bB}$ and their inverses.
    For the number of additional gates, we have the following contribution from Theorem~\ref{thm:history_state_LCHS_time-dep} for the history state preparations:
    \begin{align}
        \cO \lb \frac{c \cL_2}{\epsilon} \times \lb t\mu + \frac{\mu d}{c}\rb \times \mathrm{polylog} \lb t\mu, c \cL_2/\epsilon, g_t \max_{s \in [0,t]} \left\{ \norm{\bA'(s)}, \norm{\bB'(s)} \right\} \rb \rb.
    \end{align}
    The contribution to the gate complexity of $U_{\cI_{\bC,\bD}}$ can be seen from Lemma~\ref{lem:block-encode_ICD_time-dep}. In particular, we have the following contribution:
    \begin{align}
        \cO \lb \frac{c \cL_2}{\epsilon} \times \log(\cL_2 \max_s \|\bC'(s)\|/(\mu \epsilon)) \rb.
    \end{align}
    The overall additional gate complexity therefore scales like
    \begin{align}
        \cO \lb \frac{c \cL_2}{\epsilon} \times \lb t \mu + \frac{\mu d}{c} \rb \times \mathrm{polylog} \lb t\mu, c \cL_2/\epsilon, \max_{s \in [0,t]} \left\{ g_t \norm{\bA'(s)}, g_t \norm{\bB'(s)}, \norm{\bC'(s)}/(c \mu) \right\}  \rb \rb.
    \end{align}
\end{proof}

In Table~\ref{tab:compare_time-dep} we summarize the cost of estimating $\bra{\phi} \bX(t) \ket{\psi}$ within additive error $\epsilon$ in the time-dependent setting using either quantum linear system solvers or LCHS for the history state preparation. Number of queries to the unitaries also include their inverses.

\begin{table}[t]
    \renewcommand{\arraystretch}{1.7}
    \setlength{\tabcolsep}{6pt}
    \centering
    \begin{tabular}{|c|c|c|}
    \hline
         & Linear systems approach & LCHS approach \\
         \hline
         \# queries to $U_{\phi}$ and $U_{\psi}$ & $\cO \lb \frac{ c \tcL_2}{\epsilon} \times \mu \tcL_1 \times \log (c \tcL_2/\epsilon) \rb$  & $\cO \lb \frac{c \cL_2}{\epsilon} \rb$\\
         \# queries to $U_{\bA}$ and $U_{\bB}$ & $\begin{aligned}[t] & \textstyle \cO \Big( \frac{ c \tcL_2}{\epsilon} \times \mu \tcL_1 \times \log \lb \frac{c \tcL_2}{\epsilon} \rb \\ & \textstyle \quad \times \log \lb \frac{t \mu  c^2 \tcL_2 \tcL_1}{d\epsilon} \rb\Big) \end{aligned}$ & $\begin{aligned}[t] & \textstyle \cO \Big( \frac{c \cL_2}{\epsilon} \times t \mu \times \log \lb \frac{c \cL_2}{\epsilon} \rb \\ & \textstyle \quad \times \log \lb \frac{t \mu c \cL_2 \log(c \cL_2/\epsilon)}{\epsilon} \rb \Big) \end{aligned}$ \\
         \# queries to $U_{\bC}$ and $U_{\bD}$ & $\cO \lb \frac{c \tcL_2}{\epsilon}\rb$ & $\cO \lb \frac{c \cL_2}{\epsilon} \rb$ \\
         \hline
    \end{tabular}
    \caption{Comparison of the cost for estimating $\bra{\phi} \bX(t) \ket{\psi}$ within additive error $\epsilon$ in the time-dependent setting using either quantum linear system solvers or LCHS for the history state preparation; see Theorems~\ref{thm:matrix_element_time-dep} and \ref{thm:matrix_element_LCHS_time-dep} for more details.
    }
\label{tab:compare_time-dep}
\end{table}

 \section{Derivation of quantum Lyapunov equation for non-interacting fermions}
 \label{app:QLE}

A direct application of our quantum algorithm
is that for solving differential Sylvester equations~\cite{behr2019solution},
where a prominent example is the well known (differential) Lyapunov equation. 
While these equations appear in many fields like control theory, applied math, and physics, here we focus on a derivation and an application to simulating open quantum system dynamics.
Later, we will apply this analysis to large free fermion systems interacting with a bath at inverse temperature $\beta \ge 0$. 
In this way, we are able to extend the results in Ref.~\cite{somma2025shadowhamiltoniansimulation} on ``shadow Hamiltonian simulation'' and Refs.~\cite{stroeks2024solvingfreefermionproblems,chen2024quantum}
for thermal states of free fermion systems
to the case of stochastic noise and dissipative dynamics.

Our starting point is the differential Lyapunov equation for quantum systems that are experiencing dissipation and quantum fluctuations.
In general, when the system interacts with a bath, 
the equation of motion for an operator $O(t)$
in the Heisenberg picture is generalized as
\begin{align}
\label{eq:QLE}
    \frac{\rd}{\rd t}O(t) = \ri [H_S, O(t)] + L(O(t)) + \eta(t) \;,
\end{align}
where $H_S$ is the system's Hamiltonian that produces unitary dynamics, $L(.)$ is the dissipation term that we will assume to be linear, and $\eta(t)$ is an operator
that accounts for the quantum noise due to the bath. This differs from the Lindblad equation
in the Heisenberg picture (i.e., the ``adjoint'' Lindblad equation), in that thus far we are tracking the operator and not its expectations.

This noise operator must satisfy certain properties
in order to preserve the commutation relations
between the time-dependent operators $O(t)$; in contrast with the classical case, quantum operators cannot ``decay'' in time.
Moreover, within the context of the fluctuation-dissipation theorem, in order for the system to evolve towards thermal equilibrium, both $L$ and $\eta$ must satisfy 
a consistency condition.

For concreteness, we consider a quantum system of free fermions that is also interacting with a bath 
of free fermions. The following derivation
uses standard assumptions in physics; see Ref.~\cite{purkayastha2022lyapunov} for a related derivation.
Let the system-bath Hamiltonian be $H=H_S + H_B + H_{\rm int}$, where $H_S$ is the system's Hamiltonian, $H_B$ is the bath's Hamiltonian, and $H_{\rm int}$ models the interactions. These are:
\begin{align}
    H_S &:= \sum_{j,k=1}^{N_s} \alpha_{jk} c^\dagger_j c^{}_k\;,\\
    H_B & := \sum_{l=1}^{N_b} \epsilon_l b^\dagger_l b^{}_l \;,\\
    H_{\rm int}& :=\sum_{j=1}^{N_s} \sum_{l=1}^{N_b} \lb \nu_{jl} c^\dagger_j b^{}_l - \overline {\nu_{jl}}  c^{}_j b^{\dagger}_l \rb \;,
\end{align}
where $c^\dagger_j$ and $c^{}_j$ are the fermionic creation and annihilation operators for the system that satisfy the anticommutation relations $\{c^\dagger_j, c^{}_k\}=\delta_{j,k}$ and $\{c^\dagger_j, c^{\dagger}_k\}=\{c^{}_j, c^{}_k\}=0$. The coefficients $\alpha_{jk} \in \mathbb C$
define the $N_s \times N_s$ matrix $\bA$ which is Hermitian.
Similarly, $b^\dagger_l$ and $b^{}_l$ are the fermionic creation and annihilation operators for the bath, where $\epsilon_l \in \mathbb R$ since the Hamiltonian $H_B$ is already diagonal. Also, $\{c^\dagger_j,b^\dagger_l\}= \{c^\dagger_j,b^{}_l\}=\{c^{}_j,b^{}_l\}=0$, since the system and bath fermionic modes are distinct.
Our end goal is to write a differential equation
for the expectations $\langle c^\dagger_j c_k^{}(t)\rangle:=\tr(c^\dagger_j c^{}_k \rho(t))$,
when the initial state is $\rho(0)=\rho$, in the form of a Lyapunov differential equation. These expectations define an $N_s \times N_s$ matrix $\bX(t)$, referred to as the covariance matrix.

Since the algebra of fermionic operators is conserved with quadratic Hamiltonians, it is simple to write the equations of motion for these from Heisenberg's equations:
\begin{align}
    \frac{\rd}{\rd t}c^{}_j(t) &= \ri [H,c^{}_j(t)]= \ri \lb -\sum_{k=1}^{N_s} \alpha_{jk} c^{}_k(t) - \sum_{l=1}^{N_b} \nu_{jl} b_l(t) \rb \; , \\
     \frac{\rd}{\rd t}b^{}_l(t) &=\ri [H,b^{}_l(t)]=\ri \lb -\epsilon_l b^{}_l (t)- \sum_{j=1}^{N_s} \overline{\nu_{jl}} c^{}_j(t) \rb \;.
\end{align}
These can be easily integrated and, in particular,
\begin{align}
    b_l(t) =  e^{-\ri t \epsilon_l} b_l(0)- \ri \sum_{j=1}^{N_s}  \overline{\nu_{jl}} e^{-\ri t \epsilon_l} \int_0^t \rd s \; e^{\ri s \epsilon_l} c^{}_j(s) \;,
\end{align}
which gives
\begin{align}
\label{eq:OQS}
   \frac{\rd}{\rd t}c^{}_j(t) =-\ri  \sum_{k=1}^{N_s} \alpha_{jk} c^{}_k(t)  - \sum_{k=1}^{N_s} \sum_{l=1}^{N_b} \nu_{jl}\overline{\nu_{kl}}  e^{-\ri t \epsilon_l} \int_0^t \rd s \; e^{\ri s \epsilon_l}c^{}_k(s)-\ri \sum_{l=1}^{N_b} \nu_{jl} e^{-\ri t \epsilon_l}b_l(0)\;.
\end{align}
This last equation can also be integrated. It is not readily in the form of Eq.~\eqref{eq:QLE} since the dissipative term
contains an integral over all times.

To this end, we make  a standard approximation in open quantum systems, which is essentially a Markov approximation where the kernel of the integral is proportional to the Dirac delta distribution $\delta(s-t)$. 
This occurs when the bath contains an infinite number of fermionic modes ($N_b \gg N_s$), so that we can approximate
\begin{align}
  \sum_{l=1}^{N_b} \nu_{jl}\overline{\nu_{kl}}  e^{-\ri (t-s) \epsilon_l}     \mapsto 
  \int_{-\infty}^\infty \rd \omega \; s(\omega) V_{j}(\omega)\overline{V_k(\omega)} e^{-\ri (t-s) \omega} \;,
\end{align}
where $s(\omega)$ is the density of states for the bath at frequency (energy) $\omega$,
and $s(\omega) V_{j}(\omega)\overline{V_k(\omega)}$ denotes the coupling. That is, the $V_j(\omega)$'s are the $v_{jl}$'s in the continuum where $\omega \in \mathbb R$.
If we further assume these couplings to be constant,
then
\begin{align}
  \int_{-\infty}^\infty \rd \omega \;  s(\omega) V_{j}(\omega)\overline{V_k(\omega)} e^{-\ri (t-s) \omega} \mapsto   \gamma_{jk}\delta(t-s) \;.   
\end{align}
Here, $ \gamma_{jk}:= 2\pi  s(\omega) V_{j}(\omega)\overline{V_k(\omega)}$ is independent of $\omega$ under the assumption. Nevertheless, under realistic physical assumptions where the system interacts locally with the bath, the coefficients 
$ \gamma_{jk}$ could be sparse. This allows us to generalize the analysis by introducing a Hermitian and positive semidefinite matrix $\bGamma$ of dimension $N_s \times N_s$ that encodes the interactions and the corresponding dissipative coefficients
$\gamma_{jk}$. In the prior example, these are $ \gamma_{jk}:= 2\pi  s(\omega) V_{j}(\omega)\overline{V_k(\omega)}$, 
but the matrix $\bGamma$ can be sparse in general. This occurs, for example, when each fermionic mode -- or a subset of neighboring modes -- interacts with a bath that is localized within the mode's region of space.

The second term in Eq.~\eqref{eq:OQS} can now be replaced by
$  - \sum_{k=1}^{N_s} \gamma_{jk} c_k(t)$, and the equation
takes now the form of Eq.~\eqref{eq:QLE}: 
\begin{align}
L(c^{}_j(t))& = - \sum_{k=1}^{N_s} \gamma_{jk} c_k(t) \; , \\
    \eta_j(t) &= -\ri \int \rd \omega \; \sqrt{s(\omega)} V_j(\omega) e^{-\ri t \omega}b_\omega(0)\;.
    \end{align}
In summary, 
\begin{align}
\label{eq:QLE2}
    \frac{\rd}{\rd t} c_j(t) =-\ri \sum_{k=1}^{N_s} \alpha_{jk} c_k(t) -   \sum_{k=1}^{N_s} \gamma_{jk} c_k(t) + \eta_j(t)\;.
\end{align}

Consider now the Hermitian covariance matrix $\bX(t)$ of entries $\langle c^\dagger_j c^{}_k(t)\rangle$, where the expectations are over an initial system-bath state $\rho(0)=\rho$. (Here $c^\dagger_j c^{}_k(t)\equiv c^\dagger_j(t) c^{}_k(t)$.) We can use Eq.~\eqref{eq:QLE2}
and, using linearity,
\begin{align}
\label{eq:QLE3}
  \frac{\rd}{\rd t}  \langle c^\dagger_j c^{}_k(t)\rangle = \sum_{k'=1}^{N_s}  (-\ri \alpha_{kk'}-  \gamma_{kk'}) \langle c^\dagger_j c^{}_{k'}(t)\rangle+ \sum_{j'=1}^{N_s} (\ri \overline{\alpha_{jj'}}-  \overline{\gamma_{jj'}}) \langle c^\dagger_{j'} c^{}_k(t)\rangle + \langle c^\dagger_j \eta_k(t)\rangle
  + \langle \eta^\dagger_j c^{}_k(t)\rangle \;.
\end{align}
Let $\bB = -\ri \bA - \bGamma$ 
and let $\bC$ be the $N_s \times N_s$ Hermitian matrix of entries 
$\langle c^\dagger_j \eta_k(t)\rangle
  + \langle \eta^\dagger_j c^{}_k(t)\rangle$. Then, in compact form,
  Eq.~\eqref{eq:QLE3} is
  \begin{align}
  \label{eq:QLE4}
      \frac{\rd}{\rd t} \bX(t) = \bB^\dagger \bX(t)
      + \bX(t) \bB + \bC \;.
  \end{align}
  This has the form of the equations that can be solved with our quantum algorithm.\ Moreover, $\bB$ has a non-positive log norm and it is sparse if the interactions in the system are geometrically local. This example generalizes Refs.~\cite{somma2025shadowhamiltoniansimulation,stroeks2024solvingfreefermionproblems} that study the case of unitary dynamics in equilibrium. A similar equation can be obtained from the standard Lindblad equation for $\rho(t)$ after computing the expectations $\langle c^\dagger_j c^{}_k\rangle$.

  It is evident in our derivation 
  that the dissipative and noise terms
  are not independent. 
  That is, the matrices $\bGamma$ and $\bC$ must satisfy a consistency condition. In the important case of thermal equilibrium, this condition is set by the fixed point of $\bX(t)$. Considering the thermal Gibbs state of the system $\rho_\beta = e^{-\beta H_S}/\tr(e^{-\beta H_S})$, the corresponding covariance matrix $\bX_\beta$ is the well-known Fermi-Dirac distribution:
  \begin{align}
    \bX_\beta = \frac 1 {\one + e^{\beta \bA}}\;.  
  \end{align}
  Considering this as the fixed point of Eq.~\eqref{eq:QLE4}, we obtain
  \begin{align}
      0 = \bB^\dagger \bX_\beta
      + \bX_\beta \bB + \bC  \implies \bC =   ( \bX_\beta \bGamma  +    \bGamma \bX_\beta) \;.
  \end{align}

This is a manifestation of the fluctuation-dissipation theorem, for which the dissipation (microscopic) has to be compensated by the noise and fluctuations induced by the bath (macroscopic), to reach thermal equilibrium.

While we have considered free fermions that are number conserving for simplicity, it is simple to extend the analysis to all quadratic fermionic Hamiltonians. In such cases, we are also left with a Lyapunov differential equation of the form of Eq.~\eqref{eq:QLE4}.

\section{Classical Krylov-subspace algorithms for linear matrix differential equations}
 \label{app:classicalalgs}

We describe classical algorithms for computing the entries of $\bX(t)$, in the case where the matrices are sparse. 
While we show the basic ideas that determine the proper asymptotic complexities, we note that 
there is vast literature for classical algorithms and the following techniques
might be specialized to particular problem instances (e.g., by leveraging other problem structures), potentially giving improved results.
See Ref.~\cite{simoncini2016computational}.

The first algorithm we discuss is based on Krylov subspaces~\cite{hached2018computational,behr2019solution} to output
the entry $\bra j \bX(t) \ket k$, where $\ket j$ and $\ket k$
are standard basis states. For this case we assume the matrices to be ${\rm S}$-sparse and time-independent, so that
\begin{align}
    \bX (t) = e^{t \bA^\dagger }\bD e^{t \bB } + \int_0^t \rd s \; 
    e^{(t-s) \bA^\dagger}\bC e^{(t-s) \bB } \;.
\end{align}
We then define two Krylov subspaces:
\begin{align}
\label{eq:Krylovsubspaces}
   K_{\bA}(m):={\rm Span} \{\ket j, \bA  \ket j, \ldots , 
   (\bA )^{m-1} \ket j \} , \; \; 
    K_{\bB}(m):={\rm Span} \{\ket k, \bB \ket k, \ldots , 
   (\bB)^{m-1} \ket k \} \;.
\end{align}
 The dimension $m$ is a parameter that can be adjusted, 
 and ideally $m \ll N$. In practice, 
 one generates a set of $m$ orthogonal vectors
 via a method like Gram-Schmidt (e.g., the Lanczos method): if $\ket{v_i}$
 is the vector at the $i^{\rm th}$ step, we apply for example $\bA$ to generate $\ket{w_{i+1}}=\bA  \ket{v_i}$
 and then subtract all the components aligned with the orthogonal vectors $\ket{v_0},\ldots,\ket{v_i}$ to obtain $\ket{v_{i+1}}$.
 The complexity of this step increases with $i$ and can be dominant. For ${\rm S}$-sparse matrices, the generation of each orthogonal basis
 has cost $\cO(m^2  N + {\rm S} m N)$: the factor ${\rm S}N$ is from a single matrix-vector multiplication, and there are $m$ of them, 
 and the factor $m^2$ is due to performing $\cO(m^2)$ inner products
 for orthogonalization.

 By projecting into these subspaces, we are now able to replace
 \begin{align}
     \bA \mapsto \tilde {\bA}= V^\dagger \bA V \in \mathbb C^{m \times m}, \;  \bB \mapsto \tilde {\bB}= W^\dagger \bB W
     \in \mathbb C^{m \times m} \;,
 \end{align}
 which are matrices of much smaller dimension. Here, $V \in \mathbb C^{N \times m}$ and $W \in \mathbb C^{N \times m}$
 contain the orthogonal vectors of dimension $N$ generated by the Krylov subspaces $K_\bA(m)$ and $K_\bB(m)$, respectively,  in each of their columns. 
 Constructing $\tilde {\bA}$ and 
  $\tilde {\bB}$ does not incur any additional costs since they are a byproduct of the orthogonalization step. That is, 
  for orthogonalization, we already had to apply $\bA$
  to the columns of $V$ and compute the inner products, and the same for $\bB$.
  
  The remaining task then is to compute  $\bra j \bX(t) \ket k$
  by working with the projected problem. One approach
  is to replace each $\bra j e^{s \bA^\dagger}$ by the corresponding 
  $\bra j e^{s \tilde{\bA}^\dagger}$ of dimension $m$, and similarly
  for $e^{s \bB}\ket k$. This has cost $\cO(m^3)$ or better, 
  which is the cost of constructing the exponentials. 
  We can also project, for example, $\bC$ as $\bC \mapsto \tilde{\bC}=V^\dagger \bC W$ also with cost $\cO({\rm S} mN+m^2N)$: the term $smN$ comes from computing $\bC W$, and the term $m^2N$
  comes from acting with $V^\dagger$.
  This projection would allow us to compute the integral by now working with $m \times m$ matrices. In particular, if we can approximate the integral via quadratures using $R$ terms, the additional cost would be $\cO(R m^3)$, where the factor $m^3$
  is from dense matrix-vector multiplication with dimension $m$.
  
 In summary, the cost of this procedure is 
  $\cO({\rm S} m N + m^2 N + R m^3)$. In practice this approach is very efficient if $m$ is fairly small (e.g., much smaller than $N$), and for sparse matrices the dominant term is often $m^2 N$ in the orthogonalization step. For a controlled approximation, it is expected that $m \sim t \max\{\|\bA\|,\|\bB\|\}$, up to mild corrections due to the error, since the problem's complexity increases with the evolution time.  (In a similar manner, it is expected that $R \sim t \max\{\|\bA\|,\|\bB\|\}$ using quadratures.) Hence, using a ``lightcone'' argument, where $N \sim m^{D}$ for a system in a lattice in $D \ge 1$ spatial dimensions, the overall cost of this approach would be
  $\widetilde \cO(m^{2+D})$. Also, the memory requirements are dominated by those of storing the matrices $V$ and $W$, and this is $\cO(mN)=\cO(m^{1+D})$.

  The memory requirement  of this approach might be improved to $\widetilde \cO(N)$  and the runtime improved to $\widetilde \cO(mN)$, but not further, using a more clever way of computing the integral. 
  While we did not find this approach applied directly to our matrix equations in the literature,
  it shares some features of the ``restarted Krylov'' method~\cite{eiermann2006restarted}.
It is then plausible that a linear scaling in the time-space volume
  can be achieved with classical algorithms for this problem, although with some additional overheads due to the integrals.
   This is important at the time of comparing the runtimes of quantum and classical approaches, especially in the case where the expected quantum speedups are polynomial. 
  Here we sketch the idea for such an improved method.
   We can change variables and write
  \begin{align}
     \int_{0}^{t}  \rd s \; e^{(t-s)\bA^\dagger} \bC e^{(t-s)\bB}=
      \int_{0}^{t}  \rd s \; e^{s\bA^\dagger} \bC e^{s\bB}\;.
  \end{align}
We can split this integral as $\int_0^t \mapsto \int_0^r + \int_r^{2r}+ \ldots$, where we can choose, say, 
  $r \sim 1/\max \{\|\bA\|,\|\bB\|\}$. The number of smaller integrals is then $L=t/r\sim t \max \{\|\bA\|,\|\bB\|\}$. 
  These smaller integrals are related with each other and, in particular,
  \begin{align}
  \bra j  \int_{0}^{t}  \rd s \; e^{s\bA^\dagger} \bC e^{s\bB}  \ket k =
  \sum_{l=0}^{L-1} (\bra j e^{lr \bA^\dagger}) \left( \int_0^{r}\rd s\; e^{s\bA^\dagger} \bC e^{s\bB}\right) 
  (e^{lr \bB}\ket k)\;.
  \end{align}

  Using the prior approach to compute the entry of the first integral ($l=0$) we can construct two Krylov subspaces $K_\bA^0$ and $K_\bB^0$ where $m \mapsto m' = \widetilde \cO(1)$. This works
  because we are only evolving for a short time $r$.
  The second integral ($l=1$) is now related to the first one, but we are interested in a different matrix element,
  that corresponding to $\bra j e^{r\bA^\dagger}$ and $e^{r\bB}\ket k$.
  We can then generate the new Krylov subspaces $K_\bA^1$ and $K_\bB^1$ by letting the first vectors be $e^{r\bA} \ket j$ and $e^{r\bB} \ket k$, respectively. 
  These vectors will not be sparse in general. Indeed, we can choose these as a linear combination of the vectors in the prior Krylov subspaces $K_\bA^0$ and $K_\bB^0$
  by using, for example, a truncated Taylor series of the exponential. At this point we can delete the prior Krylov subspace from memory, and only work with the new ones:
  \begin{align}
   K_{\bA}^1&:={\rm Span} \{(e^{r\bA} \ket j), \bA (e^{r\bA} \ket j), \ldots , 
   (\bA)^{m'-1} (e^{r\bA} \ket j) \} , \\ 
    K_{\bB}^1&:={\rm Span} \{(e^{r\bB} \ket k), \bB(e^{r\bB} \ket k), \ldots , 
   (\bB)^{m'-1} (e^{r\bB} \ket k) \} \;,
\end{align}
  where again $m'=\widetilde \cO(1)$. Note that the new subspaces are not necessarily orthogonal to $K_\bA^0$ and $K_\bB^0$.

  We can repeat this approach 
  until the last integral ($l=L-1$). Since $m'=\widetilde \cO(1)$ for all the generated subspaces, the memory requirements are then $\widetilde \cO(N)$.
  Accordingly, the runtime is now $\widetilde \cO(L \times ({\rm S}N))=\widetilde \cO(t \max \{\|\bA\|,\|\bB\|\}{\rm S} N)$,
  being linear in the space-time volume for sparse matrices.
  If we define $m \sim t \max \{\|\bA\|,\|\bB\|\}$, and assuming the relevant volume is $N \sim m^D$ due to the lightcone argument, then the runtime is $\widetilde \cO(m^{1+D})$
  and the memory requirements are $\widetilde \cO(m^D)$, i.e., linear in $N$.

  We note, however, that we would expect the approximation errors
  of this approach to grow as we increase the number of steps. We have not considered this accumulation of errors in the analysis
  but it appears they can be controlled by choosing $m'$ properly.
  Nevertheless, the $L$ Krylov subspaces generated can still be spanned
  by the original ones in Eq.~\eqref{eq:Krylovsubspaces}, since all vectors can still be obtained by acting with $\bA$
  and $\bB$ on $\ket j$ and $\ket k$, respectively. At the same time, we note that these $L$ subspaces generated are not necessarily orthogonal with each other, making it different from the former approach. 
  These observations raise the question of how much more efficient and accurate this improved approach is in practice. 
  See Ref.~\cite{hached2018computational}
  for other numerical improvements based on different Krylov subspaces.

  We also note that related classical algorithms
  might be constructed via the Kernel Polynomial Method (KPM)~\cite{weisse2006kernel}. Rather than generating the Krylov subspaces as with the prior approach, the KPM
  constructs vectors corresponding to the action of Chebyshev polynomials of $\bA$ and $\bB$ acting on $\ket j$ and $\ket k$.
  Just like with the Krylov subspace method, the runtime of KPM is at least linear in the space-time volume (by splitting the integral as discussed) and memory requirements that are least linear in the dimension.

  The cases of time-dependent or non-sparse matrices 
  will be significantly more costly for these classical algorithms, 
  with the potential of much improved asymptotic quantum speedups for such instances.

\section{Query lower bound and proof of Thm.~\ref{thm:lowerbound}}
 \label{app:lowerbound}

Consider our problem where the goal is to solve for
$\bX(t)$ satisfying
\begin{align}
    \frac{\rd}{\rd t}\bX(t) = \bA^\dagger \bX(t)+ \bX(t) \bB + \bC \;.
\end{align}
We have presented a quantum algorithm 
that computes the entries of $\bX(t)$ within error $\epsilon \ge 0$ assuming
access to block-encodings of $\bX(0)=\bD$, $\bA$, $\bB$, $\bC$.  We now prove the lower bound in Thm.~\ref{thm:lowerbound}.

As mentioned, the proof of Thm.~\ref{thm:lowerbound}
is based on a reduction to quantum phase estimation (QPE), and a decision version of that problem. To this end we adapt the lower bounds $\Omega(\frac 1 \delta |\log (1-p)|)$ for QPE within error $\delta$ and success probability $p$ established in 
Ref.~\cite{mande2023tight}; the following result is adapted from Claim 5.2 of
that work.
\begin{claim}[Query lower bound on decision QPE]
\label{claim:lowerbound}
 Let $U=\exp(\ri \theta \ketbra{0})$
 be a diagonal unitary, where either 
  i) $\theta=\delta$ or ii) $\theta \in (2\delta, \pi-2\delta]$,
  for some given $\delta$ satisfying $\pi/16 \ge \delta >0$.
   Then, for success probability $  p  \in (3/4,1)$, any quantum algorithm that solves this decision phase estimation problem requires $\Omega(\frac 1 \delta |\log(1-p)|)$ uses of $U$ and $U^{-1}$.
\end{claim}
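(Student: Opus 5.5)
The plan is to obtain the claim from the polynomial method, following the argument behind Claim 5.2 of Ref.~\cite{mande2023tight}. The only adaptation needed is the shape of the ``far'' set of phases, which here is the interval $(2\delta,\pi-2\delta]$.

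\emph{Step 1 (polynomial representation).} Consider any algorithm that makes $T$ total uses of $U$ and $U^{-1}$. Since $U=\exp(\ri\theta\ketbra{0})=\one+(e^{\ri\theta}-1)\ketbra{0}$, each use contributes a factor $e^{\pm\ri\theta}$ to the amplitudes on one branch. Every final amplitude is therefore a Laurent polynomial in $z=e^{\ri\theta}$ with exponents in $[-T,T]$. The acceptance probability $f(\theta)=\|\Pi_{\rm acc}\ket{\psi_\theta}\|^2$ is then a real trigonometric polynomial of degree at most $2T$. It satisfies $0\le f(\theta)\le 1$ for \emph{every} real $\theta$, because it is a probability for each unitary $U$ in the family. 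Correctness with probability $p$ gives, with $\eta:=1-p<1/4$,
\begin{align}
f(\delta)\ge 1-\eta, \qquad f(\theta)\le \eta \quad \forall\, \theta\in(2\delta,\pi-2\delta].
\end{align}

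\emph{Step 2 (degree lower bound).} Set $g=f-\eta$ (or work with $f$ directly). Then $g$ is a trigonometric polynomial of degree $d\le 2T$ with $|g|\le 1$ on the whole circle and $|g|\le \eta$ on an arc $I$ of length $\pi-4\delta\ge 3\pi/4$ (using $\delta\le\pi/16$). Yet $g$ takes a value of order $1$ at distance $\delta$ from $I$. I would invoke a Remez/Turán-type inequality for trigonometric polynomials small on an arc. The cleanest route maps the arc to $[-1,1]$ through the substitution $x=\cos$ of an affinely rescaled angle, and uses that the extremal growth off the arc is governed by a Chebyshev polynomial. The resulting bound is
\begin{align}
|g(\theta)|\le \eta\, T_d\!\left(1+ C\,\mathrm{dist}(\theta,I)\right) \le \eta\, e^{C' d\,\mathrm{dist}(\theta,I)}
\end{align}
for points near $I$, where $T_d$ is the Chebyshev polynomial and $C,C'$ are absolute constants. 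At $\theta=\delta$ this gives $1-2\eta\le \eta\, e^{C' d\delta}$, hence $d\ge \frac{1}{C'\delta}\log\frac{1-2\eta}{\eta}=\Omega\!\big(\frac1\delta|\log(1-p)|\big)$ for $p\in(3/4,1)$. Since $T\ge d/2$, the claim follows.

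\emph{Main obstacle.} The delicate point is Step 2, specifically getting the $\log(1/\eta)$ factor rather than a merely constant lower bound. A Bernstein derivative bound alone only yields $\Omega(1/\delta)$. One needs the exponential-growth (Chebyshev extremality) estimate for a polynomial of modulus $\le\eta$ on a long arc. There is a further subtlety: the small set covers only $(2\delta,\pi-2\delta]$, not the full complement of a neighbourhood of $\delta$. I expect the half-circle arc to suffice, because the Remez-type constant depends only on the arc length, which is bounded below by an absolute constant. If this direct route proves messy, the fallback is to check that the proof of Claim 5.2 in Ref.~\cite{mande2023tight} uses only the hypotheses above. Those hypotheses are a polynomial bounded by $1$ everywhere, small on a constant-length arc, and large at distance $\delta$; if the proof uses nothing more, it can be quoted essentially verbatim.
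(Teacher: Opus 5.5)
Your Step~1 is fine: the acceptance probability is a trigonometric polynomial of degree at most $T$, and you correctly note $0\le f\le 1$ on the whole circle. Step~2, however, has a genuine gap at exactly the place where the $\log(1/\eta)$ factor is supposed to come from.

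The estimate $T_m(1+C\epsilon)\le e^{C'm\epsilon}$ is false. In fact $T_m(1+\epsilon)=\cosh\bigl(m\,\mathrm{arccosh}(1+\epsilon)\bigr)$ grows like $e^{m\sqrt{2\epsilon}}$, and already $T_m(1+\epsilon)\ge 1+m^2\epsilon$. This happens because the target point sits next to an \emph{endpoint} of the arc. Done correctly, the Chebyshev (Videnskii-type) bound for a degree-$d$ trigonometric polynomial with $|g|\le\eta$ on $I=(2\delta,\pi-2\delta]$ gives $|g(\delta)|\le \eta\,T_{2d}(1+\Theta(\delta))=\eta\, e^{O(d\sqrt{\delta})}$. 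That yields only $d=\Omega(\delta^{-1/2}\log(1/\eta))$.

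Smallness on $I$ alone cannot give more. The polynomial $T_{2d}\bigl(\sin(\tfrac{\theta}{2}-\tfrac{\pi}{4})/\sin(\tfrac{\pi}{4}-\delta)\bigr)$ has degree $d$, is bounded by $1$ on $I$, and has size $\cosh(\Theta(d\sqrt{\delta}))$ at $\theta=\delta$. What excludes it is the constraint $|g|\le 1$ on the rest of the circle, which you state but never use.

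Your fallback fails for the same reason. The argument behind Claim~5.2 of Ref.~\cite{mande2023tight} rests on a Remez-type inequality, $\|g\|_\infty\le e^{O(ds)}\sup_E|g|$. That inequality needs the small set $E$ to have measure at least $2\pi-s$ with $s\le\pi/2$. Here $E=I$ has measure only $\pi-4\delta$, which gives a factor $e^{O(d)}$ and no $1/\delta$. So the expectation that the constant ``depends only on the arc length'' is wrong.

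The missing step, which the paper's proof supplies, is to make the polynomial small on all of the circle except a set of measure $O(\delta)$ before invoking Remez. The paper passes to the symmetric far set $(2\delta,\pi-2\delta]\cup[\pi+2\delta,2\pi-2\delta)$, so that $s=8\delta\le\pi/2$, and relates the one-sided problem to it by running the algorithm with $U$ and $U^{-1}$.

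In your polynomial framework this takes one line. Set $F(\theta):=f(\theta)\,f(2\delta-\theta)$. This is the acceptance probability of running the algorithm on $U$ and on $e^{2\ri\delta|0\rangle\langle 0|}U^{-1}=\exp\bigl(\ri(2\delta-\theta)|0\rangle\langle 0|\bigr)$ and accepting iff both accept. Then:
\begin{itemize}
\item $F$ has degree at most $2T$ and satisfies $0\le F\le 1$;
\item $F(\delta)\ge(1-\eta)^2$;
\item $F\le\eta$ on $(2\delta,\pi-2\delta]\cup[\pi+4\delta,2\pi)$, whose complement has measure $8\delta\le\pi/2$, precisely because $\delta\le\pi/16$.
\end{itemize}
Remez then gives $(1-\eta)^2\le\eta\,e^{O(T\delta)}$, i.e.\ $T=\Omega(\delta^{-1}\log(1/\eta))$ for $\eta<1/4$.

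Reflecting about $\delta$ rather than about $0$ matters. With the plain swap $U\leftrightarrow U^{-1}$, a run on $U^{-1}$ when $\theta=\delta$ sees phase $-\delta$, where the one-sided algorithm has no guarantee. The reflection about $\delta$ maps the point $\delta$ to itself, so both runs are guaranteed there.
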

\begin{proof}
The main modification with the problem in 
Ref.~\cite{mande2023tight} is the domain for $\theta$,
since that work solves the decision problem 
where either $\theta =0$ or $\theta \notin [-\delta,\delta]\mod 2\pi$, for $\pi/4 \ge \delta >0$.
Nevertheless, the same lower bound applies for two simple reasons.
First, that same proof is easily extended to the case where
either $\theta =\pm \delta$ or $\theta \in (2\delta,\pi-2\delta] \cup [\pi+2\delta,2 \pi-2\delta)$, for $\pi/16 \ge \delta >0$
from a simple redefinition of the domain for $\theta$. That is, the QPE algorithm
has to produce a trigonometric polynomial
where the measure $\mu$ in Thm. 2.7 of that work is such that $\mu \ge 2 \pi-s$, and $s=8 \delta$ in our case (also $s\le \pi/2$). This implies that the number of calls to $U$ and $U^{-1}$
for solving this decision problem is lower bounded as $\Omega(\frac 1 \delta |\log(1-p')|$, for success probability $p' > 1/2$.

Second, we note that a quantum algorithm that solves the problem in this claim, where either $\theta=\delta$ or $\theta \in (2\delta,\pi-2\delta]$, can be used twice for solving the prior decision problem. For example, in the first execution we can determine whether $\theta \notin (2\delta,\pi-2\delta]$ or $\theta \neq \delta$ with probability $p$, 
and in the second execution we can
determine whether 
$\theta \notin [\pi+2\delta,2\pi-2\delta)$
or $\theta \neq -\delta$ also with probability $p$. 
 For this second run we need to transform $\theta \mapsto -\theta$, but this is simply done since we have access to $U$ and $U^{-1}$; that is, we only need to replace $U \leftrightarrow U^{-1}$. 
 
 For success probability $p'$ in solving the former decision problem, it suffices to choose $p=1-(1-p')/2$ in each run of the algorithm for this claim. Hence, that lower bound $\Omega(\frac 1 \delta |\log(1-p')|)$ translates to the 
lower bound $\Omega(\frac 1 \delta |\log(1-p)|)$ for the current claim. (Since $p' \in (1/2,1)$ in Ref.~\cite{mande2023tight},
here we assumed $p \in (3/4,1)$.)

\end{proof}

We now want to use this lower bound in query complexity for QPE
to prove a lower bound in the query complexity for estimating an entry of $\bX(t)$. We do this via a simple problem reduction from decision QPE.

\begin{lemma} 
\label{lem:reduction}
    Consider the instances where $\bA=-\sin \theta \ketbra 0-\sum_{n>0}\ketbra n$
    and assume either i) $\theta=\delta$ or ii) $\theta \in (2\delta , \pi-2\delta]$, for some given $\delta$ satisfying $\pi/16 \ge \delta > 0$. Let $t>0$ and $\cL_\theta=\int_0^t \rd s \|e^{s\bA}\|$. 
    Then, computing $\bra{0}\bX(t)\ket{0}$
    within error at most $\epsilon_\delta=3 \cL_\delta^2 \sin \delta / 100$
    allows one to determine whether $\theta$ belongs to
    case i) or case ii).
\end{lemma}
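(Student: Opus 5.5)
Since $\bB=\mathbf 0$, $\bC=\one$ and $\bD=\mathbf 0$ here (the setting of Thm.~\ref{thm:lowerbound}), the plan is to compute the target entry in closed form, show it is strictly decreasing in $\sin\theta$, and then prove that the two cases are separated by more than $2\epsilon_\delta$. With these choices Eq.~\eqref{eq:mainsolution} gives $\bX(t)=\int_0^t \rd s\, e^{s\bA^\dagger}$. Since $\bA$ is diagonal with $\bra 0 \bA \ket 0=-\sin\theta$, we get
\[
\bra 0 \bX(t)\ket 0 = f(\sin\theta), \qquad f(x):=\int_0^t \rd s\, e^{-sx}=\frac{1-e^{-tx}}{x}.
\]
Because $0<\sin\theta\le 1$, the largest eigenvalue of $\bA$ is $-\sin\theta$, so $\|e^{s\bA}\|=e^{-s\sin\theta}$. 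Hence $\cL_\theta=f(\sin\theta)$ is exactly the entry. The decision rule will be: output case i) if the estimate is at least $f(\sin\delta)-\epsilon_\delta$, and case ii) otherwise. This is correct provided $f(\sin\delta)-f(\sin\theta)>2\epsilon_\delta$ for every $\theta$ in case ii).

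\emph{Reduction to a single comparison.} The function $f$ is strictly decreasing in $x>0$. For $\theta\in(2\delta,\pi-2\delta]$ we have $\sin\theta\ge\sin 2\delta=2\cos\delta\,\sin\delta\ge 1.96\,\sin\delta$, using $\cos(\pi/16)>0.98$. It therefore suffices to show, with $x=\sin\delta$ and $\kappa=1.96$,
\[
f(x)-f(\kappa x)=\int_0^t \rd s\, e^{-sx}\bigl(1-e^{-(\kappa-1)sx}\bigr)>\tfrac{6}{100}\,x\,f(x)^2 .
\]

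\emph{Key estimates.} First, I bound $f(x)^2$ by symmetrising the double integral:
\[
f(x)^2=\int_0^t\!\!\int_0^t e^{-(u+v)x}\,\rd u\,\rd v = 2\int_0^t \rd v\int_0^v \rd u\, e^{-(u+v)x}\le 2\int_0^t v\,e^{-vx}\,\rd v .
\]
Second, I lower bound the gap using $1-e^{-y}\ge y e^{-y}$, which gives $f(x)-f(\kappa x)\ge (\kappa-1)x\int_0^t s\,e^{-\kappa s x}\,\rd s$. Substituting $u=\kappa s$ yields
\[
\int_0^t s\,e^{-\kappa s x}\,\rd s=\kappa^{-2}\int_0^{\kappa t} u\,e^{-ux}\,\rd u \ge \kappa^{-2}\int_0^t u\,e^{-ux}\,\rd u .
\]
Combining the two bounds gives
\[
f(x)-f(\kappa x)\ge \frac{\kappa-1}{2\kappa^2}\,x\,f(x)^2\approx 0.125\,x\,\cL_\delta^2 ,
\]
which exceeds $2\epsilon_\delta=0.06\,\cL_\delta^2\sin\delta$.

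The main obstacle is this comparison between the gap and $\cL_\delta^2\sin\delta$. It has to hold uniformly across both regimes, $t\sin\delta\ll 1$ (where $\cL_\delta\approx t$) and $t\sin\delta\gg1$ (where $\cL_\delta\approx 1/\sin\delta$). The naive bound via $1-e^{-y}\ge y e^{-y}$ inflates the exponent from $x$ to $\kappa x$, and the change of variables $u=\kappa s$ is the step that repairs this without losing more than a constant factor. If that constant turned out too tight, I would fall back to splitting into the two regimes and using explicit asymptotics of $f$ in each.
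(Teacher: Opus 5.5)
Your proof is correct, and it takes a different route from the paper's.

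\textbf{What is shared.} The paper also computes $\bra 0\bX(t)\ket 0=\cL_\theta=(1-e^{-t\sin\theta})/\sin\theta$. It also reduces, via monotonicity and the symmetry $\theta\mapsto\pi-\theta$, to comparing $\theta=\delta$ with $\theta=2\delta$.

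\textbf{The paper's route.} It then splits into the regimes $t\le 1/\sin(2\delta)$ and $t\ge 1/\sin(2\delta)$.
\begin{enumerate}
\item In the short-time regime it uses the polynomial bounds $y-y^2/2\le 1-e^{-y}\le y-y^2/e$ (for $y\le 1$) together with $t\ge\cL_\delta$. This gives a gap of at least $\tfrac15\cL_\delta^2\sin\delta$.
\item In the long-time regime it shows separately that $\cL_{2\delta}\le 0.82\,\cL_\delta$ and $\cL_{2\delta}\ge 0.31\,\cL_\delta^2\sin\delta$. This gives a gap of at least $0.062\,\cL_\delta^2\sin\delta$, which only barely clears $2\epsilon_\delta=0.06\,\cL_\delta^2\sin\delta$.
\end{enumerate}

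\textbf{Your route.} You use the integral representation $f(x)=\int_0^t e^{-sx}\,\rd s$. The symmetrized bound $f(x)^2\le 2\int_0^t v\,e^{-vx}\,\rd v$ controls the square. The lower bound $(\kappa-1)x\int_0^t s\,e^{-\kappa s x}\,\rd s$ on the gap, after the rescaling $u=\kappa s$, is compared against that same integral. This treats both regimes at once and needs no case split. It yields the uniform constant $(\kappa-1)/(2\kappa^2)\approx 0.125$, about twice what is required, so the fallback you mention is unnecessary.

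\textbf{Comparison.} The paper's approach gives explicit regime-wise behaviour, including the better constant $1/5$ when $t\sin 2\delta\le 1$. Yours is shorter and more robust. It also makes clear why $\cL_\delta^2\sin\delta$ is the right scale: both the gap and $f(x)^2$ are governed by $x\int_0^t s\,e^{-sx}\,\rd s$.
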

\begin{proof}
For these instances, the solution satisfies
\begin{align}
    \bX(t) = \int_0^t \rd s \; e^{(t-s)\bA}\;,
\end{align}
and hence
\begin{align}
  \bra{0}  \bX (t) \ket{0}  = \bra{0} \int_0^t \rd s \; e^{(t-s) \bA} \ket{0} 
  & = \frac{1- e^{- t \sin \theta}}{ \sin \theta}.
\end{align}
For this problem, we also have $\cL_\theta=\int_0^t \rd s \; e^{-(t-s) \sin \theta}=\bra 0 \bX(t) \ket 0$.
We then need to show that computing $\bra{0}  \bX (t) \ket{0}$ within error $\epsilon_\delta$ allows us to determine one case or the other.
Note that $\cL_\theta$ decreases monotonically with $\theta$ in $\theta \in (0,\pi/2)$ and $\cL_\theta =\cL_{\pi-\theta}$, it will suffice to establish error bounds for $\theta=\delta$
and $\theta=2\delta \le \pi/8$ only. 
If the gap between those two entries is at least $2 \epsilon_\delta$,
then estimating the entry within $\epsilon_\delta$ will suffice.

\begin{itemize}
    \item Case $t \le 1/\sin (2\delta)$.
We obtain
\begin{align}
   \cL_{2\delta} & = \frac{1- e^{- t \sin (2\delta)}}{ \sin(2 \delta)} \\
      & \le \frac 1 { \sin (2\delta)} \left( t \sin (2\delta) - \frac{(t \sin (2\delta))^2} e\right) \\
      & = t - \frac {t^2 \sin (2\delta)} e  \;,
\end{align}
where we used $t \sin(2\delta) \le 1$ in the second line.
Also,
\begin{align}
   \cL_{\delta} & = \frac{1- e^{- t \sin \delta}}{ \sin  \delta} \\
      & \ge \frac 1 { \sin \delta} \left( t \sin \delta - \frac{(t \sin \delta)^2} 2\right) \\
      & = t - \frac {t^2 \sin \delta} 2  \;,
\end{align}
where we used $t \sin \delta \le 1$ in the second line.
These give the gap
\begin{align}
    \cL_\delta -\cL_{2\delta}& \ge t^2 (\sin(2\delta)/e - \sin(\delta)/2) \\
    & \ge (\cL_\delta)^2 (\sin(2\delta)/e - \sin(\delta)/2)\\
    & = (\cL_\delta)^2 \lb \frac 2 e \sin \delta \cos \delta -\frac 1 2 \sin \delta \rb \\
    & \ge (\cL_\delta)^2 \frac 1 5 \sin \delta \\
    & \ge 2 \epsilon_\delta \;.
\end{align}
We used $\cos \delta \ge \cos (\pi/16)$
so that $2 \cos \delta/e-1/2 >1/5$,
and $t \ge \cL_\delta$.

\item Case $t \ge 1/\sin (2\delta)$.
We will use $2 \sin \delta \ge \sin (2\delta) \ge 2 \cos(\pi/16)\sin(\delta) \ge 1.96 \sin(\delta)$.
We obtain
\begin{align}
     {\cL_{2\delta}}  &= \frac{1- e^{- t \sin (2\delta)}}{   \sin  (2\delta)} \\
    & \le \frac 1 {1.96}\frac{1- e^{- t \sin (2\delta)}}{   \sin  (\delta)}  \\
    & \le \frac 1 {1.96}\frac{1- e^{- 2t \sin (\delta)}}{   \sin  (\delta)} \\
    & = \frac 1 {1.96}\frac{(1- e^{- t \sin (\delta)})(1+ e^{- t \sin (\delta)})}{   \sin  (\delta)}\\
    & = \frac 1 {1.96}(1+ e^{- t \sin (\delta)}) \cL_\delta \\
     & \le \frac 1 {1.96}(1+ e^{- \frac {\sin (\delta)} {\sin(2\delta)} }) \cL_\delta \\
     & \le \frac 1 {1.96}(1+ e^{- \frac {1} {2} }) \cL_\delta \\
     & \le 0.82 \cL_\delta \\
    & \implies \cL_\delta \ge \frac 6 5 \cL_{2\delta}\;.
\end{align}
Next,
\begin{align}
  \cL_{2\delta}   & \ge \frac{1-e^{-1}}{\sin (2\delta)} \\
  & \ge  \frac{1-e^{-1}}{2 \sin (\delta)}\\
  & \ge 0.31 \frac 1 {\sin \delta} \\
  & \ge  0.31 \frac{(1-e^{-t \sin \delta})^2}{\sin \delta}
  \\
  & = 0.31 (\cL_\delta)^2 \sin \delta \;.
\end{align}
Last,
\begin{align}
    \cL_\delta - \cL_{2\delta} &\ge \frac 1 5 \cL_{2\delta} \\
    & \ge \frac{0.31}5 (\cL_\delta)^2 \sin \delta \\
    & \ge 2 \frac 3 {100} (\cL_\delta)^2 \sin \delta \\
    & = 2 \epsilon_\delta \;.
\end{align}

\end{itemize}

It follows that computing  $ \bra{0}  \bX (t) \ket{0}$  within error at most $\epsilon_\delta$ for each instance solves the decision problem.

\end{proof}

Last, we use Claim~\ref{claim:lowerbound} and Lemma~\ref{lem:reduction} to prove Thm.~\ref{thm:lowerbound}.
We do this by considering a decision version of that problem.

 \begin{lemma} 
 \label{lem:lb2}
    Consider the instances where $\bA=-\sin\theta \ketbra 0 - \sum_{n \ne 0}\ketbra n$
    and assume either i) $\theta=\delta$ or ii) $\theta \in (2\delta , \pi-2\delta)$, for some given $\delta$ satisfying $\pi/16 \ge \delta > 0$. Let $t > 0$ and $\cL_\theta = \int_0^t \rd s \; \|e^{s\bA}\|$. 
    Then,  any quantum algorithm that computes $\bra{0}\bX(t)\ket{0}$
    within error at most $\epsilon_\delta = 3\cL_\delta^2 \sin \delta/ 100$
    and success probability $p \in(3/4,1)$
    requires $\Omega (\frac {\cL_\theta^2} {\epsilon_\delta} |\log(1-p)|)$ uses of the block-encoding for $\bA$.
\end{lemma}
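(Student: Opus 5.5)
The plan is a reduction from the decision version of phase estimation in Claim~\ref{claim:lowerbound}, with Lemma~\ref{lem:reduction} as the bridge. Suppose an algorithm $\mathcal{Q}$ uses $T$ queries to a block-encoding of $\bA$ (and its inverse) and outputs $\bra 0 \bX(t)\ket 0$ within error $\epsilon_\delta$ with probability $p$. We will turn $\mathcal{Q}$ into a decision-QPE algorithm for $U=\exp(\ri\theta\ketbra{0})$ that uses $\cO(T)$ queries to $U$ and $U^{-1}$.

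\emph{Step 1: simulate the oracle.} We construct a block-encoding of $\bA=-\sin\theta\ketbra 0-\sum_{n>0}\ketbra n$ from a constant number of calls to controlled-$U$ and controlled-$U^\dagger$. Write $\sin\theta\ketbra 0=(U-U^\dagger)/(2\ri)$ restricted to the $\ket 0$ component. Since $U-U^\dagger=2\ri\sin\theta\ketbra 0$, an LCU gives
\[
-\bA=\one-\ketbra 0+\tfrac{1}{2\ri}(U-U^\dagger).
\]
The term $\one-\ketbra 0$ is implemented exactly without queries. The LCU uses one controlled-$U$ and one controlled-$U^\dagger$, so the normalization is a constant, e.g. $2$.

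The theorem assumes constant $a=1$, but a block-encoding with any constant normalization only rescales the query count by a constant factor. If needed, we instead build an exact $a=1$ block-encoding by encoding the diagonal entry $-\sin\theta$ directly. On the $\ket 0$ branch, a controlled-$U$ in a Hadamard-test-style circuit produces the amplitude $(1-e^{\ri\theta})/2$. Combining this with its conjugate, via a $U^\dagger$ call, yields $\sin\theta$ up to a constant. Since $\|\bA\|\le 1$ and the entries are all nonpositive, a block-encoding with normalization $1$ can then be assembled by the standard trick, at constant extra cost. Either way, each query of $\mathcal{Q}$ costs $\cO(1)$ queries to $U^{\pm1}$.

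\emph{Step 2: decide.} Run $\mathcal{Q}$ with the simulated oracle. By Lemma~\ref{lem:reduction}, an estimate within $\epsilon_\delta$ separates case i) from case ii): we compare the estimate to the midpoint between $\cL_\delta$ and $\cL_{2\delta}$, using monotonicity of $\cL_\theta$ and the symmetry $\cL_\theta=\cL_{\pi-\theta}$. The success probability remains $p$. Claim~\ref{claim:lowerbound} then gives $T=\Omega(\frac1\delta|\log(1-p)|)$.

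\emph{Step 3: convert $1/\delta$ into the stated form.} This step is the main obstacle. We need $\frac1\delta\gtrsim \cL_\theta^2/\epsilon_\delta=\frac{100\,\cL_\theta^2}{3\cL_\delta^2\sin\delta}$ for the relevant $\theta$. Because $\sin\delta\le\delta$, it suffices to have $\cL_\theta\lesssim\cL_\delta$.

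In case i) this holds with equality. In case ii) it also holds, since $\cL_\theta$ is decreasing on $(0,\pi/2)$, symmetric about $\pi/2$, and the case ii) range excludes values below $2\delta$, so $\cL_\theta\le\cL_{2\delta}\le\cL_\delta$. Hence $\frac1\delta\ge\frac{3}{100}\,\frac{\cL_\theta^2}{\epsilon_\delta}$, which gives $T=\Omega(\frac{\cL_\theta^2}{\epsilon_\delta}|\log(1-p)|)$ uniformly over the instance.

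The points needing care are that the bound is stated in terms of the worst-case instance $\cL_\theta$, and that the oracle simulation really uses only $\cO(1)$ queries with the correct normalization. I would check the normalization first.
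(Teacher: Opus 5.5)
Your proposal is correct and follows the paper's proof essentially step for step: an LCU block-encoding of $\bA$ from one use each of (controlled) $U$ and $U^{-1}$, then Lemma~\ref{lem:reduction} to solve the decision problem of Claim~\ref{claim:lowerbound}, then conversion of the resulting $\Omega(\frac1\delta|\log(1-p)|)$ bound using $\cL_\theta\le\cL_\delta$. One small slip in Step 3, which the paper's own proof shares: $\sin\delta\le\delta$ gives $1/\sin\delta\ge 1/\delta$, the wrong direction, so your displayed inequality $\frac1\delta\ge\frac{3}{100}\frac{\cL_\theta^2}{\epsilon_\delta}$ actually fails at $\theta=\delta$; instead use $\sin\delta\ge 2\delta/\pi$ to get $\frac{\cL_\theta^2}{\epsilon_\delta}\le\frac{100}{3\sin\delta}\le\frac{50\pi}{3\delta}$, which is all the $\Omega$-statement needs.
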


\begin{proof}
    A possible block-encoding for $\bA$ could be constructed using the unitaries $U$ and $U^{-1}$ due to a simple LCU:
    \begin{align}
        -\sin \theta \ketbra 0 = \frac {1}{2\ri}(U-U^{-1})\;.
    \end{align}
    Hence, the block-encoding for $\bA$
    uses $U$ and $U^{-1}$ once.
    
    If the quantum algorithm for computing $\bra{0}\bX(t)\ket{0}$ requires $Q$ uses of the block-encoding for $\bA$, it can be executed with at most $2Q$ uses of $U$ and $U^{-1}$. This would allow us to solve the prior estimation problem, and further the decision QPE problem in Claim~\ref{claim:lowerbound}.
    The lower bound $Q = \Omega (\frac 1 \delta |\log(1-p)|)$ then follows.

    Next note
    \begin{align}
        \frac{\cL_\theta^2}{\epsilon_\delta} \le \frac{\cL_\delta^2}{\epsilon_\delta}
        = \frac {100}{3\sin \delta} & \le \frac {100}{3 \delta} \;.
    \end{align}
    Hence, we can alternatively express the lower bound as 
    \begin{align}
        Q = \Omega \left( \frac{\cL_\theta^2}{\epsilon_\delta} |\log (1-p)|\right) \;.
    \end{align}
\end{proof}

We can turn the input parameters around
and use the same bound to establish 
Thm.~\ref{thm:lowerbound}. That is, instead of being given a $\delta >0$,
assume we are given a $t >0$ and an $\epsilon >0$ such that there exists some $\delta \in (0,\pi/16]$, that satisfies the condition $\epsilon=\epsilon_\delta = 3 (\cL_\delta)^2 \sin \delta /100$. Then, if we run the entry estimation algorithm for the inputs $t$ and $\epsilon$, we can then solve the decision problem in Claim~\ref{claim:lowerbound} for this $\delta$. The lower bound is $\Omega((\cL')^2 |\log(1-p)|/\epsilon)$ for each pair of inputs satisfying the condition,
where now $\cL'=\cL_\theta\le \cL_\delta$ and $\epsilon=\epsilon_\delta$.

Note that the condition can always be satisfied if, for example,
\begin{align}
    \epsilon \le \frac t {100} \le\frac 3 {100} (1-e^{-1})^2 t .
\end{align}
The right hand side is the solution for $\sin \delta =1/t$, assuming $ t \ge 6 \ge 1/\sin(\pi/16)$ for $\delta \le \pi/16$.
Further, since $(\cL_\delta)^2 \sin \delta \rightarrow 0$ as $\delta \rightarrow 0$ for fixed $t>0$, then there will be a solution consistent with the bound above.

Finally, we note that Thm.~\ref{thm:lowerbound} is more general in that it does not constrain to specific instances other than $\bA \preceq 0$ and $\|\bA\|\le 1$. Indeed, the lower bound applies to this wider class of instances, since the prior problem in Lemma~\ref{lem:lb2} is an example in this class.
That is, that problem is at least as difficult as the decision QPE problem in the prior lemma, and the same lower bound applies. 

Moreover, since we allow for instances where $\delta \rightarrow 0$, we can replace $\cL'$ in the established bound for $t$. Also, for these instances, $\cL' \ge \cL =\int_0^t \rd s e^{2s \xi_\bA}$, where $\xi_\bA$ is the log-norm of $\bA$. Hence, we can alternatively express the lower bound as $\Omega(\cL t /\epsilon)$ as in Thm.~\ref{thm:lowerbound}.

\section{Preconditioning the quantum linear systems approach}
\label{app:preconditioning}

In this section, we introduce a general technique for improving the complexity of history state preparation via quantum linear systems solvers. While this technique can be applied to our specific quantum algorithm, it also extends to quantum differential equation solvers more broadly, making the result of independent interest.

The cost of preparing normalized history states via quantum linear system solvers depends linearly on the condition number $\kappa$ of the linear system encoding the discretized differential equation. As shown in Appendix~\ref{app:QAlgorithm}, in the time-independent case, the condition number scales like $\mu \tcL_1$ where $\mu = \max \{a, b\}$ and $\tcL_1$ is a known upper bound on $\max_{\bY \in \{\bA, \bB\}} \lb \int_0^t \rd s \; \norm{e^{s \bY }}  + \frac{d}{c}\max_{s \in [0,t]} \norm{ e^{s\bY }} \rb$. A similar bound can also be obtained in the time-dependent case, see Appendix~\ref{app:QAlgorithm_time-dep}.
This bound on the condition number can be exponential in $t$ in which case the cost of preparing the normalized history state also scales exponentially with $t$. To reduce the complexity, we can use a form of preconditioning as explained below. Also, note that in order to estimate the generalized matrix element $\bra \phi \bX(t) \ket \psi$, we do not actually need to prepare normalized history states, it suffices to prepare subnormalized history states. This is also exploited in the LCHS approach as discussed previously.

Preconditioning is especially useful in cases where the solution norm is growing as a function of time. For simplicity, let us consider the time-independent case. The results extend straightforwardly to the time-dependent case though.
Recall that the goal is to prepare (sub-)normalized versions of
\begin{align}
    \ket{\phi_{\hist}} = \sum_{m=0}^{M-1} \ket{m}e^{t\bA \frac{m}{M}}\ket{\phi}  + \sum_{m=M}^{M + R-1} \ket{m} e^{t \bA} \ket{\phi}, \\
    \nonumber
    \ket{\psi_{\hist}} = \sum_{m=0}^{M-1} \ket{m}e^{t \bB \frac{m}{M}}\ket{\psi}  + \sum_{m=M}^{M + R-1} \ket{m} e^{t \bB} \ket{\psi}.
\end{align}
For our purposes, we choose $M = \lceil t \mu \rceil$ and $R \in \cO \lb \frac{\mu d}{c} \rb$.
The key idea now is to shift $\bA$ and $\bB$ by known upper bounds on their respective log-norms and solve the linear systems with the shifted matrices.
More specifically, define $\bA_\xi :=  \bA - \xi_{\bA} \one$ and $\bB_\xi := \bB - \xi_{\bB} \one$ where $\xi_{\bA}$ ($\xi_{\bB}$) is a known upper bound on the log-norm of $\bA$ ($\bB$). Recall that the log-norm of an arbitrary square matrix $\bA$ is the largest eigenvalue of the Hermitian matrix $\lb \bA + \bA^\dagger \rb/2$, which can be negative in principle.
Then we can use the linear systems approach to prepare the following two normalized history states:
\begin{align}
    \ket{\phi_{\hist, \xi}} &:= \frac{1}{\sqrt{\cN_{\bA_\xi}}} \lb \sum_{m=0}^{M-1} \ket{m}e^{t\bA_\xi \frac{m}{M}}\ket{\phi}  + \sum_{m=M}^{M + R-1} \ket{m} e^{t \bA_\xi} \ket{\phi} \rb \;, \\
    \nonumber
    \ket{\psi_{\hist, \xi}} &:= \frac{1}{\sqrt{\cN_{\bB_\xi}}} \lb \sum_{m=0}^{M-1} \ket{m}e^{t \bB_\xi \frac{m}{M}}\ket{\psi}  + \sum_{m=M}^{M + R-1} \ket{m} e^{t \bB_\xi} \ket{\psi} \rb.
\end{align}
By Lemma~\ref{lem:norm_history_state}, we have that
\begin{align}
    \cN_{\bA_\xi} &\in \cO \lb \frac{M}{t} \int_0^t \rd \, s \norm{e^{s \bA_{\xi}}}^2 + R \norm{e^{t \bA_{\xi}}}^2 \rb \subseteq \cO \lb M + R \rb \\
    \cN_{\bB_\xi} &\in \cO \lb \frac{M}{t} \int_0^t \rd \, s \norm{e^{s \bB_{\xi}}}^2 + R \norm{e^{t \bB_{\xi}}}^2 \rb \subseteq \cO \lb M + R \rb.
\end{align}
Furthermore, from the proof of Theorem~\ref{thm:history_state} we see that the condition numbers of the preconditioned linear systems are now also upper bounded by $\cO \lb M + R \rb \subseteq \cO \lb t \mu + \frac{\mu d}{c} \rb$, meaning that the overall query complexity for preparing $\ket{\phi_{\hist, \xi}}$ and $\ket{\phi_{\hist, \xi}}$ within some error $\epsilon_{\hist, \xi}$ scales like
\begin{align}
    \cO \lb \lb t \mu + \frac{\mu d}{c} \rb \times \log (1/\epsilon_{\hist, \xi}) \times \log \lb \lb t \mu + \frac{\mu d}{c} \rb/\epsilon_{\hist, \xi} \rb \rb.
\end{align}
To get the correct weighting in the amplitudes of the history states, i.e.~to remove the factors $e^{-t \frac{m}{M} \xi_{\bA}}$ and $e^{-t \frac{m}{M} \xi_{\bB}}$ from the amplitudes, we then apply block-encodings of the following diagonal matrices to the clock registers:
\begin{align}
    \mathcal{D}_{\bA} &= \mathrm{diag} \lb 1, e^{\frac{t}{M} \xi_{\bA}}, \dots, e^{t \xi_{\bA}} \rb \in \mathbb{C}^{(M+R) \times (M+R)} \\
    \mathcal{D}_{\bB} &= \mathrm{diag} \lb 1, e^{\frac{t}{M} \xi_{\bB}}, \dots, e^{t \xi_{\bB}} \rb \in \mathbb{C}^{(M+R) \times (M+R)}.
\end{align}
These can be implemented using $\mathrm{polylog} \lb (M+R)/\epsilon_{\hist, \xi} \rb$ primitive gates with block-encoding constants scaling like $\cO \lb \max \left\{1, e^{t \xi_{\bA}} \right\} \rb$ and $\cO \lb \max \left\{1, e^{t \xi_{\bB}} \right\} \rb$, respectively.

Following Theorem~\ref{thm:mainformal}, the overall number of required queries to $U_{\phi}$, $U_{\psi}$, $U_{\bA}$, $U_{\bB}$, $U_{\bC}$ and $U_{\bD}$ for estimating the generalized matrix element $\bra{\phi} \bX(t) \ket{\psi}$ within additive error $\epsilon$ via preconditioned linear systems then scales like
\begin{align}
    \widetilde{\cO} \lb  \frac{ \frac{c}{\mu} \times \lb  t\mu + \frac{\mu d}{c} \rb \times \max \left\{1, e^{t \xi_{\bA}} \right\} \times \max \left\{1, e^{t \xi_{\bB}} \right\}}{\epsilon} \times  \lb t\mu + \frac{\mu d}{c} \rb \rb .
\end{align}
This shows that preconditioning does not improve the query complexity in cases where $\bA$ and $\bB$ have non-positive log-norm. However, in cases of exponential growth, where $\norm{e^{s \bA}} \sim e^{s \xi_{\bA}}$ and $\norm{e^{s \bB}} \sim e^{s \xi_{\bB}}$ with $\xi_{\bA}, \xi_{\bB} > 0$, preconditioning allows us to shave off a factor of $\max \left\{ e^{s \xi_{\bA}}, e^{s \xi_{\bB}} \right\}$ from the overall query complexity given in Theorem~\ref{thm:mainformal} and replace it with a factor $\sim t^2$ instead.

More specifically, the factors in the complexity are $M+R$ for the condition numbers for the solvers, another factor of $M+R$ for the normalizations of the states,
$\max \left\{1, e^{t \xi_{\bA}} \right\} \times \max \left\{1, e^{t \xi_{\bB}} \right\}$ for the subnormalizations induced by $\mathcal{D}_{\bA}$ and $\mathcal{D}_{\bA}$, $c/\mu$ for $\cI$, and
$1/\epsilon$ for amplitude estimation.
In comparison, Theorem~\ref{thm:mainformal} has the factors $\tcL_1$ for the condition numbers for the solvers, $\tcL_2$ for the normalizations of the states, and the factors from $\cI$ and amplitude estimation the same.
The magnitude of $\tcL_2$ can be approximated as $\cO( e^{t \xi_{\bA}} e^{t \xi_{\bB}} )$ for positive log-norms, so is comparable to the normalization factors in the preconditioned case. The preconditioned approach has an extra factor of $t$ in the normalization, and another factor of $t$ in the condition number, rather than the exponential condition number from $\tcL_1$ in Theorem~\ref{thm:mainformal}.

\end{document}